\documentclass[11pt]{article}

\newcommand{\bydef}{\coloneqq}

\usepackage[margin=1in]{geometry}
\usepackage{amsmath,amssymb,amsthm,mathtools, mathrsfs, enumitem}
\usepackage{commath}
\usepackage{comment}
\usepackage{mathtools}
\usepackage{bm}
\usepackage{tocloft}
\usepackage{authblk}
\mathtoolsset{showonlyrefs}
\usepackage[dvipsnames]{xcolor}
\usepackage{subcaption, hyperref}
\usepackage{graphicx}
\usepackage{placeins}
\usepackage{float}
\graphicspath{{figures/}}


\theoremstyle{plain}
\newtheorem{theorem}{Theorem}[section]
\newtheorem{lemma}[theorem]{Lemma}
\newtheorem{corollary}[theorem]{Corollary}
\newtheorem{proposition}[theorem]{Proposition}

\newtheorem{assumption}[theorem]{Assumption}

\theoremstyle{remark}
\newtheorem{remark}[theorem]{Remark}

\newcommand{\Sym}{\textsf{Sym}_d}

\newcommand{\CN}{\mathcal{N}}
\newcommand{\CA}{\mathcal{A}}
\newcommand{\CR}{\mathcal R}

\newcommand{\CH}{\mathcal H}
\newcommand{\CE}{\mathcal E}

\newcommand{\CB}{\mathcal B}
\renewcommand{\P}{\mathbb P}
\newcommand{\CK}{\mathcal K}

\newcommand{\FT}{\mathfrak T}
\newcommand{\FB}{\mathfrak B}
\newcommand{\BFB}{\mathbf{B}}
\newcommand{\BFT}{\mathbf{T}}
\newcommand{\FA}{\mathfrak A}
\newcommand{\FC}{\mathfrak C}
\newcommand{\FQ}{\mathfrak{Q}}
\newcommand{\SA}{\mathscr{A}}
\newcommand{\CP}{\mathcal P}
\newcommand{\CU}{\mathcal U}

\newcommand{\mmat}{\text{mat}}

\newcommand{\bone}{\mathbb I}
\newcommand{\E}{\mathbb{E}}
\newcommand{\Tr}{\mathrm{Tr}}

\newcommand{\smi}{{\setminus i}}
\newcommand{\R}{\mathbb R}

\newcommand{\I}{\mathrm{I}}

\newcommand{\svecop}{\operatorname{vec}}

\newcommand{\mat}{\mathrm{mat}}

\newcommand{\CM}{\mathcal M}

\newcommand{\iids}{\stackrel{i.i.d}{\sim}}
\newcommand{\Var}{\text{Var}}
\newcommand{\op}{\mathrm{op}}

\newcommand{\lin}{\mathrm{lin}}
\newcommand{\inner}[2]{\left\langle #1,#2\right\rangle}
\renewcommand{\norm}[1]{\left\lVert #1\right\rVert}
\renewcommand{\abs}[1]{\left\lvert #1\right\rvert}
\renewcommand{\I}{\mathrm{I}}

\newcommand{\HS}{\mathrm{HS}}

\newcommand{\quadt}{\mathrm{quad}}

\newcommand{\Frob}{\mathrm{F}}

\newcommand{\CO}{\mathcal{O}}
\newcommand{\BS}{\mathbb{S}}

\newcommand{\SC}{\mathscr{C}}

\newcommand{\CG}{\mathcal{G}}
\newcommand{\bom}{\bar{\Omega}}
\newcommand{\tom}{\tilde{\Omega}}
\newcommand{\veps}{\varepsilon}

\newcommand{\Cov}{\text{Cov}}
\newcommand{\spec}{\mathrm{spec}}
\newcommand{\diag}{\mathrm{diag}}
\newcommand{\main}{\mathrm{main}}

\newcommand{\fb}{\mathrm{fb}}

\renewcommand{\bf}{\mathbf}
\newcommand{\hS}{\widehat{S}}
\newcommand{\ro}{\mathrm{RO}}
\newcommand{\sig}{\mathrm{sig}}
\newcommand{\cross}{\mathrm{cross}}
\renewcommand{\sf}{\mathsf}
\newcommand{\CD}{\mathcal D}
\renewcommand{\bydef}{\mathrel{\mathop{:}}=}

\renewcommand{\Re}{\text{Re }}
\newcommand{\sfe}{\mathsf{E}}
\newcommand{\hbs}{\widehat{\bf s}}

\usepackage{hyperref}
 \hypersetup{
     colorlinks=true,
     linkcolor=blue,
     filecolor=blue,
     citecolor = blue,
     urlcolor=cyan,
     pdftitle={Quantitative Universality of Approximate Message Passing for Rank-One Quadratic Sensing},
     pdfauthor={Duy Thuc Nguyen, Yue M. Lu, Subhabrata Sen},
     pdfkeywords={approximate message passing, universality, quadratic sensing, state evolution},
 }

\title{Quantitative Universality of Approximate Message Passing for Rank-One Quadratic Sensing}
\author[1]{Duy Thuc Nguyen}
\author[1,2]{Yue M. Lu}
\author[1]{Subhabrata Sen}
\affil[1]{Department of Statistics, Harvard University}
\affil[2]{Applied Mathematics, Harvard University}
\date{October 2026}

\begin{document}

\maketitle

\begin{abstract}
Approximate Message Passing (AMP) algorithms are attractive as they are computationally efficient and simultaneously admit a precise characterization in terms of the low-dimensional `state-evolution' recursion. In this work, we establish quantitative universality for AMP under rank-one sensing: its main observables agree asymptotically with those under a covariance-matched isotropic Gaussian ensemble. Specifically, we consider rank-one sensing matrices defined as
\[
Z_i=\frac{x_i x_i^\top-I_d}{\sqrt d},
\qquad
x_i\stackrel{\mathrm{i.i.d.}}{\sim}\mathcal N(0,I_d).
\]
These matrices arise in quadratic regression and in learning quadratic neural networks. Lifting the inputs gives a linear regression problem over symmetric matrices. Their normalized vectorizations have covariance $I_p/p$, where $p=d(d+1)/2$. The $p$ coordinates are strongly dependent, since they come from only $d$ Gaussian variables.

We consider observations $y_i=\operatorname{Tr}(Z_iS^\star)+\epsilon_i$ with Gaussian noise. The spectral denoisers are prescribed functions with uniformly bounded values and first four derivatives. We work with $n\leq Cp$ and a fixed number of iterations, under uniform bounds on the normalized signal energy, initialization, and algorithm coefficients. We show that replacing rank-one sensing by Gaussian sensing gives asymptotically the same normalized signal overlaps, overlaps between iterates, and smooth linear spectral statistics. For each normalized observable, the difference is $\CO(d^{-1/2})$ in every fixed $L^m$. Consequently, the normalized mean-squared error and fixed spectral moments are universal, and the empirical spectral distributions agree asymptotically. Furthermore, whenever these observables admit a joint state-evolution limit under Gaussian sensing, the same predictions hold under rank-one sensing. The proof uses a columnwise Lindeberg argument, leave-one-out estimates, smooth spectral expansions, and high-moment concentration. For this class of denoisers, our result establishes the AMP universality conjectured by Maillard et al. (2024) for learning quadratic neural networks.
\end{abstract}

\clearpage
\tableofcontents
\setcounter{tocdepth}{2}
\clearpage

\section{Introduction}\label{sec:introduction}

Approximate message passing (AMP) is a family of low-complexity iterative algorithms for high-dimensional inference \cite{DonohoDavidL2009Mafc,Rangan2011}. For sensing matrices with i.i.d.\ Gaussian entries, its asymptotic dynamics are described by \emph{state evolution}, a low-dimensional recursion that gives precise predictions for quantities such as the mean-squared error \cite{BayatiM2011TDoM,JavanmardMontanari2013,BerthierRaphael2020Sefa}. State evolution has also been established for algorithms adapted to rotationally invariant matrices \cite{Fan2022}.

AMP universality has been proved for sensing matrices with independent entries, using Gaussian ensembles with matching moments \cite{BayatiLelargeMontanari2015,ChenLam2021,Han2025,LovigWangFan2025}. For semi-random and delocalized matrices, the comparison is with rotationally invariant ensembles with matching spectral data \cite{DudejaLuSen2023,DudejaSenLu2024,WangZhongFan2024}. We consider sensing columns formed by a centered quadratic lift of independent Gaussian vectors. The columns are independent, but the entries within each column are strongly dependent.

Let $x_1,\ldots,x_n\iids\CN(0,I_d)$ be Gaussian inputs and define the lifted features
\begin{equation}\label{eq:rank-one-features}
   Z_i=\frac{x_i x_i^\top-I_d}{\sqrt d}, \qquad f_i=\frac{\svecop(Z_i)}{\sqrt{d+1}}\in\R^p,
    \qquad p=\frac{d(d+1)}2,
\end{equation}
where $\svecop$ maps a symmetric $d\times d$ matrix to the vector of its $p$ nonredundant entries, normalized so that inner products and norms are preserved; see \eqref{eq:svec-isometry}. For every $i\in[n]$ and every symmetric matrix $S$, these features satisfy the lifting identity
\begin{equation}\label{eq:lifting-identity}
    \frac{x_i^\top Sx_i-\Tr (S)}{\sqrt d}
    =\Tr(Z_iS)
    =\sqrt{d+1}\,\big\langle f_i,\svecop(S)\big\rangle.
\end{equation}
A quadratic function of $x_i$ is therefore linear in $f_i$. The lifted vector is non-Gaussian, with $p=\Theta(d^2)$ coordinates generated from only $d$ Gaussian variables.

Our main motivation is learning quadratic neural networks \cite{MaillardAntoine2024Bloa}. With one hidden layer, a quadratic activation, and weights $W=[w_1,\ldots,w_m]\in\R^{d\times m}$, such a network computes $x\mapsto x^\top S(W)x$, up to normalization. Here $S(W)=WW^\top=\sum_{k\le m}w_kw_k^\top\succeq0$. By \eqref{eq:lifting-identity}, its centered output at $x_i$ is $\Tr(Z_iS(W))$. With noisy samples, we estimate a symmetric matrix $S^\star$ from the observations
\begin{equation}\label{eq:intro-observation-model}
    y_i=\Tr(Z_iS^\star)+\epsilon_i=\sqrt{d+1}\,\big\langle f_i,\svecop(S^\star)\big\rangle+\epsilon_i,
    \qquad 1\leq i \leq n,
\end{equation}
where $\epsilon_i \sim \mathcal{N}(0, \Delta)$ i.i.d. This is linear regression in the lifted space. When the width $m$ is proportional to $d$, the parameter count $md$ and the lifted dimension $p$ are both of order $d^2$. This motivates the sample scale $n\asymp p$. The same rank-one measurements arise in matrix sensing, including phase retrieval and low-rank matrix recovery \cite{KuengRauhutTerstiege2017}.

Maillard et al.\ \cite{MaillardAntoine2024Bloa} derived a Bayesian AMP algorithm for this lifted problem, with sensing matrix $F=[f_1,\ldots,f_n]\in\R^{p\times n}$. We write a schematic AMP recursion in feature coordinates as \cite{JavanmardMontanari2013,BerthierRaphael2020Sefa}:
\begin{equation}\label{eq:general-amp}
    m^t=\mathcal M_t\big(y,F^\top s^t,F^\top s^{t-1},\ldots,F^\top s^0\big),
    \qquad
    s^{t+1}=\mathcal D_t\big(s^t+Fm^t\big).
\end{equation}
Here $s^t\in\R^p$ is the $t$-th iterate and $m^t\in\R^n$ is a vector of messages, one per sensing feature. The vector $F^\top s^t=(\langle f_i,s^t\rangle)_{i\le n}$ lists the projections of the current iterate onto the features. The vector $Fm^t=\sum_{i\le n} m_i^t f_i$ weights each feature column by its message. The dependence on earlier projections includes the memory, or Onsager, correction.

Each vector iterate $s^t$ corresponds to a symmetric $d\times d$ matrix $\widehat S^t$. The sensing operations use vector coordinates, while the denoiser $\mathcal D_t$ acts on the eigenvalues of the matrix corresponding to $s^t+Fm^t$. For $M=U\Lambda U^\top$, with $U$ orthogonal and $\Lambda=\diag(\lambda_1,\ldots,\lambda_d)$, a spectral denoiser applies a scalar function $\sigma:\R\to\R$ to the eigenvalues while keeping the eigenvectors:
\begin{equation}\label{eq:intro-spectral-denoiser}
    \sigma(M)=U\diag\big(\sigma(\lambda_1),\ldots,\sigma(\lambda_d)\big)U^\top.
\end{equation}
A nonlinear spectral map is generally nonseparable in vector coordinates: an output coordinate may depend on the entire input matrix.

We compare the rank-one features with Gaussian features having the same mean and covariance. Let
\begin{equation}\label{eq:gaussian-surrogate-features}
    \widetilde f_i\iids\CN(0,I_p/p),
    \qquad i\in[n];
\end{equation}
equivalently, $\widetilde f_i=\svecop(\widetilde Z_i)/\sqrt{d+1}$, where $\widetilde Z_i$ is drawn from the Gaussian orthogonal ensemble (GOE) with the same entrywise variances as $Z_i$; Section~\ref{sec:model-setup} gives the precise normalization. The two ensembles satisfy
\begin{equation}\label{eq:column-covariance}
    \E[f_i]=\E[\widetilde f_i]=0,
    \qquad
    \E[f_i f_i^\top]=\E[\widetilde f_i\widetilde f_i^\top]=\frac{I_p}{p}.
\end{equation}
Maillard et al.\ describe state evolution for AMP with these Gaussian features in \cite[Appendix~D.9]{MaillardAntoine2024Bloa}, using the theory of Berthier, Montanari, and Nguyen for nonseparable nonlinearities \cite{BerthierRaphael2020Sefa}. They conjectured that the same predictions hold for the original rank-one features, whose coordinates are dependent.

We prove quantitative universality for AMP in the linear observation model \eqref{eq:intro-observation-model} (Theorem~\ref{thm:final-universality}). The result establishes the conjectured comparison for prescribed spectral denoisers in $C_b^4$, with bounds uniform in $d$. Fix an iteration horizon $T$, iterations $s,t\le T$, and a test function $\mu$ with three continuous derivatives. We compare the normalized observables
\begin{equation}\label{eq:universality-observables}
    \frac1d\Tr(\widehat S^t S^\star),
    \qquad
    \frac1d\Tr(\widehat S^t\widehat S^s),
    \qquad
    \frac1d\Tr\big(\mu(\widehat S^t)\big).
\end{equation}
For each observable, the difference between its values along the rank-one and Gaussian trajectories is $\CO(d^{-1/2})$ in every fixed $L^m$ norm.

The first observable measures the alignment of $\widehat S^t$ with $S^\star$. The second measures the overlap between iterates; for $s=t$, it is the normalized squared norm $d^{-1}\|\widehat S^t\|_\Frob^2$. Since
\[
    \frac1d\|\widehat S^t-S^\star\|_\Frob^2
    =\frac1d\|\widehat S^t\|_\Frob^2
    -\frac2d\Tr(\widehat S^tS^\star)
    +\frac1d\|S^\star\|_\Frob^2,
\]
the first two comparisons yield universality of the normalized mean-squared error. The third observable is a smooth linear statistic of the spectrum: if $\lambda_1,\ldots,\lambda_d$ are the eigenvalues of $\widehat S^t$, then
\[
    \frac1d\Tr\big(\mu(\widehat S^t)\big)
    =\frac1d\sum_{j=1}^d \mu(\lambda_j),
\]
so varying $\mu$ tests the empirical eigenvalue distribution of the iterate. The uniform operator-norm bound on the iterates also gives universality of every fixed spectral moment $d^{-1}\Tr\big((\widehat S^t)^k\big)$ (Corollary~\ref{cor:power-from-spectral}). Whenever the Gaussian recursion admits a state-evolution limit for one of these observables, the same limit holds for the rank-one recursion (Section~\ref{subsec:se-transfer}).

The proof combines a columnwise Lindeberg interpolation \cite{Chatterjee2006,KoradaMontanari2011} with a leave-one-out analysis of the AMP dynamics \cite{BaoHanXu2025}. We replace whole sensing columns because their coordinates are dependent. At each replacement, we remove the active column and run the background recursion, then derive an exact recursion for the change caused by reinserting it. The inserted column is independent of the background trajectory, so we can compare its contributions under the two ensembles.

The spectral denoiser \eqref{eq:intro-spectral-denoiser} couples the matrix entries. Its response to an inserted column feeds into later iterations, where the same sensing matrix is used again. A single-column change affects a normalized observable at the $\CO(d^{-3/2})$ scale. With $n=\Theta(d^2)$ replacements, the Lindeberg argument requires the difference between the expected contributions of the two candidate columns to be $\CO(d^{-5/2})$.

We linearize the response recursion and expand the spectral map with a fourth-order Fr\'echet remainder. Contributions determined by the first two column moments agree between the ensembles. For the remaining terms, we use dimension-free Frobenius bounds on spectral derivatives and Hilbert--Schmidt estimates for their scalar contractions. We also use a refined bound for centered quadratic forms in a rank-one column that retains the Frobenius norm of the coefficient matrix. These estimates compare the conditional means. A high-moment Efron--Stein argument \cite{BoucheronStephane2005MIfF}, based on the same leave-one-out construction, controls fluctuations at the $d^{-1/2}$ scale.

\subsection{Related Work}\label{sec:related-work}

\paragraph{Quadratic sensing.}
Maillard et al.\ analyzed GAMP for the extensive-width quadratic neural network after replacing the rank-one features by the Gaussian surrogate \eqref{eq:gaussian-surrogate-features}; they left universality of the AMP dynamics under rank-one sensing as a conjecture \cite{MaillardAntoine2024Bloa}. Xu et al.\ later proved the corresponding universality of the limiting free entropy and Bayes-optimal MMSE \cite{XuMaillardZdeborovaKrzakala2025}. Their result is at the level of the free entropy and does not compare algorithmic iterates. Our theorem supplies the complementary dynamical statement, comparing AMP observables after any fixed number of iterations under rank-one and Gaussian sensing, with a quantitative rate.

\paragraph{Polynomial-chaos sensing.}
The rank-one feature $f_i=\svecop(x_i x_i^\top-I_d)/\sqrt{d(d+1)}$ is the normalized degree-two Hermite chaos of $x_i$. It is therefore the degree-two member of a broader hierarchy of polynomial-chaos sensing vectors formed by collecting all Hermite features of a fixed degree \cite{NourdinPeccatiReinert2010}. Gaussian equivalence for this hierarchy is, by now, well understood as it governs both the spectra of the associated Gram matrices \cite{LuYau2025,PanditWangZhu2025} and the training and generalization behavior of high-dimensional empirical risk minimization with random features \cite{HuLu2023}. These results compare a random matrix or the solution of a static optimization problem. Here, by contrast, the same chaos columns are reused adaptively at every AMP iteration, and their effect is repeatedly fed back into the spectral denoiser. To our knowledge, this is the first universality theorem for polynomial-chaos sensing at the level of iterative dynamics, proved here for the degree-two member of the hierarchy.

\paragraph{Universality of AMP.}
For matrices with independent entries, Bayati, Lelarge, and Montanari proved universality for polynomial nonlinearities under moment and tail assumptions \cite{BayatiLelargeMontanari2015}, Chen and Lam developed an interpolation approach for Lipschitz separable nonlinearities and data-dependent spectral initialization \cite{ChenLam2021}, and Han established entrywise universality for a broader family of first-order algorithms with independent sub-Gaussian entries \cite{Han2025}. A complementary line studies semi-random ensembles, in which limited randomness is added to an otherwise structured matrix. Dudeja, Lu, and Sen established state-evolution universality for memory-free AMP under semi-random matrix structure \cite{DudejaLuSen2023}. Dudeja, Sen, and Lu developed spectral universality classes for regularized linear regression, covering proximal-gradient dynamics and estimator performance for sensing matrices that may be highly structured or nearly deterministic \cite{DudejaSenLu2024}. Wang, Zhong, and Fan used tensor-network methods to prove AMP universality for generalized Wigner matrices and delocalized sign-and-permutation-invariant ensembles \cite{WangZhongFan2024}. The universality results have been extended to Generalized First Order Methods (GFOMs) in \cite{gorini2026universality}. These results substantially broaden the admissible sensing ensembles, but none covers the rank-one dependence structure studied here.

\paragraph{Nonseparable nonlinearities.}
For Gaussian sensing, Berthier, Montanari, and Nguyen established state evolution for nonseparable Lipschitz nonlinearities \cite{BerthierRaphael2020Sefa}. Lovig, Wang, and Fan proved universality for nonseparable AMP with independent non-Gaussian entries under a bounded-composition condition, including spectral maps for signals with generic singular vectors \cite{LovigWangFan2025}. With rank-one sensing, changing one feature column perturbs the matrix entering the spectral denoiser. The column appears again at later iterations, so its effect must be controlled through the recursion. Our argument compares whole dependent columns and controls the difference in their expected contributions at order $\CO(d^{-5/2})$.

\paragraph{Lindeberg and leave-one-out methods.}
Lindeberg replacement is a classical route to universality \cite{Chatterjee2006,KoradaMontanari2011}. Panahi and Hassibi used it to establish universality for regularized least-squares estimators \cite{PanahiHassibi2017}. Our proof is closest in spirit to the Lindeberg and leave-one-out framework developed by Hu and Lu for shallow random-feature models \cite{HuLu2023} and subsequently extended to deep random-feature models by Bosch, Panahi, and Hassibi \cite{BoschPanahiHassibi2023}. Leave-one-out methods have also yielded precise nonasymptotic descriptions of AMP under Gaussian independent-entry sensing \cite{BaoHanXu2025}. In our setting, however, the independent units are entire sensing columns rather than individual entries. The comparison must therefore follow a rank-one column through a matrix-valued recursion while preserving the dependence within that column.

\subsection{Organization}\label{sec:intro-organization}

Section~\ref{sec:model-setup} defines the two sensing models, states the universality theorem and its corollaries, and presents Gaussian state evolution and its transfer to rank-one sensing. Section~\ref{sec:numerics} presents numerical experiments on state evolution, spectral distributions, convergence rates, and a nonlinear output channel. Section~\ref{sec:proof-main} outlines the proof. Sections~\ref{sec:aux} and \ref{sec:linearized} derive the leave-one-out recursion and its linearization. Sections~\ref{sec:proof-quadratic} and \ref{sec:proof-linear} prove the columnwise comparison estimates; Section~\ref{sec:concentration} proves concentration. Section~\ref{sec:discussion} discusses the scope of the theorem and possible extensions. The appendices contain auxiliary results and an additional experiment on the effect of a spectral spike.

    \qquad

\section{Model and Main Results}\label{sec:model-setup}
This section compares two statistical experiments for estimating a symmetric matrix $S^\star$ from $n$ noisy linear measurements. Both use the same AMP recursion, initialization, coefficients, and spectral denoisers, but one is driven by the centered rank-one matrices $Z_i$ of \eqref{eq:rank-one-features} and the other by their Gaussian surrogates $\widetilde Z_i$. We formulate the experiments at the matrix level, state the assumptions and quantitative universality theorem, and give its consequences for spectral moments and mean-squared error. The vector coordinates used in the proofs are deferred to Section~\ref{sec:proof-main}.\\

\noindent We first introduce some notation that will be used thoroughout the subsequent discussion.
\paragraph{Notation.}
Let $\Sym$ be the space of real symmetric $d\times d$ matrices and let $p\bydef\dim(\Sym)=d(d+1)/2$. Fix an ordering of the index pairs $1\le j\le k\le d$. For $A\in\Sym$, the vector $\svecop(A)\in\R^p$ has coordinates
\begin{equation}\label{eq:svec-coordinate-def}
    [\svecop(A)]_{(j,k)}=
    \begin{cases}
        A_{jj}, & j=k,\\
        \sqrt{2}\,A_{jk}, & j<k,
    \end{cases}
\end{equation}
so that each diagonal entry is recorded once and each nonredundant off-diagonal entry with a factor $\sqrt2$; the inverse map $\mmat\bydef\svecop^{-1}$ restores the symmetric matrix. This normalization makes $\svecop$ a linear isometry: for $A,B\in\Sym$,
\begin{equation}\label{eq:svec-isometry}
    \langle\svecop(A),\svecop(B)\rangle=\Tr(AB).
\end{equation}
For $M=U\Lambda U^\top\in\Sym$ and a scalar function $\varphi:\R\to\R$, the associated spectral map is
\begin{equation}\label{eq:spectral-map-def}
    \varphi(M)=U\varphi(\Lambda)U^\top.
\end{equation}
We write $\|\cdot\|_\op$ and $\|\cdot\|_\Frob$ for the operator and Frobenius norms of a matrix, $\|\cdot\|$ for the Euclidean norm of a vector, $\|M\|_\HS$ for the Hilbert--Schmidt norm of a multilinear form $M$, and $[n]\bydef\{1,\ldots,n\}$. For an integer $k\geq0$, $\SC_b^k(\R)$ denotes the class of $k$-times continuously differentiable functions that are bounded together with their first $k$ derivatives.

For positive sequences, $a_d\lesssim b_d$ means $a_d\le Cb_d$ for a constant $C$ independent of $d$, and $a_d\asymp b_d$ means $a_d\lesssim b_d\lesssim a_d$. We use the stochastic-domination notation of \cite[Definition~2.1]{ErdosKnowlesYauYin2013}: for nonnegative random variables $(Y_d)_{d\ge1}$ and a deterministic positive scale $(a_d)_{d\ge1}$, we write $Y_d\prec a_d$ if $Y_d$ is bounded by $a_d$, up to an arbitrarily small polynomial factor, with failure probability smaller than any fixed inverse power of $d$; formally, for every $\veps,D>0$ there exists $d_0(\veps,D)$ such that
\begin{equation}\label{eq:stochastic-domination}
    \P\big(Y_d>d^\veps a_d\big)\leq d^{-D},
    \qquad \forall d\geq d_0(\veps,D).
\end{equation}
For a possibly signed random variable $X_d$, we write $X_d=\CO_\prec(a_d)$ when $|X_d|\prec a_d$. For $m\ge1$ we write $\norm{X}_{L^m}\bydef(\E\abs{X}^m)^{1/m}$, and $X_d=\CO_{L^m}(a_d)$ if $\norm{X_d}_{L^m}\le Ca_d$ for a constant $C$ independent of $d$. If $X_d=\CO_{L^m}(a_d)$ for every fixed $m\ge1$, with the constant allowed to depend on $m$, then Markov's inequality gives $X_d=\CO_\prec(a_d)$: given $\veps,D>0$, it suffices to take $m>D/\veps$. For a random vector $X$, $\norm{X}_{L^m}$ denotes $\norm{\norm{X}}_{L^m}$, and for a random matrix $X$ we use
\begin{equation}\label{eq:matrix-Lm-norms}
    \norm{X}_{L^m(F)}\bydef\norm{\norm{X}_\Frob}_{L^m},
    \qquad
    \norm{X}_{L^m(\op)}\bydef\norm{\norm{X}_\op}_{L^m}.
\end{equation}
Throughout, $C,C_m,C_\mu,\ldots$ denote finite positive constants that do not depend on $d$ or on the column index $i$, but may depend on the iteration horizon $T$, on the constants in the assumptions, and on the indicated parameters; their values may change from line to line.

\subsection{Observation Model}
Throughout this section, $\sfe\in\{\ro,\mathrm G\}$ labels the rank-one and Gaussian sensing ensembles respectively. Recall the centered rank-one matrices $Z_i$ from \eqref{eq:rank-one-features}. For the Gaussian experiment, let $\widetilde Z_1,\ldots,\widetilde Z_n$ be independent symmetric matrices whose upper-triangular entries are mutually independent and satisfy
\begin{equation}\label{eq:goe-entry-definition}
    \widetilde Z_{i,jk}\sim\CN\left(0,\frac{1+\mathbf 1_{\{j=k\}}}{d}\right),
    \qquad 1\,\leq\, j\,\leq\, k\,\leq\, d.
\end{equation}
Thus every coordinate of $\svecop(\widetilde Z_i)$ has variance $2/d$, and therefore
\begin{equation}\label{eq:goe-feature-verification}
    \frac{\svecop(\widetilde Z_i)}{\sqrt{d+1}}
    \sim\CN\left(0,\frac{2}{d(d+1)}I_p\right)
    =\CN(0,I_p/p),
\end{equation}
which is exactly the Gaussian surrogate in \eqref{eq:gaussian-surrogate-features}. Write $Z_{\ro,i}\bydef Z_i$ and $Z_{\mathrm G,i}\bydef\widetilde Z_i$. The sensing operator in ensemble $\sfe$ and its adjoint are
\begin{equation}\label{eq:sensing-map-def}
    \CA_{\sfe}(S)\bydef\big(\Tr(Z_{\sfe,i}S)\big)_{i=1}^n,
    \qquad
    \CA_{\sfe}^*(r)\bydef\sum_{i=1}^n r_i Z_{\sfe,i}.
\end{equation}

Fix a noise level $\Delta\geq0$. Let $S^\star\in\Sym$ be the signal and let $\epsilon\sim\CN(0,\Delta I_n)$ be the observation noise. The two sensing ensembles are independent of one another and of $(S^\star,\epsilon)$, and the two experiments use the same signal and noise. The observations in ensemble $\sfe$ are
\begin{equation}\label{eq:observation-model}
    y_{\sfe}=\CA_{\sfe}(S^\star)+\epsilon,
    \qquad \sfe\in\{\ro,\mathrm G\}.
\end{equation}
Thus, the only difference between the two experiments is whether the sensing matrices $Z_{\sfe,i}$ are centered rank-one or GOE.

\subsection{The AMP Recursion and Its Observables}
In either ensemble, AMP alternates between two steps. In observation space, it forms the residual between the observations and the current prediction, corrected by a one-step memory term. In matrix space, it back-projects that residual, adds it to the current iterate, and applies a spectral denoiser. Under the coefficient prescription of Section~\ref{subsec:se-transfer}, the memory term is the Onsager correction.

Fix an integer horizon $T\geq1$ and a common initialization $\widehat S^0\in\Sym$, and set $\CB_0\bydef\sigma(S^\star,\widehat S^0)$. For $0\leq t<T$, let $b_t,c_t\in\R$ and $\gamma_t>0$ be common $\CB_0$-measurable coefficients, chosen independently of the sensing matrices and observation noise, and let $\sigma_t:\R\to\R$ be scalar functions that act on symmetric matrices through the spectral map \eqref{eq:spectral-map-def}. Deterministic coefficients are included as a special case. Starting from $\widehat S_{\sfe}^0=\widehat S^0$ and $r_{\sfe}^{-1}=0$, define
\begin{equation}\label{eq:AMP-recursion}
    \begin{aligned}
        r_{\sfe}^t
        &=\frac{y_{\sfe}-\CA_{\sfe}(\widehat S_{\sfe}^t)+b_t r_{\sfe}^{t-1}}{\gamma_t},\\
        R_{\sfe}^t
        &=\widehat S_{\sfe}^t+c_t\CA_{\sfe}^*(r_{\sfe}^t),\\
        \widehat S_{\sfe}^{t+1}
        &=\sigma_t(R_{\sfe}^t),
        \qquad \sfe\in\{\ro,\mathrm G\}.
    \end{aligned}
\end{equation}
We call $r_{\sfe}^t$ the residual, scaled by $\gamma_t^{-1}$, and $R_{\sfe}^t$ the pre-denoising matrix. The first line includes a memory term that corrects the leading self-interaction caused by reusing the same sensing matrices at every iteration. The second line back-projects the residual to the $d\times d$ signal space, and the last line transforms the eigenvalues of the resulting matrix while preserving its eigenvectors. The factor $\gamma_t$ is a residual normalization; the effective Gaussian variance arising in state evolution is denoted by $\tau_t^2$ in Section~\ref{subsec:se-transfer}.

At the proportional sample scale $n\asymp d^2$, the natural back-projection scale is $c_t\asymp d^{-1}$: an entrywise order-one residual typically gives $\|\CA_{\sfe}^*(r)\|_\Frob$ of order $d^{3/2}$, while a matrix with bounded operator norm has Frobenius norm of order $d^{1/2}$. Section~\ref{subsec:se-transfer} gives the particular coefficients under which the Gaussian recursion takes the standard AMP form; the universality theorem itself requires only the bounds in Assumption~\ref{ass:basic}.

The recursion produces a matrix-valued trajectory $\widehat S_{\sfe}^0,\ldots,\widehat S_{\sfe}^T$. In ensemble $\sfe$, the three observables introduced in Section~\ref{sec:introduction} are
\begin{equation}\label{eq:observable-notation}
    \Phi_{\sfe,t}^{\sig}\bydef\frac1d\Tr(\widehat S_{\sfe}^tS^\star),
    \qquad
    \Phi_{\sfe,s,t}^{\cross}\bydef\frac1d\Tr(\widehat S_{\sfe}^s\widehat S_{\sfe}^t),
    \qquad
    \Phi_{\sfe,t,\mu}^{\spec}\bydef\frac1d\Tr\big(\mu(\widehat S_{\sfe}^t)\big).
\end{equation}
The signal overlap measures the alignment of an iterate with $S^\star$. The cross-time overlap measures the alignment of two iterates; for $s=t$, it is the normalized squared Frobenius norm. The spectral statistic tests the empirical eigenvalue distribution against $\mu$. The normalization by $d$ keeps all three at order one under the assumptions below.

\subsection{Main Universality Result}
We impose two standing assumptions. They control the sample size (as a function of the dimension), the coefficients $b_t,c_t,\gamma_t$ in the AMP recursion \eqref{eq:AMP-recursion}, and the regularity of the denoisers.
\begin{assumption}\label{ass:basic}
    The coefficients are $\CB_0$-measurable, and there exists a deterministic $C_0<\infty$, independent of $d$, such that, almost surely,
    \begin{equation}\label{eq:basic-coefficient-assumption}
        n\leq C_0p,
        \qquad
        \max_{0\,\leq\, t\,<\,T}\big(|b_t|+\gamma_t+d|c_t|+\gamma_t^{-1}\big)\,\leq\, C_0.
    \end{equation}
    Moreover, $\sigma_t\in\SC_b^4(\R)$ for every $0\leq t<T$, with the functions and their derivatives through order four bounded uniformly in $d$.
\end{assumption}
The second assumption keeps the normalized Frobenius norm of the signal and the eigenvalues of the initialization bounded.
\begin{assumption}\label{ass:signal-init}
    The pair $(S^\star,\widehat S^0)$ is independent of the two sensing ensembles and the observation noise. There exists $C_1<\infty$, independent of $d$, such that
    \begin{equation}\label{eq:signal-init-bound}
        d^{-1/2}\|S^\star\|_\Frob+\|\widehat S^0\|_\op\leq C_1
        \qquad\text{almost surely}.
    \end{equation}
\end{assumption}
Consequently,
\begin{equation}\label{eq:signal-init-frobenius-bound}
    \frac1d\|S^\star\|_\Frob^2+\frac1d\|\widehat S^0\|_\Frob^2\leq C_1^2
    \qquad\text{almost surely}.
\end{equation}

Let $C_\ast$ be a deterministic uniform bound for $C_1$ and for $\|\sigma_t\|_\infty$ over all $d$ and $0\leq t<T$, and set $I_\ast\bydef[-C_\ast-1,C_\ast+1]$. Every iterate in either experiment has spectrum in $[-C_\ast,C_\ast]$.

We can now state the main result. For a fixed choice among the signal overlap at time $t$, the cross-time overlap at $(s,t)$, and the spectral statistic at $(t,\mu)$, let $\Phi_{\sfe}$ denote its value in ensemble $\sfe$.
\begin{theorem}\label{thm:final-universality}
    Let Assumptions \ref{ass:basic} and \ref{ass:signal-init} hold, and let $0\leq s,t\leq T$, $\mu\in\SC^3(\R)$, and an integer $m\geq2$ be fixed. There exists $C_{m,\mu}<\infty$, independent of $d$, such that, for each of the three observables in \eqref{eq:observable-notation},
    \begin{equation}\label{eq:main-universality-moments}
        \norm{\Phi_\ro-\Phi_{\mathrm G}}_{L^m}
        \leq C_{m,\mu}d^{-1/2}.
    \end{equation}
    Consequently,
    \begin{equation}\label{eq:main-universality-stochastic}
        \Phi_\ro-\Phi_{\mathrm G}=\CO_\prec(d^{-1/2}).
    \end{equation}
\end{theorem}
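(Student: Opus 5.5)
The plan is to split $\Phi_\ro-\Phi_{\mathrm G}$ into a comparison of means and two fluctuation terms:
\[
\Phi_\ro-\Phi_{\mathrm G}=\big(\E[\Phi_\ro\mid\CB_0]-\E[\Phi_{\mathrm G}\mid\CB_0]\big)+\big(\Phi_\ro-\E[\Phi_\ro\mid\CB_0]\big)-\big(\Phi_{\mathrm G}-\E[\Phi_{\mathrm G}\mid\CB_0]\big).
\]
Conditionally on $\CB_0$, which fixes the signal, the initialization and all coefficients, both experiments are functions of $n$ independent columns together with the noise $\epsilon$. It therefore suffices to show two things. First, the difference of conditional means is $\CO(d^{-1/2})$, and in fact I would aim for $\CO(d^{-1/2})$ directly. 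Second, each centered term is $\CO_{L^m}(d^{-1/2})$. The stochastic-domination statement \eqref{eq:main-universality-stochastic} then follows from the $L^m$ bounds for every $m$ via Markov's inequality, as noted in the notation paragraph.

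\textbf{Mean comparison (Lindeberg).} For $0\le k\le n$, define a hybrid ensemble whose first $k$ sensing matrices are $Z_i$ and whose remaining ones are $\widetilde Z_i$, and run \eqref{eq:AMP-recursion} with the same noise. The mean difference telescopes into $n\le C_0p$ single-column swaps, so each swap must cost $\CO(d^{-5/2})$.

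For the swap at column $i$, I would run the leave-one-out recursion with column $i$ removed; its trajectory is independent of the inserted matrix $W\in\{Z_i,\widetilde Z_i\}$. I would then write an exact recursion for the perturbation that reinserting $W$ causes in the residuals $r^t$ and in $R^t$. The scalar residual coordinate $r^t_i$ enters through $\Tr(W\widehat S^t)$, and the matrix perturbation is $c_t r^t_i W$ plus a feedback term through $\sigma_t$ and the other columns.

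Next, I would Taylor-expand the observable and each $\sigma_t$ in Fréchet derivatives up to fourth order around the leave-one-out trajectory. The a priori facts needed are spectrum in $I_\ast$, $\|\sigma_t^{(k)}\|_\infty$ bounded, $c_t\asymp d^{-1}$, and $\|\CA^*\|$ bounds. Since $\|c\,r_iW\|_\Frob\lesssim d^{-1}\cdot d^{1/2}$, each order of the expansion gains $d^{-1/2}$.

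Terms that depend on $W$ only through its first two moments cancel exactly, because the covariances match by \eqref{eq:column-covariance}. The surviving third- and fourth-moment terms are multilinear forms in $W$ whose coefficient tensors are measurable with respect to the leave-one-out trajectory. For these I would bound:
\begin{itemize}
\item centered quadratic forms $x^\top Ax-\Tr A$ with a Frobenius-norm (Hanson–Wright type) bound, which keeps $\|A\|_\Frob$ rather than $d\|A\|_\op$;
\item spectral derivatives through dimension-free Frobenius/Hilbert–Schmidt bounds, for example via Daleckii–Krein divided differences, using $\sigma_t\in\SC_b^4$.
\end{itemize}
This is the step I expect to be the main obstacle. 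The naive rank-one moment bounds lose a factor of $d$ because $x^\top Ax$ concentrates only at scale $\|A\|_\Frob$. Moreover, the perturbation is fed back through later iterations via $\CA^*\CA$, so the expansion must be organized as a linearized response recursion with controlled remainders. My first approach would be to prove, inductively in $t$, $L^m$ bounds on the linear response operator in Frobenius norm, and then show that every non-cancelling term carries at least $d^{-5/2}$.

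\textbf{Fluctuations.} For each ensemble, and also for the hybrids if needed, I would apply a high-moment Efron–Stein inequality (Boucheron–Bousquet–Lugosi–Massart) over the independent columns and the noise coordinates. Resampling one column changes $\Phi$ by $\CO_{L^m}(d^{-3/2})$; this uses the same leave-one-out perturbation bounds at first order. Summing the squares over $n\asymp d^2$ columns gives variance $\lesssim d^2\cdot d^{-3}=d^{-1}$ at every moment level, which yields the $d^{-1/2}$ rate. Noise coordinates are handled in the same way, or by Gaussian Poincaré.

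Finally, for $\mu\in\SC^3$ only, I would replace $\mu$ by a compactly supported function agreeing with it on $I_\ast$. Since the spectrum of each iterate lies in $[-C_\ast,C_\ast]$, this changes nothing, and the modified function has the regularity needed for the expansions.
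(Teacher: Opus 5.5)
Your proposal is correct and follows essentially the same route as the paper. The paper compares conditional means given $\CB_0$ by columnwise Lindeberg replacement with a leave-one-out linearized response recursion, expands the spectral denoisers to fourth order in Fr\'echet derivatives so that the first- and second-moment terms cancel and the rest are controlled by Frobenius/Hilbert--Schmidt bounds on Daleckii--Krein derivatives, and controls fluctuations by high-moment Efron--Stein with one-column sensitivity $\CO(d^{-3/2})$, combining everything by the triangle inequality and Markov.

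One small correction concerns the Frobenius-norm estimate that actually matters. It is for centered quadratic forms $u^\top Au-\frac1p\Tr A$ in the lifted column $u$, which is a degree-four chaos in $x$ and so not directly Hanson--Wright. The paper proves it as $\CO_{L^m}(d^{-3/2}\|A_{\mathrm s}\|_\Frob)$ through a fourth-cumulant computation in Lemma~\ref{lem:refined-centered-quad}. The paper also resamples the pair $(u_i,\epsilon_i)$ jointly in Efron--Stein, rather than columns and noise separately.
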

For the signal and cross-time overlaps, the dependence on $\mu$ is absent and we write the constant simply as $C_m$. The constant is uniform over $s,t\leq T$ and over all coefficient sequences satisfying \eqref{eq:basic-coefficient-assumption}. It depends only on $m,T,C_0,C_1,\Delta$, the uniform bounds on $\sigma_t^{(r)}$ for $0\leq r\leq4$, and, for the spectral statistic, the bounds on $\mu^{(r)}$ over $I_\ast$ for $0\leq r\leq3$. Enlarging the constant absorbs the finitely many small dimensions excluded by the asymptotic estimates. In particular, Jensen's inequality shows that the means of $\Phi$ in the two ensembles differ by $\CO(d^{-1/2})$.

The allowance for $\CB_0$-measurable coefficients follows by conditioning on $(S^\star,\widehat S^0)$: the coefficients are then fixed, and all comparison and concentration estimates are uniform under \eqref{eq:basic-coefficient-assumption}.

\begin{remark}
    The common coupling makes \eqref{eq:main-universality-moments} a direct comparison of the random observables. It is not essential: the proof separately compares their conditional means given $(S^\star,\widehat S^0)$ and controls their fluctuations, so the same rate follows for independent noise realizations provided the signal and initialization remain shared. The proof is assembled in Section~\ref{sec:proof-main}, using the comparison and concentration estimates developed in the subsequent sections.
\end{remark}

The theorem immediately yields universality of two familiar performance measures, the spectral moments and the normalized mean-squared error, as well as convergence of the empirical spectral distributions under the two experiments.
\begin{corollary}\label{cor:power-from-spectral}
    Under Assumptions \ref{ass:basic} and \ref{ass:signal-init}, for every fixed integer $k\geq1$, every fixed $t\leq T$, and every integer $m\geq2$,
    \begin{equation}\label{eq:power-mse-universality-moments}
        \begin{aligned}
            \norm{\frac{1}{d}\Tr((\widehat S^t_\ro)^k)-\frac{1}{d}\Tr((\widehat S^t_{\mathrm G})^k)}_{L^m}
            &\leq C_{m,k}d^{-1/2},\\
            \norm{\frac{1}{d}\|\widehat S^t_\ro-S^\star\|_\Frob^2-\frac{1}{d}\|\widehat S^t_{\mathrm G}-S^\star\|_\Frob^2}_{L^m}
            &\leq C_md^{-1/2}.
        \end{aligned}
    \end{equation}
    Both differences in \eqref{eq:power-mse-universality-moments} are also $\CO_\prec(d^{-1/2})$. Moreover, if
    \begin{equation*}
        \widehat\nu_{\sfe,t}\bydef\frac1d\sum_{j=1}^d\delta_{\lambda_j(\widehat S_{\sfe}^t)},
    \end{equation*}
    then
    \begin{equation}\label{eq:esd-bl-universality}
        d_{\mathrm{BL}}(\widehat\nu_{\ro,t},\widehat\nu_{\mathrm G,t})\xrightarrow[d\to\infty]{\P}0,
    \end{equation}
    where the bounded-Lipschitz distance is
    \begin{equation*}
        d_{\mathrm{BL}}(\nu,\nu')
        \bydef\,\sup_{\substack{\|\varphi\|_\infty\,\leq\,1\\ \operatorname{Lip}(\varphi)\,\leq\,1}}
        \left|\int\varphi\,\mathrm d\nu-\int\varphi\,\mathrm d\nu'\right|.
    \end{equation*}
\end{corollary}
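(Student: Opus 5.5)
The plan is to reduce everything to Theorem~\ref{thm:final-universality}, using the deterministic spectral confinement noted before it: every iterate in either experiment has spectrum in $[-C_\ast,C_\ast]$, since $\widehat S^0$ has operator norm at most $C_1\le C_\ast$ and each later iterate is $\sigma_{t-1}$ applied to a symmetric matrix, with $\|\sigma_{t-1}\|_\infty\le C_\ast$.

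\emph{Spectral moments.} Pick $\mu_k\in\SC^3(\R)$ with $\mu_k(x)=x^k$ on $I_\ast$ (for example $x^k$ times a smooth cutoff equal to one on $I_\ast$). Because both spectra lie in $[-C_\ast,C_\ast]\subset I_\ast$, we have $\frac1d\Tr((\widehat S^t_\sfe)^k)=\Phi^{\spec}_{\sfe,t,\mu_k}$ exactly for $\sfe\in\{\ro,\mathrm G\}$. The spectral case of \eqref{eq:main-universality-moments} then gives the first bound, with $C_{m,k}\bydef C_{m,\mu_k}$; this depends only on $k$ and $C_\ast$ through the derivatives of $\mu_k$ on $I_\ast$.

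\emph{Mean-squared error.} Expanding the square,
\[
\tfrac1d\|\widehat S^t_\sfe-S^\star\|_\Frob^2=\Phi^{\cross}_{\sfe,t,t}-2\Phi^{\sig}_{\sfe,t}+\tfrac1d\|S^\star\|_\Frob^2 .
\]
The last term is common to both experiments and cancels in the difference. Minkowski's inequality in $L^m$ together with Theorem~\ref{thm:final-universality} then bounds the difference by $3C_md^{-1/2}$. The $\CO_\prec(d^{-1/2})$ statements follow from these $L^m$ bounds for every $m$, via Markov's inequality as in the notation paragraph.

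\emph{Bounded-Lipschitz distance.} This is the only step needing care, because the supremum runs over an infinite class of test functions while Theorem~\ref{thm:final-universality} handles one fixed $\mu$ at a time. Both measures are supported in $[-C_\ast,C_\ast]$, so only test functions restricted to this interval matter.
\begin{itemize}
\item Fix $\eta>0$. The class of $\varphi$ with $\|\varphi\|_\infty\le1$ and $\operatorname{Lip}(\varphi)\le1$, restricted to $[-C_\ast,C_\ast]$, is compact in the sup norm by Arzel\`a--Ascoli. Choose a finite $\eta$-net $\varphi_1,\dots,\varphi_N$.
\item Mollify each $\varphi_j$ into a smooth $\mu_j\in\SC^3(\R)$ with $\sup_{[-C_\ast,C_\ast]}|\mu_j-\varphi_j|\le\eta$. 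This is possible since the $\varphi_j$ are $1$-Lipschitz; then extend with a cutoff.
\end{itemize}
For any admissible $\varphi$, take $\varphi_j$ within $\eta$ of it. The triangle inequality gives
\[
\Big|\int\varphi\,d\widehat\nu_{\ro,t}-\int\varphi\,d\widehat\nu_{\mathrm G,t}\Big|\le 4\eta+\big|\Phi^{\spec}_{\ro,t,\mu_j}-\Phi^{\spec}_{\mathrm G,t,\mu_j}\big| ,
\]
so $d_{\mathrm{BL}}\le4\eta+\max_{j\le N}|\Phi^{\spec}_{\ro,t,\mu_j}-\Phi^{\spec}_{\mathrm G,t,\mu_j}|$. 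The maximum is over finitely many fixed $\mu_j$, none depending on $d$, so by Theorem~\ref{thm:final-universality} and a union bound it tends to $0$ in probability. Since $\eta>0$ is arbitrary, $d_{\mathrm{BL}}(\widehat\nu_{\ro,t},\widehat\nu_{\mathrm G,t})\xrightarrow{\P}0$.
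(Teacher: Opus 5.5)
Your proposal is correct and follows essentially the same route as the paper. The paper obtains the moment bound from the spectral case of Theorem~\ref{thm:final-universality}, taking $\mu(x)=x^k$ directly; your smooth cutoff is harmless but unnecessary, since the constant depends only on derivatives of $\mu$ on $I_\ast$. It obtains the MSE bound from the expansion into cross-time and signal overlaps, and the bounded-Lipschitz claim from a finite net of smooth functions on $[-C_\ast,C_\ast]$ plus uniform approximation, exactly as you do.
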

\begin{proof}[Proof of Corollary \ref{cor:power-from-spectral}]
    The first bound is Theorem~\ref{thm:final-universality} with $\mu(x)=x^k$. For $\sfe\in\{\ro,\mathrm G\}$,
    \begin{equation*}
        \frac1d\|\widehat S^t_{\sfe}-S^\star\|_\Frob^2
        =\frac1d\Tr((\widehat S^t_{\sfe})^2)-\frac2d\Tr(\widehat S^t_{\sfe}S^\star)+\frac1d\|S^\star\|_\Frob^2.
    \end{equation*}
    Subtracting the two identities cancels the signal-norm term, so the second bound follows from the triangle inequality, the first bound with $k=2$, and the signal-overlap comparison in Theorem~\ref{thm:final-universality}. The stochastic-domination claims follow from the all-moment bounds and Markov's inequality.

    For \eqref{eq:esd-bl-universality}, both empirical measures are supported on $[-C_\ast,C_\ast]$. The unit ball of bounded Lipschitz functions on this interval is compact in the uniform norm and admits, for every fixed accuracy, a finite net of smooth functions. Theorem~\ref{thm:final-universality} gives convergence in probability for every function in this finite net; approximation in the uniform norm then gives \eqref{eq:esd-bl-universality}.
\end{proof}

\subsection{State Evolution and Its Transfer to Rank-One Sensing}\label{subsec:se-transfer}
Theorem~\ref{thm:final-universality} turns any asymptotic description of the Gaussian trajectory into a description of the rank-one trajectory. For any one of the three observables and any random variable $\phi_d$ on the common probability space, \eqref{eq:main-universality-stochastic} gives the exact decomposition
\begin{equation}\label{eq:transfer-principle}
    \Phi_\ro-\phi_d
    =\Phi_{\mathrm G}-\phi_d+\CO_\prec(d^{-1/2}).
\end{equation}
Thus, if $\Phi_{\mathrm G}-\phi_d\to0$ in probability, then $\Phi_\ro-\phi_d\to0$ in probability as well.

To apply this transfer principle, it remains to identify the Gaussian prediction $\phi_d$. Maillard et al.\ sketch an informal state evolution for the Bayes-optimal GAMP algorithm with the rotationally invariant estimator, after replacing the rank-one sensing matrices by GOE matrices \cite[Appendix~D.9]{MaillardAntoine2024Bloa}. Their recursion tracks scalar order parameters describing the marginal performance at each iteration, including the signal overlap and the squared norm of the current iterate. It does not specify the joint law of iterates at different times, and hence does not provide the cross-time overlaps in \eqref{eq:observable-notation}, nor does it state convergence for the smooth spectral statistics considered here. Moreover, the replacement of rank-one sensing by GOE sensing is assumed in their derivation.

For the joint Gaussian prediction, we use the state-evolution theorem of Berthier, Montanari, and Nguyen for nonseparable Lipschitz AMP \cite[Theorem~1]{BerthierRaphael2020Sefa}. Their theorem is stated in vector coordinates. Appendix~\ref{app:gaussian-se-verification} translates it to our matrix normalization, identifies the correlated GOE perturbations and the Onsager coefficient, and checks the three observable test functions. Proposition~\ref{prop:gaussian-se} states the resulting Gaussian prediction, and Theorem~\ref{thm:se-transfer} transfers it to rank-one sensing.

At each fixed iteration, Gaussian state evolution replaces the pre-denoising matrix by the signal plus an effective GOE perturbation. These perturbations are correlated across iterations: the array $C$ below records the overlaps of the estimation errors, while $Q=\Delta+C$ records the total effective noise entering the denoisers.

Fix deterministic sequences $S^\star$ and $\widehat S^0$ and set $\alpha_d\bydef n/p$. If the signal or initialization is random, the statements below hold conditionally on $(S^\star,\widehat S^0)$ whenever the limits in Assumption~\ref{ass:se-regularity} exist almost surely; the coefficients specified below are then $\CB_0$-measurable, as permitted by Theorem~\ref{thm:final-universality}. The noise matrices and iterates below form triangular arrays indexed by $d$; to keep the notation light, we suppress this dependence. Given a positive semidefinite array $(Q_{s,r})_{0\leq s,r\leq t}$, let $W^0,\ldots,W^t\in\Sym$ be jointly Gaussian symmetric matrices whose upper-triangular entries are independent across index pairs $(j,k)$ and satisfy
\begin{equation}\label{eq:se-noise-covariance}
    \E\big[W^s_{jk}W^r_{jk}\big]
    =\frac{\big(1+\mathbf 1_{\{j=k\}}\big)\,Q_{s,r}}{2(d+1)\alpha_d},
    \qquad 1\,\leq\, j\,\leq\, k\,\leq\, d,\quad 0\,\leq\, s,r\,\leq\, t.
\end{equation}
Each $W^r$ is a GOE matrix: in the normalization \eqref{eq:goe-entry-definition}, $W^r$ has the law of $\sqrt{Q_{r,r}\,d/(2(d+1)\alpha_d)}\;\widetilde Z_1$. The state-evolution iterates are
\begin{equation}\label{eq:se-recursion-matrix}
    \overline S^0\bydef\widehat S^0,
    \qquad
    \overline R^{t}\bydef S^\star+W^{t},
    \qquad
    \overline S^{t+1}\bydef\sigma_t\big(\overline R^{t}\big),
    \qquad 0\leq t<T,
\end{equation}
and the state-evolution arrays are
\begin{equation}\label{eq:se-arrays}
    C_{s,t}\bydef\lim_{d\to\infty}\frac2d\,\E\Tr\big((\overline S^s-S^\star)(\overline S^t-S^\star)\big),
    \qquad
    Q_{s,t}\bydef\Delta+C_{s,t},
    \qquad 0\leq s,t\leq T,
\end{equation}
where the expectation is over the noise matrices $W$. The definition is recursive in $t$. First, $C_{0,0}$ is determined by $S^\star$ and $\widehat S^0$, and $Q_{0,0}=\Delta+C_{0,0}$. Given the entries of $Q$ through time $t$, equation \eqref{eq:se-noise-covariance} generates $W^0,\ldots,W^t$; applying the denoisers in \eqref{eq:se-recursion-matrix} then determines the next row and column of $C$, and hence of $Q$. Each finite principal subarray of $C$ is a limit of Gram matrices and is therefore positive semidefinite. Consequently, $Q=C+\Delta\mathbf 1\mathbf 1^\top$ is positive semidefinite as well, so the required jointly Gaussian matrices exist at every step. The factor $2/d=(d+1)/p$ in \eqref{eq:se-arrays} converts the trace normalization $d^{-1}\Tr$ used here into the normalization $n^{-1}\langle\cdot,\cdot\rangle$ of the vectorized recursion in \cite{BerthierRaphael2020Sefa}; see Appendix~\ref{app:gaussian-se-verification}.

The recursion requires limits of the error overlaps. We assume these limits exist for every Gaussian covariance pair, so that the assumption can be checked before the state-evolution covariances are known.
\begin{assumption}[State-evolution regularity]\label{ass:se-regularity}
    Suppose that $\alpha_d=n/p\to\alpha\in(0,\infty)$, and define the error maps
    \begin{equation*}
        \mathcal E_{0,d}(W)\bydef\widehat S^0-S^\star,
        \qquad
        \mathcal E_{t,d}(W)\bydef\sigma_{t-1}(S^\star+W)-S^\star,
        \quad 1\leq t\leq T.
    \end{equation*}
    For every $0\leq s,t\leq T$ and every positive semidefinite $K\in\R^{2\times2}$, let $(W_1,W_2)$ be jointly Gaussian symmetric matrices with upper-triangular entries independent across index pairs and
    \begin{equation*}
        \E[(W_a)_{jk}(W_b)_{jk}]
        =\frac{(1+\mathbf 1_{\{j=k\}})K_{ab}}{2(d+1)\alpha_d},
        \qquad a,b\in\{1,2\}.
    \end{equation*}
    Assume that the limit
    \begin{equation}\label{eq:se-regularity-limit}
        \lim_{d\to\infty}\frac2d\,\E\Tr\big(\mathcal E_{s,d}(W_1)\mathcal E_{t,d}(W_2)\big)
    \end{equation}
    exists and is finite.
\end{assumption}
The convergence $n/p\to\alpha$ is used only here; Theorem~\ref{thm:final-universality} requires no more than $n\leq C_0p$.

We now specify the coefficients that place this matrix recursion in the standard AMP normalization. The back-projection coefficient $c_t$ fixes the scale, while the memory coefficient $b_t$ is the Onsager term that cancels the leading self-interaction created by reusing the same sensing matrix. To define the latter, let $\CD_t(M)$ be the divergence of the spectral map $\sigma_t$ at $M$, namely the trace of its Fr\'echet derivative $D\sigma_t(M)$ as a linear operator on $\Sym$. By Lemma~\ref{lem:dk-bounds} with $m=1$, if $M=U\diag(\lambda_1,\ldots,\lambda_d)U^\top$, then
\begin{equation}\label{eq:se-divergence-formula}
    \CD_t(M)\,=\,\sum_{1\,\leq\, j\,\leq\, k\,\leq\, d}\sigma_t^{[1]}(\lambda_j,\lambda_k),
    \qquad
    \sigma_t^{[1]}(x,y)\bydef
    \begin{cases}
        \dfrac{\sigma_t(x)-\sigma_t(y)}{x-y}, & x\neq y,\\[1ex]
        \sigma_t'(x), & x=y.
    \end{cases}
\end{equation}
We run the recursion \eqref{eq:AMP-recursion} with
\begin{equation}\label{eq:se-coefficients}
    c_t\bydef\frac{\gamma_t\,p}{n(d+1)}=\frac{\gamma_t}{\alpha_d(d+1)},
    \qquad
    b_0\bydef0,
    \qquad
    b_t\bydef\frac{\gamma_{t-1}}{n}\,\E\big[\CD_{t-1}(\overline R^{t-1})\big],
    \quad 1\leq t<T.
\end{equation}
Since $\abs{\CD_t(M)}\leq p\norm{\sigma_t'}_\infty$ and $n\asymp p$, these coefficients satisfy the bounds in Assumption~\ref{ass:basic}. With $\mathfrak r^t\bydef\gamma_tr^t_{\sfe}$, the recursion \eqref{eq:AMP-recursion} reads $\mathfrak r^t=y_{\sfe}-\CA_{\sfe}(\widehat S^t_{\sfe})+(b_t/\gamma_{t-1})\mathfrak r^{t-1}$ and $R^t_{\sfe}=\widehat S^t_{\sfe}+(c_t/\gamma_t)\CA_{\sfe}^*(\mathfrak r^t)$. Under \eqref{eq:se-coefficients}, the ratios $b_t/\gamma_{t-1}$ and $c_t/\gamma_t$ are independent of $\gamma$, so $\gamma_t$ changes only the normalization assigned to the residual, not the matrix iterates. A different back-projection ratio $c_t/\gamma_t$, however, changes the scale of the matrix presented to the fixed denoiser $\sigma_t$. Thus $c_t/\gamma_t$ fixes a normalization convention, whereas the Onsager ratio $b_t/\gamma_{t-1}$ is prescribed by the normalized expected divergence at the state-evolution input.

For these coefficients, the state-evolution predictions for the three observables are
\begin{equation}\label{eq:se-predictions}
    \phi^{\sig}_{d,t}\bydef\frac1d\,\E\Tr\big(\overline S^tS^\star\big),
    \qquad
    \phi^{\cross}_{d,s,t}\bydef\frac1d\,\E\Tr\big(\overline S^s\overline S^t\big),
    \qquad
    \phi^{\spec}_{d,t,\mu}\bydef\frac1d\,\E\Tr\big(\mu(\overline S^t)\big).
\end{equation}
\begin{proposition}[Gaussian state evolution, \cite{BerthierRaphael2020Sefa}]\label{prop:gaussian-se}
    Let $S^\star$ and $\widehat S^0$ be deterministic sequences satisfying Assumptions \ref{ass:signal-init} and \ref{ass:se-regularity}, let Assumption~\ref{ass:basic} hold, and run the Gaussian recursion \eqref{eq:AMP-recursion} with the coefficients \eqref{eq:se-coefficients}. Then, for every $0\leq s,t\leq T$ and every $\mu\in\SC^3(\R)$,
    \begin{equation}\label{eq:gaussian-se-observables}
        \Phi^{\sig}_{\mathrm G,t}-\phi^{\sig}_{d,t}\xrightarrow{\ \P\ }0,
        \qquad
        \Phi^{\cross}_{\mathrm G,s,t}-\phi^{\cross}_{d,s,t}\xrightarrow{\ \P\ }0,
        \qquad
        \Phi^{\spec}_{\mathrm G,t,\mu}-\phi^{\spec}_{d,t,\mu}\xrightarrow{\ \P\ }0.
    \end{equation}
\end{proposition}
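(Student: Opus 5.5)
The plan is to derive Proposition~\ref{prop:gaussian-se} directly from \cite[Theorem~1]{BerthierRaphael2020Sefa}, applied to the vectorized Gaussian recursion.

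\textbf{Step 1: vectorize.} Work with the vectorized form $\widetilde F=[\widetilde f_1,\ldots,\widetilde f_n]$, whose entries are i.i.d.\ $\CN(0,1/p)$. Setting $s^t=\svecop(\widehat S^t_{\mathrm G})$, the identity $\Tr(\widetilde Z_iS)=\sqrt{d+1}\langle\widetilde f_i,\svecop(S)\rangle$ from \eqref{eq:lifting-identity} gives $\CA_{\mathrm G}(S)=\sqrt{d+1}\,\widetilde F^\top\svecop(S)$ and $\svecop(\CA^*_{\mathrm G}(r))=\sqrt{d+1}\,\widetilde F r$. After rescaling $\widetilde F$ to have variance $1/n$ entries, so that $A=\sqrt{p/n}\,\widetilde F^\top$, the recursion \eqref{eq:AMP-recursion} with the coefficients \eqref{eq:se-coefficients} becomes the standard symmetric-free, rectangular AMP of Berthier--Montanari--Nguyen. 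Its denoisers are the nonseparable maps
\[
\eta_t(v)=\svecop\bigl(\sigma_t(\mat(v))\bigr),
\]
with a linear output channel $y-As$ and the Onsager coefficient given by the expected normalized divergence.

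\textbf{Step 2: verify the hypotheses of their theorem.}
\begin{itemize}[label=--]
\item \emph{Lipschitz denoisers.} Each $\eta_t$ is uniformly Lipschitz in Euclidean norm, because for $\sigma_t\in\SC^1_b$ the spectral map is Frobenius-Lipschitz with constant $\|\sigma_t'\|_\infty$ (a standard operator-Lipschitz fact for Frobenius norms).
\item \emph{Bounded initialization and signal.} The normalized norms are bounded by Assumption~\ref{ass:signal-init}.
\item \emph{Limiting overlaps.} The existence of the limiting overlaps $\lim n^{-1}\E\langle\eta_s(\cdot),\eta_t(\cdot)\rangle$ under Gaussian inputs follows from Assumption~\ref{ass:se-regularity}. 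The conversion factor $2/d=(d+1)/p$ translates between $n^{-1}\langle\cdot,\cdot\rangle$ and $d^{-1}\Tr$, and the vectorized noise $\svecop(W^t)$ has i.i.d.\ coordinates with covariance $Q_{s,r}/((d+1)\alpha_d)$, which matches \eqref{eq:se-noise-covariance}.
\item \emph{Onsager term.} The Onsager coefficient in their theorem is $n^{-1}\E\operatorname{div}\eta_{t-1}$ at the SE input, and $\operatorname{div}\eta = \CD_{t-1}$ by Lemma~\ref{lem:dk-bounds}. This gives $b_t$ as in \eqref{eq:se-coefficients}.
\end{itemize}

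\textbf{Step 3: apply the theorem and check the test functions.} \cite[Theorem~1]{BerthierRaphael2020Sefa} states that for uniformly pseudo-Lipschitz test functions $\psi:\R^{p\times(t+2)}\to\R$ of order $2$, the quantity $\psi(s^\star,s^0,\ldots,s^t)-\E\psi(s^\star,\bar s^0,\ldots,\bar s^t)$ tends to $0$ in probability. Each of the three observables must be checked against this class.
\begin{itemize}[label=--]
\item The signal and cross overlaps, $(d+1)p^{-1}\langle u,v\rangle$, are pseudo-Lipschitz of order 2 in the normalized sense once normalized by $\sqrt p$.
\item For the spectral statistic, take $\psi(v)=d^{-1}\Tr\mu(\mat(v))$. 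Since all iterates have spectrum in $[-C_\ast,C_\ast]$, we may replace $\mu$ by a compactly supported $C^1$ modification, which changes nothing. Then $|\psi(u)-\psi(v)|\le d^{-1}\|\mu'\|_\infty\sqrt d\,\|u-v\|$ by the Frobenius-Lipschitz property together with Cauchy--Schwarz on the trace. Because $\|u-v\|$ is of order $\sqrt d$ for normalized vectors, this is Lipschitz in the scaled norm $\|\cdot\|/\sqrt p\cdot d^{1/2}$, as required.
\end{itemize}

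\textbf{Main obstacle.} The hardest part is the normalization bookkeeping: confirming that their state evolution, which is stated for $\eta_t$ with Gaussian noise of variance $\tau^2 I_p$, reproduces exactly the covariance \eqref{eq:se-noise-covariance} and the arrays \eqref{eq:se-arrays}, including the residual noise $\Delta$ and the rescaling by $\gamma_t$. A secondary issue is that their theorem requires convergence of the overlaps along the specific recursive covariance. Assumption~\ref{ass:se-regularity} supplies this for every $K$, which covers the covariance actually generated. I would carry out this bookkeeping in Appendix~\ref{app:gaussian-se-verification}.
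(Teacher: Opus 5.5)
Your proposal follows the paper's proof. Both reduce to \cite[Theorem~1]{BerthierRaphael2020Sefa} via $A_d=\sqrt{p/n}\,\widetilde F^\top$, use Frobenius-Lipschitzness of the spectral maps, identify the divergence with $\CD_t$, match covariances through $2/d=(d+1)/p$, and check that the three observables are uniformly Lipschitz in the normalized metric. Your Cauchy--Schwarz-against-$I_d$ bound for the spectral statistic is an equivalent substitute for the paper's Lidskii argument.

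One correction for the bookkeeping you defer: with this $A_d$, the literal denoiser $\svecop(\sigma_t(\mmat(\cdot)))$ does not fit BMN's form, since $y=\lambda_d A_d\svecop(S^\star)+\epsilon$ with $\lambda_d=\sqrt{(d+1)\alpha_d}$. You must set $\theta^\star=\lambda_d\svecop(S^\star)$ and $\eta_{t,d}(x)=\lambda_d\svecop(\sigma_t(\mmat(x/\lambda_d)))$, as your noise-variance and metric statements already implicitly assume. Then, after conditioning on $n^{-1}\|\epsilon\|^2\to\Delta$, pass to BMN's error form with $e_t(u)=\eta_{t-1,d}(\theta^\star-u)-\theta^\star$ and $g_t(v)=v-\epsilon$, so that $d_t=1$ and the input-side Onsager coefficient is $-b_t/\gamma_{t-1}$.
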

Appendix~\ref{app:gaussian-se-verification} derives Proposition~\ref{prop:gaussian-se} from \cite[Theorem~1]{BerthierRaphael2020Sefa}. Only the Lipschitz continuity of the $\sigma_t$ enters this reduction; the $C^4$ regularity in Assumption~\ref{ass:basic} is needed for universality, not for Gaussian state evolution.

\begin{theorem}[State evolution for rank-one sensing]\label{thm:se-transfer}
    Under the assumptions of Proposition~\ref{prop:gaussian-se}, the convergences \eqref{eq:gaussian-se-observables} hold with $\mathrm G$ replaced by $\ro$. In particular, for both ensembles and every $0\leq t\leq T$,
    \begin{equation}\label{eq:se-mse-limit}
        \frac1d\big\|\widehat S^t_{\sfe}-S^\star\big\|_\Frob^2
        \xrightarrow{\ \P\ }\frac12\,C_{t,t},
        \qquad \sfe\in\{\ro,\mathrm G\}.
    \end{equation}
\end{theorem}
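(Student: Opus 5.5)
The plan is to combine Proposition~\ref{prop:gaussian-se} with the transfer principle \eqref{eq:transfer-principle}, which rests on Theorem~\ref{thm:final-universality}. The first check is that the coefficients \eqref{eq:se-coefficients} satisfy Assumption~\ref{ass:basic}. For $c_t$, we have $d|c_t|=d\gamma_t/(\alpha_d(d+1))\le \gamma_t/\alpha_d$, and this is bounded because $\alpha_d\to\alpha>0$ keeps $\alpha_d$ bounded below for large $d$. For $b_t$, we have $|b_t|\le \gamma_{t-1}p\|\sigma_{t-1}'\|_\infty/n$, which is bounded for the same reason. The bound $n\le C_0p$ follows from convergence of $\alpha_d$. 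Finitely many small $d$ are irrelevant for convergence in probability, so they can be discarded. Since $S^\star$ and $\widehat S^0$ are deterministic, the coefficients are trivially $\CB_0$-measurable. Theorem~\ref{thm:final-universality} therefore applies to the coupled pair of recursions run with these coefficients.

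Next fix one of the three observables, with indices $s,t$ and test function $\mu\in\SC^3(\R)$. Let $\phi_d$ be the corresponding deterministic prediction in \eqref{eq:se-predictions}. Write
\[
\Phi_\ro-\phi_d=(\Phi_{\mathrm G}-\phi_d)+(\Phi_\ro-\Phi_{\mathrm G}).
\]
The first term tends to $0$ in probability by Proposition~\ref{prop:gaussian-se}. For the second term, \eqref{eq:main-universality-moments} with $m=2$ and Chebyshev's inequality give $\P(|\Phi_\ro-\Phi_{\mathrm G}|>\veps)\le C_{2,\mu}^2d^{-1}\veps^{-2}\to0$. Summing the two terms gives the rank-one versions of all three convergences in \eqref{eq:gaussian-se-observables}.

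For \eqref{eq:se-mse-limit}, expand the error as $\frac1d\|\widehat S^t_{\sfe}-S^\star\|_\Frob^2=\Phi^{\cross}_{\sfe,t,t}-2\Phi^{\sig}_{\sfe,t}+\frac1d\|S^\star\|_\Frob^2$. By the convergences just established, this is within $o_\P(1)$ of the deterministic quantity $\frac1d\E\|\overline S^t-S^\star\|_\Frob^2$, where the expectation is over the state-evolution noise $W$. By the definition \eqref{eq:se-arrays} of $C_{t,t}$, this quantity tends to $\frac12C_{t,t}$; that the limit exists is part of Assumption~\ref{ass:se-regularity}, via the recursive construction of $Q$. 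The case $t=0$ is deterministic: it is $\frac1d\|\widehat S^0-S^\star\|_\Frob^2$, whose limit defines $\frac12 C_{0,0}$.

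The main point needing care is not analytic but bookkeeping. One must confirm that the coefficients \eqref{eq:se-coefficients}, which involve $\E\CD_{t-1}(\overline R^{t-1})$, are the same deterministic numbers in both ensembles. They are, because they depend only on the state-evolution law. With that confirmed, Theorem~\ref{thm:final-universality} compares two trajectories driven by identical coefficients, and no further argument is needed.
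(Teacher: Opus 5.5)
Your proposal is correct and follows the paper's proof: you transfer the Gaussian convergences through Theorem~\ref{thm:final-universality} (using the $L^2$ bound with Chebyshev, where the paper cites the stochastic-domination form via \eqref{eq:transfer-principle}), then write the error as $\Phi^{\cross}_{\sfe,t,t}-2\Phi^{\sig}_{\sfe,t}+\frac1d\|S^\star\|_\Frob^2$ and invoke the definition \eqref{eq:se-arrays} of $C_{t,t}$. The coefficient checks you add are the same ones the paper makes in the text immediately after \eqref{eq:se-coefficients}.
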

\begin{proof}
    The first claim follows from Proposition~\ref{prop:gaussian-se} and \eqref{eq:transfer-principle} with $\phi_d$ equal to the corresponding prediction in \eqref{eq:se-predictions}. For \eqref{eq:se-mse-limit}, expand
    \begin{equation*}
        \frac1d\big\|\widehat S^t_{\sfe}-S^\star\big\|_\Frob^2
        =\Phi^{\cross}_{\sfe,t,t}-2\Phi^{\sig}_{\sfe,t}+\frac1d\|S^\star\|_\Frob^2,
        \qquad
        \phi^{\cross}_{d,t,t}-2\phi^{\sig}_{d,t}+\frac1d\|S^\star\|_\Frob^2
        =\frac1d\,\E\big\|\overline S^t-S^\star\big\|_\Frob^2,
    \end{equation*}
    and note that the right-hand side of the second identity converges to $\frac12C_{t,t}$ by \eqref{eq:se-arrays}.
\end{proof}

Under the conditions above, the joint matrix state evolution therefore gives a rigorous prediction for the mean-squared error of AMP under rank-one sensing.

\begin{remark}[Interpretation]\label{rem:se-interpretation}
    Write $\tau_t^2\bydef Q_{t,t}=\Delta+C_{t,t}$. By \eqref{eq:se-noise-covariance}, the denoiser at step $t$ sees the signal corrupted by a GOE matrix, $\overline R^t=S^\star+W^t$, with $W^t$ distributed as $\sqrt{\tau_t^2\,d/(2(d+1)\alpha_d)}\,\widetilde Z_1$. Thus, in the normalization \eqref{eq:goe-entry-definition}, $W^t$ is asymptotically $\sqrt{\tau_t^2/(2\alpha)}\,\widetilde Z_1$. In the Bayes-optimal algorithm of \cite{MaillardAntoine2024Bloa}, $\sigma_t$ is the rotationally invariant estimator of \cite{BunAllezBouchaudPotters2016} for denoising $S^\star$ at this GOE noise level, and \eqref{eq:se-arrays} becomes a recursion for the matrix-denoising mean-squared error. Theorem~\ref{thm:se-transfer} does not cover that denoiser, which is neither bounded nor $C^4$; Section~\ref{subsec:numerics-spectra} examines it empirically.
\end{remark}

\begin{remark}[A sufficient condition]\label{rem:se-regularity-sufficient}
    Assumption~\ref{ass:se-regularity} holds, for instance, whenever $\sup_d\|S^\star\|_\op<\infty$, the empirical spectral distribution of $S^\star$ converges weakly, and $\widehat S^0$ is a fixed continuous spectral function of $S^\star$ (for example, $\widehat S^0=0$). Represent a correlated Gaussian pair $(W_1,W_2)$ as linear combinations of independent GOE matrices. These GOE components and $S^\star$ are asymptotically free \cite[Section~5.4]{AndersonGuionnetZeitouni2010}, which gives convergence for normalized traces of polynomials. With probability at least $1-e^{-cd}$, the spectra of $S^\star+W_1$ and $S^\star+W_2$ lie in a fixed compact interval, on which the spectral maps can be approximated uniformly by polynomials; on the complementary event, boundedness of the spectral maps makes the contribution negligible. The same argument applies when $\widehat S^0$ is independent of $S^\star$, rotationally invariant, has uniformly bounded operator norm, and has an almost surely convergent empirical spectral distribution. The additional operator-norm bound on $S^\star$ is used only for these convenient sufficient conditions. The result itself assumes \eqref{eq:se-regularity-limit} directly.
\end{remark}

\section{Numerical Experiments}\label{sec:numerics}

We compare the rank-one and Gaussian AMP trajectories with the state-evolution predictions of Proposition~\ref{prop:gaussian-se} and Theorem~\ref{thm:se-transfer}, and measure how their discrepancies decrease as $n$ and $d$ grow at a fixed sampling ratio. We examine the mean-squared error, cross-time overlaps, and empirical spectra. The final experiment uses a nonlinear output channel to test what happens beyond the linear observation model of Theorem~\ref{thm:final-universality}.

Throughout this section, $p=d(d+1)/2$ and $n=\lfloor\alpha p\rfloor$ for a fixed sampling ratio $\alpha$, as in the state-evolution formulation of Section~\ref{subsec:se-transfer}. Unless stated otherwise, we run the linear-channel AMP recursion \eqref{eq:AMP-recursion} from $\widehat S^0=0$, take $\gamma_t=1$, and use the coefficients prescribed by \eqref{eq:se-coefficients}. Thus, $c_t=d/(2n)$ in the present normalization, while $b_t$ is obtained from the expected divergence at the state-evolution input. We compute by Monte Carlo the finite-dimensional expectations entering the state-evolution arrays $C$ and $Q$ in \eqref{eq:se-arrays}. In each paired trial, the two recursions use the same signal, initialization, and observation noise from \eqref{eq:observation-model} but independently sampled rank-one and GOE sensing matrices. Unless stated otherwise, confidence intervals are $95\%$ percentile-bootstrap intervals over these paired trials.

The Gaussian sensing matrix has $pn=\Theta(d^4)$ entries, since both $p$ and $n$ are of order $d^2$. We implement its forward and adjoint actions using the deferred-sampling idea of Householder Dice \cite{Lu2021HouseholderDice}. Each new matrix--vector product is sampled conditional on those already computed. The resulting products have the same joint law as repeated queries to a single Gaussian sensing matrix, while avoiding storage of all $pn$ entries.

\subsection{State Evolution}\label{subsec:numerics-state-evolution}

We begin with a bounded-spectrum choice of the signal $S^\star$ in \eqref{eq:observation-model}: $3/4$ of its eigenvalues are $+1$ and the rest are $-1$, up to rounding. For this signal, we use the spectral denoiser
\begin{equation}\label{eq:numerical-tanh-denoiser}
    \sigma_t(x)=\tanh(x/v_t+h).
\end{equation}
In the state-evolution recursion \eqref{eq:se-recursion-matrix}, the denoiser at iteration $t$ receives $S^\star+W^t$. In the large-$d$ limit, \eqref{eq:se-noise-covariance} gives $W^t$ the law of $\sqrt{v_t}\,\widetilde Z_1$ for the GOE matrix $\widetilde Z_1$ normalized in \eqref{eq:goe-entry-definition}, with
\begin{equation*}
    v_t=\frac{Q_{t,t}}{2\alpha}.
\end{equation*}
Thus, $v_t$ is the effective GOE noise variance at iteration $t$, and we use it to set the transition scale of the spectral map. Writing $q$ for the proportion of $+1$ eigenvalues after rounding, we take $h=\frac12\log(q/(1-q))$, a positive offset that reflects the asymmetry of the signal spectrum. This choice is not meant to be the posterior mean for the matrix channel: the fixed signal $S^\star$ and the GOE perturbation $W^t$ need not have the same eigenvectors, so the eigenvalues presented to the denoiser are not scalar Gaussian observations of the eigenvalues of $S^\star$. Rather, \eqref{eq:numerical-tanh-denoiser} provides a convenient deterministic, bounded, $C^\infty$ spectral map whose scale follows state evolution. It satisfies Assumption~\ref{ass:basic} and is therefore covered by Theorem~\ref{thm:final-universality}.

In the first experiment, we verify numerically that the rank-one and Gaussian recursions follow the common state-evolution prediction in Figure~\ref{fig:numerical-state-evolution}. For this experiment, $d=500$, $\alpha=1$, and the observation-noise variance in \eqref{eq:observation-model} is $\Delta=0.5$. We run 50 paired trials for eight iterations and approximate the state-evolution expectations with 400 Gaussian draws. The normalized mean-squared error in \eqref{eq:se-mse-limit} tests the diagonal entries $C_{t,t}$ of the array $C$ defined in \eqref{eq:se-arrays}. To test the joint prediction across different iterations, we also record
\begin{equation*}
    \widehat C_{\sfe,s,t}\bydef\frac1d\Tr\big((\widehat S^s_{\sfe}-S^\star)
    (\widehat S^t_{\sfe}-S^\star)\big),
    \qquad \sfe\in\{\ro,\mathrm G\},
\end{equation*}
whose state-evolution prediction is $C_{s,t}/2$. These cross-time overlaps measure how strongly the errors at two different iterations remain correlated.

\begin{figure}
    \centering
    \includegraphics[width=\textwidth]{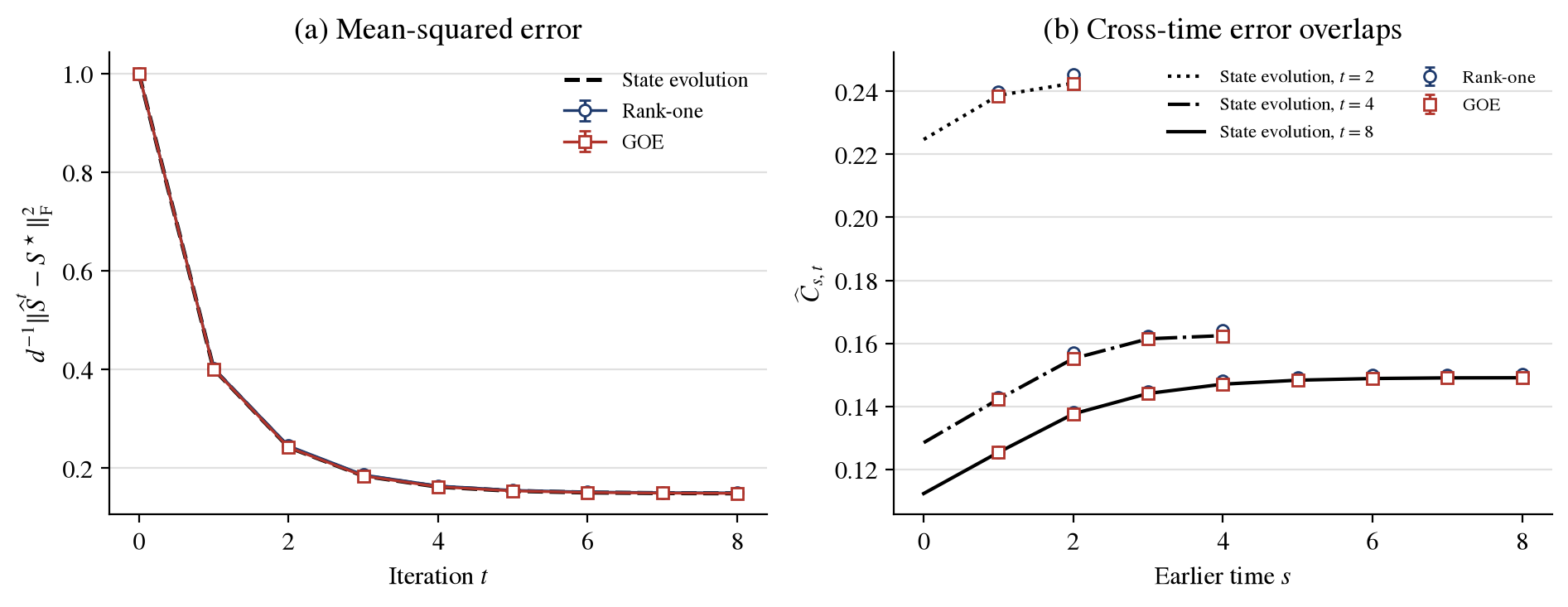}
    \caption{State evolution for the bounded-spectrum signal at $d=500$, $\alpha=1$, and $\Delta=0.5$. Symbols are means over 50 paired trials, the bars are $95\%$ bootstrap confidence intervals, and the black curves are the finite-$d$ state-evolution predictions computed from 400 draws. Left: normalized mean-squared error. Right: cross-time error overlaps for $t\in\{2,4,8\}$. The rank-one and GOE recursions follow not only the same marginal error curve but also the joint correlations between different iterations.}
    \label{fig:numerical-state-evolution}
\end{figure}

The three curves in the left panel of Figure~\ref{fig:numerical-state-evolution} are nearly indistinguishable. For scale, $1/d=2.0\times10^{-3}$ in this experiment. The right panel shows that the agreement is not limited to marginal errors: the empirical cross-time overlaps also follow the off-diagonal entries $C_{s,t}$ defined in \eqref{eq:se-arrays}, as predicted by Proposition~\ref{prop:gaussian-se} for GOE sensing and by Theorem~\ref{thm:se-transfer} for rank-one sensing.

\FloatBarrier
\subsection{Spectral Distributions}\label{subsec:numerics-spectra}

Figure~\ref{fig:spectra} compares the empirical spectral distributions for a Wishart signal. We draw $S^\star=WW^\top/d$ and an independent initialization $\widehat S^0=W_0W_0^\top/d$, where $W,W_0\in\R^{d\times d}$ have independent standard Gaussian entries. We run 20 paired trials with $d=500$, $\Delta=1$, and $\alpha=0.5$ for eight iterations. We compare the plain spectral map $\sigma_t(x)=\tanh(x)$ with the Wishart RIE denoiser of \cite{MaillardAntoine2024Bloa},
\begin{equation*}
    \sigma_t^{\mathrm{RIE}}(x)=x-2v_t\Re G_{v_t}(x+i0).
\end{equation*}
Here $\rho_v$ is the density of the free convolution of the aspect-ratio-one Marchenko--Pastur law and a semicircle law of variance $v$, and $G_v(z)=\int (z-u)^{-1}\rho_v(u)\,du$. We fix the coefficients in advance by numerical integration:
\begin{equation*}
    v_t=\frac{\Delta+2m_t}{2\alpha},\qquad b_0=0,\qquad
    b_{t+1}=\frac{2}{\alpha}\int\sigma_t(x)\Re G_{v_t}(x+i0)\rho_{v_t}(x)\,dx,
\end{equation*}
where $m_t$ is the limiting normalized mean-squared error of the Gaussian state-evolution iterate. The RIE denoiser lies outside the bounded $C_b^4$ class in our theorem; the agreement in its spectral distributions provides numerical evidence beyond that class.
\begin{figure}
    \centering
    \includegraphics[width=\textwidth]{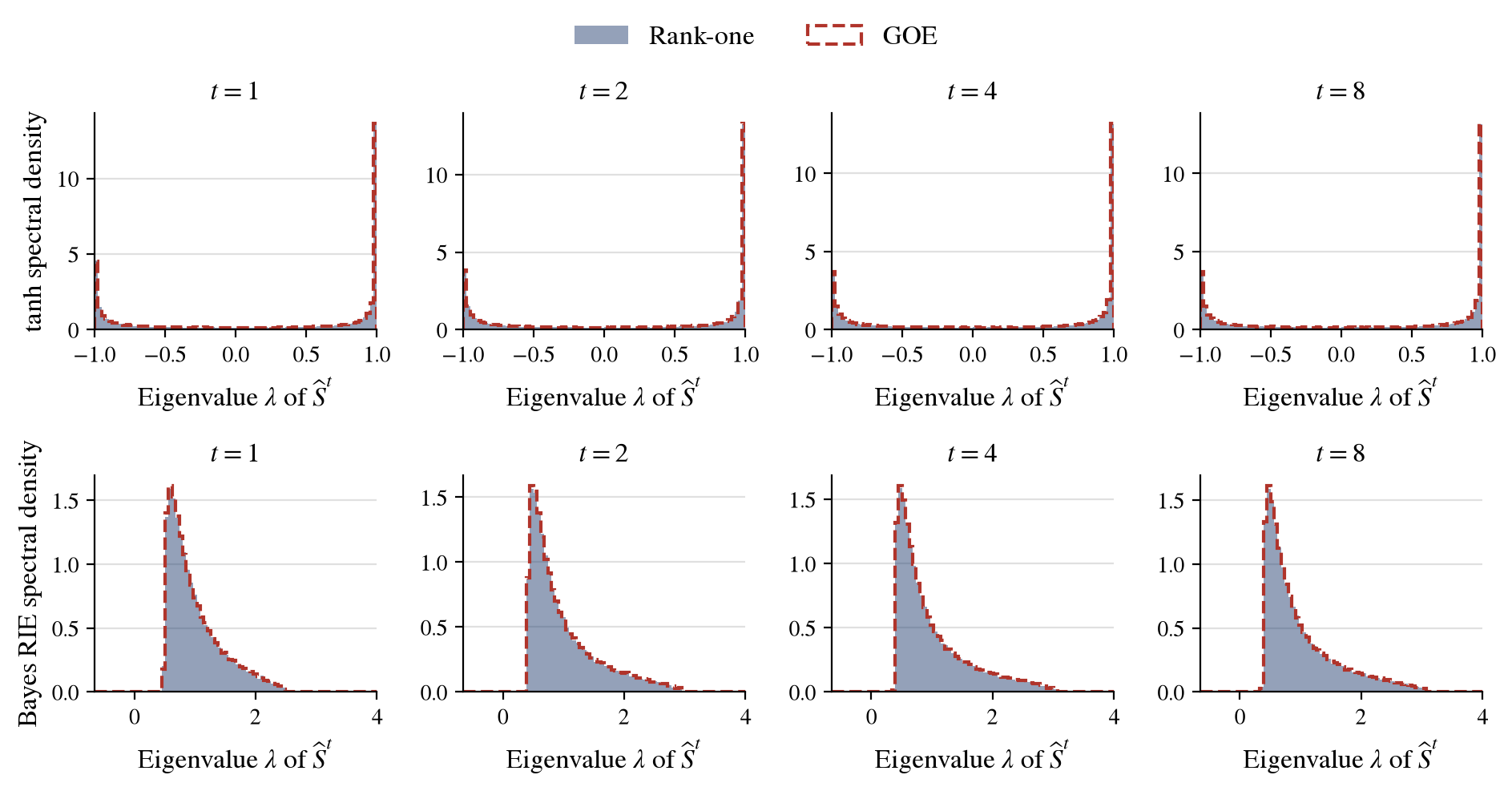}
    \caption{Empirical spectral distributions of the AMP iterates at $t\in\{1,2,4,8\}$ for independent square Wishart signal and initialization, with $d=500$, $\Delta=1$, and $\alpha=0.5$. Histograms pool the eigenvalues from 20 paired trials. Top: the plain $\tanh$ denoiser. Bottom: the Wishart RIE denoiser.}
    \label{fig:spectra}
\end{figure}

\FloatBarrier
\subsection{Convergence Rates}\label{subsec:numerics-rates}

We next measure how the discrepancy between the two ensembles decreases with dimension. For any one of the observables $\Phi$ in \eqref{eq:observable-notation}, set
\begin{equation}\label{eq:numerical-rms-discrepancy}
    \mathcal R_\Phi(d)\bydef
    \big\|\Phi_{\ro}-\Phi_{\mathrm G}\big\|_{L^2}
\end{equation}
\nopagebreak
where the $L^2$ norm is taken under the coupling of Theorem~\ref{thm:final-universality}, and estimate this quantity by the root mean square over the paired trials. Thus, $\mathcal R_\Phi(d)$ reflects both a difference between the two ensemble means and the fluctuations of the paired difference around its mean. Theorem~\ref{thm:final-universality} guarantees $\mathcal R_\Phi(d)=\CO(d^{-1/2})$.

We use the dimension grid
\begin{equation*}
    d\in\{100,150,200,300,400,500,800,1200,1600,2000\},
\end{equation*}
with 200 paired trials for each signal at every dimension, $\alpha=0.25$, and $\Delta=0.5$. We report the first iteration, where the source of the discrepancy is easiest to identify. At this iteration, the prescribed coefficients $b_0=0$ and $v_0=(\Delta+2)d^2/(4n)$ are known exactly. No Monte Carlo approximation of the coefficients is needed for this comparison.

We compare the bounded-spectrum signal above with a signal containing a spectral spike and a bounded bulk. Consider the family
\begin{equation}\label{eq:numerical-signal-interpolation}
    S^\star(a)
    =a\sqrt d\,e_1e_1^\top
    +\sqrt{\frac{(1-a^2)d}{d-1}}\,
      \diag(0,\eta_2,\ldots,\eta_d),
    \qquad 0\leq a\leq1,
\end{equation}
where the signs $\eta_j\in\{-1,+1\}$ are as evenly balanced as possible. Every member has normalized signal energy $d^{-1}\|S^\star(a)\|_\Frob^2=1$. For the spiked signal, we take $a=0.75$ and use the denoiser \eqref{eq:numerical-tanh-denoiser} with $h=0$. Both signals satisfy Assumption~\ref{ass:signal-init}.

\begin{figure}
    \centering
    \includegraphics[width=\textwidth]{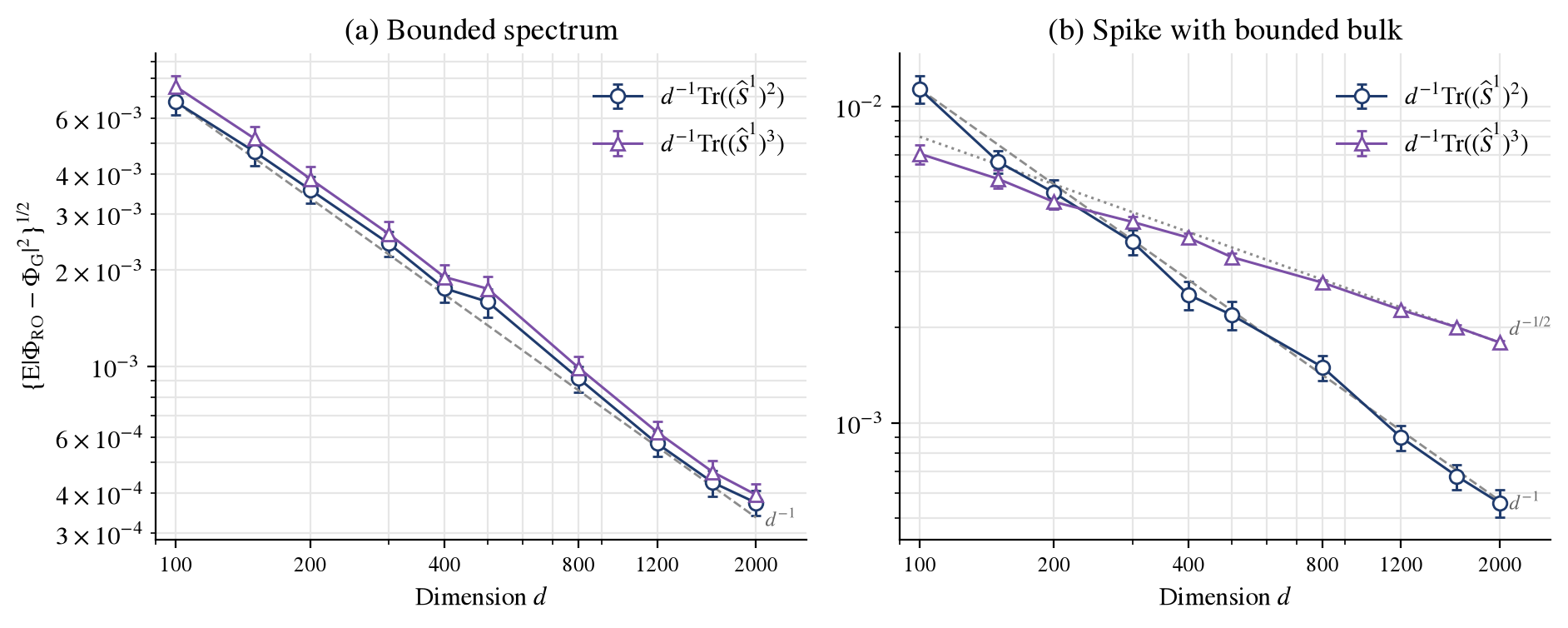}
    \caption{Empirical $L^2$ discrepancy \eqref{eq:numerical-rms-discrepancy} at $t=1$, with $\alpha=0.25$ and $\Delta=0.5$. Left: the bounded-spectrum signal. Right: the signal \eqref{eq:numerical-signal-interpolation} with $a=0.75$. The bars are $95\%$ bootstrap confidence intervals for the root mean square over 200 paired trials at each dimension. The dashed reference lines have slope $-1$; the dotted line in the right panel has slope $-1/2$. The cubic statistic decays more slowly in the spiked setting.}
    \label{fig:numerical-l2-rate}
\end{figure}

Figure~\ref{fig:numerical-l2-rate} plots the discrepancies of the quadratic and cubic spectral statistics $d^{-1}\Tr((\widehat S^1_{\sfe})^k)$, $k\in\{2,3\}$, which are instances of the spectral observable in \eqref{eq:observable-notation}. Table~\ref{tab:numerical-effective-exponents} gives the fitted exponents for these statistics and the signal overlap. Each exponent is the slope of a log--log fit to the RMS discrepancy \eqref{eq:numerical-rms-discrepancy}.

\begin{table}[H]
    \centering
    \small
    \begin{tabular}{llc}
        \hline
        Signal & Observable at $t=1$ & RMS exponent\\
        \hline
        bounded & signal overlap & $-1.01$\\
        bounded & $d^{-1}\Tr((\widehat S^1)^2)$ & $-0.99$\\
        bounded & $d^{-1}\Tr((\widehat S^1)^3)$ & $-1.00$\\
        spike + bulk & signal overlap & $-1.10$\\
        spike + bulk & $d^{-1}\Tr((\widehat S^1)^2)$ & $-0.99$\\
        spike + bulk & $d^{-1}\Tr((\widehat S^1)^3)$ & $-0.45$\\
        \hline
    \end{tabular}
    \caption{Effective log--log slopes of the RMS discrepancy over $100\leq d\leq2000$, with 200 paired trials at every dimension.}
    \label{tab:numerical-effective-exponents}
\end{table}
For the bounded-spectrum signal, all three RMS discrepancies decay approximately as $d^{-1}$ over the tested range. For the spiked signal, the cubic statistic has fitted slopes of $-0.45$ over the full grid and $-0.46$ over $400\leq d\leq2000$. The estimated mean difference, multiplied by $\sqrt d$, is also close to $0.08$ at the four largest dimensions. These observations suggest that the $d^{-1/2}$ decay rate for the universality theorem is tight for the class of observables considered. To help explain this observation, Appendix~\ref{app:numerical-spike} examines how the skewness of rank-one measurements affects the bulk spectrum before denoising.

\FloatBarrier
\subsection{A Nonlinear Output Channel}\label{subsec:numerics-nonlinear}

Our theorem assumes linear observations. Here we compare the rank-one and Gaussian dynamics under a nonlinear output channel, using a bounded-spectrum target and a spiked target.

\begin{figure}
    \centering
    \includegraphics[width=\textwidth]{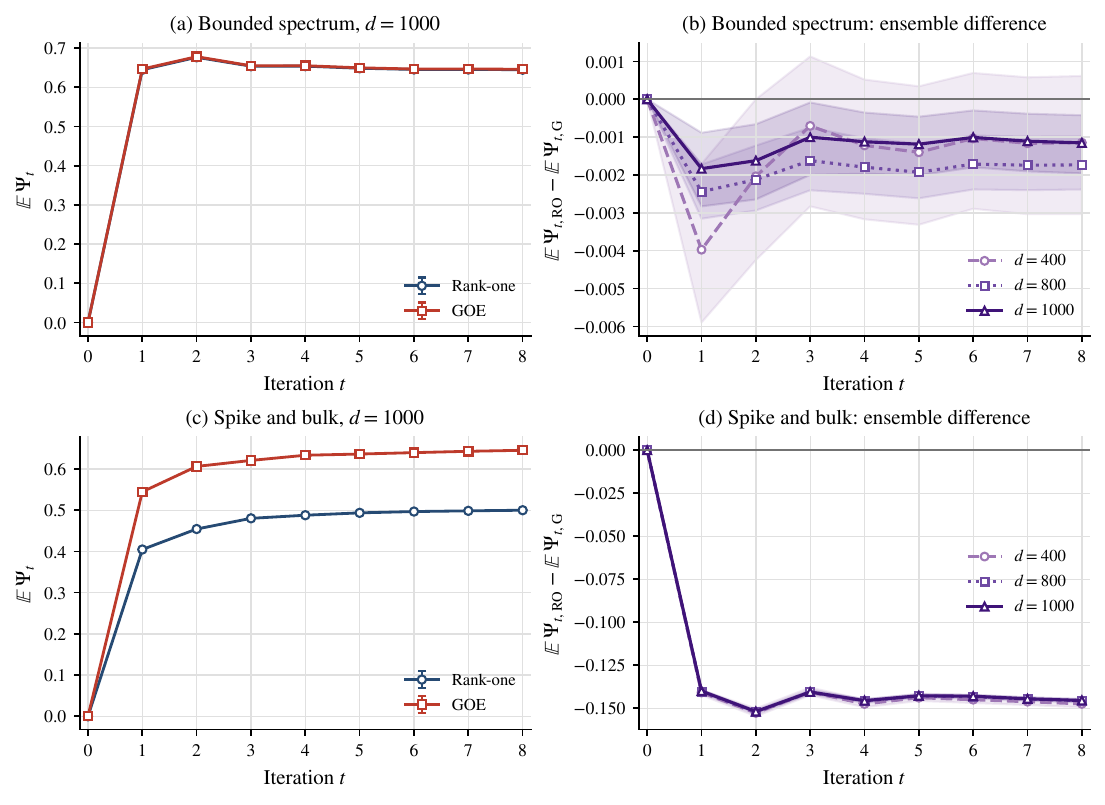}
    \caption{Nonlinear observations with bounded and spiked target spectra. Top: $a=0$ in \eqref{eq:numerical-signal-interpolation}; bottom: $a=0.95$. Both use \eqref{eq:numerical-nonlinear-channel} with $\theta=1.75$, $\tau=0.2$, and the same recursion \eqref{eq:numerical-nonlinear-gamp} with $\alpha=1$. Left: mean trajectories at $d=1000$; bars are $95\%$ bootstrap confidence intervals over ten paired trials. Right: differences between ensemble means at $d\in\{400,800,1000\}$, with $95\%$ bootstrap confidence bands and ten paired trials at every dimension. The right panels use different vertical scales.}
    \label{fig:numerical-nonlinear-trajectory}
\end{figure}

We compare two signals from the family $S^\star(a)$ in \eqref{eq:numerical-signal-interpolation}. The choice $a=0$ gives a balanced signal with bounded eigenvalues. For $a=0.95$, the spike carries $90.25\%$ of the normalized Frobenius energy, while the remaining $9.75\%$ is spread over a bounded bulk. For $z_{\sfe,i}=\Tr(Z_{\sfe,i}S^\star(a))$, the observation is the noisy ternary quantization
\begin{equation}\label{eq:numerical-nonlinear-channel}
    y_{\sfe,i}=q_\theta(z_{\sfe,i}+\tau\epsilon_i),
    \qquad
    q_\theta(u)=
    \begin{cases}
        1, & u>\theta,\\
        0, & |u|\leq\theta,\\
        -1, & u<-\theta,
    \end{cases}
\end{equation}
where $\epsilon_i\sim\CN(0,1)$. We take $\theta=1.75$ and $\tau=0.2$. This threshold probes the tails of the latent response, where the skewness induced by a spike has a visible effect.

For this experiment we use the output-function form of GAMP in \cite[Algorithm~1]{MaillardAntoine2024Bloa}, but keep its coefficients fixed and identical in the two ensembles. This removes adaptive tuning as a possible source of the difference. Starting from $\widehat S^0_{\sfe}=0$ and $g^{-1}_{\sfe}=0$, we iterate
\begin{equation}\label{eq:numerical-nonlinear-gamp}
    \begin{aligned}
        \omega^t_{\sfe}
        &=\CA_{\sfe}(\widehat S^t_{\sfe})-Vg^{t-1}_{\sfe},\\
        g^t_{\sfe}
        &=g_{\mathrm{out}}(y_{\sfe},\omega^t_{\sfe},V),\\
        R^t_{\sfe}
        &=\widehat S^t_{\sfe}+\frac{1}{dA_0}\CA_{\sfe}^*(g^t_{\sfe}),\\
        \widehat S^{t+1}_{\sfe}
        &=\tanh(R^t_{\sfe}),
    \end{aligned}
    \qquad \sfe\in\{\ro,\mathrm G\}.
\end{equation}
Here $g_{\mathrm{out}}(y,\omega,V)$ is the derivative with respect to $\omega$ of the log-likelihood obtained by passing a $\CN(\omega,V)$ variable through \eqref{eq:numerical-nonlinear-channel}, applied coordinatewise. We set $V=2$ and hold
\begin{equation*}
    A_0=\frac{2n}{d^2}\,
    \E\big[g_{\mathrm{out}}(Y,0,2)^2\big]
\end{equation*}
fixed at every iteration, with the expectation evaluated under the corresponding Gaussian scalar channel. The same deterministic $A_0$ and $V$ are used for rank-one and GOE sensing. The final line of \eqref{eq:numerical-nonlinear-gamp} is the ordinary hyperbolic tangent applied spectrally. It is bounded and smooth, so any failure of universality here cannot be attributed to the unbounded or data-dependent denoisers discussed in Section~\ref{sec:discussion}.

We follow the normalized squared norm
\begin{equation*}
    \Psi_{\sfe,t}\bydef\frac1d\Tr\big((\widehat S^t_{\sfe})^2\big)
    =\Phi^{\cross}_{\sfe,t,t},
\end{equation*}
which is one of the iterate observables in \eqref{eq:observable-notation}. For each signal, we run eight iterations with $\alpha=1$ and ten paired trials at each $d\in\{400,800,1000\}$. The channel, initialization, coefficients, and denoiser are identical in the two signal settings. Within a trial, the sensing operator and observations are generated once and reused throughout the trajectory.

For the bounded-spectrum signal, the rank-one and GOE trajectories in Figure~\ref{fig:numerical-nonlinear-trajectory} closely agree. At $d=1000$, the difference between the ensemble means stays below $0.002$ in absolute value at every displayed iteration. These observations support our expectation that universality extends to nonlinear output channels for signals with uniformly bounded eigenvalues, although we do not prove this extension here. For the spiked signal, the difference stays near $-0.14$ to $-0.15$ across all three dimensions and all eight iterations. The contrast suggests that universality under nonlinear observations needs additional restrictions on the target spectrum.

This distinction is visible already in the latent response. With bounded eigenvalues, the rank-one response is a sum of many small independent centered quadratic terms and approaches a Gaussian law. A spike contributes the term $a(G^2-1)$, whose non-Gaussian shape persists as $d$ grows. The nonlinear channel can distinguish this response from its covariance-matched Gaussian counterpart before AMP begins.

When universality fails, the Gaussian surrogate may need a correction that preserves these non-Gaussian directions. Wen et al.\ \cite{WenHuLuFanMisiakiewicz2025} develop such a correction for random-feature models in the quadratic scaling regime: their conditional Gaussian equivalent model retains a low-dimensional non-Gaussian component and Gaussianizes the remaining features. We conjecture that an analogous conditional Gaussian comparison applies to AMP with nonlinear observations and spiked signals.

\FloatBarrier
\section{Proof Roadmap and Proof of the Main Theorem}\label{sec:proof-main}

\subsection{Vector Form of the AMP Recursion}
For the leave-one-out analysis, we write the matrix recursion \eqref{eq:AMP-recursion} in vector coordinates.

The rank-one and Gaussian feature matrices used in the proof are
\begin{equation}\label{eq:global-sensing-matrices}
    F\bydef[f_1,\ldots,f_n],
    \qquad
    \widetilde F\bydef[\widetilde f_1,\ldots,\widetilde f_n]
    \in\R^{p\times n}.
\end{equation}
Fix either ensemble and suppress its superscript, writing $F$ for the corresponding member of \eqref{eq:global-sensing-matrices} and $\CA$ for its sensing operator.
By the isometry \eqref{eq:svec-isometry},
\begin{equation}\label{eq:sensing-map-vector-relation}
    \CA(S)=\sqrt{d+1}\,F^\top\svecop(S),
    \qquad
    \svecop\big(\CA^*(r)\big)=\sqrt{d+1}\,Fr.
\end{equation}
Also write $y=\CA(S^\star)+\epsilon$ and
\begin{equation*}
    {\bf s}^*\bydef\svecop(S^\star),
    \qquad
    \widehat{\bf s}^t\bydef\svecop(\widehat S^t),
    \qquad
    G_t(v)\bydef\svecop\big(\sigma_t(\mmat(v))\big).
\end{equation*}
The residual equation in \eqref{eq:AMP-recursion} becomes
\begin{equation}\label{eq:vectorized-AMP-recursion}
    r^t
    =\frac{\sqrt{d+1}\,F^\top({\bf s}^*-\widehat{\bf s}^t)+\epsilon+b_t r^{t-1}}{\gamma_t}.
\end{equation}
Define the normalized back-projected field
\begin{equation}\label{eq:h-def}
    h^t\bydef\frac1d\svecop\big(\CA^*(r^t)\big)
    =\frac{\sqrt{d+1}}d Fr^t,
    \qquad h^{-1}\bydef0,
\end{equation}
and the scaled coefficients
\begin{equation}\label{eq:scaled-coeffs}
    a_t\bydef dc_t,
    \qquad
    \kappa_t\bydef\frac{d+1}{d\gamma_t},
    \qquad
    \nu_t\bydef\frac{\sqrt{d+1}}{d\gamma_t},
    \qquad
    \rho_t\bydef\frac{b_t}{\gamma_t}.
\end{equation}
Assumption~\ref{ass:basic} implies
\begin{equation}\label{eq:scaled-coeffs-bound}
    \sup_{d\,\geq\,1}\max_{0\,\leq\, t\,<\,T}
    \big(|a_t|+|\kappa_t|+|\rho_t|+\sqrt d\,|\nu_t|\big)<\infty.
\end{equation}
Using \eqref{eq:sensing-map-vector-relation}, the full recursion is equivalently
\begin{equation}\label{eq:reduced-AMP-recursion}
    h^t
    =\kappa_tFF^\top({\bf s}^*-\widehat{\bf s}^t)+\nu_tF\epsilon+\rho_t h^{t-1},
    \qquad
    \widehat{\bf s}^{t+1}=G_t(\widehat{\bf s}^t+a_t h^t).
\end{equation}

\subsection{Proof Roadmap}
We compare the conditional expectations given $\CB_0=\sigma(S^\star,\hS^0)$ by replacing one sensing column at a time. A leave-one-out construction makes the column being replaced independent of the background trajectory and reduces the comparison to two estimates. We then use a high-moment Efron--Stein bound to control fluctuations around the conditional expectations. Together, these estimates prove Theorem~\ref{thm:final-universality}.

\paragraph{Lindeberg replacement: from two ensembles to one column.}
The interpolating matrices $F^{(i)}$ defined in \eqref{eq:interpolating-matrix-def} satisfy $F^{(1)} = F$ and $F^{(n+1)} = \widetilde F$, while $F^{(i)}$ and $F^{(i+1)}$ differ only in their $i$th column. Thus, for any observable $\Phi$ of the AMP trajectory,
\begin{equation}
    \E[\Phi(F) \mid \CB_0] - \E[\Phi(\widetilde F) \mid \CB_0]
    = \sum_{i=1}^n \left\{\E[\Phi(F^{(i)}) \mid \CB_0] - \E[\Phi(F^{(i+1)}) \mid \CB_0]\right\}.
\end{equation}
Because $n\leq C_0p=\CO(d^2)$, an error of order $d^{-5/2}$ for each column replacement gives a total error of order at most $d^{-1/2}$. The main task is therefore to compare the effect of inserting the rank-one column $f_i$ with that of inserting the covariance-matched Gaussian column $\widetilde f_i$.

\paragraph{Leave-one-out decoupling.}
The AMP trajectory generated after inserting a candidate column $u \in \{f_i,\widetilde f_i\}$ is a nonlinear function of $u$, so moment matching cannot be applied directly. We remove the $i$th column, run the background dynamics with $F_\smi$, and condition on the leave-one-out sigma-algebra $\CG_i$ from \eqref{eq:loo-sigma-algebra-def}. The background iterates and all response operators are then $\CG_i$-measurable, whereas the candidate column $u$ is independent of $\CG_i$. The response to inserting $u$ is
\begin{equation}
    q^t(u) \bydef \widehat{\bf s}^{t}(u) - \widehat{\bf s}^{t}_\smi.
\end{equation}
Proposition \ref{prop:exact-loo} gives an exact coupled recursion for $q^t(u)$ and the corresponding field perturbation $w^t(u)$. Proposition \ref{prop:delta-moments} shows that $q^t(u)$, $w^t(u)$, and $m^t(u)$ are of order $d^{-1/2}$ in every fixed $L^m$, while the second-order Taylor remainder is of order $d^{-1}$.

\paragraph{The two one-column comparisons.}
Expanding the three observables around the leave-one-out trajectory produces two types of leading terms:
\begin{equation}
    \langle \lambda_i,q^t(u)\rangle,
    \qquad \lambda_i \text{ is $\CG_i$-measurable and } \|\lambda_i\| = \CO(d^{-1/2}),
\end{equation}
and
\begin{equation}
    \langle q^t(u),H_iq^s(u)\rangle,
    \qquad H_i \text{ is $\CG_i$-measurable and } \|H_i\|_\op = \CO(d^{-1}).
\end{equation}
Propositions \ref{prop:linear-comparison-complete} and \ref{prop:remaining-quadratic} show that replacing $u=f_i$ by $u=\widetilde f_i$ changes the expectation of either quantity by at most $\CO(d^{-5/2})$. The signal overlap produces a linear term; the cross-time overlap produces two linear terms and one quadratic term; and a smooth spectral statistic produces linear and quadratic terms together with a cubic remainder of the required size. Theorem \ref{thm:reduction} formalizes this reduction and the resulting Lindeberg telescope.

\paragraph{Linearization and moment matching.}
Unrolling the linearized leave-one-out recursion in Section~\ref{sec:linearized} gives an approximation that is at most quadratic in $u$:
\begin{equation}
    \bar q^t(u) = \sum_{r<t}\CB_{t,r}u\,\beta_{i,0}^r(u),
    \qquad
    \beta_{i,0}^r(u) = \kappa_r\langle u,a_i^r\rangle + \nu_r\epsilon_i,
    \qquad
    a_i^r = {\bf s}^* - \widehat{\bf s}^r_\smi,
\end{equation}
The coefficients $\CB_{t,r}, a^r_i, \nu_r, \epsilon_i, \kappa_r$ are all $\CG_i$-measurable, and $\|q^t(u)-\bar q^t(u)\|_{L^m}=\CO(d^{-1})$. The two candidate columns have the same first two moments. Hence, for every $\CG_i$-measurable vector $b$,
\begin{equation}
    \begin{aligned}
        \E[\langle b,u\rangle\beta_{i,0}^r(u)\mid\CG_i]
        &= \frac{\kappa_r}{p}\langle b,a_i^r\rangle,\\
        \E[\beta_{i,0}^r(u)\beta_{i,0}^s(u)\mid\CG_i]
        &= \frac{\kappa_r\kappa_s}{p}\langle a_i^r,a_i^s\rangle + \nu_r\nu_s\epsilon_i^2.
    \end{aligned}
\end{equation}
The right-hand sides are identical for $u=f_i$ and $u=\widetilde f_i$, so the leading linear and quadratic contributions of $\bar q$ cancel between the two ensembles.

The quadratic comparison in Section \ref{sec:proof-quadratic} then follows by centering the remaining quadratic forms in $u$. A more careful estimate in Lemma \ref{lem:refined-centered-quad} supplies the extra factor needed to reach the $d^{-5/2}$ one-column scale. The linear comparison in Section \ref{sec:proof-linear} is more delicate since the propagated nonlinear remainder is too large under a direct norm bound. A fourth-order Fr\'echet expansion of the spectral denoiser separates quadratic, cubic, and quartic contributions. The quartic term is small directly, the cubic term is controlled through the trilinear decomposition of Lemma \ref{lem:trilinear-decomp}, and the quadratic term requires a second moment-matching reduction.

\paragraph{Concentration.}
The Lindeberg argument compares conditional expectations, not the random observables themselves. Section \ref{sec:concentration} jointly resamples one independent block $(u_i,\epsilon_i)$. The one-block sensitivity estimate in Lemma \ref{lem:one-vector-sensitivity} is of order $d^{-3/2}$ for each normalized observable. The high-moment Efron--Stein inequality of Lemma \ref{lem:moment-efron-stein} therefore gives
\begin{equation}
    \left\|\Phi^{\sf E}-\E[\Phi^{\sf E}\mid\CB_0]\right\|_{L^m}
    \lesssim_m \sqrt{n}\,d^{-3/2}
    \leq C_m d^{-1/2},
    \qquad \sf E \in \{\ro,\mathrm{G}\}.
\end{equation}
Combining these two fluctuation bounds with the conditional-mean Lindeberg comparison gives the $L^m$ conclusion of Theorem \ref{thm:final-universality}. Markov's inequality at arbitrarily high fixed moment order then gives stochastic domination.

\paragraph{Organization of the detailed proof.}
The remaining sections prove the components just described:
\begin{enumerate}[label = (\roman*)]
    \item Section \ref{sec:aux} derives the exact leave-one-out dynamics and performs the Lindeberg reduction to the two one-column comparisons.
    \item Section \ref{sec:linearized} constructs the linearized response and the approximation $\bar q^t(u)$ used by both comparisons.
    \item Sections \ref{sec:proof-quadratic} and \ref{sec:proof-linear} prove the quadratic and linear one-column estimates, respectively.
    \item Section \ref{sec:concentration} proves high-moment concentration of the observables. Appendices \ref{app:analytic} and \ref{app:prob} collect the analytic spectral-calculus estimates and probabilistic input bounds.
\end{enumerate}

\subsection{Proof of the Main Theorem}
We combine the comparison and concentration estimates proved in Sections~\ref{sec:aux}--\ref{sec:concentration}.
\begin{proof}
    Let $\Phi^{\sf E}$ denote any one of the three observables in the theorem, evaluated in ensemble $\sf E \in \{\ro, \mathrm{G}\}$. We first compare their conditional expectations given $\CB_0$. Fix a realization of $(S^\star, \hS^0)$ satisfying Assumption~\ref{ass:signal-init}. Under the corresponding conditional law, the signal and initialization are deterministic, while the sensing columns and the observation noise retain the same laws and independence as before. Moreover, the proofs of Propositions~\ref{prop:remaining-quadratic} and \ref{prop:linear-comparison-complete} use $S^\star$ and $\hS^0$ only through the deterministic bounds in Assumption~\ref{ass:signal-init}. Their constants are therefore uniform over the realization of $\CB_0$.
    We may consequently apply the Lindeberg argument in the proof of Theorem~\ref{thm:reduction} with every expectation replaced by conditional expectation given $\CB_0$. Since $n\leq C_0p=\CO(d^2)$ and each one-column comparison is $\CO(d^{-5/2})$, this gives
    \begin{equation}\label{eq:main-conditional-comparison}
        \abs{\E[\Phi^\ro \mid \CB_0] - \E[\Phi^{\mathrm{G}} \mid \CB_0]} \le C_\mu d^{-1/2}
    \end{equation}
    almost surely. Here and below, the dependence on $\mu$ is omitted for the signal and cross-time observables.
    By Corollary~\ref{cor:concentration-goe-rank-one} and Theorem~\ref{thm:concentration-general-ensemble}, for every integer $m \ge 2$,
    \begin{equation}
        \norm{\Phi^{\sf E} - \E[\Phi^{\sf E} \mid \CB_0]}_{L^m} \,\le\, C_{m,\mu} d^{-1/2}, \qquad \sf E \,\in\, \{\ro, \mathrm{G}\}
    \end{equation}
    Combining these estimates with \eqref{eq:main-conditional-comparison} and the triangle inequality yields
    \begin{equation}
        \begin{aligned}
            \norm{\Phi^\ro - \Phi^{\mathrm{G}}}_{L^m} &\le \norm{\Phi^\ro - \E[\Phi^\ro \mid \CB_0]}_{L^m} + \norm{\Phi^{\mathrm{G}} - \E[\Phi^{\mathrm{G}} \mid \CB_0]}_{L^m}\\
            &+ \norm{\E[\Phi^\ro \mid \CB_0] - \E[\Phi^{\mathrm{G}} \mid \CB_0]}_{L^m}\\
            &\le C_{m,\mu} d^{-1/2}
        \end{aligned}
    \end{equation}
    Applying this to the three choices of $\Phi$ proves \eqref{eq:main-universality-moments}.
    Finally, fix $\varepsilon, D > 0$ and choose an integer $m \ge 2$ such that $m\varepsilon > D$. Markov's inequality gives
    \begin{equation}
        \P\left(\abs{\Phi^\ro - \Phi^{\mathrm{G}}} \,>\, d^{-1/2 + \varepsilon}\right) \,\le\, C_{m,\mu}^m d^{-m\varepsilon} \,\le\, d^{-D}
    \end{equation}
    for all sufficiently large $d$. This proves \eqref{eq:main-universality-stochastic}.
\end{proof}

\section{Lindeberg Reduction and Exact Leave-One-Out Dynamics}\label{sec:aux}
We replace one sensing column at a time and compute the change in each observable by comparing with a trajectory from which that column has been removed. This reduces the comparison of expectations to the linear and quadratic estimates stated below. Appendices~\ref{app:analytic} and \ref{app:prob} contain the analytic and probabilistic bounds used in the argument.

\subsection{Interpolation and Exact Leave-One-Out Recursion}
For $i \in [n + 1]$, define the interpolating model matrices by
\begin{equation}\label{eq:interpolating-matrix-def}
    F^{(i)} \bydef [\widetilde f_1, \ldots, \widetilde f_{i - 1}, f_i, f_{i + 1}, \ldots, f_n]
\end{equation}
Thus, with $F$ and $\widetilde F$ as in \eqref{eq:global-sensing-matrices}, $F^{(1)} = F$ and $F^{(n + 1)} = \widetilde F$. For $i \in [n]$ set the leave-one-out matrix
\begin{equation}\label{eq:loo-matrix-def}
    F_\smi \bydef [\widetilde f_1, \ldots, \widetilde f_{i - 1}, 0, f_{i + 1}, \ldots, f_n]
\end{equation}
For $u \in \R^p$, define the one-column replacement matrix
\begin{equation}\label{eq:single-column-matrix-def}
    F(u) \bydef F_\smi + u e_i^\top,
\end{equation}
We suppress the index $i$ in $F(u)$. Let $(\widehat{\bf s}^t(u), h^t(u))$ be the iterates generated from $F(u)$, and write
\begin{equation}\label{eq:loo-iterates-q-def}
    \widehat{\bf s}^t_\smi \,\bydef\, \widehat{\bf s}^t(F_\smi), \quad \hS^t_\smi \,\bydef\, \mmat(\widehat{\bf s}^t_\smi), \quad q^t(u) \,\bydef\, \widehat{\bf s}^t(u) - \widehat{\bf s}^t_\smi
\end{equation}
Define the leave-one-out $\sigma$-algebra by
\begin{equation}\label{eq:loo-sigma-algebra-def}
    \CG_i \bydef \sigma(F_\smi, S^\star, \epsilon, \hS^0).
\end{equation}
The vectors $q^t(f_i)$ and $q^t(\widetilde f_i)$ record the changes from the leave-one-out iterate when the missing column is reinserted. Write $h^t_\smi=h^t(F_\smi)$ for the corresponding leave-one-out field, and define
\begin{equation}\label{eq:w-difference-def}
    w^t(u) \bydef h^t(u) - h^t_\smi
\end{equation}
By construction, $q^0(u) = 0, w^{-1}(u) = 0$. We also set
\begin{equation}\label{eq:Ki-def}
    K_i \bydef F_\smi F_\smi^\top.
\end{equation}
We derive the following leave-one-out recursion formula for the differences $q^t$ and $w^t$ defined in \eqref{eq:loo-iterates-q-def} and \eqref{eq:w-difference-def}, in terms of the coefficients $\kappa_t,\nu_t,\rho_t$ from \eqref{eq:scaled-coeffs}.
\begin{proposition}\label{prop:exact-loo}
    For $u \in \R^p$ and $t \leq T - 1$,
    \begin{equation}\label{eq:w-exact}
        w^t(u) = -\kappa_t K_i q^t(u) + \rho_t w^{t - 1}(u) + u \beta^t_i(u)
    \end{equation}
    where
    \begin{equation}\label{eq:beta-exact}
        \beta^t_i(u) \bydef \kappa_t \langle u, {\bf s}^* - \widehat{\bf s}^t(u) \rangle + \nu_t \epsilon_i
    \end{equation}
    Furthermore, we have
    \begin{equation}
        q^{t + 1}(u) = G_t(y^t_i + m^t(u)) - G_t(y^t_i)
    \end{equation}
    where
    \begin{equation}\label{eq:loo-y-m-def}
        y^t_i \,\bydef\, \widehat{\bf s}^t_\smi + a_t h^t_\smi \,\in\, \R^p, \quad m^t(u) \,\bydef\, q^t(u) + a_t w^t(u) \,\in\, \R^p
    \end{equation}
    Consequently,
    \begin{equation}\label{eq:q-linearized}
        q^{t + 1}(u) = L^t_i m^t(u) + \varepsilon^t_i(u)
    \end{equation}
    where
    \begin{equation}\label{eq:L-epsilon-def}
        L^t_i \,\bydef\, D G_t(y^t_i):\R^p \,\to\, \R^p, \quad \|\varepsilon^t_i(u)\| \,\leq\, C \|m^t(u)\|^2
    \end{equation}
\end{proposition}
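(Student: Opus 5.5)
This is an algebraic identity obtained by subtracting the reduced recursion \eqref{eq:reduced-AMP-recursion} run with $F(u)$ from the same recursion run with $F_\smi$. The only analytic input is a uniform bound on the second derivative of the spectral map $G_t$.

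\textbf{Step 1: expand $F(u)F(u)^\top$.} Since $F_\smi e_i=0$, we have
\[
F(u)F(u)^\top=(F_\smi+ue_i^\top)(F_\smi+ue_i^\top)^\top=K_i+uu^\top .
\]
Similarly, $F(u)\epsilon=F_\smi\epsilon+u\epsilon_i$.

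\textbf{Step 2: derive \eqref{eq:w-exact}.} Write $a(u)={\bf s}^*-\widehat{\bf s}^t(u)$ and $a_\smi={\bf s}^*-\widehat{\bf s}^t_\smi$, so that $a(u)-a_\smi=-q^t(u)$. By \eqref{eq:reduced-AMP-recursion},
\[
h^t(u)=\kappa_t(K_i+uu^\top)a(u)+\nu_t(F_\smi\epsilon+u\epsilon_i)+\rho_t h^{t-1}(u),
\qquad
h^t_\smi=\kappa_tK_ia_\smi+\nu_tF_\smi\epsilon+\rho_th^{t-1}_\smi .
\]
Subtracting, the $F_\smi\epsilon$ terms cancel and
\[
w^t(u)=-\kappa_tK_iq^t(u)+\rho_tw^{t-1}(u)+u\big(\kappa_t\langle u,a(u)\rangle+\nu_t\epsilon_i\big).
\]
The last bracket is exactly $\beta^t_i(u)$ from \eqref{eq:beta-exact}, which gives \eqref{eq:w-exact}. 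The base case uses $w^{-1}(u)=0$, since $h^{-1}=0$ in both trajectories.

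\textbf{Step 3: derive the update for $q^{t+1}$.} By definition,
\[
\widehat{\bf s}^{t+1}(u)=G_t\big(\widehat{\bf s}^t(u)+a_th^t(u)\big)
=G_t\big(\widehat{\bf s}^t_\smi+a_th^t_\smi+q^t(u)+a_tw^t(u)\big)
=G_t\big(y^t_i+m^t(u)\big).
\]
Also $\widehat{\bf s}^{t+1}_\smi=G_t(y^t_i)$, and subtracting gives the claimed identity for $q^{t+1}(u)$.

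\textbf{Step 4: the linearization \eqref{eq:q-linearized}.} Set $\varepsilon^t_i(u)\bydef G_t(y^t_i+m^t(u))-G_t(y^t_i)-L^t_im^t(u)$. Taylor's theorem with integral remainder gives
\[
\|\varepsilon^t_i(u)\|\le\tfrac12\sup_v\|D^2G_t(v)\|\,\|m^t(u)\|^2 .
\]
Because $\svecop$ is an isometry, $\|D^2G_t(v)\|$ equals the second Fréchet derivative norm of the matrix spectral map $\sigma_t$, measured from Frobenius norm to Frobenius norm. We need this to be bounded uniformly in $d$ and in the matrix argument. This holds for $\sigma_t\in\SC_b^4$ by the standard Daleckii--Krein bounds: the second derivative is expressed through the divided differences $\sigma_t^{[2]}$, and these are bounded by $\|\sigma_t''\|_\infty$. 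I would invoke the analytic spectral-calculus lemma of Appendix~\ref{app:analytic} (Lemma~\ref{lem:dk-bounds} with $m=2$) for this.

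\textbf{Main obstacle.} The only nontrivial point is this dimension-free Frobenius bound on $D^2\sigma_t$. Everything else is bookkeeping, with care needed for the convention that $w^{-1}=0$ and $q^0=0$; the latter holds because both trajectories share $\widehat S^0$.
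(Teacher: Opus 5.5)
Your proposal is correct and matches the paper's proof essentially step for step. Both expand $F(u)F(u)^\top=K_i+uu^\top$ and $F(u)\epsilon=F_\smi\epsilon+u\epsilon_i$, subtract the leave-one-out field recursion, and subtract the two denoising steps. Both then bound the Taylor remainder with the dimension-free second-derivative estimate of Lemma~\ref{lem:dk-bounds} with $m=2$, which the paper packages as Lemma~\ref{lem:spectral-taylor}, giving $\|\varepsilon^t_i(u)\|\le\tfrac12\|\sigma_t''\|_\infty\|m^t(u)\|^2$.
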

\begin{proof}[Proof of Proposition \ref{prop:exact-loo}]
    Since $F(u) = F_\smi + u e_i^\top$ and $K_i=F_\smi F_\smi^\top$, we have $F(u)F(u)^\top=K_i+uu^\top$. Substituting into \eqref{eq:reduced-AMP-recursion} gives
    \begin{equation}
        h^t(u) = \kappa_t K_i({\bf s}^* - \widehat{\bf s}^t(u)) + \kappa_t uu^\top ({\bf s}^* - \widehat{\bf s}^t(u)) + \nu_t F_\smi \epsilon + \nu_t\epsilon_i u  + \rho_t h^{t-1}(u)
    \end{equation}
    where $h^t(u)$ is defined in \eqref{eq:h-def}. The leave-one-out field satisfies
    \begin{equation}
        h^t_\smi = \kappa_t K_i({\bf s}^* - \widehat{\bf s}^t_\smi) + \nu_t F_\smi \epsilon + \rho_t h^{t - 1}_\smi
    \end{equation}
    Subtracting gives
    \begin{equation}
        w^t(u) = -\kappa_t K_i q^t(u) + \rho_t w^{t - 1}(u) + u\left( \kappa_t \langle u, {\bf s}^* - \widehat{\bf s}^t(u) \rangle + \nu_t \epsilon_i \right)
    \end{equation}
    which proves the first two equations of the proposition. Next, we have
    \begin{equation}
        \widehat{\bf s}^{t +1}(u) \,=\, G_t(\widehat{\bf s}^t(u) + a_th^t(u)), \quad \widehat{\bf s}^{t + 1}_\smi \,=\, G_t(\widehat{\bf s}^t_\smi + a_t h^t_\smi)
    \end{equation}
    Subtracting gives the third equation, with $y_i^t$ and $m^t(u)$ as in \eqref{eq:loo-y-m-def}. By Lemma \ref{lem:spectral-taylor} (spectral calculus) and the definition of $L_i^t$ in \eqref{eq:L-epsilon-def}, we have
    \begin{equation}
        \begin{aligned}
            \|\varepsilon^t_i(u)\| &= \|G_t(y^t_i + m^t(u)) - G_t(y^t_i) - D G_t(y^t_i)[m^t(u)]\| \\
            &\leq \frac{1}{2}\|\sigma''_t\|_\infty \|m^t(u)\|^2
        \end{aligned}
    \end{equation}
\end{proof}
\subsection{Reduction to Linear and Quadratic One-Column Comparisons}
\begin{theorem}\label{thm:reduction}
    Under Assumptions \ref{ass:basic} and \ref{ass:signal-init}, suppose that for every $s, t \leq T$ and every leave-one-out index $i$, the following hold for some $C$ uniformly over $d$.
    \begin{enumerate}[label = (\roman*)]
        \item For every $\CG_i$-measurable vector $\lambda_i \in \R^p$ with $\|\lambda_i\| \leq C_1 d^{-1/2}$ a.s.,
        \begin{equation}\label{eq:abstract-linear-prop}
            \abs{\E\big[\inner{\lambda_i}{q^t(f_i)}-\inner{\lambda_i}{q^t(\widetilde f_i)}\big]}
            \le C d^{-5/2}.
        \end{equation}
        for some constant $C$ not dependent on $d$.
        \item For every $\CG_i$-measurable symmetric matrix $H_i \in  \R^{p \times p}$ with $\|H_i\|_\op \leq C_2 d^{-1}$ a.s.,
        \begin{equation}\label{eq:abstract-quadratic-prop}
            \left|\E\langle q^t(f_i), H_i q^s(f_i) \rangle -  \E\langle q^t(\widetilde f_i), H_i q^s(\widetilde f_i) \rangle \right| \leq C d^{-5/2}
        \end{equation}
        for some constant $C$ not dependent on $d$.
    \end{enumerate}
    Then Theorem \ref{thm:final-universality} holds.
\end{theorem}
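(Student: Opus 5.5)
The plan is to run the Lindeberg telescope over the interpolating matrices $F^{(i)}$ from \eqref{eq:interpolating-matrix-def}. For each $i$, write $\Phi(F^{(i)})=\Phi(F(f_i))$ and $\Phi(F^{(i+1)})=\Phi(F(\widetilde f_i))$ with the common background $F_\smi$ from \eqref{eq:single-column-matrix-def}. Then expand each observable around the leave-one-out trajectory $\widehat{\bf s}^t_\smi$, which is $\CG_i$-measurable, with the response $q^t(u)$ as the increment. Two ingredients are assumed available from later in the section: a priori bounds $\|q^t(u)\|_{L^m}\lesssim d^{-1/2}$, which the roadmap attributes to Proposition~\ref{prop:delta-moments} via Proposition~\ref{prop:exact-loo}, and $\|\widehat S^t\|_\op\le C_\ast$.

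\textbf{Signal overlap.} By \eqref{eq:svec-isometry},
\[
\Phi^{\sig}_t(u)-\Phi^{\sig}_t(F_\smi)=\tfrac1d\inner{{\bf s}^*}{q^t(u)}.
\]
Take $\lambda_i={\bf s}^*/d$. Then $\|\lambda_i\|\le d^{-1}\|S^\star\|_\Frob\le C_1d^{-1/2}$ by Assumption~\ref{ass:signal-init}, so (i) gives an $\CO(d^{-5/2})$ difference between the two insertions.

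\textbf{Cross-time overlap.} Here
\[
d\,\Delta\Phi=\inner{\widehat{\bf s}^s_\smi}{q^t}+\inner{q^s}{\widehat{\bf s}^t_\smi}+\inner{q^s}{q^t}.
\]
The two linear terms use $\lambda_i=\widehat{\bf s}^r_\smi/d$, whose norm is at most $\sqrt d\,C_\ast/d$. The quadratic term uses $H_i=I_p/d$, symmetrized (replace by $\tfrac12(\inner{q^t}{H q^s}+\inner{q^s}{Hq^t})$ if needed).

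\textbf{Spectral statistic.} Apply a third-order Taylor expansion of $M\mapsto d^{-1}\Tr\mu(M)$ at $\hS^t_\smi$:
\[
\tfrac1d\Tr\mu(\hS^t_\smi+Q)-\tfrac1d\Tr\mu(\hS^t_\smi)=\tfrac1d\Tr\big(\mu'(\hS^t_\smi)Q\big)+\tfrac1{2d}D^2[Q,Q]+\mathcal R,
\]
with $Q=\mmat(q^t(u))$. The first term is linear with $\lambda_i=\svecop(\mu'(\hS^t_\smi))/d$, and $\|\lambda_i\|\le \|\mu'\|_{\infty,I_\ast}d^{-1/2}$. The second term is $\inner{q^t}{H_iq^t}$, where $H_i$ is the Hessian of $\Tr\mu$ at $\hS^t_\smi$ divided by $2d$. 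The Daleckii–Krein formula (Lemma~\ref{lem:dk-bounds}) bounds its operator norm on $\Sym$ by $\|\mu''\|_\infty/d$, using divided differences on $I_\ast$.

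The remainder satisfies $|\mathcal R|\lesssim d^{-1}\sup\|D^3\|\,\|Q\|_\Frob^3$. I need the third Fréchet derivative of $\Tr\mu$ to be bounded in Frobenius norm. This holds by the same spectral calculus since $\mu\in\SC^3$. The inputs stay in $I_\ast$ because both endpoints have spectrum in $[-C_\ast,C_\ast]$, so the segment between them does too, and only bounds of $\mu^{(r)}$ over $I_\ast$ are needed. Then $\E|\mathcal R|\lesssim d^{-1}\cdot d^{-3/2}=d^{-5/2}$ per column, which is enough with no comparison needed.

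\textbf{Summing and assembling.} Each one-column difference is $\CO(d^{-5/2})$. Summing over $n\le C_0p\lesssim d^2$ columns gives an $\CO(d^{-1/2})$ bound for the difference of (conditional) expectations. The constants are uniform in the realization of $\CB_0$, since all that is used is Assumptions~\ref{ass:basic}--\ref{ass:signal-init}. The remaining step is to combine this with the concentration bounds (Corollary~\ref{cor:concentration-goe-rank-one}, Theorem~\ref{thm:concentration-general-ensemble}) exactly as in the proof of the main theorem above.

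\textbf{Main obstacle.} The delicate point is the spectral remainder. I must check that the third derivative of the trace functional is bounded dimension-free in the Frobenius norm, not the operator norm, since $\|Q\|_\Frob$ is the quantity controlled. I also need the $\mu''$-Hessian bound $\|H_i\|_\op\lesssim d^{-1}$; I would derive both from the divided-difference representation. A secondary check is that the hypotheses in (i)–(ii) are stated a.s. with deterministic constants, which the above choices satisfy.
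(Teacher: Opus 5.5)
Your proposal is correct and follows the paper's proof essentially step for step. It uses the same Lindeberg telescope over $F^{(i)}$; the same test vectors $\lambda_i={\bf s}^*/d$, $\widehat{\bf s}^r_\smi/d$ and $\svecop(\mu'(\hS^t_\smi))/d$; the same quadratic weights $H_i=I_p/d$ and $d^{-1}$ times the scaled Hessian; and the same direct bound $\E|\mathcal R|\lesssim d^{-1}\E\|q^t(u)\|^3\lesssim d^{-5/2}$ on the cubic spectral remainder, via Lemma~\ref{lem:dk-bounds} and Proposition~\ref{prop:delta-moments}. It then closes, as the paper does, by combining the resulting $\CO(d^{-1/2})$ comparison of conditional means, with constants uniform over realizations of $\CB_0$, with the concentration bounds.
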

\begin{proposition}\label{prop:remaining-quadratic}
        Fix $s, t \leq T$. Let $H_i$ be a $\CG_i$-measurable symmetric matrix with $\|H_i\|_\op \leq C_2 d^{-1}$. Then
        \begin{equation}
            \left|\E\langle q^t(f_i), H_i q^s(f_i) \rangle -  \E\langle q^t(\widetilde f_i), H_i q^s(\widetilde f_i) \rangle \right| \leq C d^{-5/2}
        \end{equation}
\end{proposition}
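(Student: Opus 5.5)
We replace $q^t(u)$ by its linearized approximation $\bar q^t(u)=\sum_{r<t}\CB_{t,r}u\,\beta^r_{i,0}(u)$ from the roadmap, and then match moments between $u=f_i$ and $u=\widetilde f_i$. The first step is to write
\[
\langle q^t,H_iq^s\rangle=\langle \bar q^t,H_i\bar q^s\rangle+\langle q^t-\bar q^t,H_i\bar q^s\rangle+\langle q^t,H_i(q^s-\bar q^s)\rangle .
\]
By Proposition~\ref{prop:delta-moments} and the linearization estimate $\|q^t-\bar q^t\|_{L^m}=\CO(d^{-1})$, together with $\|H_i\|_\op\le C_2d^{-1}$, each cross term is only $\CO(d^{-1}\cdot d^{-1}\cdot d^{-1/2})=\CO(d^{-5/2})$ in absolute value. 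This is exactly the target scale, so these terms need no cancellation and are bounded directly by Cauchy--Schwarz. The main term is therefore
\[
\langle \bar q^t,H_i\bar q^s\rangle=\sum_{r<t,\,r'<s}\beta^r_{i,0}(u)\beta^{r'}_{i,0}(u)\,\langle u,M_{rr'}u\rangle,\qquad M_{rr'}\bydef\CB_{t,r}^\top H_i\CB_{s,r'} .
\]
Here $M_{rr'}$ is $\CG_i$-measurable with $\|M_{rr'}\|_\op\lesssim d^{-1}$ and $\|M_{rr'}\|_\Frob\lesssim d^{-1}\sqrt p\asymp 1$.

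For the main term, we center the quadratic form as $\langle u,Mu\rangle=p^{-1}\Tr M+\mathring Q(u)$, where $\mathring Q(u)\bydef\langle u,Mu\rangle-p^{-1}\Tr M$. The centering contribution is $p^{-1}\Tr M\cdot\beta^r_{i,0}\beta^{r'}_{i,0}$. Its conditional expectation depends only on the second moments of $u$, and by the displayed moment identities it equals $p^{-1}\Tr M\,(\kappa_r\kappa_{r'}p^{-1}\langle a^r_i,a^{r'}_i\rangle+\nu_r\nu_{r'}\epsilon_i^2)$. This is identical for both ensembles, so it cancels exactly.

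It remains to bound $\E[\beta^r_{i,0}\beta^{r'}_{i,0}\mathring Q(u)\mid\CG_i]$ separately in each ensemble by $\CO(d^{-5/2})$. Expanding $\beta$ gives terms of three types:
\begin{itemize}
\item $\nu_r\nu_{r'}\epsilon_i^2\,\mathring Q$, which has zero conditional mean;
\item $\nu_r\epsilon_i\kappa_{r'}\langle u,a\rangle\mathring Q$, which has zero conditional mean in the Gaussian case by symmetry and in the rank-one case because $f_i$ is an odd-free quadratic chaos. The third moment of $f_i$ does not vanish, so this term must instead be bounded. Since $\epsilon_i$ is independent of $u$ and centered, it vanishes anyway after averaging over $\epsilon_i$, which is $\CG_i$-measurable but symmetric; we handle it by conditioning on $\epsilon$ last, or equivalently by noting that $\E\epsilon_i=0$ and that the coefficients are even in $\epsilon_i$ up to $\CO(d^{-1/2})$ corrections;
\item $\kappa_r\kappa_{r'}\langle u,a\rangle\langle u,a'\rangle\mathring Q(u)$, a fourth-order moment of $u$, which is the dominant term.
\end{itemize}

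For the dominant fourth-order term, apply Cauchy--Schwarz to get
\[
\le \|\langle u,a\rangle\langle u,a'\rangle\|_{L^2}\,\|\mathring Q\|_{L^2}.
\]
The first factor is $\lesssim \|a\|\|a'\|/p\lesssim d/p\asymp d^{-1}$, since $\|a^r_i\|\lesssim\sqrt d$. For the second factor, the naive bound gives only $\|\mathring Q\|_{L^2}\lesssim \|M\|_\op\lesssim d^{-1}$, yielding a total of $d^{-2}$, which is insufficient. This is where Lemma~\ref{lem:refined-centered-quad} enters: for a rank-one column it gives $\|\mathring Q\|_{L^2}\lesssim p^{-1}\|M\|_\Frob+d^{-1/2}\|M\|_\op\cdot(\text{lower order})$. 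The Gaussian case is immediate, since $\Var\langle \widetilde f,M\widetilde f\rangle=2\|M\|_\Frob^2/p^2$. The $p^{-1}\|M\|_\Frob\asymp d^{-2}$ piece then gives the product $d^{-3}$.

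The main obstacle is that the rank-one quadratic form $\langle f,Mf\rangle$ is a degree-four polynomial in $x_i$, whose variance also contains a term like $\|M\|_\op$-type contractions of order $d^{-3/2}$. That term would produce exactly $d^{-5/2}$, which is still acceptable but leaves no slack. I would therefore first verify that Lemma~\ref{lem:refined-centered-quad} delivers $\|\mathring Q\|_{L^2}\lesssim d^{-3/2}$ for $\|M\|_\op\lesssim d^{-1}$ and $\|M\|_\Frob\lesssim1$. If it does not, I would instead compute $\E[\langle u,a\rangle\langle u,a'\rangle\mathring Q]$ exactly via Wick and Hermite-product formulas for $x_i$. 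In that computation the difference from the Gaussian value involves only contractions of the form $d^{-1}p^{-2}\langle a, \mathrm{mat}\text{-}(M) a'\rangle$-type terms, bounded by $\|M\|_\op\|a\|\|a'\|/(dp)\cdot d\lesssim d^{-5/2}$. Summing over the finitely many pairs $(r,r')$ and collecting the $\CO(d^{-5/2})$ remainders completes the argument.
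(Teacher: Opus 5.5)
Your skeleton is the paper's. You replace $q$ by $\bar q$ at cost $\CO(d^{-5/2})$, expand $\langle \bar q^t,H_i\bar q^s\rangle$ with $M=\CB_{t,r}^\top H_i\CB_{s,r'}$, split $u^\top Mu=p^{-1}\Tr M+\tilde m(u)$, and cancel the trace part by second-moment matching. Where you go wrong is the size of the remainder. The ``naive'' bound is not $\norm{\tilde m}_{L^2}\lesssim\norm{M}_\op$. Lemma~\ref{lem:one-column-moments} (Corollary~\ref{cor:one-column-moments}(ii)) gives $\norm{\tilde m(u)}_{L^m}\le C_m d^{-1/2}\norm{M_s}_{L^{2m}(\op)}$ for both ensembles, so \eqref{eq:M-op-bound} already yields $\norm{\tilde m}_{L^2}\le Cd^{-3/2}$. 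That is all the paper uses. By H\"older and Corollary~\ref{cor:scalar-bounds}, $\E\abs{\beta^r_{i,0}(u)\beta^{r'}_{i,0}(u)\tilde m(u)}\le \norm{\beta^r_{i,0}}_{L^4}\norm{\beta^{r'}_{i,0}}_{L^4}\norm{\tilde m}_{L^2}\le Cd^{-5/2}$ for each $u$ separately, with no cancellation. Your three-way split of $\beta\beta$, the odd-moment discussion, and the fallback Wick computation are therefore unnecessary.

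Several claims along that detour are also incorrect and should not be relied on.
\begin{itemize}
\item The middle bullet contradicts itself: $f_i$ does have nonzero third moments.
\item The symmetry-in-$\epsilon_i$ argument is not valid for the proposition as stated. $\epsilon_i$ is $\CG_i$-measurable and $H_i$ is an arbitrary $\CG_i$-measurable matrix, so $H_i$ may depend on $\epsilon_i$. That term is simply $\CO(d^{-1/2}\cdot d^{-1/2}\cdot d^{-3/2})$ by H\"older.
\item You misquote Lemma~\ref{lem:refined-centered-quad}. For the rank-one column it gives $C_m d^{-3/2}\norm{M_s}_{L^{2m}(\Frob)}$, not $p^{-1}\norm{M}_\Frob$; the latter is the Gaussian rate. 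So the claimed $d^{-3}$ is false. With $\norm{M}_\Frob\lesssim\sqrt p\,\norm{M}_\op\lesssim 1$, the refined lemma gives the same $d^{-3/2}$ as the operator-norm bound, which is why your final check succeeds.
\end{itemize}
With these corrections the argument is complete and coincides with the paper's.
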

\begin{proposition}\label{prop:linear-comparison-complete}
    Fix $t \leq T$. Let $\lambda_i$ be any $\CG_i$-measurable vector satisfying $\|\lambda_i\| \leq C_1d^{-1/2}$ a.s. Then
    \begin{equation}\label{eq:linear-comparison-complete}
        \abs{\E\big[\inner{\lambda_i}{q^t(f_i)}-\inner{\lambda_i}{q^t(\widetilde f_i)}\big]}
        \le C d^{-5/2}.
    \end{equation}
\end{proposition}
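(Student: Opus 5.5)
We plan to prove Proposition~\ref{prop:linear-comparison-complete}. We condition on $\CG_i$ and write $q^t(u)=\bar q^t(u)+(q^t(u)-\bar q^t(u))$. Here $\bar q^t(u)=\sum_{r<t}\CB_{t,r}u\,\beta^r_{i,0}(u)$ is the linearized response from Section~\ref{sec:linearized}. The leading term $\langle\lambda_i,\bar q^t(u)\rangle=\sum_{r<t}\langle \CB_{t,r}^\top\lambda_i,u\rangle\beta^r_{i,0}(u)$ has $\CG_i$-measurable coefficients and is at most quadratic in $u$. Its conditional expectation therefore depends only on the first two moments of $u$. By \eqref{eq:column-covariance}, these moments coincide for $f_i$ and $\widetilde f_i$, so this term cancels exactly. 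Everything reduces to comparing $\E[\langle\lambda_i,q^t(u)-\bar q^t(u)\rangle\mid\CG_i]$ across the two ensembles.

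A direct bound is not enough. We have $\|q^t-\bar q^t\|_{L^m}=\CO(d^{-1})$ and $\|\lambda_i\|=\CO(d^{-1/2})$, which give only $d^{-3/2}$. The difference must be $\CO(d^{-5/2})$, so we have to track the structure of the remainder.

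We unroll the exact recursion of Proposition~\ref{prop:exact-loo}, namely \eqref{eq:w-exact} and \eqref{eq:q-linearized}, one order beyond $\bar q$. At each step we expand $G_t(y^t_i+m^t)-G_t(y^t_i)$ by the fourth-order Fr\'echet expansion of the spectral map:
\[
L^t_im^t+\tfrac12D^2G_t(y_i^t)[m^t,m^t]+\tfrac16D^3G_t(y_i^t)[m^t,m^t,m^t]+\CO(\|m^t\|^4),
\]
where all derivatives are $\CG_i$-measurable. We also expand $\beta^t_i(u)$ around the leave-one-out value, since $\beta^t_i(u)-\beta^t_{i,0}(u)=-\kappa_t\langle u,q^t(u)\rangle$.

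After substituting the expansions, the remainder splits into three kinds of terms:
\begin{enumerate}[label=(\alph*)]
\item Terms polynomial in $u$ of degree up to four with $\CG_i$-measurable multilinear coefficients, coming from $D^2G$ applied to $\bar m\otimes\bar m$ and from the cross term $u\langle u,\bar q\rangle$.
\item Cubic terms from $D^3G$.
\item Quartic and higher remainders.
\end{enumerate}
By Proposition~\ref{prop:delta-moments}, $\|m^t\|_{L^m}=\CO(d^{-1/2})$. So the quartic remainder contributes $d^{-1/2}\cdot d^{-2}=d^{-5/2}$ after pairing with $\lambda_i$, which is already the target size.

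The cubic terms are handled by the trilinear decomposition of Lemma~\ref{lem:trilinear-decomp}. After contracting $D^3G_t$ against $\lambda_i$ propagated through the $\CG_i$-measurable linear maps, we obtain a trilinear form in rank-one-like directions $u\beta$. Its Hilbert--Schmidt norm gains a factor $d^{-1/2}$ over the naive operator bound, thanks to the dimension-free Frobenius bounds on spectral derivatives (Lemma~\ref{lem:dk-bounds}). This gives $\CO(d^{-5/2})$ in each ensemble separately.

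The quadratic term is the main obstacle. It has the form $\langle\xi_i,D^2G_t(y^t_i)[\bar m^t,\bar m^t]\rangle$ with $\xi_i$ built from $\lambda_i$, and it has size $d^{-3/2}$, so it must be moment-matched rather than bounded. Here $\bar m^t$ is linear in $u$ times $\beta_{i,0}$, so this is a polynomial of degree four in $u$ with $\CG_i$-measurable coefficients. Our plan is a second moment-matching reduction. Write each factor $u\beta^r_{i,0}(u)$ as its conditional mean $\kappa_r a_i^r/p$ plus a centered part. Expanding, every product that involves at most two centered moments of $u$ matches between the ensembles.

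The pieces involving higher moments (third and fourth cumulants of $f_i$) remain. These are centered quadratic forms $\langle u,Au\rangle-\Tr(A)/p$, multiplied by further linear forms in $u$. For them we use Lemma~\ref{lem:refined-centered-quad}, which bounds their moments through $\|A\|_\Frob/p$ rather than $\|A\|_\op$. Combined with the Hilbert--Schmidt bounds on the contraction of $D^2G_t$ against $\xi_i$, this yields the extra $d^{-1}$ needed, so these terms are $\CO(d^{-5/2})$ in each ensemble.

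The Gaussian column satisfies the same bounds trivially. Summing the finitely many terms over $r<t\le T$ and taking expectation over $\CG_i$ gives \eqref{eq:linear-comparison-complete}. The step we expect to fail first, and would check most carefully, is whether the fourth cumulant of $f_i$ contracted against $D^2G$ indeed carries a Frobenius rather than an operator-norm factor.
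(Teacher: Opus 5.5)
Your architecture matches the paper's. You cancel $\E[\inner{\lambda_i}{\bar q^t(u)}\mid\CG_i]$ by second-moment matching, expand the propagated remainder to fourth order in the Fr\'echet sense, bound the quartic part directly, use Lemma~\ref{lem:trilinear-decomp} for the $D^3G$ part, and split $\FB_{k,t}(\bar m^k,\bar m^k)$ into trace and centered parts with Lemma~\ref{lem:refined-centered-quad}. The gap is in how you account for the quadratic piece. You write it as $\inner{\xi_i}{D^2G_t[\bar m^t,\bar m^t]}$. The expansion actually produces $\FB_{k,t}(m^k,m^k)$ with the exact $m^k$, and $m^k-\bar m^k$ is only $\CO_{L^m}(d^{-1})$. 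Under operator-norm bounds, the cross terms $2\FB_{k,t}(\bar m^k,m^k-\bar m^k)$ are therefore only $\CO(d^{-1/2}\cdot d^{-1/2}\cdot d^{-1})=\CO(d^{-2})$. They fit none of your classes (a)--(c): they are cubic rather than quartic in the perturbation, they are not polynomials of degree at most four in $u$, and they do not come from $D^3G$.

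Two of these cross terms need real work.

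First, $m^k-\bar m^k$ contains the second-order errors of earlier steps, propagated linearly: $\zeta^k_{\quadt,\lin}=\sum_{\ell<k}\CU_{\ell,k}\mathbf B^\ell(\bar m^\ell,\bar m^\ell)$. The resulting feedback term $\FB_{k,t}\big(\bar m^k,\CU_{\ell,k}\mathbf B^\ell(\bar m^\ell,\bar m^\ell)\big)$ is a trilinear form in $u$ times three amplitudes $\beta_{i,0}$. It has the same nominal size as your $D^3G$ term. To apply the trilinear decomposition, you need its Hilbert--Schmidt norm to be $\CO(\sqrt d)$. That requires composing a scalar bilinear form with the vector-valued map $\mathbf B^\ell$ and using $\norm{\inner{w}{\mathbf B^\ell(\cdot,\cdot)}}_\HS\lesssim\sqrt d\norm{w}$, as in \eqref{eq:hs-compose-trilinear}. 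This is the paper's $\I^{(\fb,\main)}_t$, and it gets its own proposition.

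Second, $m^k-\bar m^k$ contains the amplitude correction $s^k=\sum_a\CM_{a,k}u\,d^a(u)$. The term $\FB_{k,t}(\bar m^k,s^k)=\sum\beta^r_{i,0}d^a\,\FQ(u,u)$ cannot be moment-matched, because $d^a$ is not a polynomial in $u$ with $\CG_i$-measurable coefficients. It must instead be bounded, which requires $\FQ(u,u)=\CO(d^{-1})$. The trace part $p^{-1}\Tr M$ reaches this size only through $p^{-1}\abs{\Tr M}\le p^{-1/2}\norm{M}_\Frob$ and the Hilbert--Schmidt bound on $\FB_{k,t}$. The operator norm gives only $d^{-1/2}$.

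Both repairs use tools you already invoke, but without them the argument stops at $d^{-2}$.

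A smaller point: the Hilbert--Schmidt bounds \eqref{eq:local-spectral-B-HS}--\eqref{eq:local-spectral-T-HS} do not follow from the Frobenius bounds of Lemma~\ref{lem:dk-bounds}, which control only the operator norms of the multilinear maps. They come from the chained matrix-unit structure of the Daleckii--Krein formula in the eigenbasis, as shown in Lemma~\ref{lem:spectral-taylor-4}.
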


\subsection{Proof of the Lindeberg Reduction}

To prove Theorem \ref{thm:reduction}, we state deterministic bounds on the iterates, the moment estimates for one-column perturbations, and the observable expansions.
\begin{lemma}\label{lem:base-det-bounds}
    Under Assumptions \ref{ass:basic} and \ref{ass:signal-init}, for every $t \leq T$,
    \begin{equation}
        \|\widehat{\bf s}^t(F)\| \leq C \sqrt{d}, \quad \left\|\hS^t \right\|_\op \leq C
    \end{equation}
    For some $C = C(C_0, C_1)$. The same bound holds for the leave-one-out iterates $\widehat{\bf s}^t_\smi$ and $\hS^t_\smi$.
\end{lemma}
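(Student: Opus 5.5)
The bounds are deterministic consequences of the boundedness of the spectral denoisers, so I would argue pathwise, without any probabilistic input. The only quantities to control are the iterates $\widehat S^t$ themselves, not the fields $h^t$ or residuals $r^t$; those can be arbitrarily large, but they enter only through the argument of $\sigma_{t-1}$.

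\textbf{Case $t=0$.} Here $\hS^0$ is the common initialization. Assumption~\ref{ass:signal-init} gives $\|\hS^0\|_\op\le C_1$ almost surely, and \eqref{eq:signal-init-frobenius-bound} gives $\|\widehat{\bf s}^0\|=\|\hS^0\|_\Frob\le C_1\sqrt d$, using the isometry \eqref{eq:svec-isometry}.

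\textbf{Case $t\ge1$.} By \eqref{eq:AMP-recursion}, $\hS^t=\sigma_{t-1}(R^{t-1})$ for some symmetric matrix $R^{t-1}$, whatever it is. By the spectral-map definition \eqref{eq:spectral-map-def}, the eigenvalues of $\hS^t$ are $\sigma_{t-1}(\lambda_j(R^{t-1}))$. Hence
\[
\|\hS^t\|_\op\le\|\sigma_{t-1}\|_\infty\le C_\ast,
\]
using the uniform sup-norm bound from Assumption~\ref{ass:basic}. Then
\[
\|\widehat{\bf s}^t\|=\|\hS^t\|_\Frob\le\sqrt d\,\|\hS^t\|_\op\le C_\ast\sqrt d,
\]
again by the isometry \eqref{eq:svec-isometry}.

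\textbf{Leave-one-out iterates.} The trajectory $\widehat{\bf s}^t_\smi$ is generated by the same recursion \eqref{eq:reduced-AMP-recursion}, with $F$ replaced by $F_\smi$. It has the same initialization and the same maps $G_t=\svecop\circ\sigma_t\circ\mmat$. The argument above never used any property of the sensing matrix, so it applies verbatim. The same holds for $F(u)$ with any $u$ and for the interpolating matrices $F^{(i)}$.

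We may take $C=C_\ast$, which depends only on $C_1$ and the sup-norm bounds on the $\sigma_t$. I expect no real obstacle. The one point to state explicitly is that the bound is almost sure and requires no control on $h^t$. This is why Assumption~\ref{ass:basic} requires the $\sigma_t$ to be bounded and not merely Lipschitz.
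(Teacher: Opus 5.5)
Your proof is correct and matches the paper's argument. The case $t=0$ follows from Assumption~\ref{ass:signal-init} and the isometry of $\svecop$; for $t\ge1$ the iterates (full, leave-one-out, or interpolated) are outputs of the bounded spectral map $\sigma_{t-1}$, so their spectra lie in $[-\|\sigma_{t-1}\|_\infty,\|\sigma_{t-1}\|_\infty]$ regardless of the sensing matrix. Your remark that the constant depends on $C_1$ and the sup-norms of the $\sigma_t$ (that is, on $C_\ast$) rather than on $C_0$ is, if anything, more precise than the stated $C(C_0,C_1)$.
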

\begin{proof}[Proof of Lemma \ref{lem:base-det-bounds}]
    At $t=0$, the bounds follow from Assumption~\ref{ass:signal-init} and the isometry of $\svecop$. For $t\geq1$, the matrices $\hS^t$ and $\hS^t_\smi$ are outputs of the spectral map $\sigma_{t-1}$, so their spectra are contained in $[-\|\sigma_{t-1}\|_\infty,\|\sigma_{t-1}\|_\infty]$. The lemma follows.
\end{proof}

\begin{proposition}\label{prop:delta-moments}
    For every integer $m\ge 1$ there is $C_m<\infty$ such that uniformly in $i$ and in $u\in\{f_i,\widetilde f_i\}$,
    \begin{equation}\label{eq:q-w-moment-bound}
        \max_{0\le t\le T}\norm{q^t(u)}_{L^m}
        +
        \max_{0\le t\le T-1}\Big(\norm{w^t(u)}_{L^m}+\norm{m^t(u)}_{L^m}\Big)
        \le C_m d^{-1/2}.
    \end{equation}
    \begin{equation}
        \max_{0 \,\le\, t \,\le\, T - 1} \norm{\beta^t_i(u)}_{L^m} \,\le\, C_m d^{-1/2}
    \end{equation}
    Moreover,
    \begin{equation}\label{eq:eps-moment-bound}
        \max_{0\,\le\, t\,\le\, T-1}\norm{\varepsilon_i^t(u)}_{L^m}\,\le\, C_m d^{-1}.
    \end{equation}
\end{proposition}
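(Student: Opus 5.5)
We prove Proposition~\ref{prop:delta-moments} by induction on $t$, using the exact recursion of Proposition~\ref{prop:exact-loo}. The plan first collects a few a priori $L^m$ bounds on $\CG_i$-measurable quantities and on the inserted column, and then propagates them through the recursion.

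\textbf{Step 1: input bounds.} We need four of them.
\begin{itemize}
\item[(a)] $\|u\|=\CO_{L^m}(1)$ for $u\in\{f_i,\widetilde f_i\}$. For the Gaussian column this follows from $\E\|\widetilde f_i\|^2=1$. For the rank-one column, $\|f_i\|^2=\|x_ix_i^\top-I_d\|_\Frob^2/(d(d+1))$, which is $\CO_{L^m}(1)$ by moments of $\|x_i\|^4$.
\item[(b)] $\|K_i\|_\op=\|F_\smi F_\smi^\top\|_\op=\CO_{L^m}(1)$, using $n\le C_0p$. I would take this from the probabilistic input bounds of Appendix~\ref{app:prob}. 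For the mixed rank-one/Gaussian matrix it should follow from a high-moment trace or $\epsilon$-net bound, since the columns are independent, isotropic with covariance $I_p/p$, and have norm $\CO(1)$ with sub-exponential-type tails.
\item[(c)] $\|{\bf s}^*-\widehat{\bf s}^t_\smi\|\le C\sqrt d$. This follows from Lemma~\ref{lem:base-det-bounds} and Assumption~\ref{ass:signal-init}.
\item[(d)] $|\epsilon_i|=\CO_{L^m}(1)$.
\end{itemize}
The key consequence is that, since $u$ is independent of $\CG_i$ and has covariance $I_p/p$, for any $\CG_i$-measurable vector $a$ we have $\langle u,a\rangle=\CO_{L^m}(\|a\|/\sqrt p)$. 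For the Gaussian column this is immediate. For the rank-one column it is the hypercontractivity of a degree-two Gaussian chaos, because $\langle f_i,a\rangle$ is a centered quadratic form in $x_i$ with variance $\|a\|^2/p$. With $\|a\|\lesssim\sqrt d$, this gives $\langle u,{\bf s}^*-\widehat{\bf s}^t_\smi\rangle=\CO_{L^m}(d^{-1/2})$.

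\textbf{Step 2: induction.} Assume the bounds hold for $q^r,w^{r-1}$ with $r\le t$. We split
\[
\beta^t_i(u)=\kappa_t\langle u,{\bf s}^*-\widehat{\bf s}^t_\smi\rangle-\kappa_t\langle u,q^t(u)\rangle+\nu_t\epsilon_i .
\]
\begin{itemize}
\item The first term is $\CO_{L^m}(d^{-1/2})$ by Step 1.
\item The last term is $\CO(d^{-1/2})$ by \eqref{eq:scaled-coeffs-bound}.
\item The middle term is bounded by Cauchy--Schwarz, $\|u\|\,\|q^t\|$, giving $\CO_{L^m}(d^{-1/2})$ by the inductive hypothesis together with H\"older's inequality.
\end{itemize}
Next, \eqref{eq:w-exact} gives
\[
\|w^t\|\le C\|K_i\|_\op\|q^t\|+C\|w^{t-1}\|+\|u\|\,|\beta^t_i|=\CO_{L^m}(d^{-1/2}).
\]
Then $m^t=q^t+a_tw^t$ inherits the same bound. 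Finally, $q^{t+1}=G_t(y^t_i+m^t)-G_t(y^t_i)$ is controlled by the Lipschitz property of spectral maps: $\|G_t(v)-G_t(v')\|\le\|\sigma_t'\|_\infty\|v-v'\|$, since $\svecop$ is an isometry and the Frobenius-norm Lipschitz constant of a spectral map equals that of $\sigma_t$. This gives $\|q^{t+1}\|\le C\|m^t\|$. The base case is $q^0=0$, $w^{-1}=0$.

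Because the horizon $T$ is fixed, only finitely many applications of H\"older's inequality occur, each doubling the moment index. The constants therefore remain $C_m$, uniform in $i$ and $d$.

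\textbf{Step 3: remainder.} By \eqref{eq:L-epsilon-def}, $\|\varepsilon^t_i\|\le C\|m^t\|^2$, so $\|\varepsilon^t_i\|_{L^m}\le C\|m^t\|_{L^{2m}}^2\le C_md^{-1}$.

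The main obstacle is Step 1(b): a uniform high-moment bound on $\|F_\smi F_\smi^\top\|_\op$ for the interpolating matrix, whose rank-one columns have dependent coordinates. The second point needing care is the chaos estimate $\langle f_i,a\rangle=\CO_{L^m}(\|a\|/\sqrt p)$; that one is routine via Gaussian hypercontractivity. If the appendix does not already provide (b), I would try first to prove it by bounding $\E\Tr(F_\smi^\top F_\smi)^{k}$ with $k\asymp\log d$, using independence of the columns and the moment bounds on $\langle f_j,f_l\rangle$.
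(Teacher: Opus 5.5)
Your proposal is correct and follows the paper's proof essentially step by step. It is the same induction through $\beta^t_i$, $w^t$, $m^t$ and $q^{t+1}$. It uses the same linear-form moment bound (Lemma~\ref{lem:one-column-moments}), the same deterministic bounds (Lemma~\ref{lem:base-det-bounds}), and the same envelope estimate $\norm{K_i}_{L^m(\op)}\le C_m$. The paper proves that envelope estimate in Lemma~\ref{lem:sensing-envelope} from the Adamczak et al.\ operator-norm theorem together with a trace bound on the exceptional event, so your fallback trace-moment plan is not needed.

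There are two cosmetic differences. You bound $q^{t+1}$ directly by the global Lipschitz constant of $G_t$, whereas the paper writes $q^{t+1}=L^t_i m^t+\varepsilon^t_i$ and bounds the two pieces separately. You also start the induction from $q^0=0$ and $w^{-1}=0$ instead of treating $t=0$ separately.
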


The proof of Proposition \ref{prop:delta-moments} is deferred to Appendix \ref{app:prob}.

\begin{proof}[Proof of Theorem \ref{thm:reduction}]
    For any observable $\Phi$ of the AMP iterates, treat it as a function of the collection of sensing matrix, $S^\star$, the initialization $\hS^0$, and the observation noise $\epsilon$. Since the latter $3$ quantities are fixed for the purpose of universality proof, we write $\Phi(F), \Phi(\widetilde F), \Phi(F^{(i)})$ for the observables generated by Gaussian sensing, rank-one sensing, and interpolated matrices $F^{(i)}$ defined in \eqref{eq:interpolating-matrix-def}. Write telescoping sum
    \begin{equation}\label{eq:lindeberg-telescope}
        \E\Phi(F) - \E \Phi(\widetilde F) = \sum_{i = 1}^n \E\Phi(F^{(i)}) - \E\Phi(F^{(i + 1)})
    \end{equation}
    We treat each observable separately.
    \begin{enumerate}
        \item Correlation with signal. See that the change caused by column replacement is
        \begin{equation}
            \frac{1}{d} \inner{{\bf s}^*}{q^t(f_i)}
        \end{equation}
        Because ${\bf s}^*$ is independent of $F,\widetilde F$, and $\|d^{-1} {\bf s}^*\| \leq Cd^{-1/2}$ based on Assumption \ref{ass:signal-init}, \eqref{eq:abstract-linear-prop} applies and we have
        \begin{equation}
            \left| \frac{1}{d} \E  \Tr(\hS^t(F^{(i)}) S^\star) - \frac{1}{d} \E\Tr(\hS^t(F^{(i + 1)}) S^\star) \right| \,=\, \left| \frac{1}{d} \E \inner{{\bf s}^*}{q^t(f_i)} - \frac 1 d \E \inner{{\bf s}^*}{q^t(\widetilde f_i)} \right| \,\leq\, Cd^{-5/2}
        \end{equation}
        Summing over $i$ in \eqref{eq:lindeberg-telescope} and using $n\leq C_0p=\CO(d^2)$ gives the claimed $\CO(d^{-1/2})$ bound.
        \item Cross-time correlation. Fix $0 \leq s, t \leq T$ and define
        \begin{equation}
            \Phi^\cross(A, B) = \frac{1}{d}\Tr(AB)
        \end{equation}
        Then for perturbations $\Delta_A,\Delta_B\in\Sym$,
        \begin{equation}\label{eq:cross-exact-expansion}
            \Phi^{\mathrm{cross}}(A+\Delta_A,B+\Delta_B)-\Phi^{\mathrm{cross}}(A,B)
            =\frac1d\Tr(\Delta_A B)+\frac1d\Tr(A\Delta_B)+\frac1d\Tr(\Delta_A\Delta_B).
        \end{equation}
        So in the vectorized form with $A = \hS^t_\smi, B = \hS^s_\smi$, the increment is a sum of two linear forms of norms $\CO(d^{-1/2})$ and a bilinear form with norm $\frac{1}{d}$. Apply \eqref{eq:cross-exact-expansion} with
        \begin{equation}
            A \,=\, \hS^t_\smi, \quad B \,=\, \hS^s_\smi, \quad \Delta_A \,=\, \mmat(q^t(u)), \quad \Delta_B \,=\, \mat(q^s(u))
        \end{equation}
        So we have
        \begin{equation}
            \begin{aligned}
                \frac 1 d \Tr(\hS^t(F(u)) \hS^s(u)) - \frac 1 d \Tr(\hS^t_\smi \hS^s_\smi) = \frac 1 d \inner{q^t(u)}{\widehat{\bf s}^s_\smi} + \frac{1}{d} \inner{q^s(u)}{\widehat{\bf s}^t_\smi} + \frac 1 d \inner{q^t(u)}{q^s(u)}
            \end{aligned}
        \end{equation}
        By Lemma \ref{lem:base-det-bounds}, $\|\widehat{\bf s}^t_\smi\|, \|\widehat{\bf s}^s_\smi\| \leq C \sqrt{d}$. So from hypothesis \eqref{eq:abstract-linear-prop}, we have
        \begin{equation}
            \abs{\frac 1 d \E \inner{\widehat{\bf s}^t_\smi}{q^s(f_i)} - \frac 1 d \E \inner{\widehat{\bf s}^t_\smi}{q^s(\widetilde f_i)}} \leq C d^{-5/2}
        \end{equation}
        for any $t, s$. Further, with $H_i \bydef \frac{1}{d}I_p$, so that $\|H_i\|_\op = \frac 1 d$. By hypothesis \eqref{eq:abstract-quadratic-prop}, we have
        \begin{equation}
            \abs{\frac 1 d \E \inner{q^t(f_i)}{q^s(f_i)} - \frac 1 d \E \inner{q^t(\widetilde f_i)}{q^s(\widetilde f_i)}} \leq C d^{-5/2}
        \end{equation}
        Therefore, we have
        \begin{equation}
            \abs{\E\Phi^\cross(\hS^t(f_i), \hS^s( f_i)) - \E\Phi^\cross(\hS^t(\widetilde f_i), \hS^s( \widetilde f_i))} \leq C d^{-5/2}
        \end{equation}
        Summing over $n\leq C_0p=\CO(d^2)$ terms gives, for any $s,t\in[T]$,
        \begin{equation}
            \abs{\E[\Phi^\cross_{s,t}(F) - \Phi^\cross_{s,t}(\widetilde F)]} = \CO(d^{-1/2})
        \end{equation}
        as desired.
        \item Spectral Statistics. Fix $\mu\in\SC^3(\R)$. Define
        \begin{equation}
            \Phi_\mu(S) \bydef \frac{1}{d} \Tr(\mu(S))
        \end{equation}
        We expand the spectral statistics as follows. Let
        \begin{equation}
            A_i \bydef \hS^t_\smi, \quad \Delta_i(u) \bydef \mmat(q^t(u))
        \end{equation}
        Lemma \ref{lem:base-det-bounds} bounds the operator norms of both $A_i$ and $A_i+\Delta_i(u)$ uniformly. Applying Lemma \ref{lem:smooth-spectral-expansion} on this deterministic spectral interval gives
        \begin{equation}
            \Phi_\mu(A_i +\Delta_i(u)) - \Phi_\mu(A_i) = \frac 1 d \Tr(\mu'(A_i) \Delta_i(u)) + Q_{\mu, A_i}(\Delta_i(u)) + R_{\mu, A_i}(\Delta_i(u))
        \end{equation}
        where $d^{-1}\|\mu'(A_i)\|_\Frob\leq\|\mu'\|_{L^\infty(I_\ast)}d^{-1/2}$, so it satisfies \eqref{eq:abstract-linear-prop}. Furthermore, by Lemma \ref{lem:smooth-spectral-expansion}, the quadratic form $Q_{\mu,A_i}$ is representable by a symmetric matrix $K_{\mu, A_i}$ such that $\|K_{\mu, A_i}\|_\op \leq C_\mu d^{-1}$, so it satisfies \eqref{eq:abstract-quadratic-prop}. Finally, we have from Proposition \ref{prop:delta-moments} and the remainder bound from Lemma \ref{lem:smooth-spectral-expansion},
        \begin{equation}
            \E\abs{R_{\mu, A_i}(\Delta_i(u))} \leq C_\mu d^{-1} \E\norm{q^t(u)}^3 \leq C_\mu d^{-5/2}
        \end{equation}
        Therefore, the term in the telescoping sum has expectation at most $Cd^{-5/2}$. Summing over $n\leq C_0p=\CO(d^2)$ terms gives the desired $\CO(d^{-1/2})$ bound.
    \end{enumerate}
\end{proof}

\section{Linearized Leave-One-Out Dynamics}\label{sec:linearized}

We linearize the leave-one-out recursion from Section~\ref{sec:aux} and approximate its response by $\bar q^t(u)$, whose coefficients depend only on the background trajectory. This approximation will be used in both one-column comparisons. Define the $2p$-dimensional state difference
\begin{equation}\label{eq:delta-state-def}
    \delta^t(u) \bydef \begin{pmatrix}
        q^t(u) \\
        w^{t - 1}(u)
    \end{pmatrix}  = \begin{pmatrix}
        \widehat{\bf s}^t(u) - \widehat{\bf s}^t_\smi \\
        h^{t - 1}(u) - h^{t - 1}_\smi
    \end{pmatrix}\in \R^{2p}, \quad t \geq 0
\end{equation}
For each $0 \le t \le T - 1$, define the block operators, with $L^t_i$ from \eqref{eq:L-epsilon-def} of Proposition \ref{prop:exact-loo} and $K_i$ from \eqref{eq:Ki-def},
\begin{equation}\label{eq:Mt-def}
    M^t_i \bydef \begin{pmatrix}
        L^t_i(I - a_t \kappa_t K_i) & a_t \rho_t L^t_i \\
        -\kappa_t K_i & \rho_t I_p
    \end{pmatrix} \in \R^{2p \times 2p}, \quad B^t_i \bydef \begin{pmatrix}
        a_t L^t_i \\
        I_p
    \end{pmatrix} \in \R^{2p \times p }
\end{equation}
Then we have from \eqref{eq:w-exact} of Proposition \ref{prop:exact-loo} and \eqref{eq:q-linearized} of Proposition \ref{prop:exact-loo} the one-step recursion
\begin{equation}\label{eq:block-recursion}
    \delta^{t + 1}(u) = M^t_i \delta^t(u) + B^t_i u \beta^t_i(u) + e^t_i(u)
\end{equation}
where
\begin{equation}\label{eq:block-error-def}
    e^t_i(u) \bydef \begin{pmatrix}
        \varepsilon^t_i(u) \\
        0
    \end{pmatrix}
\end{equation}
For $k \le t - 1$, define the response operator
\begin{equation}\label{eq:response-operator-def}
    \CP_{k,t} \bydef \Pi_1 M_i^{t -1 } \cdots M_i^{k + 1} B^k_i :\R^p \to \R^p
\end{equation}
where $\Pi_1$ denotes the projection onto the first $\R^p$ block. Then the recursion in \eqref{eq:block-recursion} gives for $1 \le t \le T$
\begin{equation}\label{eq:q-unrolled}
    q^t(u) = \sum_{k = 0}^{t - 1} \CP_{k,t} u \beta^k_i(u) + z^t_i(u)
\end{equation}
with $e_i^\ell$ as in \eqref{eq:block-error-def},
\begin{equation}\label{eq:z-unrolled}
    z^t_i(u) \bydef \Pi_1 \sum_{\ell = 0}^{t - 1} M^{t-1}_i \cdots M^{\ell + 1}_i e^\ell_i(u)
\end{equation}
Consequently, for $0 \le t \le T - 1$,
\begin{equation}\label{eq:beta-kernel-recursion}
    \beta^t_i(u) = \beta^t_{i,0}(u) - \sum_{k = 0}^{t - 1}\Omega_{k,t}(u) \beta^k_i(u) - \eta^t_i(u)
\end{equation}
where
\begin{align}
    \beta^t_{i,0}(u) &\bydef \kappa_t \inner{u}{{\bf s}^* - \widehat{\bf s}^t_\smi} + \nu_t \epsilon_i \label{eq:beta0-def} \\
    \Omega_{k,t}(u) &\bydef \kappa_t \inner{u}{\CP_{k,t}u} \label{eq:omega-def} \\
    \eta^t_i(u) &\bydef \kappa_t \inner{u}{z^t_i(u)} \label{eq:eta-def}
\end{align}
We now state the bounds on these operators.
\begin{lemma}\label{lem:response-op-bounds}
    For every $m \ge 1$, there exists $C_m < \infty$ such that uniformly in $i$,
    \begin{equation}
        \max_{0 \,\le\, k \,<\, t \,\le\, T} \norm{\CP_{k,t}}_{L^m(\op)} \,\le\, C_m, \quad \max_{t \,\le\, T} \norm{z^t_i(u)}_{L^m} \,\le\, C_m d^{-1}
    \end{equation}
    uniformly in $u \in \{f_i, \widetilde f_i\}$
\end{lemma}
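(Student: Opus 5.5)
The plan is to bound each block operator $M_i^t$ and $B_i^t$ in operator norm, and then control the products defining $\CP_{k,t}$ and $z_i^t$ by Hölder's inequality over a fixed number $T$ of factors.

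\textbf{Operator bounds.} First, $L_i^t=DG_t(y_i^t)$ is the Fréchet derivative of a spectral map conjugated by the isometry $\svecop$. By the Daleckii--Krein formula, which underlies Lemma~\ref{lem:spectral-taylor}, it acts in the eigenbasis of $\mmat(y_i^t)$ as a Schur multiplier with the divided differences $\sigma_t^{[1]}(\lambda_j,\lambda_k)$. Hence
\begin{equation}
\|L_i^t\|_\op\le\|\sigma_t'\|_\infty,
\end{equation}
deterministically. Second, the scaled coefficients $a_t,\kappa_t,\rho_t$ are uniformly bounded by \eqref{eq:scaled-coeffs-bound}. So the only random ingredient is $K_i=F_\smi F_\smi^\top$, and
\begin{equation}
\|M_i^t\|_\op\le C(1+\|K_i\|_\op),\qquad \|B_i^t\|_\op\le C.
\end{equation}

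\textbf{Moments of $\|K_i\|_\op$.} The main input needed is $\norm{\|K_i\|_\op}_{L^m}\le C_m$ for every fixed $m$, uniformly in $i$. Note that $\|K_i\|_\op\le\|FF^\top\|_\op$ for the interpolated matrix with column $i$ zeroed, so it suffices to bound the operator norm of a mixed matrix of $n\lesssim p$ independent isotropic columns with covariance $I_p/p$.

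For the Gaussian columns this is standard: $\|\widetilde F\|_\op\le C$ with overwhelming probability, plus Gaussian tail bounds. For the rank-one columns, $\|f_i\|^2=\|x_ix_i^\top-I\|_\Frob^2/(d(d+1))$ concentrates at $1$ with all moments bounded. I would first try a moment or matrix-Bernstein-type bound, or a direct appeal to the probabilistic bounds of Appendix~\ref{app:prob}, which presumably also feed Proposition~\ref{prop:delta-moments}.

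A cheap fallback that suffices for bounded moments is the following. Write $FF^\top=\sum_j f_jf_j^\top$. In matrix language,
\begin{equation}
\|FF^\top\|_\op=\sup_{\|A\|_\Frob=1}\frac{1}{d(d+1)}\sum_j\Tr(Z_jA)^2\cdot d.
\end{equation}
Bound this by a net argument combined with Hanson--Wright: $\Tr(Z_jA)$ is a centered Gaussian quadratic form with variance of order $1/d$, and a Bernstein bound for sums of squares is uniform over a net of size $e^{Cp}$ because $n\asymp p$. This step, getting an $\CO(1)$ operator-norm bound with all moments in the dependent rank-one ensemble, is the main obstacle. If only $\|K_i\|_\op\prec1$ is available, I would combine it with the crude deterministic bound $\|K_i\|_\op\le\sum_j\|f_j\|^2$, which has polynomial moments, to recover the $L^m$ bound.

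\textbf{Bound on $\CP_{k,t}$.} Given the moment bound on $\|K_i\|_\op$,
\begin{equation}
\|\CP_{k,t}\|_\op\le C\,(1+\|K_i\|_\op)^{T}.
\end{equation}
Its $L^m$ norm is therefore bounded by $C\,\norm{1+\|K_i\|_\op}_{L^{mT}}^T\le C_m$.

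\textbf{Bound on $z_i^t$.} We have
\begin{equation}
\|z_i^t(u)\|\le\sum_{\ell<t}C(1+\|K_i\|_\op)^{T}\,\|\varepsilon_i^\ell(u)\|.
\end{equation}
By Hölder's inequality together with \eqref{eq:eps-moment-bound} of Proposition~\ref{prop:delta-moments}, which gives $\norm{\varepsilon_i^\ell(u)}_{L^{2m}}\le C_md^{-1}$, we obtain $\norm{z_i^t(u)}_{L^m}\le C_md^{-1}$. All constants are uniform in $i$ and in $u\in\{f_i,\widetilde f_i\}$, since the bounds on $K_i$ do not depend on which columns have been swapped.
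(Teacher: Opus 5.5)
Your proof follows the paper's. The bound $\|L_i^t\|_\op\le\|\sigma_t'\|_\infty$ gives $\|M_i^t\|_\op\le C(1+\|K_i\|_\op)$, H\"older over at most $T$ factors bounds $\CP_{k,t}$ in every $L^m(\op)$, and the $z_i^t$ bound follows from \eqref{eq:eps-moment-bound}. The one nontrivial input, $\norm{K_i}_{L^m(\op)}\le C_m$, is exactly Lemma~\ref{lem:sensing-envelope}, so your ``direct appeal to Appendix~\ref{app:prob}'' is the right choice.

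Do not rely on the fallbacks you sketch for that input; neither delivers it.

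\emph{The net argument fails in low-rank directions.} Take $a=\svecop(e_1e_1^\top)$. Then $\sqrt{d(d+1)}\,\langle f_j,a\rangle=x_{j1}^2-1$. A single large $|x_{j1}|$ already pushes $\sum_j\langle f_j,a\rangle^2$ above any fixed $K$, with probability at least of order $e^{-c\sqrt{Kp}}$. That cannot survive a union bound over a net of size $e^{Cp}$. Truncating on $\max_j\|x_j\|^2\le 2d$ does not help: each summand can still be of order one, so Bernstein gives only $e^{-cK}$. This is why the paper uses the Adamczak et al.\ bound of Lemma~\ref{lem:adamczak-opnorm}, which pairs sub-exponential marginals with a norm-boundedness condition on the columns.

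\emph{Stochastic domination is too weak.} The bound $\|K_i\|_\op\prec 1$, in the sense of \eqref{eq:stochastic-domination}, combined with the trace bound only gives $\norm{K_i}_{L^m(\op)}\le C_{m,\varepsilon}d^{\varepsilon}$. A $d$-independent $C_m$ requires $\P(\|K_i\|_\op>C)$ to decay faster than any power of $d$ for a fixed constant $C$. Lemma~\ref{lem:sensing-envelope} supplies exactly this, with the bound $2e^{-c\sqrt p}$, before handling the exceptional event with the trace bound as you propose.
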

\begin{proof}[Proof of Lemma \ref{lem:response-op-bounds}]
    The operator $\CP_{k,t}$ is a product of the block matrices $M_i^r$ in \eqref{eq:Mt-def}, followed by a projection. Each block has operator norm bounded by $C(1+\|K_i\|_\op)$, since $\|L_i^t\|_\op=\|DG_t(y_i^t)\|_\op\leq\|\sigma_t'\|_\infty\leq C$. Assumption~\ref{ass:basic}, Lemma~\ref{lem:sensing-envelope}, and \eqref{eq:G-lipschitz} therefore give
    \begin{equation}
        \norm{M^t_i}_\op \,\le\, C(1 + \norm{K_i}_\op) \,\implies\, \norm{M^t_i}_{L^m(\op)} \,\le\, C_m
    \end{equation}
    The products contain at most $T$ factors. Since $T$ is fixed, their operator norms have moments of every order bounded uniformly in $d$, proving the first bound.

    For the second equation, we simply apply \eqref{eq:z-unrolled} and Minkowski to get
    \begin{equation}
        \begin{aligned}
            \norm{z^t_i(u)}_{L^m} &\le\, \sum_{\ell \,=\, 1}^{t-1} \norm{\Pi_1 M^{t-1}_i \cdots M^{\ell + 1}_i \Pi_1^\top}_{L^{2m}(\op)} \norm{\varepsilon^\ell_i(u)}_{L^{2m}} \\
            &\le\, C_m \sum_{k\,=\,0}^{t-1} \norm{\varepsilon^k_i(u)}_{L^{2m}} \,\le\, C'_m d^{-1}
        \end{aligned}
    \end{equation}
    by Proposition \ref{prop:delta-moments}.
\end{proof}
We also need bounds on the scalar amplitudes and kernels.
\begin{corollary}\label{cor:scalar-bounds}
    For every $m \ge 1$ there exists $C_m < \infty$ such that uniformly in $i$ and $u \in \{f_i, \widetilde f_i\}$,
    \begin{equation}
        \max_{0 \,\le\, t \,\le\, T - 1} \norm{\beta^t_i(u)}_{L^m} \,\le\, C_m d^{-1/2}
    \end{equation}
    \begin{equation}
        \max_{0 \,\le\, t \,\le\, T - 1} \norm{\beta^t_{i,0}(u)}_{L^m} \,\le\, C_m d^{-1/2}
    \end{equation}
    \begin{equation}
        \max_{0 \,\le\, k \,<\, t \,\le\, T - 1} \norm{\Omega_{k,t}(u)}_{L^m} \,\le\, C_m, \quad \max_{0 \,\le\, t \,\le\, T - 1} \norm{\eta^t_i(u)}_{L^m}\,\le\, C_m d^{-1}
    \end{equation}
\end{corollary}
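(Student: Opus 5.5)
Each of the four bounds reduces, via Cauchy--Schwarz or Hölder in $L^m$, to the moment estimates already in hand. The inputs are Proposition~\ref{prop:delta-moments}, Lemma~\ref{lem:response-op-bounds}, and standard moment bounds for the candidate column $u\in\{f_i,\widetilde f_i\}$. For the column we use $\norm{\norm{u}}_{L^m}\le C_m$, which holds since $\E\norm{u}^2=1$ and both $\norm{f_i}^2$ and $\norm{\widetilde f_i}^2$ concentrate at $1$; for the Gaussian column this is chi-square concentration, and for the rank-one column $\norm{Z_i}_\Frob^2/(d+1)$ has all moments bounded. The bound on $\beta^t_i(u)$ is exactly the second display of Proposition~\ref{prop:delta-moments}, so nothing new is needed there.

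For $\beta^t_{i,0}(u)=\kappa_t\inner{u}{a_i^t}+\nu_t\epsilon_i$ with $a_i^t={\bf s}^*-\widehat{\bf s}^t_\smi$, I would condition on $\CG_i$. The vector $a_i^t$ is $\CG_i$-measurable, and by Assumption~\ref{ass:signal-init} and Lemma~\ref{lem:base-det-bounds} it satisfies $\norm{a_i^t}\le C\sqrt d$. The column $u$ is independent of $\CG_i$, and $\inner{u}{a}$ has variance $\norm{a}^2/p$. In the Gaussian case this linear form is exactly Gaussian. In the rank-one case, $\inner{f_i}{a}=(x_i^\top A x_i-\Tr A)/\sqrt{d(d+1)}$ with $A=\mmat(a)$, and the Hanson--Wright inequality gives $\norm{\inner{f_i}{a}}_{L^m}\le C_m\norm{A}_\Frob/d$ (Appendix~\ref{app:prob} presumably contains such a quadratic-form moment bound). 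Either way $\norm{\inner{u}{a_i^t}}_{L^m}\le C_m\sqrt d/d=C_md^{-1/2}$. Since $|\nu_t|\lesssim d^{-1/2}$ by \eqref{eq:scaled-coeffs-bound} and $\epsilon_i$ is Gaussian with fixed variance $\Delta$, the noise term is also $\CO_{L^m}(d^{-1/2})$.

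For $\Omega_{k,t}(u)=\kappa_t\inner{u}{\CP_{k,t}u}$, I would bound crudely by $|\kappa_t|\,\norm{\CP_{k,t}}_\op\norm{u}^2$. Hölder with $L^{2m}$ on the operator norm (Lemma~\ref{lem:response-op-bounds}) and $L^{4m}$ on $\norm{u}$ then gives $C_m$. One subtlety is that $\CP_{k,t}$ is $\CG_i$-measurable, since $L_i^t$ is evaluated at the leave-one-out input $y_i^t$, but independence is not even needed for this crude bound. Similarly, $|\eta^t_i(u)|\le|\kappa_t|\,\norm{u}\,\norm{z^t_i(u)}$, and Hölder with Lemma~\ref{lem:response-op-bounds} yields $C_md^{-1}$.

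The only step needing care is the rank-one linear form in $\beta^t_{i,0}$. A naive Cauchy--Schwarz bound $\norm{u}\norm{a}$ gives $\sqrt d$, not $d^{-1/2}$. The fix is to use the independence of $f_i$ from $\CG_i$ together with the Hanson--Wright moment bound conditional on $a_i^t$.
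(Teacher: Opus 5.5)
Your proposal is correct and follows the paper's proof essentially step for step. You read off the bound on $\beta_i^t$ from Proposition~\ref{prop:delta-moments}, and you handle $\beta_{i,0}^t$ by conditioning on $\CG_i$, using $\norm{\mathbf{s}^*-\widehat{\mathbf s}^t_\smi}\le C\sqrt d$ together with $|\nu_t|\lesssim d^{-1/2}$. You then get $\Omega_{k,t}$ and $\eta_i^t$ from Cauchy--Schwarz plus H\"older with Lemma~\ref{lem:response-op-bounds}. The linear-form moment bound you derive via Hanson--Wright is exactly \eqref{eq:linear-moment-uncond} of Lemma~\ref{lem:one-column-moments} (proved there by Gaussian hypercontractivity), and your column-norm bound is Lemma~\ref{lem:sensing-envelope}.
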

\begin{proof}[Proof of Corollary \ref{cor:scalar-bounds}]
    Proposition~\ref{prop:delta-moments} already bounds $\beta_i^t(u)$. For $\beta^t_{i,0}(u)$, \eqref{eq:beta0-def}, Lemma~\ref{lem:base-det-bounds}, Assumption~\ref{ass:signal-init}, and \eqref{eq:linear-moment-uncond} give
    \begin{equation}
        \norm{\inner{u}{{\bf s}^* - \widehat{\bf s}^t_\smi}}_{L^m}
        \le C_m d^{-1}\norm{{\bf s}^* - \widehat{\bf s}^t_\smi}_{L^m}
        \le C_m d^{-1/2}.
    \end{equation}
    Since $\nu_t\epsilon_i=\CO_{L^m}(d^{-1/2})$, this gives
    \begin{equation}
        \norm{\beta^t_{i,0}(u)}_{L^m}\le C_m d^{-1/2}.
    \end{equation}
    For the $\Omega_{k,t}$, apply Cauchy--Schwarz
    \begin{equation}
        |\Omega_{k,t}| \,\le\, |\kappa_t| \norm{\CP_{k,t}}_\op \norm{u}^2, \quad \norm{\Omega_{k,t}(u)}_{L^m} \,\le\, \abs{\kappa_t} \norm{\CP_{k,t}}_{L^{2m}(\op)} \norm{u}_{L^{4m}}^2
    \end{equation}
    which is bounded in $L^m$ uniformly in $d$ by Lemma \ref{lem:sensing-envelope} and Lemma \ref{lem:response-op-bounds}. By Cauchy--Schwarz again,
    \begin{equation}
        |\eta^t_i(u)| \,\le\, |\kappa_t| \norm{u} \norm{z^t_i(u)}, \quad \norm{\eta^t_i(u)}_{L^m} \,\le\, \abs{\kappa_t} \norm{u}_{L^{2m}} \norm{z^t_i(u)}_{L^{2m}}
    \end{equation}
    So the last bound is obtained by applying Lemma \ref{lem:response-op-bounds}.
\end{proof}

\subsection{Linearized Column Amplitudes}
Fix $i$ and suppress it in the notation below. We approximate the column amplitudes $\beta^t(u)$ by the sequence $(\alpha^t(u))_{t=0}^{T-1}$ defined recursively by
\begin{equation}\label{eq:alpha-lin-system}
    \alpha^t(u) \,=\, \beta^t_{i,0}(u) - \sum_{k\,=\,0}^{t - 1} \Omega_{k,t}(u) \alpha^k(u), \quad 0 \,\le\, t \,\le\, T - 1
\end{equation}
Then define the linearized version of the leave-one-out difference $q^t$ by
\begin{equation}\label{eq:q-lin-def}
    q^t_\lin(u) \,\bydef\, \sum_{k\,=\,0}^{t-1} \CP_{k,t} u \alpha^k(u), \quad t \,\ge\, 1
\end{equation}
with $q^0_\lin(u) = 0$. We can control the error incurred by replacing $q^t$ with its linearized version
\begin{proposition}\label{prop:q-lin-error}
    For every $m \ge 1$ there exists $C_m < \infty$ such that uniformly in $i$ and $u \in \{f_i, \widetilde f_i\}$.
    \begin{equation}
        \max_{t \le T} \norm{q^t(u) - q^t_\lin(u)}_{L^m} \le C_m d^{-1}
    \end{equation}
\end{proposition}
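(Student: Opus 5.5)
Subtracting \eqref{eq:q-lin-def} from \eqref{eq:q-unrolled} gives the exact decomposition
\begin{equation}
    q^t(u)-q^t_\lin(u)=\sum_{k=0}^{t-1}\CP_{k,t}u\,\big(\beta^k_i(u)-\alpha^k(u)\big)+z^t_i(u).
\end{equation}
Lemma~\ref{lem:response-op-bounds} already gives $\|z^t_i(u)\|_{L^m}\le C_md^{-1}$. The proof therefore reduces to showing
\begin{equation}
    \max_{t\le T-1}\|\beta^t_i(u)-\alpha^t(u)\|_{L^m}\le C_md^{-3/2}
\end{equation}
for every $m$. Once this holds, H\"older's inequality bounds each summand by $\|\CP_{k,t}\|_{L^{3m}(\op)}\|u\|_{L^{3m}}\|\beta^k_i-\alpha^k\|_{L^{3m}}$. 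The first factor is $C_m$ by Lemma~\ref{lem:response-op-bounds}, and the second is $\CO(d^{1/2})$ since $\E\|u\|^2=1$; this is the sensing envelope of Lemma~\ref{lem:sensing-envelope}, which holds for both ensembles. Each term is then $\CO(d^{1/2}\cdot d^{-3/2})=\CO(d^{-1})$. Summing at most $T$ terms proves the proposition.

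To bound the amplitude error, set $\Delta^t\bydef\beta^t_i(u)-\alpha^t(u)$. Subtracting \eqref{eq:alpha-lin-system} from \eqref{eq:beta-kernel-recursion} gives the triangular linear recursion
\begin{equation}
    \Delta^t=-\sum_{k=0}^{t-1}\Omega_{k,t}(u)\,\Delta^k-\eta^t_i(u),\qquad \Delta^0=-\eta^0_i(u)=0.
\end{equation}
Here $\eta^0_i=0$ because $z^0_i=0$. We induct on $t\le T-1$. Suppose $\|\Delta^k\|_{L^{m'}}\le C_{m'}d^{-3/2}$ for all $k<t$ and all $m'$. By H\"older, $\|\Omega_{k,t}\Delta^k\|_{L^m}\le\|\Omega_{k,t}\|_{L^{2m}}\|\Delta^k\|_{L^{2m}}$, and Corollary~\ref{cor:scalar-bounds} gives $\|\Omega_{k,t}\|_{L^{2m}}\le C_m$. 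The number of steps is fixed, so the growth of the moment index is harmless.

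The only remaining input is the bound $\|\eta^t_i(u)\|_{L^m}\lesssim d^{-3/2}$. This is the main obstacle, because Corollary~\ref{cor:scalar-bounds} only gives $d^{-1}$. That bound comes from Cauchy--Schwarz on $\kappa_t\inner{u}{z^t_i(u)}$ with $\|u\|\sim1$ and $\|z^t_i\|\sim d^{-1}$, which ignores that $u$ is roughly orthogonal to $z^t_i$.

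To recover the missing $d^{-1/2}$, I would expand $z^t_i$ through \eqref{eq:z-unrolled} and the second-order Taylor structure of $\varepsilon^\ell_i(u)$. Write $\varepsilon^\ell_i(u)=\tfrac12D^2G_\ell(y^\ell_i)[m^\ell,m^\ell]+\CO(\|m^\ell\|^3)$ using Lemma~\ref{lem:spectral-taylor}. The cubic part contributes $\|u\|\cdot d^{-3/2}$, which is already enough. For the quadratic part, I would substitute the leading term of $m^\ell(u)$, namely a $\CG_i$-measurable operator applied to $u$ times the scalar amplitudes $\beta^k$. Then $\inner{u}{\Pi_1M\cdots D^2G_\ell[\cdot,\cdot]}$ becomes a scalar contraction of the form $\inner{u}{\mathcal T[u,u]}\beta^k\beta^{k'}$, where $\mathcal T$ is $\CG_i$-measurable. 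The amplitudes contribute $d^{-1}$. The trilinear contraction $\inner{u}{\mathcal T[u,u]}$ should be $\CO_{L^m}(d^{-1/2})$, by Hilbert--Schmidt bounds on spectral second derivatives together with concentration of polynomial forms in $u$. This is presumably the role of Lemma~\ref{lem:trilinear-decomp}; I would take from Appendix~\ref{app:prob} the delocalization estimate $\inner{u}{v}=\CO_{L^m}(\|v\|d^{-1})$ for $\CG_i$-measurable $v$.

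If that route proves too costly, I would fall back on a weaker target, namely $\|\Delta^t\|_{L^m}\lesssim d^{-1}$ combined with a sharper bound on $\CP_{k,t}u$. This would not suffice on its own, so I would try to obtain the $d^{-3/2}$ rate by exploiting that $u$ is independent of $\CG_i$.
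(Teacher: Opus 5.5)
Your decomposition $q^t-q^t_\lin=\sum_{k<t}\CP_{k,t}u\,\Delta^k+z^t_i$, with $\Delta^k=\beta^k_i-\alpha^k$, and your triangular recursion $\Delta^t=-\sum_{k<t}\Omega_{k,t}\Delta^k-\eta^t_i$ are exactly what the paper uses. The gap is the size you assign to $\norm{u}_{L^{3m}}$. It is not $\CO(d^{1/2})$. As you note yourself, $\E\norm{u}^2=\Tr(I_p/p)=1$, and Lemma~\ref{lem:sensing-envelope} gives $\norm{u}_{L^m}\le C_m$ for both ensembles.

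Because of this misreading, you set yourself the stronger target $\norm{\Delta^t}_{L^m}\lesssim d^{-3/2}$, equivalently $\norm{\eta^t_i}_{L^m}\lesssim d^{-3/2}$, and you only sketch it heuristically. With the correct envelope, nothing beyond what you already wrote is needed:
\begin{itemize}
    \item Corollary~\ref{cor:scalar-bounds} gives $\norm{\Omega_{k,t}}_{L^m}\le C_m$ and $\norm{\eta^t_i}_{L^m}\le C_md^{-1}$, so your induction yields $\norm{\Delta^t}_{L^m}\le C_md^{-1}$.
    \item H\"older then gives $\norm{\CP_{k,t}u\,\Delta^k}_{L^m}\le\norm{\CP_{k,t}}_{L^{3m}(\op)}\norm{u}_{L^{3m}}\norm{\Delta^k}_{L^{3m}}\le C_md^{-1}$.
    \item Lemma~\ref{lem:response-op-bounds} gives $\norm{z^t_i}_{L^m}\le C_md^{-1}$.
\end{itemize}
That is the paper's entire proof.

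Moreover, the sharper target is false for rank-one columns whenever $T\ge2$, so the route in your last two paragraphs cannot be completed. Since $q^0=0$ and $w^{-1}=0$, one has exactly $m^0(u)=a_0\beta^0_{i,0}(u)\,u$ and $\Delta^1=-\eta^1_i(u)=-\kappa_1\inner{u}{\varepsilon^0_i(u)}$. By Lemma~\ref{lem:spectral-taylor-4},
\[
    \eta^1_i(u)=\frac{\kappa_1a_0^2}{2}\big(\beta^0_{i,0}(u)\big)^2\,\inner{u}{D^2G_0(y^0_i)[u,u]}+\CO_{L^m}(d^{-3/2}).
\]
For $u=f_i$, the matrix $\mmat(u)=(x_ix_i^\top-I_d)/\sqrt{d(d+1)}$ is essentially rank one, and the spectral second derivative maps it back onto its own direction. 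For example, if $\sigma_0(x)=x^2$ on an interval containing the spectrum of $\mmat(y^0_i)$, then $\inner{u}{D^2G_0(y^0_i)[u,u]}=2\Tr(\mmat(u)^3)\to2$. For a GOE column the same quantity is only $\CO(d^{-3/2})$. Also $\E[(\beta^0_{i,0})^2\mid\CG_i]\ge\nu_0^2\epsilon_i^2$, whose mean is of order $d^{-1}$ when $\Delta>0$. Hence $\E\,\eta^1_i(f_i)$ is of order exactly $d^{-1}$.

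Your orthogonality heuristic fails because the bound $\inner{u}{v}=\CO_{L^m}(d^{-1}\norm{v})$ holds only for $\CG_i$-measurable $v$, whereas $z^t_i(u)$ depends quadratically on $u$.
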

\begin{proof}[Proof of Proposition \ref{prop:q-lin-error}]
    Subtracting \eqref{eq:alpha-lin-system} from \eqref{eq:beta-kernel-recursion}, with $r^t(u) \bydef \beta^t(u) - \alpha^t(u)$ distinct from the AMP residual $r^t$ in \eqref{eq:AMP-recursion}, we obtain
    \begin{equation}
        r^t(u) = -\sum_{k=0}^{t-1}\Omega_{k,t}(u) r^k(u) - \eta^t(u)
    \end{equation}
    Since $T$ is fixed and we have shown in Corollary \ref{cor:scalar-bounds} that $\Omega_{k,t}(u)$ have moments of all orders uniformly bounded in $d$, we have by Minkowski and induction,
    \begin{equation}\label{eq:r-scalar-bound}
        \max_{t \le T - 1} \norm{r^t(u)}_{L^m} \le C_m d^{-1}
    \end{equation}
    Now subtract \eqref{eq:q-lin-def} from \eqref{eq:q-unrolled}, we have
    \begin{equation}
        q^t(u) - q ^t_\lin(u) = \sum_{k=0}^{t-1} \CP_{k,t}u r^k(u) + z^t_i(u)
    \end{equation}
    Using Lemma \ref{lem:response-op-bounds}, Lemma \ref{lem:sensing-envelope}, and \eqref{eq:r-scalar-bound}, we get
    \begin{equation}
        \begin{aligned}
            \norm{q^t(u) - q^t_\lin(u)}_{L^m} &\le\, \sum_{k\,=\,0}^{t-1} \norm{\CP_{k,t}}_{L^{3m}(\op)} \norm{u}_{L^{3m}} \norm{r^k(u)}_{L^{3m}} + \norm{z^t_i(u)}_{L^m} \\
            &\le C_m d^{-1}.
        \end{aligned}
    \end{equation}
    as desired.
\end{proof}
\subsection{Kernel Bounds}
Recall that the scalar response kernel $\Omega_{k,t}(u)$ is the quadratic form given by
\begin{equation}
    \Omega_{k,t}(u) = \kappa_t u^\top \CP_{k,t}u
\end{equation}
We approximate this quadratic form by its conditional mean given $\CG_i$. Define
\begin{equation}\label{eq:omega-bar-def}
    \bom_{k,t} \,\bydef\, \frac{\kappa_t}{p} \Tr \CP_{k,t}, \quad 0 \,\le\, k \,<\, t \,\le\, T
\end{equation}
and define also the fluctuation of $\Omega_{k,t}$ around its mean,
\begin{equation}\label{eq:omega-fluctuation-def}
    \tom_{k,t}(u) \bydef \Omega_{k,t}(u) - \bom_{k,t}
\end{equation}
We prove concentration
\begin{lemma}\label{lem:omega-fluctuations}
    For every $m \ge 1$, there exists $C_m < \infty$ such that uniformly in $i$, in $0 \le k < t \le T$, and in $u \in \{f_i, \widetilde f_i\}$,
    \begin{equation}
        \norm{\tom_{k,t}(u)}_{L^m} \le C_m d^{-1/2}
    \end{equation}
    and
    \begin{equation}
        \norm{\bom_{k,t}}_{L^m} \le C_m
    \end{equation}
\end{lemma}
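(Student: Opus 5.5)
The bound on $\bom_{k,t}$ is immediate, so the real work is the fluctuation bound, which I would obtain by conditioning on $\CG_i$ and applying a quadratic-form concentration estimate for the column $u$. Throughout I write $\CP\bydef\CP_{k,t}$.

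\textbf{The mean $\bom_{k,t}$.} From $|\Tr\CP|\le p\,\norm{\CP}_\op$ we get $|\bom_{k,t}|\le|\kappa_t|\,\norm{\CP}_\op$. Lemma~\ref{lem:response-op-bounds} bounds $\norm{\CP}_{L^m(\op)}$ uniformly, and $\kappa_t$ is bounded by \eqref{eq:scaled-coeffs-bound}, so $\norm{\bom_{k,t}}_{L^m}\le C_m$.

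\textbf{Centering.} The operator $\CP$ is $\CG_i$-measurable, being built from $L^r_i$, $K_i$ and the scalar coefficients. The column $u\in\{f_i,\widetilde f_i\}$ is independent of $\CG_i$ and satisfies $\E[uu^\top]=I_p/p$. Hence $\E[\Omega_{k,t}(u)\mid\CG_i]=\kappa_t\Tr(\CP)/p=\bom_{k,t}$, and $\tom_{k,t}(u)$ is exactly the conditionally centered quadratic form $\kappa_t\big(u^\top\CP u-\E[u^\top\CP u\mid\CG_i]\big)$. We may replace $\CP$ by its symmetric part $\CP^{s}$, which changes neither side.

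\textbf{Conditional moment bound.} It suffices to show that, conditionally on $\CG_i$,
\begin{equation*}
    \E\big[|u^\top Au-p^{-1}\Tr A|^m\,\big|\,\CG_i\big]^{1/m}\le C_m\,p^{-1}\norm{A}_\Frob
\end{equation*}
for any deterministic symmetric $A$. Given this, $\norm{A}_\Frob\le\sqrt p\,\norm{A}_\op$ yields $|\tom|\lesssim p^{-1/2}\norm{\CP}_\op\asymp d^{-1}\norm{\CP}_\op$ conditionally. Taking unconditional $L^m$ norms with Lemma~\ref{lem:response-op-bounds} then gives the claim, in fact with the stronger rate $d^{-1}$, which is more than the $d^{-1/2}$ required.

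The bound is standard in the two cases:
\begin{itemize}
    \item \emph{Gaussian column.} For $\widetilde f_i=p^{-1/2}g$ it is the Hanson--Wright inequality, or a direct Gaussian chaos moment bound.
    \item \emph{Rank-one column.} For $f_i$, note that $u^\top Au$ is a polynomial of degree four in the Gaussian vector $x_i$. Indeed $\langle f_i,v\rangle=\Tr(Z_i\,\mmat(v))/\sqrt{d+1}$, so $u^\top Au$ is a quartic chaos in $x_i$. Gaussian hypercontractivity reduces all moments to the variance. The variance is controlled by the refined centered-quadratic-form estimate, Lemma~\ref{lem:refined-centered-quad}, which retains the Frobenius norm of $A$.
\end{itemize}

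\textbf{Main obstacle.} The rank-one variance computation is the hardest step. Expanding $\Var(u^\top Au)$ requires the fourth-moment tensor of $f_i$, which differs from the Gaussian one by terms coupling index pairs that share a coordinate of $x_i$. These must be shown to contribute at most $C p^{-2}\big(\norm{A}_\Frob^2+p^{-1}(\Tr A)^2\big)$, or at least $O(d^{-1})\norm{A}_\op^2$.

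If the sharp bound proves delicate, a crude fallback suffices for the stated $d^{-1/2}$ rate. Write $u^\top Au=(d+1)^{-1}\Tr\big(Z_i\,\mathcal{A}(Z_i)\big)$, where $\mathcal{A}$ is the operator on $\Sym$ induced by $A$, so $\norm{\mathcal{A}}_\op=\norm{A}_\op$. Then apply Gaussian Poincaré in $x_i$: the gradient is of size $d^{-1}\norm{x_i}\,\norm{Z_i}_\op\norm{A}_\op\lesssim d^{-1/2}\norm{A}_\op$ in every $L^m$, using $\norm{Z_i}_\op\lesssim\sqrt d$ and $\norm{x_i}\lesssim\sqrt d$. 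This gives $\Var\lesssim d^{-1}\norm{A}_\op^2$, and hypercontractivity upgrades it to all moments, which is exactly the claimed $d^{-1/2}$.
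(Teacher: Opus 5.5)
Your fallback argument is correct and is essentially the paper's proof. The paper applies Corollary~\ref{cor:one-column-moments}(ii) with $A=\CP_{k,t}$ together with Lemma~\ref{lem:response-op-bounds}. The rank-one case of that corollary (Lemma~\ref{lem:one-column-moments}) is proved by exactly your Gaussian Poincar\'e plus hypercontractivity computation. Your treatment of $\bom_{k,t}$ and of the conditional centering also matches the paper.

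Your primary route, however, rests on a false estimate and should be dropped.

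\textbf{The sharp bound fails for $f_i$.} The bound $C_m p^{-1}\norm{A}_\Frob$ does not hold for $u=f_i$. Take $A=I_p$. Then $u^\top Au=\norm{f_i}^2=(\norm{x_i}^4-2\norm{x_i}^2+d)/(d(d+1))$, whose standard deviation is of order $d^{-1/2}$ because $\Var(\norm{x_i}^4)\sim 8d^3$. By contrast, $p^{-1}\norm{I_p}_\Frob=p^{-1/2}\asymp d^{-1}$. The relaxed target $Cp^{-2}(\norm{A}_\Frob^2+p^{-1}(\Tr A)^2)$ in your ``main obstacle'' fails on the same example.

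\textbf{What Lemma~\ref{lem:refined-centered-quad} actually gives.} It does not supply the bound you attribute to it. It gives $C_m d^{-3/2}\norm{A_s}_\Frob$, a factor $\sqrt d$ weaker than $p^{-1}\norm{A}_\Frob$. Combined with $\norm{A_s}_\Frob\le\sqrt p\,\norm{A_s}_\op$, it yields $C_m d^{-1/2}\norm{A_s}_\op$, which is exactly the stated rate. So the claimed improvement to $d^{-1}$ holds only for the Gaussian column. For $f_i$ the rate $d^{-1/2}$ cannot be improved in general.

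\textbf{Arithmetic in the fallback.} The gradient of $(d+1)^{-1}\Tr(Z_i\mathcal A(Z_i))$ in $x_i$ is $\frac{4}{(d+1)\sqrt d}\mathcal A(Z_i)x_i$. Its norm is at most $\frac{4}{(d+1)\sqrt d}\norm{A}_\op\norm{Z_i}_\Frob\norm{x_i}$. Here you need $\norm{Z_i}_\Frob$ rather than $\norm{Z_i}_\op$, because the operator norm of $\mathcal A$ controls Frobenius norms of its outputs, not operator norms. With the prefactor $d^{-1}$ you wrote, the product would be of order one. The correct prefactor $d^{-3/2}$, together with $\norm{Z_i}_\Frob,\norm{x_i}\lesssim\sqrt d$, gives the $d^{-1/2}\norm{A}_\op$ you need.
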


\begin{proof}[Proof of Lemma \ref{lem:omega-fluctuations}]
    By the definitions \eqref{eq:omega-def}, \eqref{eq:response-operator-def}, \eqref{eq:omega-bar-def}, and \eqref{eq:omega-fluctuation-def}, we have
    \begin{equation}
        \tom_{k,t}(u) = \kappa_t \left( u^\top \CP_{k,t}u - \frac 1 p  \Tr \CP_{k,t} \right).
    \end{equation}
    Applying Corollary \ref{cor:one-column-moments} (ii) with $A = \CP_{k,t}$ and using Lemma \ref{lem:response-op-bounds}, we get
    \begin{equation}
        \norm{\tom_{k,t}(u)}_{L^{m}} \le C_m \norm{\CP_{k,t}}_{L^{2m}(\op)} d^{-1/2} \le C_m d^{-1/2}.
    \end{equation}
    Now we have
    \begin{equation}
        |\bom_{k,t}| \le |\kappa_t| \norm{\CP_{k,t}}_{\op},
    \end{equation}
    so
    \begin{equation}
        \norm{\bom_{k,t}}_{L^m} \le |\kappa_t| \norm{\CP_{k,t}}_{L^m(\op)} \le C_m
    \end{equation}
    by Lemma \ref{lem:response-op-bounds}.
\end{proof}

We replace $\Omega_{k,t}(u)$ by $\bom_{k,t}$ to obtain the recursion
\begin{equation}\label{eq:bar-alpha-recursion}
    \bar \alpha^t(u) \bydef \beta^t_{i,0}(u) - \sum_{k=0}^{t-1} \bom_{k,t} \bar\alpha^k(u)
\end{equation}
Its coefficients are $\CG_i$-measurable, and its dependence on $u$ comes through the affine terms $\beta^t_{i,0}(u)$. The next proposition bounds the approximation error.
\begin{proposition}\label{prop:bar-alpha}
    For every fixed $t \le T - 1$ and every integer $m \ge 1$, the following holds uniformly in $i$ and in $u \in \{f_i, \widetilde f_i\}$.
    \begin{enumerate}[label = (\roman*)]
        \item There exist $\CG_i$-measurable coefficients $\{c_{t,r}\}_{r = 0}^t$ such that
        \begin{equation}\label{eq:bar-alpha-linear-combo}
            \bar \alpha^t(u) \,=\, \sum_{r\,=\,0}^t c_{t,r} \beta^r_{i,0}(u), \quad c_{t,t} \,=\, 1
        \end{equation}
        and
        \begin{equation}\label{eq:c-coeff-bound}
            \max_{r \le t} \norm{c_{t,r}}_{L^m} \le C_m
        \end{equation}
        \item The error
        \begin{equation}\label{eq:d-def}
            d^t(u) \bydef \beta^t(u) - \bar\alpha^t(u)
        \end{equation}
        is bounded by
        \begin{equation}\label{eq:d-bound}
            \norm{d^t(u)}_{L^m} \le C_m d^{-1}
        \end{equation}
    \end{enumerate}
\end{proposition}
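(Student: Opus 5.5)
For part (i) we argue by induction on $t$ directly from the recursion \eqref{eq:bar-alpha-recursion}. At $t=0$ the sum is empty, so $\bar\alpha^0(u)=\beta^0_{i,0}(u)$; we take $c_{0,0}=1$. Suppose that for every $k<t$ we have $\bar\alpha^k(u)=\sum_{r\le k}c_{k,r}\beta^r_{i,0}(u)$ with $\CG_i$-measurable coefficients. Substituting these into \eqref{eq:bar-alpha-recursion} gives
\[
c_{t,t}=1,\qquad c_{t,r}=-\sum_{k=r}^{t-1}\bom_{k,t}\,c_{k,r}\quad(r<t).
\]
Since $\bom_{k,t}=\kappa_tp^{-1}\Tr\CP_{k,t}$ is $\CG_i$-measurable, so is every $c_{t,r}$. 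Each $c_{t,r}$ is a polynomial, of degree at most $T$, in the variables $\bom_{k,t}$, with a number of terms depending only on $T$. Lemma~\ref{lem:omega-fluctuations} bounds $\norm{\bom_{k,t}}_{L^m}$ for every $m$. Applying H\"older's inequality with exponents $mT$ to each monomial then gives \eqref{eq:c-coeff-bound}.

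For part (ii) we write $d^t=(\beta^t-\alpha^t)+(\alpha^t-\bar\alpha^t)$, where $\alpha^t$ is defined by \eqref{eq:alpha-lin-system}. The proof of Proposition~\ref{prop:q-lin-error} already shows $\norm{\beta^t-\alpha^t}_{L^m}\le C_md^{-1}$, so it remains to bound $e^t\bydef\alpha^t-\bar\alpha^t$. Subtracting \eqref{eq:bar-alpha-recursion} from \eqref{eq:alpha-lin-system} gives
\[
e^t=-\sum_{k<t}\Omega_{k,t}\,e^k-\sum_{k<t}\tom_{k,t}\,\bar\alpha^k .
\]
The forcing term is a product of two $d^{-1/2}$ quantities. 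Lemma~\ref{lem:omega-fluctuations} gives $\norm{\tom_{k,t}}_{L^{2m}}\lesssim d^{-1/2}$. By part (i) and Corollary~\ref{cor:scalar-bounds}, $\norm{\bar\alpha^k}_{L^{2m}}\le\sum_r\norm{c_{k,r}}_{L^{4m}}\norm{\beta^r_{i,0}}_{L^{4m}}\lesssim d^{-1/2}$. By Cauchy--Schwarz the forcing term is therefore $\CO_{L^m}(d^{-1})$.

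We then induct on $t$ starting from $e^0=0$. The homogeneous part is controlled by H\"older's inequality, using $\norm{\Omega_{k,t}}_{L^{2m}}\le C_m$ from Corollary~\ref{cor:scalar-bounds}. Because the horizon is fixed, the loss in moment order at each step can be absorbed: we apply the induction hypothesis at order $2^Tm$. This gives $\norm{e^t}_{L^m}\le C_md^{-1}$, and together with the first part of the decomposition, \eqref{eq:d-bound} follows.

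The only delicate point is bookkeeping rather than analysis. The factor $d^{-1}$ must come from pairing the $d^{-1/2}$ fluctuation $\tom$ with the $d^{-1/2}$ size of $\bar\alpha^k$; a crude bound that treats $\bar\alpha^k$ as order one would only give $d^{-1/2}$. This is why part (i), which expresses $\bar\alpha^k$ as a bounded combination of the $\beta^r_{i,0}$, is proved first.
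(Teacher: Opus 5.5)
Your proof is correct and is essentially the paper's argument. Part (i) is the same induction with $c_{t,r}=-\sum_{k=r}^{t-1}\bom_{k,t}c_{k,r}$, and in part (ii) the $d^{-1}$ again comes from pairing the $d^{-1/2}$ fluctuation $\tom_{k,t}$ with a $d^{-1/2}$ amplitude. The only difference is bookkeeping: the paper subtracts \eqref{eq:bar-alpha-recursion} directly from \eqref{eq:beta-kernel-recursion} to get $d^t=-\sum_{k<t}\bom_{k,t}d^k-\sum_{k<t}\tom_{k,t}\beta^k_i-\eta^t_i$, and pairs $\tom_{k,t}$ with $\beta^k_i$, which is already $\CO_{L^m}(d^{-1/2})$ by Corollary~\ref{cor:scalar-bounds}; so it neither passes through $\alpha^t$ and Proposition~\ref{prop:q-lin-error} nor needs part (i) to bound $\bar\alpha^k$.
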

\begin{proof}[Proof of Proposition \ref{prop:bar-alpha}]
    We prove (i) by induction on $t$. Since $\bar\alpha^0(u)=\beta^0_{i,0}(u)$, we take $c_{0,0}=1$. Assume \eqref{eq:bar-alpha-linear-combo} holds for all $r<t$. Expanding the recursion gives
    \begin{equation}
        \begin{aligned}
            \bar\alpha^t(u) &=\, \beta^t_{i,0}(u) - \sum_{k\,=\,0}^{t-1} \bom_{k,t} \sum_{r\,=\,0}^k c_{k,r} \beta^r_{i,0}(u) \\
            &=\, \beta^t_{i,0}(u) + \sum_{r\,=\,0}^{t-1} \left(- \sum_{k\,=\,r}^{t-1} \bom_{k,t} c_{k,r} \right)\beta^r_{i,0}(u)
        \end{aligned}
    \end{equation}
    But observe that $\left(- \sum_{k=r}^{t-1} \bom_{k,t} c_{k,r} \right)$ is $\CG_i$-measurable. Therefore, \eqref{eq:bar-alpha-linear-combo} holds with
    \begin{equation}
        c_{t,t} \,=\, 1, \quad c_{t,r} \,=\, - \sum_{k \,=\, r}^{t-1} \bom_{k,t} c_{k,r} \quad (0 \,\le\, r \,<\, t)
    \end{equation}
    Since $T$ is fixed, the moment bounds on $\bom_{k,t}$ in Lemma~\ref{lem:omega-fluctuations} and repeated applications of H\"older's and Minkowski's inequalities give \eqref{eq:c-coeff-bound}.

    To prove (ii), we subtract the $\CG_i$-measurable approximation \eqref{eq:bar-alpha-linear-combo} from the exact recursion \eqref{eq:beta-kernel-recursion} to get
    \begin{equation}\label{eq:d-recursion}
        d^t(u) \,=\, - \sum_{k\,=\,0}^{t-1} \bom_{k,t} d^k(u) - \sum_{k \,=\, 0}^{t - 1} \tom_{k,t}(u)\beta^k_i(u) - \eta^t_i(u)
    \end{equation}
    At $t=0$, the right-hand side is zero. Assume \eqref{eq:d-bound} holds for all $k<t$. Lemma~\ref{lem:omega-fluctuations} and Corollary~\ref{cor:scalar-bounds} give
    \begin{equation}
        \norm{ \sum_{k\,=\,0}^{t-1} \tom_{k,t}(u)\beta^k_i(u) }_{L^m} \,\le\, \sum_{k\,=\,0}^{t-1} \norm{\tom_{k,t}(u)}_{L^{2m}} \norm{\beta^k_i(u)}_{L^{2m}} \,\le\, C_m d^{-1}
    \end{equation}
    and
    \begin{equation}
        \norm{\eta^t_i(u)}_{L^m} \le C_m d^{-1}
    \end{equation}
    Applying Minkowski's and H\"older's inequalities to \eqref{eq:d-recursion}, using the fixed horizon and the uniform moment bounds for $\bom_{k,t}$, gives
    \begin{equation}
        \norm{d^t(u)}_{L^m} \,\le\, C_m d^{-1} + C_m \sum_{k\,=\,0}^{t-1} \norm{d^k(u)}_{L^{2m}} \,\le\, C_m d^{-1}
    \end{equation}
    so the claim follows from induction.
\end{proof}
Using these amplitude approximations, define
\begin{equation}\label{eq:bar-q-def}
    \bar q^t(u) \,=\, \sum_{k\,=\,0}^{t-1} \CP_{k,t} u \bar\alpha^k(u), \quad t \,\ge\, 1
\end{equation}
with $\bar q^0(u)=0$. The following corollary gives its error bound and a representation in terms of $\beta_{i,0}^r(u)$.
\begin{corollary}\label{cor:bar-q}
    For every $t \le T$,
    \begin{equation}\label{eq:q-minus-barq-decomp}
        q^t(u) - \bar q^t(u) = \sum_{k=0}^{t-1} \CP_{k,t} u\  d^k(u) + z^t_i(u)
    \end{equation}
    So for every integer $m \ge 1$, we have uniformly in $i$ and $u \in \{f_i, \widetilde f_i\}$,
    \begin{equation}\label{eq:q-minus-barq}
        \norm{q^t(u) - \bar q^t(u)}_{L^m} \le C_m d^{-1}
    \end{equation}
    Moreover, there exist $\CG_i$-measurable matrices $(\CB_{t,r})_{r = 0}^{t-1} : \R^p \to \R^p$ such that
    \begin{equation}\label{eq:bar-q-linear-combo}
        \bar q^t(u) = \sum_{r = 0}^{t-1} \CB_{t,r} u \beta^r_{i,0}(u)
    \end{equation}
    and satisfies
    \begin{equation}\label{eq:B-operator-bound}
        \max_{r \le t - 1} \norm{\CB_{t,r}}_{L^m(\op)} \le C_m
    \end{equation}
\end{corollary}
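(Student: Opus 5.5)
The three assertions of Corollary~\ref{cor:bar-q} follow by combining the unrolled representation \eqref{eq:q-unrolled} with Proposition~\ref{prop:bar-alpha}. We treat them in turn.

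\emph{The decomposition \eqref{eq:q-minus-barq-decomp}.} Subtract the definition \eqref{eq:bar-q-def} from \eqref{eq:q-unrolled}. Since $\beta^k_i(u)-\bar\alpha^k(u)=d^k(u)$ by \eqref{eq:d-def}, this gives
\begin{equation*}
    q^t(u)-\bar q^t(u)=\sum_{k=0}^{t-1}\CP_{k,t}u\,\big(\beta^k_i(u)-\bar\alpha^k(u)\big)+z^t_i(u),
\end{equation*}
which is exactly \eqref{eq:q-minus-barq-decomp}. For $t=0$ both sides vanish, because $q^0=\bar q^0=0$.

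\emph{The moment bound \eqref{eq:q-minus-barq}.} Apply Minkowski's inequality to the sum, and H\"older's inequality with three factors to each summand:
\begin{equation*}
    \norm{\CP_{k,t}u\,d^k(u)}_{L^m}\le\norm{\CP_{k,t}}_{L^{3m}(\op)}\norm{u}_{L^{3m}}\norm{d^k(u)}_{L^{3m}}.
\end{equation*}
Lemma~\ref{lem:response-op-bounds} bounds the first factor by $C_m$. Lemma~\ref{lem:sensing-envelope} gives $\norm{u}_{L^{3m}}\le C_m$, since $\E\norm{u}^2=1$ and the norm concentrates. Proposition~\ref{prop:bar-alpha}(ii) bounds the last factor by $C_md^{-1}$. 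The remaining term satisfies $\norm{z^t_i(u)}_{L^m}\le C_md^{-1}$, again by Lemma~\ref{lem:response-op-bounds}. There are at most $T$ summands, so \eqref{eq:q-minus-barq} follows. This is the same pattern as in the proof of Proposition~\ref{prop:q-lin-error}.

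\emph{The representation \eqref{eq:bar-q-linear-combo} and the bound \eqref{eq:B-operator-bound}.} Substitute \eqref{eq:bar-alpha-linear-combo} into \eqref{eq:bar-q-def} and exchange the order of summation:
\begin{equation*}
    \bar q^t(u)=\sum_{k=0}^{t-1}\CP_{k,t}u\sum_{r=0}^{k}c_{k,r}\beta^r_{i,0}(u)=\sum_{r=0}^{t-1}\Big(\sum_{k=r}^{t-1}c_{k,r}\CP_{k,t}\Big)u\,\beta^r_{i,0}(u).
\end{equation*}
We therefore set $\CB_{t,r}\bydef\sum_{k=r}^{t-1}c_{k,r}\CP_{k,t}$. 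Each $c_{k,r}$ is $\CG_i$-measurable by Proposition~\ref{prop:bar-alpha}(i). Each $\CP_{k,t}$ is built from $L^s_i$, $K_i$ and the deterministic coefficients, and these are $\CG_i$-measurable functions of the leave-one-out trajectory. Hence $\CB_{t,r}$ is $\CG_i$-measurable. For the moment bound, Minkowski and H\"older give
\begin{equation*}
    \norm{\CB_{t,r}}_{L^m(\op)}\le\sum_{k=r}^{t-1}\norm{c_{k,r}}_{L^{2m}}\norm{\CP_{k,t}}_{L^{2m}(\op)}\le C_m,
\end{equation*}
using \eqref{eq:c-coeff-bound} and Lemma~\ref{lem:response-op-bounds}.

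There is no real obstacle here. The only point requiring care is the bookkeeping of the index ranges when the order of summation is exchanged: $r$ runs over $0\le r\le k\le t-1$, so that $\bar\alpha^k$ involves $\beta^r_{i,0}$ only for $r\le k$. The other point to track is the $\CG_i$-measurability of $L^s_i=DG_s(y^s_i)$; it holds because $y^s_i$ is computed from the leave-one-out trajectory alone.
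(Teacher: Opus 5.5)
Your proposal is correct and follows the paper's proof essentially line for line. You subtract \eqref{eq:bar-q-def} from \eqref{eq:q-unrolled}, bound each summand by three-factor H\"older using Lemma~\ref{lem:response-op-bounds}, Lemma~\ref{lem:sensing-envelope} and Proposition~\ref{prop:bar-alpha}(ii), and define $\CB_{t,r}=\sum_{k=r}^{t-1}c_{k,r}\CP_{k,t}$ after exchanging the order of summation, exactly as the paper does. The only small imprecision is calling the coefficients deterministic: they are $\CB_0$-measurable, which still makes $\CP_{k,t}$ and $\CB_{t,r}$ $\CG_i$-measurable because $\CB_0\subset\CG_i$.
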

\begin{proof}[Proof of Corollary \ref{cor:bar-q}]
    Subtract \eqref{eq:bar-q-def} from the exact formula \eqref{eq:q-unrolled}, we obtain
    \begin{equation}
        q^t(u) - \bar q^t(u) = \sum_{k=0}^{t-1} \CP_{k,t} u\  d^k(u) + z^t_i(u)
    \end{equation}
    By Proposition \ref{prop:bar-alpha}, Lemma \ref{lem:response-op-bounds}, Lemma \ref{lem:sensing-envelope}, and the bound on $z^t_i(u)$ from Lemma \ref{lem:response-op-bounds}, we have from H\"older,
    \begin{equation}
        \begin{aligned}
            \norm{q^t(u) - \bar q^t(u)}_{L^m} &\le\, \sum_{k\,=\,0}^{t-1}\norm{\CP_{k,t}}_{L^{3m}(\op)} \norm{u}_{L^{3m}} \norm{d^k(u)}_{L^{3m}} + \norm{z^t_i(u)}_{L^m} \\
            &\le C_m d^{-1}
        \end{aligned}
    \end{equation}
    so \eqref{eq:q-minus-barq} is shown. Now plug the formulation of $\bar \alpha^t$ in \eqref{eq:bar-alpha-linear-combo} of Proposition \ref{prop:bar-alpha} into the definition of $\bar q^t$ in \eqref{eq:bar-q-def}, we have
    \begin{equation}
        \begin{aligned}
            \bar q^t(u) \,=\, \sum_{k\,=\,0}^{t-1}\sum_{r\,=\,0}^k \CP_{k,t} u \ c_{k,r} \beta^r_{i,0}(u) \,=\, \sum_{r\,=\,0}^{t-1} \left( \sum_{k\,=\,r}^{t-1} c_{k,r} \CP_{k,t} \right) u \beta^r_{i,0}(u)
        \end{aligned}
    \end{equation}
    Therefore, \eqref{eq:bar-q-linear-combo} holds with
    \begin{equation}\label{eq:Btr-def}
        \CB_{t,r} \bydef \sum_{k=r}^{t-1} c_{k,r}\CP_{k,t}.
    \end{equation}
    The $L^m$ bound on the operator norm follows by triangle inequality, Proposition \ref{prop:bar-alpha}, and Lemma \ref{lem:response-op-bounds}.
\end{proof}

We use these bounds to prove Propositions~\ref{prop:remaining-quadratic} and \ref{prop:linear-comparison-complete} in the next two sections.

\section{Quadratic One-Column Comparison}\label{sec:proof-quadratic}
To prove Proposition~\ref{prop:remaining-quadratic}, we replace $q^t(u)$ by $\bar q^t(u)$ from Section~\ref{sec:linearized}. The trace terms then agree by moment matching; the centered quadratic-form terms are bounded using the one-column moment estimates.

\begin{proof}[Proof of Proposition \ref{prop:remaining-quadratic}]
    We first show that the quadratic form is well-approximated by the $\CG_i$-linearized replacement. In particular, by Corollary \ref{cor:bar-q}, Proposition \ref{prop:delta-moments}, and the hypothesis that $\norm{H_i}_\op \le Cd^{-1}$, we have from H\"older
    \begin{equation}\label{eq:quad-error-r1}
        \begin{aligned}
            \E \abs{\inner{q^t(u)}{H_i q^s(u)} - \inner{\bar q^t(u)}{H_i \bar q^s(u)} } &\le \norm{H_i}_\op \bigg( \norm{q^t(u) - \bar q^t(u)}_{L^2} \norm{q^s(u)}_{L^2} \\
            & \ + \norm{\bar q^t(u)}_{L^2} \norm{q^s(u) - \bar q^s(u)}_{L^2} \bigg) \\
            &\le Cd^{-5/2}
        \end{aligned}
    \end{equation}
    uniformly in $u \in \{f_i, \widetilde f_i\}$ because both $q^r(u)$ and $\bar q^r(u)$ are $\CO_{L^2}(d^{-1/2})$ while $q^r(u) - \bar q^r(u) = \CO_{L^2}(d^{-1})$. Therefore, it suffices to compare $\inner{\bar q^t(u)}{H_i \bar q^s(u)}$ over $u \in \{f_i, \widetilde f_i\}$. By expansion \eqref{eq:bar-q-linear-combo}, we have
    \begin{equation}
        \bar q^t(u)=\sum_{r=0}^{t-1}\mathcal B_{t,r}u\ \beta_{i,0}^r(u),
        \qquad
        \bar q^s(u)=\sum_{\ell=0}^{s-1}\mathcal B_{s,\ell}u\ \beta_{i,0}^{\ell}(u).
    \end{equation}
    So we can write
    \begin{equation}\label{eq:barq-quad-expansion}
        \inner{\bar q^t(u)}{H_i \bar q^s(u)} \,=\, \sum_{r\,=\,0}^{t-1} \sum_{\ell\,=\,0}^{s-1} \beta^r_{i,0}(u) \beta^\ell_{i,0}(u) \inner{u}{M_{t,s,r,\ell} u}
    \end{equation}
    where $M_{t,s,r,\ell} = \CB_{t,r}^\top H_i \CB_{s,\ell}$, with $\CB_{t,r}$ defined in \eqref{eq:Btr-def}. By \eqref{eq:B-operator-bound}, the matrices $\CB_{t,r}$ have moments bounded uniformly in $d$. Hence, for any $m \ge 1$,
    \begin{equation}\label{eq:M-op-bound}
        \max_{t,s,r,\ell} \norm{M_{t,s,r,\ell}}_{L^m(\op)} \le C_m d^{-1}
    \end{equation}
    Since $M_{t,s,r,\ell}$ is $\CG_i$-measurable and $u$ is isotropic, we again approximate the quadratic form by the trace. Write
    \begin{equation}
        u^\top M_{t,s,r,\ell} u \,=\, \frac 1 p \Tr M_{t,s,r,\ell} + \tilde m_{t,s,r,\ell}(u), \quad \tilde m_{t,s,r,\ell}(u) \,\bydef\, u^\top M_{t,s,r,\ell} u - \frac 1 p \Tr M_{t,s,r,\ell}
    \end{equation}
    Plugging this into \eqref{eq:barq-quad-expansion} gives
    \begin{equation}\label{eq:barq-leading-split}
        \inner{\bar q^t(u)}{H_i \bar q^s(u)} = T(u) + R(u)
    \end{equation}
    where
    \begin{equation}
        T(u) \,\bydef\, \frac 1 p \sum_{r \,=\, 0}^{t - 1} \sum_{\ell \,=\, 0}^{s - 1} \Tr(M_{t,s,r,\ell}) \beta^r_{i,0}(u) \beta^\ell_{i,0}(u), \quad R(u) \,\bydef\, \sum_{r \,=\, 0}^{t - 1} \sum_{\ell \,=\, 0}^{s - 1} \tilde m_{t,s,r,\ell}(u)\beta^r_{i,0}(u) \beta^\ell_{i,0}(u)
    \end{equation}
    To compare the $T$ term, we condition on $\CG_i$. Recall, $a^r_i \bydef {\bf s}^* - \widehat{\bf s}^r_\smi \in \R^p, \ \ \beta^{r}_{i,0} = \kappa_r \inner{u}{a^r_i} +\nu_r\epsilon_i$. Therefore, by Lemma \ref{lem:one-column-moments}, we have
    \begin{equation}\label{eq:betar-betal-condition-on-Gi}
        \E[\beta^r_{i,0}(u) \beta^\ell_{i,0}(u) \ | \ \CG_i] = \frac{\kappa_r \kappa_\ell}{p} \inner{a^r_i}{a^\ell_i} + \nu_r \nu_\ell \epsilon_i^2
    \end{equation}
    This conditional expectation is the same for $f_i$ and $\widetilde f_i$. Taking expectations gives
    \begin{equation}\label{eq:trace-part-equal}
        \E[T(f_i) | \CG_i] \,=\, \E[T(\widetilde f_i) | \CG_i]  \,\implies\, \E[T(f_i) - T(\widetilde f_i)] \,=\, 0
    \end{equation}
    Now we bound the remainder for the quadratic form. By Corollary \ref{cor:scalar-bounds}, we have $\norm{\beta^r_{i,0}(u)}_{L^4} \le Cd^{-1/2}$. So by \eqref{eq:M-op-bound} and Corollary \ref{cor:one-column-moments}, we have $\norm{\tilde m_{t,s,r,\ell}}_{L^2} \le Cd^{-3/2}$. Therefore, by H\"older,
    \begin{equation}
        \E|R(u)| \,\le\, \sum_{r \,=\, 0}^{t-1} \sum_{\ell\,=\,0}^{s-1} \norm{\beta^r_{i,0}(u)}_{L^4} \norm{\beta^\ell_{i,0}(u)}_{L^4} \norm{\tilde m_{t,s,r,\ell}}_{L^2} \,\le\, Cd^{-5/2}
    \end{equation}
    uniformly in $u \in \{f_i, \widetilde f_i\}$. Together with \eqref{eq:barq-leading-split} and\eqref{eq:trace-part-equal}, we have
    \begin{equation}
        \abs{\E\inner{\bar q^t(f_i)}{H_i \bar q^s(f_i)}  - \E\inner{\bar q^t(\widetilde f_i)}{H_i \bar q^s(\widetilde f_i)}  } \le Cd^{-5/2}
    \end{equation}
    Together with \eqref{eq:quad-error-r1}, this proves the proposition.
\end{proof}

\section{Linear One-Column Comparison}\label{sec:proof-linear}

We now prove the linear comparison in Proposition~\ref{prop:linear-comparison-complete}. The approximation $\bar q^t(u)$ from \eqref{eq:bar-q-def} has the same expected linear contribution under rank-one and Gaussian sensing. We therefore need to control the propagated nonlinear remainder $z^t_i(u)$ from \eqref{eq:z-unrolled}. Throughout, $u\in\{f_i,\widetilde f_i\}$ and $\CG_i$ is the sigma-algebra in \eqref{eq:loo-sigma-algebra-def}.

\begin{proposition}\label{prop:leading-linear-barq}
    Fix $t \le T$. Let $\lambda_i$ be any $\CG_i$-measurable vector. Then
    \begin{equation}
        \E[\inner{\lambda_i}{\bar q^t(f_i)} | \CG_i] = \E[ \inner{\lambda_i}{\bar q^t(\widetilde f_i)} | \CG_i], \quad a.s.
    \end{equation}
\end{proposition}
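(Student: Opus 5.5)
The plan is to expand $\inner{\lambda_i}{\bar q^t(u)}$ using the representation \eqref{eq:bar-q-linear-combo} of Corollary~\ref{cor:bar-q}, so that every coefficient is $\CG_i$-measurable and the dependence on $u$ is explicit and at most quadratic. Writing $a_i^r={\bf s}^*-\widehat{\bf s}^r_\smi$ and $b_{t,r}\bydef\CB_{t,r}^\top\lambda_i$, which is $\CG_i$-measurable, we get
\begin{equation*}
    \inner{\lambda_i}{\bar q^t(u)}
    =\sum_{r=0}^{t-1}\inner{b_{t,r}}{u}\,\beta^r_{i,0}(u)
    =\sum_{r=0}^{t-1}\Big(\kappa_r\,\inner{b_{t,r}}{u}\inner{u}{a_i^r}+\nu_r\epsilon_i\,\inner{b_{t,r}}{u}\Big).
\end{equation*}
This is a sum of a quadratic form $u^\top(\sum_r\kappa_r b_{t,r}(a_i^r)^\top)u$ and a linear form in $u$, with $\CG_i$-measurable coefficients. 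The coefficients $\kappa_r,\nu_r$ are $\CB_0\subset\CG_i$-measurable, and $\epsilon_i\in\CG_i$ by \eqref{eq:loo-sigma-algebra-def}.

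Next I condition on $\CG_i$. The key structural point is that $u\in\{f_i,\widetilde f_i\}$ is independent of $\CG_i$. This holds because $F_\smi$ contains neither $f_i$ nor $\widetilde f_i$, the columns are mutually independent, and both columns are independent of $(S^\star,\widehat S^0,\epsilon)$. Hence, for $\CG_i$-measurable $b,a$,
\begin{equation*}
    \E[\inner{b}{u}\mid\CG_i]=0,
    \qquad
    \E[\inner{b}{u}\inner{u}{a}\mid\CG_i]=b^\top\E[uu^\top]a=\frac1p\inner{b}{a},
\end{equation*}
by \eqref{eq:column-covariance}. This is exactly the computation recorded in Lemma~\ref{lem:one-column-moments} and used in \eqref{eq:betar-betal-condition-on-Gi}. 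The conditional expectation is therefore
\begin{equation*}
    \E[\inner{\lambda_i}{\bar q^t(u)}\mid\CG_i]=\sum_{r<t}\frac{\kappa_r}{p}\inner{\CB_{t,r}^\top\lambda_i}{a_i^r},
\end{equation*}
and this expression does not depend on which candidate column is used. For $t=0$ both sides vanish trivially.

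The only point needing care is integrability, so that the conditional expectations exist and can be computed term by term. $\CB_{t,r}$ has operator-norm moments of all orders by \eqref{eq:B-operator-bound}, $\|a_i^r\|\le C\sqrt d$ by Lemma~\ref{lem:base-det-bounds}, and $u$ has all moments by Lemma~\ref{lem:sensing-envelope}. However, $\lambda_i$ here is arbitrary $\CG_i$-measurable and not assumed bounded. I would therefore argue conditionally: given $\CG_i$, the integrand is a polynomial of degree two in $u$ with fixed coefficients. Its conditional expectation is then well defined, computed via the regular conditional law of $u$, which is its unconditional law. This gives the identity almost surely without any integrability assumption on $\lambda_i$. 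I expect no real obstacle; the substance lies in the earlier construction ensuring that $\CB_{t,r}$ does not depend on $u$.
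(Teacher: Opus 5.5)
Your proposal is correct and matches the paper's proof essentially line by line: expand $\inner{\lambda_i}{\bar q^t(u)}=\sum_{r<t}\inner{\CB_{t,r}^\top\lambda_i}{u}\,\beta^r_{i,0}(u)$ via \eqref{eq:bar-q-linear-combo}, then use the independence of $u$ from $\CG_i$ together with $\E u=0$ and $\E uu^\top=I_p/p$ to obtain the $u$-independent value $\sum_{r<t}\frac{\kappa_r}{p}\inner{\CB_{t,r}^\top\lambda_i}{a_i^r}$. Your extra step of computing the conditional expectation under the law of $u$ with $\CG_i$ held fixed also settles an integrability point the paper leaves implicit, since the statement does not assume $\lambda_i$ is bounded.
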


The proof is deferred to Section \ref{subsec:linear-comparison-deferred-proofs}.

By Proposition \ref{prop:leading-linear-barq}, we can ignore the term $\bar q^t(u)$ when comparing the linear term. We will use Corollary \ref{cor:bar-q} to show that, for verifying the hypothesis in \eqref{eq:abstract-linear-prop}, it suffices to work with the remainder $z^t_i(u)$.
\begin{proposition}\label{prop:remaining-linear}
    Fix $t \le T$. Let $\lambda_i$ be any $\CG_i$-measurable vector satisfying $\norm{\lambda_i} \le C d^{-1/2}$ almost surely. Then
    \begin{equation}\label{eq:linear-remainder-reduction}
        \abs{\E\big[\inner{\lambda_i}{q^t(f_i)} - \inner{\lambda_i}{q^t(\widetilde f_i)}\big]} \le \abs{\E\big[\inner{\lambda_i}{z_i^t(f_i)} - \inner{\lambda_i}{z_i^t(\widetilde f_i)}\big]} + C d^{-5/2}.
    \end{equation}
    In particular, if the projected nonlinear remainder satisfies
    \begin{equation}\label{eq:projected-z-needed}
        \abs{\E\big[\inner{\lambda_i}{z_i^t(f_i)} - \inner{\lambda_i}{z_i^t(\widetilde f_i)}\big]} \le C d^{-5/2},
    \end{equation}
    then
    \begin{equation}\label{eq:remaining-linear-conclusion}
        \abs{\E\big[\inner{\lambda_i}{q^t(f_i)} - \inner{\lambda_i}{q^t(\widetilde f_i)}\big]} \le C d^{-5/2}.
    \end{equation}
\end{proposition}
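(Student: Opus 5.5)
Write $q^t(u)=\bar q^t(u)+\big(q^t(u)-\bar q^t(u)\big)$ and use the exact decomposition \eqref{eq:q-minus-barq-decomp} of Corollary~\ref{cor:bar-q}:
\[
q^t(u)-\bar q^t(u)=\sum_{k=0}^{t-1}\CP_{k,t}u\,d^k(u)+z^t_i(u).
\]
Pairing with $\lambda_i$ splits $\E\inner{\lambda_i}{q^t(u)}$ into three terms:
\begin{enumerate}[label=(\alph*)]
\item the $\bar q$ term;
\item the amplitude-error term $\sum_k\inner{\lambda_i}{\CP_{k,t}u}\,d^k(u)$;
\item the remainder term $\inner{\lambda_i}{z^t_i(u)}$.
\end{enumerate}
By Proposition~\ref{prop:leading-linear-barq}, the conditional expectations of term (a) agree for $u=f_i$ and $u=\widetilde f_i$. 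After taking full expectations, term (a) drops out of the difference exactly. Term (c) is kept as it stands on the right-hand side of \eqref{eq:linear-remainder-reduction}. It therefore suffices to show that term (b) is $\CO(d^{-5/2})$ in $L^1$ for each $u$ separately.

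A naive Hölder bound on term (b) is not enough. It gives $\|\lambda_i\|\cdot\|\CP_{k,t}\|_\op\|u\|\cdot\|d^k\|_{L^m}\lesssim d^{-1/2}\cdot d^{-1}$, far from $d^{-5/2}$. The gain must come from $\inner{\lambda_i}{\CP_{k,t}u}=\inner{\CP_{k,t}^\top\lambda_i}{u}$ being a linear form in $u$ with a $\CG_i$-measurable coefficient. By the one-column moment estimate \eqref{eq:linear-moment-uncond}, as used in Corollary~\ref{cor:scalar-bounds}, this form has size $\|\CP_{k,t}^\top\lambda_i\|\,p^{-1/2}\lesssim d^{-1/2}\cdot d^{-1}=d^{-3/2}$ in every $L^m$, using Lemma~\ref{lem:response-op-bounds} for the operator norm. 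Combining this with $\|d^k(u)\|_{L^{2}}\le Cd^{-1}$ from Proposition~\ref{prop:bar-alpha}(ii) and Cauchy--Schwarz gives
\[
\E\big|\inner{\lambda_i}{\CP_{k,t}u}\,d^k(u)\big|\le C d^{-3/2}\cdot d^{-1}=Cd^{-5/2}.
\]
Summing over the finitely many $k<t\le T$ keeps the constant uniform in $i$ and $d$.

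To be careful with the conditioning: $\CP_{k,t}$ and $\lambda_i$ are $\CG_i$-measurable and $u$ is independent of $\CG_i$. So conditionally on $\CG_i$,
\[
\E\big[\inner{b}{u}^{2m}\mid\CG_i\big]\le C_m\big(\|b\|^2/p\big)^m\quad\text{with } b=\CP_{k,t}^\top\lambda_i.
\]
Taking expectations and using the $L^{m}(\op)$ moment bounds on $\CP_{k,t}$ then gives the unconditional $L^{2m}$ bound. Checking that \eqref{eq:linear-moment-uncond} is stated for $\CG_i$-measurable coefficient vectors, for both ensembles, is the main point to verify. For the rank-one column it is the hypercontractivity bound for the degree-two chaos $\inner{b}{f_i}$, which holds with the same $\|b\|/\sqrt p$ scale.

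Combining the three terms gives \eqref{eq:linear-remainder-reduction} via the triangle inequality, and \eqref{eq:remaining-linear-conclusion} then follows immediately under \eqref{eq:projected-z-needed}. The real difficulty, which this proposition defers, is proving \eqref{eq:projected-z-needed} itself.
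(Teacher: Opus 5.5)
Your proposal is correct and follows the paper's proof essentially step for step. You use the same decomposition from Corollary~\ref{cor:bar-q} and the exact cancellation of the $\bar q^t$ term via Proposition~\ref{prop:leading-linear-barq}. You also bound the amplitude-error term the same way: a $\CO_{L^2}(d^{-3/2})$ estimate for the linear form $\inner{\CP_{k,t}^\top\lambda_i}{u}$ (one-column moment bound plus Lemma~\ref{lem:response-op-bounds}), combined by Cauchy--Schwarz with $\norm{d^k(u)}_{L^2}\le Cd^{-1}$ from Proposition~\ref{prop:bar-alpha}(ii).
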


The proof is deferred to Section \ref{subsec:linear-comparison-deferred-proofs}.

\subsection{Linearizing \texorpdfstring{$m^t(u)$}{m(t,u)}}
To control the nonlinear remainder, we first express $m^t(u)=q^t(u)+a_tw^t(u)$ in terms of the state difference $\delta^t(u)$ from \eqref{eq:delta-state-def}. Define $\Gamma_t:\R^{2p}\to\R^p$ by
\begin{equation}\label{eq:Gamma-def}
    \Gamma_{t}(x, y) \bydef (I_p - a_t \kappa_t K_i) x + a_t \rho_t y
\end{equation}
where $K_i$ is defined in \eqref{eq:Ki-def}, and the coefficients $a_t,\kappa_t,\rho_t$ are from \eqref{eq:scaled-coeffs}. The recursion in \eqref{eq:w-exact} of Proposition \ref{prop:exact-loo}, with $\beta_i^t(u)$ defined in \eqref{eq:beta-exact} of Proposition \ref{prop:exact-loo}, gives
\begin{equation}\label{eq:m-from-delta}
    m^t(u) = \Gamma_t \delta^t(u) + a_t u \beta^t_i(u)
\end{equation}
For $0 \le k \le t \le T - 1$, define operators
\begin{equation}\label{eq:m-response-def}
    \CM_{k,t} \bydef \begin{cases}
        \Gamma_t M_i^{t-1} \cdots M_i^{k+1} B^k_i, & 0 \le k \le t - 1 \\
        a_t I_p, & k = t
    \end{cases}
\end{equation}
where $M^t_i, B^t_i$ are defined by \eqref{eq:Mt-def}. With $e_i^\ell(u)$ defined in \eqref{eq:block-error-def}, we define the propagated error by
\begin{equation}\label{eq:zeta-def}
    \zeta^t(u) \bydef \Gamma_t \sum_{\ell=0}^{t-1} M_i^{t-1} \cdots M_i^{\ell+1} e^\ell_i(u)
\end{equation}
with $\zeta^0(u) = 0$. This allows us to rewrite $m^t(u)$ in terms of $\beta^k_i(u)$ using the operators $\CM_{k,t}$ from \eqref{eq:m-response-def} and the propagated error $\zeta^t$ from \eqref{eq:zeta-def}.
\begin{lemma}\label{lem:m-representation}
    For every fixed $t \leq T - 1$,
    \begin{equation}\label{eq:m-exact-unrolled}
        m^t(u) = \zeta^t(u) + \sum_{k=0}^t \CM_{k,t} u \beta^k_i(u)
    \end{equation}
    Moreover, for every $m \ge 1$, we have
    \begin{equation}\label{eq:m-response-op-bound}
        \max_{k\le t} \norm{\CM_{k,t}}_{L^m(\op)} \le C_m
    \end{equation}
    \begin{equation}\label{eq:zeta-exact-bound}
        \norm{\zeta^t(u)}_{L^m} \le C_m d^{-1}
    \end{equation}
    uniformly in $u \in \{f_i, \widetilde f_i\}$. Define also the $\CG_i$ approximation for $m^t(u)$, with $\bar\alpha^k(u)$ from \eqref{eq:bar-alpha-recursion},
    \begin{equation}\label{eq:mbar-def}
        \bar m^t(u) \bydef \sum_{k=0}^t \CM_{k,t}u \bar \alpha^k(u)
    \end{equation}
    Then there exist $\CG_i$-measurable operators $(\CR_{t,r})_{r = 0}^t:\R^p \to \R^p$ such that
    \begin{equation}\label{eq:mbar-linear-combo}
        \bar m^t(u) \,=\, \sum_{r\,=\,0}^t \CR_{t,r}u \beta^r_{i,0}(u), \quad \max_{r \,\le\, t} \norm{\CR_{t,r}}_{L^m(\op)} \,\le\, C_m
    \end{equation}
    With $d^k(u)$ from \eqref{eq:d-def} of Proposition \ref{prop:bar-alpha}, define the unbolded remainder
    \begin{equation}\label{eq:s-remainder-def}
        s^t(u) \bydef \sum_{k=0}^t \CM_{k,t} u\ d^k(u).
    \end{equation}
    Then we have
    \begin{align}
        m^t(u) - \bar m^t(u) &= s^t(u) + \zeta^t(u) \label{eq:m-minus-mbar-decomp} \\
        \norm{\bar m^t(u)}_{L^m} &\le C_m d^{-1/2} \label{eq:mbar-size} \\
        \norm{s^t(u)}_{L^m} &\le C_m d^{-1}\label{eq:s-remainder-bound}
    \end{align}
\end{lemma}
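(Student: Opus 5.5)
The plan is to unroll the block recursion \eqref{eq:block-recursion} inside the identity \eqref{eq:m-from-delta}, and then reuse the amplitude approximations of Section~\ref{sec:linearized}. All claims of Lemma~\ref{lem:m-representation} follow from algebraic identities plus H\"older bounds we have already used.

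\emph{Step 1 (representation).} By \eqref{eq:w-exact}, $a_tw^t=-a_t\kappa_tK_iq^t+a_t\rho_tw^{t-1}+a_tu\beta_i^t$. Hence
\[
m^t=q^t+a_tw^t=\Gamma_t\delta^t+a_tu\beta^t_i,
\]
which is \eqref{eq:m-from-delta}. Starting from $\delta^0=0$, the recursion \eqref{eq:block-recursion} unrolls to
\[
\delta^t=\sum_{k<t}M_i^{t-1}\cdots M_i^{k+1}B_i^k u\beta_i^k+\sum_{\ell<t}M_i^{t-1}\cdots M_i^{\ell+1}e_i^\ell.
\]
Applying $\Gamma_t$ and adding the $k=t$ term $a_tu\beta_i^t$ gives \eqref{eq:m-exact-unrolled}, with $\CM_{k,t}$ and $\zeta^t$ as defined.

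\emph{Step 2 (operator bounds).} We have $\|\Gamma_t\|_\op\le C(1+\|K_i\|_\op)$, and each $M_i^r,B_i^r$ obeys the same type of bound. The argument of Lemma~\ref{lem:response-op-bounds}, with Lemma~\ref{lem:sensing-envelope} supplying all moments of $\|K_i\|_\op$, then gives \eqref{eq:m-response-op-bound}, since there are at most $T+1$ factors. For $\zeta^t$, only first blocks $\varepsilon^\ell_i$ enter. H\"older with $\|\varepsilon_i^\ell\|_{L^{2m}}\le C_md^{-1}$ from \eqref{eq:eps-moment-bound} then gives \eqref{eq:zeta-exact-bound}.

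\emph{Step 3 ($\bar m^t$ and the remainders).} Substitute \eqref{eq:bar-alpha-linear-combo} into \eqref{eq:mbar-def} and exchange the sums. This gives \eqref{eq:mbar-linear-combo} with
\[
\CR_{t,r}=\sum_{k=r}^tc_{k,r}\CM_{k,t}.
\]
These operators are $\CG_i$-measurable because $c_{k,r}$, $\CM_{k,t}$ are built from $K_i$, $L_i^r$ and the coefficients. Their moments are bounded by \eqref{eq:c-coeff-bound}, \eqref{eq:m-response-op-bound} and H\"older. Since $\beta_i^k=\bar\alpha^k+d^k$, \eqref{eq:m-exact-unrolled} yields the decomposition \eqref{eq:m-minus-mbar-decomp}.

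For \eqref{eq:s-remainder-bound}, bound each term by $\|\CM_{k,t}\|_{L^{3m}(\op)}\|u\|_{L^{3m}}\|d^k\|_{L^{3m}}$. By \eqref{eq:d-bound} and Lemma~\ref{lem:sensing-envelope} ($\|u\|=\CO_{L^m}(1)$), this is $\CO(d^{-1})$.

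For \eqref{eq:mbar-size}, write each term of \eqref{eq:mbar-linear-combo} as $\CR_{t,r}u\,\beta^r_{i,0}$. Then use $\|\beta^r_{i,0}\|_{L^{3m}}\le C_md^{-1/2}$ from Corollary~\ref{cor:scalar-bounds} together with the same H\"older bound.

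The only point needing care is the measurability bookkeeping in Step 3. Specifically, $\CM_{k,t}$ must not depend on $u$, which holds because $L_i^t=DG_t(y_i^t)$ is evaluated at the leave-one-out point. No genuine analytic obstacle is expected.
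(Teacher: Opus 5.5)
Your proposal is correct and follows the paper's proof essentially step for step. You unroll \eqref{eq:block-recursion} from $\delta^0=0$, apply $\Gamma_t$ and add the $k=t$ term, and substitute \eqref{eq:bar-alpha-linear-combo} to get $\CR_{t,r}=\sum_{k=r}^t c_{k,r}\CM_{k,t}$. You then close all moment bounds with H\"older, using Lemmas~\ref{lem:sensing-envelope} and \ref{lem:response-op-bounds}, Corollary~\ref{cor:scalar-bounds}, and Proposition~\ref{prop:bar-alpha}. You also explicitly check \eqref{eq:m-from-delta} from \eqref{eq:w-exact} and the $u$-independence of $L_i^t$; the paper leaves both implicit, but they do not change the argument.
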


The proof is deferred to Section \ref{subsec:linear-comparison-deferred-proofs}.

\subsection{Propagated Remainder}

We first state the third-order spectral expansion of the one-column error $\varepsilon^k_i(u)$, obtained by applying Lemma \ref{lem:spectral-taylor-4} to the leave-one-out iterates.

\begin{corollary}\label{cor:epsilon-third-order-vector}
    Fix $i\in[n]$ and $0\le k\le T-1$. Let $\varepsilon_i^k(u)$ be as
    defined in \eqref{eq:q-linearized} of Proposition \ref{prop:exact-loo}. Let
    \begin{equation}
        y^k_i \bydef \widehat{\bf s}^k_\smi+a_kh^k_\smi
    \end{equation}
    as in Proposition \ref{prop:exact-loo}. Define
    \begin{equation}\label{eq:vector-local-def}
        \bf B_i^k(g_1,g_2) \bydef \bf B_{k,y^k_i}(g_1,g_2), \qquad
        \bf T_i^k(g_1,g_2,g_3) \bydef \mathbf T_{k,y^k_i}(g_1,g_2,g_3),
    \end{equation}
    and
    \begin{equation}\label{eq:vector-local-remainder-def}
        \rho_{i,k}^{(4)}(g) \bydef \rho_{k,y^k_i}^{(4)}(g).
    \end{equation}
    Then for every $u\in\R^p$,
    \begin{equation}\label{eq:epsilon-third-order-vector}
        \varepsilon_i^k(u) = \bf B_i^k(m^k(u),m^k(u)) + \mathbf T_i^k(m^k(u),m^k(u),m^k(u))
        +
        \rho_{i,k}^{(4)}(m^k(u)).
    \end{equation}
    Moreover, there is a constant $C<\infty$, depending only on
    $M_{\sigma,4}$ from \eqref{eq:M-sigma-4-def} in Lemma \ref{lem:spectral-taylor-4}, such that for all $g,g_1,g_2,g_3,w\in\R^p$,
    \begin{equation}\label{eq:vector-local-bound}
        \norm{\mathbf B_i^k(g_1,g_2)} \le C\norm{g_1}\norm{g_2},
        \qquad
        \norm{\mathbf T_i^k(g_1,g_2,g_3)} \le C\norm{g_1}\norm{g_2}\norm{g_3},
        \qquad
        \norm{\rho_{i,k}^{(4)}(g)} \le C\norm{g}^4,
    \end{equation}
    and
    \begin{equation}\label{eq:scalarized-local-B-HS}
        \left\|\inner{w}{\mathbf B_i^k(\cdot,\cdot)}\right\|_\HS
        \le C\sqrt d\norm{w},
        \qquad
        \left\|\inner{w}{\mathbf T_i^k(\cdot,\cdot,\cdot)}\right\|_\HS
        \le Cd\norm{w}.
    \end{equation}
\end{corollary}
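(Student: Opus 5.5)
This is a direct specialization of Lemma~\ref{lem:spectral-taylor-4} to the leave-one-out base point $y_i^k$ and increment $m^k(u)$. The plan is to check that the hypotheses of that lemma hold uniformly in $i$, $k$ and $d$, and then read off the four displayed claims.

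\textbf{Identity \eqref{eq:epsilon-third-order-vector}.} Proposition~\ref{prop:exact-loo} gives
\[
\varepsilon_i^k(u)=G_k(y_i^k+m^k(u))-G_k(y_i^k)-DG_k(y_i^k)[m^k(u)],
\]
where $G_k=\svecop\circ\sigma_k\circ\mmat$ and $\svecop$ is an isometry. Lemma~\ref{lem:spectral-taylor-4}, applied at base point $y=y_i^k$ with increment $g=m^k(u)$, expands $G_k(y+g)-G_k(y)-DG_k(y)[g]$ as:
\begin{itemize}
\item a quadratic term $\mathbf B_{k,y}(g,g)$, i.e.\ one half of the second Fréchet derivative;
\item a cubic term $\mathbf T_{k,y}(g,g,g)$, i.e.\ one sixth of the third derivative;
\item a fourth-order integral remainder $\rho^{(4)}_{k,y}(g)$.
\end{itemize}
With the definitions \eqref{eq:vector-local-def}--\eqref{eq:vector-local-remainder-def}, this is exactly \eqref{eq:epsilon-third-order-vector}. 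It is a pointwise algebraic identity, valid for every $u\in\R^p$ and every realization. No bound on the spectrum of $\mmat(y_i^k)$ is needed, since $\sigma_k\in\SC_b^4$ globally. So $y_i^k$, although $\CG_i$-measurable and unbounded in operator norm, can be plugged in directly.

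\textbf{Bounds \eqref{eq:vector-local-bound}.} These come from the dimension-free Frobenius estimates on spectral derivatives in Lemma~\ref{lem:spectral-taylor-4}. The $r$-th Fréchet derivative of $M\mapsto\sigma_k(M)$ has Frobenius-to-Frobenius multilinear norm bounded by a constant times $\|\sigma_k^{(r)}\|_\infty$, uniformly in the base point; this is the Daleckii--Krein / divided-difference bound behind Lemma~\ref{lem:dk-bounds}. Transporting through the isometry $\svecop$ gives the bilinear and trilinear operator bounds with $C$ depending only on $M_{\sigma,4}$. The quartic remainder bound $C\|g\|^4$ follows from the integral form of Taylor's theorem and the same bound on the fourth derivative.

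\textbf{Hilbert--Schmidt bounds \eqref{eq:scalarized-local-B-HS}.} I expect this step to be the only genuine content. I would again cite Lemma~\ref{lem:spectral-taylor-4}, assuming it contains the contracted HS estimates. If it must be proved, the plan is as follows.

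First, write the scalar form $(g_1,g_2)\mapsto\langle w,\mathbf B(g_1,g_2)\rangle$ in the eigenbasis of $\mmat(y)$. With $W=U^\top\mmat(w)U$ and $G_a=U^\top\mmat(g_a)U$, it becomes
\[
\sum_{j,k,l}\sigma^{[2]}(\lambda_j,\lambda_k,\lambda_l)\,W_{jl}\,(G_1)_{jk}(G_2)_{kl},
\]
up to symmetrization.

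Second, its HS norm squared is obtained by summing the squared coefficient tensor over the index pairs of $G_1$ and $G_2$. The entries are bounded by $\|\sigma''\|_\infty|W_{jl}|$, and there is one free index $k$. This gives $\lesssim\sum_{j,k,l}|W_{jl}|^2=d\|W\|_\Frob^2$, hence the bound $C\sqrt d\|w\|$.

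Third, for the cubic form two free indices appear, giving $d^2\|W\|_\Frob^2$ and hence the bound $Cd\|w\|$.

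Some care is needed with the coincident-index diagonal terms and the $\sqrt2$ factors from $\svecop$, but these only change constants.
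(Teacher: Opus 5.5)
Your proposal is correct and follows the paper's proof: the paper obtains the identity and all bounds by applying Lemma~\ref{lem:spectral-taylor-4} at $y=y_i^k$ with $g=m^k(u)$. Your fallback sketch of the Hilbert--Schmidt bounds matches how that lemma is proved in the appendix: you expand in eigenbasis matrix units, get one free index for the bilinear form and two for the trilinear form, and pass from $\R^{d\times d}$ to $\Sym$ at the cost of constants only.
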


\begin{proof}[Proof of Corollary \ref{cor:epsilon-third-order-vector}]
    By the definition of $\varepsilon_i^k(u)$ in \eqref{eq:q-linearized} of Proposition \ref{prop:exact-loo}, with $G_k$ defined in \eqref{eq:vectorized-AMP-recursion},
    \begin{equation}
        \varepsilon_i^k(u) = G_k(y_i^k+m^k(u))-G_k(y_i^k)-DG_k(y_i^k)m^k(u).
    \end{equation}
    Applying Lemma \ref{lem:spectral-taylor-4} with $t=k$, $y=y_i^k$, and
    $g=m^k(u)$ gives \eqref{eq:epsilon-third-order-vector}. The bounds
    \eqref{eq:vector-local-bound} and \eqref{eq:scalarized-local-B-HS} are \eqref{eq:local-spectral-B-bound}--\eqref{eq:local-spectral-T-HS} of Lemma \ref{lem:spectral-taylor-4}
    under the definitions in \eqref{eq:vector-local-def}.
\end{proof}

In what follows, we suppress the index $i$ and write $\bf B^k=\bf B_i^k$ and $\bf T^k=\bf T_i^k$.

Fix $t \le T$ and a $\CG_i$-measurable vector $\lambda_i$ satisfying $\norm{\lambda_i} \le C d^{-1/2}$ a.s. For $0 \le k \le t - 1$, define
\begin{equation}\label{eq:vartheta-def}
    \vartheta_{k,t} \bydef \Pi_1 \left( M_i^{t-1} \cdots M_i^{k + 1} \right)^\top \Pi_1^\top \lambda_i \in \R^p
\end{equation}
where $\Pi_1$ projects onto the first copy of $\R^p$ in $\R^{2p}$. Thus,
\begin{equation}\label{eq:vartheta-use}
    \inner{\lambda_i}{\Pi_1 M_i^{t-1}\cdots M_i^{k+1} \Pi_1^\top \varepsilon^k_i(u)} = \inner{\vartheta_{k,t}}{\varepsilon^k_i(u)}
\end{equation}
The following lemma bounds $\vartheta$ and gives the projected remainder formula.
\begin{lemma}\label{lem:projected-z-representation}
    For every $0 \le k \le t - 1$ and every integer $m \ge 1$,
    \begin{equation}\label{eq:projected-z-unrolled}
        \inner{\lambda_i}{z^t_i(u)} = \sum_{k=0}^{t-1} \inner{\vartheta_{k,t}}{\varepsilon^k_i(u)}
    \end{equation}
    and we also have
    \begin{equation}\label{eq:vartheta-bound}
        \norm{\vartheta_{k,t}}_{L^m} \le C_m d^{-1/2}
    \end{equation}
\end{lemma}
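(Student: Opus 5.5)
The identity \eqref{eq:projected-z-unrolled} is pure linear algebra. I would start from the definition \eqref{eq:z-unrolled}:
\[
z^t_i(u)=\Pi_1\sum_{\ell=0}^{t-1}M_i^{t-1}\cdots M_i^{\ell+1}e^\ell_i(u).
\]
By \eqref{eq:block-error-def}, $e^\ell_i(u)=(\varepsilon^\ell_i(u),0)^\top=\Pi_1^\top\varepsilon^\ell_i(u)$, since $\Pi_1^\top$ is the embedding of $\R^p$ into the first block of $\R^{2p}$. Pairing with $\lambda_i$ and moving the product of blocks to the other side of the inner product gives
\[
\inner{\lambda_i}{\Pi_1 M_i^{t-1}\cdots M_i^{\ell+1}\Pi_1^\top\varepsilon^\ell_i(u)}
=\inner{\Pi_1(M_i^{t-1}\cdots M_i^{\ell+1})^\top\Pi_1^\top\lambda_i}{\varepsilon^\ell_i(u)}.
\]
This is exactly \eqref{eq:vartheta-use} with $\vartheta_{\ell,t}$ as in \eqref{eq:vartheta-def}. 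Summing over $\ell$ and renaming $\ell$ as $k$ yields \eqref{eq:projected-z-unrolled}. For $k=t-1$ the product is empty, so it is the identity and $\vartheta_{t-1,t}=\lambda_i$. I would also note that $\vartheta_{k,t}$ is $\CG_i$-measurable, because each $M_i^r$ is built from $L_i^r=DG_r(y_i^r)$ and $K_i$, both $\CG_i$-measurable. This measurability is what makes $\vartheta_{k,t}$ useful later.

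For \eqref{eq:vartheta-bound}, I would use $\|\Pi_1\|_\op=\|\Pi_1^\top\|_\op=1$ and the fact that transposition preserves operator norm, so that
\[
\norm{\vartheta_{k,t}}\le\norm{\lambda_i}\prod_{r=k+1}^{t-1}\norm{M_i^r}_\op\le C d^{-1/2}\prod_{r=k+1}^{t-1}\norm{M_i^r}_\op
\]
almost surely. The proof of Lemma~\ref{lem:response-op-bounds} already gives $\norm{M_i^r}_\op\le C(1+\norm{K_i}_\op)$, using $\norm{L_i^r}_\op\le\norm{\sigma_r'}_\infty$ and the coefficient bounds \eqref{eq:scaled-coeffs-bound}. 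Hence the product is at most $C^{T}(1+\norm{K_i}_\op)^{T}$. Taking $L^m$ norms, the bound reduces to $\norm{(1+\norm{K_i}_\op)^T}_{L^m}\le C_m$. Lemma~\ref{lem:sensing-envelope} supplies this: $K_i=F_\smi F_\smi^\top$ has operator norm with all moments bounded uniformly in $d$ when $n\le C_0p$. That gives \eqref{eq:vartheta-bound} with $C_m$ depending on $m$, $T$, $C_0$ and the bounds on $\sigma_r'$.

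I expect no real obstacle here. The only point needing care is that the number of factors is at most $T$, so the moment of the product is controlled by a single higher moment of $\norm{K_i}_\op$; H\"older with exponent $mT$ suffices. The substantive use of the lemma comes later, when the expansion of Corollary~\ref{cor:epsilon-third-order-vector} is paired with the $\CG_i$-measurable $\vartheta_{k,t}$.
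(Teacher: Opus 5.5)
Your proposal is correct and follows the paper's proof. The identity comes from writing $e^\ell_i(u)=\Pi_1^\top\varepsilon^\ell_i(u)$ in \eqref{eq:z-unrolled} and taking the adjoint of $\Pi_1 M_i^{t-1}\cdots M_i^{k+1}\Pi_1^\top$, and the bound combines $\norm{\lambda_i}\le Cd^{-1/2}$ with uniform moment bounds on the products of the $M_i^r$; the paper quotes these from Lemma~\ref{lem:response-op-bounds}, while you re-derive them directly from $\norm{M_i^r}_{\op}\le C(1+\norm{K_i}_{\op})$ and Lemma~\ref{lem:sensing-envelope}, which is the same argument.
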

\begin{proof}[Proof of Lemma \ref{lem:projected-z-representation}]
    The product $\left( M_i^{t-1} \cdots M_i^{k + 1} \right)^\top$, with $M_i^r$ as in \eqref{eq:Mt-def}, has operator norm moments bounded uniformly in $d$ by Lemma \ref{lem:response-op-bounds}. Since $\norm{\lambda_i} \le Cd^{-1/2}$, this gives \eqref{eq:vartheta-bound}. Equation \eqref{eq:projected-z-unrolled} is a direct consequence of \eqref{eq:vartheta-use} and the definition \eqref{eq:z-unrolled}.
\end{proof}
For $0 \le k \le t- 1$, define
\begin{equation}\label{eq:projected-BT-def}
    \FB_{k,t}(g_1, g_2) \,\bydef\, \inner{\vartheta_{k,t}}{\bf B^k(g_1, g_2)}, \quad \FT_{k,t}(g_1, g_2, g_3) \,\bydef\, \inner{\vartheta_{k,t}}{\bf T^k(g_1, g_2, g_3)}
\end{equation}
where $\vartheta_{k,t}$ is defined in \eqref{eq:vartheta-def}, and $\bf B, \bf T$ are defined as in Lemma \ref{lem:spectral-taylor-4} and \eqref{eq:vector-local-def}. Then we have from \eqref{eq:vector-local-bound},
\begin{equation}
    |\FB_{k,t}(g_1, g_2)| \,\le\, C\norm{\vartheta_{k,t}} \norm{g_1}\norm{g_2}, \quad |\FT_{k,t}(g_1, g_2, g_3)| \,\le\, C\norm{\vartheta_{k,t}} \norm{g_1}\norm{g_2}\norm{g_3}
\end{equation}
and also $\norm{\FB_{k,t}}_{\HS} \le C \sqrt{d} \norm{\vartheta_{k,t}}$. Therefore, by \eqref{eq:vartheta-bound} of Lemma \ref{lem:projected-z-representation} and H\"older, we have for any $m$,
\begin{equation}\label{eq:projected-local-B-moment}
    \norm{\FB_{k,t}(g_1, g_2)}_{L^m} \le C_m d^{-1/2}\norm{g_1}_{L^{3m}} \norm{g_2}_{L^{3m}}
\end{equation}
\begin{equation}\label{eq:projected-local-T-moment}
    \norm{\FT_{k,t}(g_1, g_2, g_3)}_{L^m} \le C_m d^{-1/2}\norm{g_1}_{L^{4m}} \norm{g_2}_{L^{4m}} \norm{g_3}_{L^{4m}}
\end{equation}
and also using \eqref{eq:scalarized-local-B-HS},
\begin{equation}\label{eq:projected-local-HS-bound}
    \norm{\FB_{k,t}}_{L^m(\HS)} \le C_m, \quad \norm{\FT_{k,t}}_{L^m(\HS)} \le C_m d^{1/2}
\end{equation}
Using the projected bilinear and trilinear forms in \eqref{eq:projected-BT-def}, we define the following decomposition of $\inner{\lambda_i}{z^t_i(u)}$ into quadratic, cubic, and remainder terms.
\begin{equation}\label{eq:I234-def}
    \begin{aligned}
        \I^{(2)}_t(u) &\bydef\, \sum_{k\,=\,0}^{t-1} \FB_{k,t}(m^k(u), m^k(u)), \quad \I^{(3)}_t(u) \,\bydef\, \sum_{k\,=\,0}^{t-1}\FT_{k,t}(m^k(u), m^k(u), m^k(u)) \\
        \I^{(4)}_t(u) & \bydef \sum_{k=0}^{t-1} \inner{\vartheta_{k,t}}{\rho^{(4)}_{i,k}(m^k(u))}
    \end{aligned}
\end{equation}
By \eqref{eq:epsilon-third-order-vector} and Lemma \ref{lem:projected-z-representation}, we have
\begin{equation}\label{eq:projected-z-I234}
    \inner{\lambda_i}{z^t_i(u)} = \I^{(2)}_t(u) + \I^{(3)}_t(u) + \I^{(4)}_t(u)
\end{equation}
We will now approximate the error $\varepsilon^k_i$ using the $\CG_i$ approximation $\bar m^t(u)$
\begin{lemma}\label{lem:epsilon-quad-lin}
    For $0 \le k \le T-1$, define
    \begin{equation}\label{eq:epsilon-quad-lin-def}
        \varepsilon^k_{i,\quadt, \lin}(u) \bydef \bf B^k(\bar m^k(u), \bar m^k(u))
    \end{equation}
    Then for every $m \ge 1$,
    \begin{equation}\label{eq:epsilon-quad-lin-size}
        \norm{\varepsilon^k_{i,\quadt,\lin}(u)}_{L^m} \le C_m d^{-1}
    \end{equation}
    \begin{equation}\label{eq:epsilon-quad-lin-error}
        \norm{\varepsilon^k_i(u) - \varepsilon^k_{i,\quadt,\lin}(u) }_{L^m} \le C_m d^{-3/2}
    \end{equation}
    uniformly for $u \in \{f_i, \widetilde f_i\}$
\end{lemma}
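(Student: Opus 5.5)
The plan is to combine the third-order expansion of Corollary~\ref{cor:epsilon-third-order-vector} with the decomposition $m^k(u)=\bar m^k(u)+s^k(u)+\zeta^k(u)$ from Lemma~\ref{lem:m-representation}. Moment bounds then follow from Hölder's inequality. Throughout, we use that $\norm{m^k(u)}_{L^m}\le C_m d^{-1/2}$ by Proposition~\ref{prop:delta-moments}, and that $\norm{\bar m^k(u)}_{L^m}\le C_m d^{-1/2}$ by \eqref{eq:mbar-size}.

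For the size bound \eqref{eq:epsilon-quad-lin-size}, apply the pointwise estimate $\norm{\mathbf B^k(g_1,g_2)}\le C\norm{g_1}\norm{g_2}$ from \eqref{eq:vector-local-bound}. Its constant depends only on the uniform bounds on $\sigma_k$ and its derivatives, so it is deterministic. This gives
\[
\norm{\varepsilon^k_{i,\quadt,\lin}(u)}_{L^m}\le C\norm{\bar m^k(u)}_{L^{2m}}^2\le C_m d^{-1}.
\]

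For the error bound \eqref{eq:epsilon-quad-lin-error}, write, using \eqref{eq:epsilon-third-order-vector},
\[
\varepsilon^k_i(u)-\varepsilon^k_{i,\quadt,\lin}(u)
=\big[\mathbf B^k(m^k,m^k)-\mathbf B^k(\bar m^k,\bar m^k)\big]+\mathbf T^k(m^k,m^k,m^k)+\rho^{(4)}_{i,k}(m^k).
\]
\begin{itemize}
\item The cubic term is bounded by $C\norm{m^k}^3$, which is $\CO_{L^m}(d^{-3/2})$ after Hölder with $L^{3m}$ norms.
\item The quartic remainder is bounded by $C\norm{m^k}^4=\CO_{L^m}(d^{-2})$.
\item For the bilinear difference, use bilinearity and symmetry of $\mathbf B^k$ (the Fréchet second derivative). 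Setting $e^k\bydef m^k-\bar m^k=s^k+\zeta^k$,
\[
\mathbf B^k(m^k,m^k)-\mathbf B^k(\bar m^k,\bar m^k)=2\mathbf B^k(\bar m^k,e^k)+\mathbf B^k(e^k,e^k).
\]
By \eqref{eq:s-remainder-bound} and \eqref{eq:zeta-exact-bound}, $\norm{e^k}_{L^m}\le C_m d^{-1}$. Hence the first term is $\CO_{L^m}(d^{-1/2}\cdot d^{-1})=\CO_{L^m}(d^{-3/2})$ and the second is $\CO_{L^m}(d^{-2})$.
\end{itemize}
Summing the three contributions gives \eqref{eq:epsilon-quad-lin-error}, uniformly in $u\in\{f_i,\widetilde f_i\}$.

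The bounds on $s^k$ and $\zeta^k$ and the representation of $m^k$ are taken from Lemma~\ref{lem:m-representation}, so the only step needing care is that lemma, in particular \eqref{eq:zeta-exact-bound}. That bound uses $\norm{\varepsilon^\ell_i}_{L^m}\le C_m d^{-1}$ from Proposition~\ref{prop:delta-moments}, together with uniform moment bounds on products of the $M_i^r$ and on $\Gamma_t$. Since $\norm{\Gamma_t}_\op\le C(1+\norm{K_i}_\op)$, these moment bounds follow from Lemma~\ref{lem:sensing-envelope}. Given those inputs, the present lemma is a routine multilinear Hölder estimate. The only point to check is that the constant $C$ in \eqref{eq:vector-local-bound} is deterministic, so that no extra random factor enters the Hölder splits; this holds because that constant depends only on $M_{\sigma,4}$.
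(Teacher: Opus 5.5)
Your proposal is correct and matches the paper's proof essentially step for step. The size bound follows from the pointwise estimate on $\mathbf B^k$ and \eqref{eq:mbar-size}. For the error bound, you write $m^k=\bar m^k+(s^k+\zeta^k)$, expand the bilinear difference as $2\mathbf B^k(\bar m^k,e^k)+\mathbf B^k(e^k,e^k)$, and control the cubic and quartic terms through Proposition~\ref{prop:delta-moments}, exactly as the paper does.
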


The proof is deferred to Section \ref{subsec:linear-comparison-deferred-proofs}.

For $0 \le \ell < k \le T$, define the following matrix that captures how the error propagates through the iterations
\begin{equation}\label{eq:U-propagator-def}
    \CU_{\ell,k} \bydef \Gamma_k M_i^{k-1} \cdots M_i^{\ell + 1} \Pi_1^\top : \R^p \to \R^p
\end{equation}
with the empty product taken to be the identity. Then we define
\begin{equation}\label{eq:zeta-quad-lin-def}
    \zeta^k_{\quadt, \lin}(u) \bydef \sum_{\ell=0}^{k-1} \CU_{\ell,k} \varepsilon^\ell_{i,\quadt,
    \lin}(u), \quad \tilde \zeta^k(u) \bydef \zeta^k(u) - \zeta^k_{\quadt, \lin}(u)
\end{equation}
with $\zeta^0_{\quadt,\lin}(u) = \tilde \zeta^0(u) = 0$ and recall $\zeta^k(u)$ from \eqref{eq:zeta-def}. We can then bound these terms as follows.

\begin{corollary}\label{cor:zeta-quad-lin}
    For every fixed $k\le T$ and every integer $m\ge 1$,
    \begin{equation}\label{eq:U-response-bound}
        \max_{0\,\le\, \ell\,<\,k}\norm{\CU_{\ell,k}}_{L^m(\op)}\,\le\, C_m,
    \end{equation}
    \begin{equation}\label{eq:zeta-quad-lin-size}
        \norm{\zeta_{\quadt,\lin}^k(u)}_{L^m}\le C_m d^{-1},
    \end{equation}
    \begin{equation}\label{eq:zeta-quad-lin-error}
        \norm{\tilde \zeta^k(u)}_{L^m}\le C_m d^{-3/2},
    \end{equation}
    uniformly in $u\in\{f_i,\tilde  f_i\}$.
\end{corollary}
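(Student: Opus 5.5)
The operator bound \eqref{eq:U-response-bound} comes first. $\CU_{\ell,k}$ is a product of at most $T+1$ factors: the map $\Gamma_k$, the block operators $M_i^r$, and the embedding $\Pi_1^\top$. By \eqref{eq:Gamma-def} and Assumption~\ref{ass:basic}, $\norm{\Gamma_k}_\op\le C(1+\norm{K_i}_\op)$. The proof of Lemma~\ref{lem:response-op-bounds} gives $\norm{M_i^r}_\op\le C(1+\norm{K_i}_\op)$. Hence
\[
\norm{\CU_{\ell,k}}_\op\le C(1+\norm{K_i}_\op)^{T},
\]
and the moment bounds on $\norm{K_i}_\op$ from Lemma~\ref{lem:sensing-envelope} (as already used in Lemma~\ref{lem:response-op-bounds}) give \eqref{eq:U-response-bound}. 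The empty-product case $k=\ell+1$ reduces to $\Gamma_k\Pi_1^\top$ and is covered by the same bound.

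Next, I would identify $\zeta^k$ with the propagated error in the $\CU$ notation. By \eqref{eq:block-error-def}, $e_i^\ell(u)=\Pi_1^\top\varepsilon_i^\ell(u)$, so \eqref{eq:zeta-def} reads
\[
\zeta^k(u)=\sum_{\ell<k}\CU_{\ell,k}\,\varepsilon_i^\ell(u).
\]
Subtracting the definition \eqref{eq:zeta-quad-lin-def} then gives
\[
\tilde\zeta^k(u)=\sum_{\ell<k}\CU_{\ell,k}\bigl(\varepsilon_i^\ell(u)-\varepsilon^\ell_{i,\quadt,\lin}(u)\bigr).
\]

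Both remaining bounds follow from H\"older's inequality, $\norm{Ax}_{L^m}\le\norm{A}_{L^{2m}(\op)}\norm{x}_{L^{2m}}$, combined with Minkowski's inequality over the at most $T$ summands. For \eqref{eq:zeta-quad-lin-size}, I would combine \eqref{eq:U-response-bound} with \eqref{eq:epsilon-quad-lin-size} of Lemma~\ref{lem:epsilon-quad-lin}, which gives $C_md^{-1}$. For \eqref{eq:zeta-quad-lin-error}, I would combine \eqref{eq:U-response-bound} with \eqref{eq:epsilon-quad-lin-error}, which gives $C_md^{-3/2}$. All constants are uniform in $i$ and in $u\in\{f_i,\widetilde f_i\}$ because the inputs are.

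The only point needing care, rather than a real obstacle, is the bookkeeping of the identification $e_i^\ell=\Pi_1^\top\varepsilon_i^\ell$. One has to check that the ordering of the $M$-products in \eqref{eq:zeta-def} matches \eqref{eq:U-propagator-def}, including the empty-product case. The substantive work was already done in Lemma~\ref{lem:epsilon-quad-lin}.
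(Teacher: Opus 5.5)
Your proposal is correct and follows the paper's proof. You bound $\CU_{\ell,k}$ through $\norm{\Gamma_k}_\op,\norm{M_i^r}_\op\le C(1+\norm{K_i}_\op)$ and the moment bounds of Lemma~\ref{lem:sensing-envelope}. You then write $\tilde\zeta^k(u)=\sum_{\ell<k}\CU_{\ell,k}\bigl(\varepsilon_i^\ell(u)-\varepsilon^\ell_{i,\quadt,\lin}(u)\bigr)$ and apply H\"older together with Lemma~\ref{lem:epsilon-quad-lin}. The one difference is that you spell out the identity $\zeta^k(u)=\sum_{\ell<k}\CU_{\ell,k}\varepsilon_i^\ell(u)$, via $e_i^\ell=\Pi_1^\top\varepsilon_i^\ell$, which the paper uses only implicitly.
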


\begin{proof}[Proof of Corollary \ref{cor:zeta-quad-lin}]
    The operator bound \eqref{eq:U-response-bound} follows from the definition of $\CU_{\ell,k}$ in \eqref{eq:U-propagator-def}, the definition of $\Gamma_k$ in \eqref{eq:Gamma-def}, the block matrices $M_i^r$ in \eqref{eq:Mt-def}, and the bounds in Lemma \ref{lem:sensing-envelope}, \eqref{eq:m-response-op-bound} of Lemma \ref{lem:m-representation}, and Lemma \ref{lem:response-op-bounds}. Then \eqref{eq:zeta-quad-lin-size} is immediate from \eqref{eq:epsilon-quad-lin-size} of Lemma \ref{lem:epsilon-quad-lin}. Finally,
    \begin{equation}
        \tilde\zeta^k(u)
        =\sum_{\ell=0}^{k-1} \CU_{\ell,k}\big(\varepsilon_i^\ell(u)-\varepsilon_{i,\quadt,\lin}^\ell(u)\big),
    \end{equation}
    so \eqref{eq:zeta-quad-lin-error} follows from \eqref{eq:epsilon-quad-lin-error} of Lemma \ref{lem:epsilon-quad-lin}.
\end{proof}

We can show that $\I^{(3)}_t(u), \I^{(4)}_t(u)$ are small.

\begin{proposition}\label{prop:projected-z-cubic-quartic}
    For every $t\le T$, the cubic and quartic contributions are small in both
    the GOE and rank-one cases. More precisely,
    \begin{equation}\label{eq:I34-individual-bounds}
        \max_{u\in\{f_i,\widetilde f_i\}}
        \left\{
            \abs{\E\!\left[\I_t^{(3)}(u)\right]} + \abs{\E\!\left[\I_t^{(4)}(u)\right]}
        \right\} \le C d^{-5/2}.
    \end{equation}
    Consequently, by the triangle inequality,
    \begin{equation}\label{eq:I34-comparison}
        \abs{\E\!\left[\I_t^{(3)}(f_i)-\I_t^{(3)}(\widetilde f_i)\right]} + \abs{\E\!\left[\I_t^{(4)}(f_i)-\I_t^{(4)}(\widetilde f_i)\right]} \le C d^{-5/2},
    \end{equation}
\end{proposition}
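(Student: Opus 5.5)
The plan is to bound each expectation in \eqref{eq:I34-individual-bounds} separately for each fixed $u\in\{f_i,\widetilde f_i\}$. The comparison \eqref{eq:I34-comparison} then follows from the triangle inequality, so no moment matching between the ensembles is needed here. Since $T$ is fixed, it suffices to treat a single summand $k$ in \eqref{eq:I234-def}.

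\emph{Quartic term.} By \eqref{eq:vector-local-bound}, $|\inner{\vartheta_{k,t}}{\rho^{(4)}_{i,k}(m^k(u))}|\le C\norm{\vartheta_{k,t}}\norm{m^k(u)}^4$. H\"older's inequality, \eqref{eq:vartheta-bound} and Proposition~\ref{prop:delta-moments} then give
\[
\E|\I^{(4)}_t(u)|\le C\,\norm{\vartheta_{k,t}}_{L^5}\norm{m^k(u)}_{L^5}^4\le C\,d^{-1/2}\cdot d^{-2}=C\,d^{-5/2}.
\]
This is a direct norm bound and needs no cancellation.

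\emph{Cubic term, reduction to the $\CG_i$-linearization.} A direct bound via \eqref{eq:projected-local-T-moment} only gives $d^{-1/2}\cdot d^{-3/2}=d^{-2}$, so an extra factor $d^{-1/2}$ must come from averaging over $u$.
\begin{itemize}
\item First replace $m^k(u)$ by $\bar m^k(u)$ in each slot of $\FT_{k,t}$. By \eqref{eq:m-minus-mbar-decomp}, \eqref{eq:s-remainder-bound} and \eqref{eq:zeta-exact-bound}, $\norm{m^k-\bar m^k}_{L^m}\le C_md^{-1}$. Multilinearity and \eqref{eq:projected-local-T-moment} bound each replacement error by $d^{-1/2}\cdot d^{-1}\cdot d^{-1/2}\cdot d^{-1/2}=d^{-5/2}$.
\item It remains to bound $\E\,\FT_{k,t}(\bar m^k,\bar m^k,\bar m^k)$. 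By \eqref{eq:mbar-linear-combo} this expands as a finite sum of terms
\[
\FT_{k,t}(\CR_{k,r_1}u,\CR_{k,r_2}u,\CR_{k,r_3}u)\,\beta^{r_1}_{i,0}(u)\beta^{r_2}_{i,0}(u)\beta^{r_3}_{i,0}(u).
\]
Here $\mathfrak S(u_1,u_2,u_3)\bydef\FT_{k,t}(\CR_{k,r_1}u_1,\CR_{k,r_2}u_2,\CR_{k,r_3}u_3)$ is a $\CG_i$-measurable trilinear form. By \eqref{eq:projected-local-HS-bound} and the moment bounds on $\CR_{k,r}$, it satisfies $\norm{\mathfrak S}_{L^m(\HS)}\le C_md^{1/2}$ and $\norm{\mathfrak S}_{L^m(\op)}\le C_md^{-1/2}$.
\item Each $\beta^{r}_{i,0}=\kappa_r\inner{u}{a^r_i}+\nu_r\epsilon_i$ is affine in $u$, with $\CG_i$-measurable coefficients.
\end{itemize}
Conditionally on $\CG_i$, each term is therefore a polynomial of degree between $3$ and $6$ in $u$ with $\CG_i$-measurable tensor coefficients.

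\emph{Cubic term, conditional expectation.} I will compute $\E[\,\cdot\mid\CG_i]$ term by term. The plan is to use the trilinear decomposition of Lemma~\ref{lem:trilinear-decomp} to split $\mathfrak S(u,u,u)$ into two parts:
\begin{itemize}
\item a part built from the partial traces $\sum_j\mathfrak S(\cdot,e_j,e_j)/p$, which is linear in $u$;
\item a fully centered part.
\end{itemize}
The linear part pairs with the $\beta$ factors to give scalars such as $p^{-1}\inner{\mathrm{ptr}\,\mathfrak S}{a^r_i}\,\beta$. These are controlled by the Hilbert--Schmidt bound, using $\norm{\mathrm{ptr}\,\mathfrak S}\le\sqrt p\,\norm{\mathfrak S}_\HS$ up to constants, together with $\norm{a^r_i}\lesssim\sqrt d$ and the size $d^{-1/2}$ of each $\beta$.

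The centered part, multiplied by the $\beta$'s, is handled as follows:
\begin{itemize}
\item For the $\epsilon_i$-only pieces, use Cauchy--Schwarz together with the refined centered quadratic-form bound of Lemma~\ref{lem:refined-centered-quad}, which retains the Frobenius norm of the coefficient matrix. This gains relative to the operator-norm bound because $\norm{\mathfrak S}_\HS/\sqrt p\ll\norm{\mathfrak S}_\op\sqrt d$ in the relevant contractions.
\item For the Gaussian column, odd moments vanish, so only even total degree survives.
\item For the rank-one column, no symmetry is available. I instead bound the relevant third and higher cumulants of $f_i$ via Corollary~\ref{cor:one-column-moments} and the HS estimates.
\end{itemize}

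The main obstacle is this last step: showing that every surviving contraction of $\mathfrak S$ against the third- and sixth-order moment tensors of the rank-one column is $\CO(d^{-5/2})$. This is exactly where the dimension-free HS bound \eqref{eq:scalarized-local-B-HS} (the one giving $d\norm{w}$ for $\mathbf T$ rather than a crude $p$-dependent bound) must be used. If a direct contraction estimate falls short, my fallback is to write $f_i=\svecop(x_ix_i^\top-I_d)/\sqrt{d(d+1)}$ and compute the expectation as a Gaussian polynomial in $x_i$ via Wick's formula. In that computation, each non-Gaussian pairing carries an extra $d^{-1}$ relative to the generic pairings.
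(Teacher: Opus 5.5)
\textbf{Gap in the main cubic term.} Your quartic bound, and your reduction of the cubic term to $\FT_{k,t}(\bar m^k,\bar m^k,\bar m^k)$, match the paper. The gap is in the main cubic term. You attribute the missing factor $d^{-1/2}$ to cancellation in $\E[\,\cdot\mid\CG_i]$. You then leave the decisive rank-one case---contracting $\mathfrak S$ against the third- through sixth-order moment tensors of $f_i$---as an unresolved ``main obstacle,'' supported only by a heuristic Wick count. As written, that step is not proved. It is also the wrong mechanism: no cancellation is needed. The paper bounds $\E\abs{\I^{(3)}_t(u)}$ directly, and the extra factor comes from the Hilbert--Schmidt structure of $\mathfrak S$. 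Parity, cumulant bounds, and Wick calculus play no role.

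\textbf{The missing step.} Keep the three amplitudes together and apply Cauchy--Schwarz:
\[
\E\abs{\beta^{r_1}_{i,0}\beta^{r_2}_{i,0}\beta^{r_3}_{i,0}\,\mathfrak S(u,u,u)}\le\norm{\beta^{r_1}_{i,0}\beta^{r_2}_{i,0}\beta^{r_3}_{i,0}}_{L^2}\norm{\mathfrak S(u,u,u)}_{L^2}.
\]
The first factor is $\CO(d^{-3/2})$ by Corollary~\ref{cor:scalar-bounds}. This covers the $\inner{u}{a^r_i}$ parts of the amplitudes, not only the $\epsilon_i$ parts. It then remains to show $\norm{\mathfrak S(u,u,u)}_{L^2}\le Cd^{-1}$, rather than the $d^{-1/2}$ that your operator-norm bound gives.

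Lemma~\ref{lem:trilinear-decomp} writes $\mathfrak S(u,u,u)=\inner{b}{u}+Y(u)$, where $Y(u)=\sum_\nu\inner{e_\nu}{u}\bigl(u^\top M_\nu u-p^{-1}\Tr M_\nu\bigr)$. It also gives $\norm{b}\le p^{-1/2}\norm{\mathfrak S}_\HS$ and $\sum_\nu\norm{M_\nu}_\Frob^2\le\norm{\mathfrak S}_\HS^2$. Combined with $\norm{\mathfrak S}_{L^m(\HS)}\le C_md^{1/2}$, this yields $\norm{b}\le Cd^{-1/2}$, hence $\norm{\inner{b}{u}}_{L^2}\le Cd^{-3/2}$.

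For $Y$, Cauchy--Schwarz in $\nu$ gives $\abs{Y(u)}\le\norm{u}\bigl(\sum_\nu\abs{u^\top M_\nu u-p^{-1}\Tr M_\nu}^2\bigr)^{1/2}$. Apply Lemma~\ref{lem:refined-centered-quad} to each $\CG_i$-measurable $M_\nu$, conditionally on $\CG_i$. This gives $\norm{Y}_{L^2}\le Cd^{-3/2}\bigl(\E\sum_\nu\norm{M_\nu}_\Frob^2\bigr)^{1/2}\le Cd^{-1}$. The Frobenius-norm form of that lemma is exactly what holds for the rank-one column and beats the operator-norm estimate. Both ensembles are therefore handled identically, and each summand is $\CO(d^{-5/2})$ in $L^1$.

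You had every ingredient. You applied the refined quadratic-form bound only to the $\epsilon_i$-only pieces, instead of to $\mathfrak S(u,u,u)$ as a whole after separating the amplitudes.
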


The proof is deferred to Section \ref{subsec:linear-comparison-deferred-proofs}.

It remains to compare the quadratic term $\I^{(2)}_t(u)$. We claim that the following two terms are the main contributors
\begin{equation}\label{eq:quadratic-main-terms}
    \begin{aligned}
        \I^{(2, \lin)}_t(u) &\bydef \sum_{k=0}^{t-1} \FB_{k,t}(\bar m^k(u), \bar m^k(u)) \\
        \I^{(\fb, \main)}_t(u) &\bydef 2 \sum_{k=0}^{t-1} \FB_{k,t}(\bar m^k(u), \zeta^k_{\quadt,\lin}(u))
    \end{aligned}
\end{equation}
where recall $\bar m^t$ from Lemma \ref{lem:m-representation}\eqref{eq:mbar-def}.
\begin{proposition}\label{prop:projected-z-quad-reduction}
    For every fixed $t \le T$, with $\I_t^{(2,\lin)}$ and $\I_t^{(\fb,\main)}$ as defined in \eqref{eq:quadratic-main-terms}, we have
    \begin{equation}
        \abs{
        \E[\I^{(2)}_t(f_i) - \I^{(2)}_t(\widetilde f_i)] - \E[\I^{(2,\lin)}_t(f_i) - \I^{(2,\lin)}_t(\widetilde f_i)] - \E[\I^{(\fb, \main)}_t(f_i) - \I^{(\fb, \main)}_t(\widetilde f_i)]
        } \le C d^{-5/2}
    \end{equation}
\end{proposition}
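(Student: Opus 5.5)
We decompose $m^k(u)$ via Lemma~\ref{lem:m-representation} and further split $\zeta^k(u)=\zeta^k_{\quadt,\lin}(u)+\tilde\zeta^k(u)$ by \eqref{eq:zeta-quad-lin-def}. This gives
\begin{equation*}
    m^k(u)=\bar m^k(u)+\zeta^k_{\quadt,\lin}(u)+e^k(u),\qquad e^k(u)\bydef s^k(u)+\tilde\zeta^k(u).
\end{equation*}
Here $\bar m^k=\CO_{L^m}(d^{-1/2})$, $\zeta^k_{\quadt,\lin}=\CO_{L^m}(d^{-1})$, $\tilde\zeta^k=\CO_{L^m}(d^{-3/2})$, and $s^k=\CO_{L^m}(d^{-1})$ by \eqref{eq:mbar-size}, \eqref{eq:s-remainder-bound}, and Corollary~\ref{cor:zeta-quad-lin}. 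Expanding $\FB_{k,t}(m^k,m^k)$ bilinearly, we subtract the two main terms $\FB_{k,t}(\bar m^k,\bar m^k)$ and $2\FB_{k,t}(\bar m^k,\zeta^k_{\quadt,\lin})$. It then remains to show that every other term has expectation $\CO(d^{-5/2})$ for each $u$ separately, or that its expectation agrees across $u\in\{f_i,\widetilde f_i\}$ up to $\CO(d^{-5/2})$.

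The cheap terms are handled by the crude moment bound \eqref{eq:projected-local-B-moment}, which costs $d^{-1/2}$ times the product of the sizes. This gives $\FB(\zeta_{\quadt,\lin},\zeta_{\quadt,\lin})=\CO(d^{-5/2})$, $\FB(\bar m,\tilde\zeta)=\CO(d^{-5/2})$, $\FB(\tilde\zeta,\cdot)$ is smaller still, and $\FB(s,s)$, $\FB(s,\zeta_{\quadt,\lin})=\CO(d^{-5/2})$. The only term that escapes this crude bound is the cross term $2\FB_{k,t}(\bar m^k(u),s^k(u))$, which is only $\CO(d^{-2})$ by norms. So the whole proposition reduces to comparing $\E\,\FB_{k,t}(\bar m^k,s^k)$ across the two ensembles.

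For that term we substitute $s^k=\sum_{j\le k}\CM_{j,k}u\,d^j(u)$ from \eqref{eq:s-remainder-def}. We then expand $d^j$ through the recursion \eqref{eq:d-recursion}, which gives
\begin{equation*}
    d^j=-\sum\bom\, d^{\cdot}-\sum_\ell\tom_{\ell,j}(u)\beta^\ell_i(u)-\eta^j_i(u).
\end{equation*}
Inductively, $d^j$ is a $\CG_i$-linear combination of the terms $\tom_{\ell,j'}\beta^\ell_i$ and $\eta^{j'}_i$. In each, replacing $\beta^\ell_i$ by $\bar\alpha^\ell$ costs $\CO(d^{-1})$ relative to $d^{-1/2}$, and likewise replacing $\eta$ by its leading part. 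These substitutions leave errors of total order $d^{-1/2}\cdot d^{-1/2}\cdot d^{-1/2}\cdot d^{-1}$, which is fine.

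The resulting main pieces are polynomial in $u$: $\bar m^k$ contributes degree $2$ (through $u\,\beta^r_{i,0}$), $\CM u$ contributes degree $1$, $\tom$ contributes a centered quadratic form, and $\bar\alpha$ contributes degree $1$. These are contracted against $\CG_i$-measurable coefficients carrying one $\FB_{k,t}$ factor. The $\eta^{j}_i=\kappa\langle u,z^j\rangle$ piece, with $z^j=\CO(d^{-1})$, is bounded crudely: it contributes $d^{-1/2}\cdot d^{-1/2}\cdot d^{-1}\cdot$ (norm of $u$), which is $\CO(d^{-2})$ and \emph{not} enough. So for it we iterate, writing $z^j$ via its quadratic leading term $\FB$-type expression in $\bar m$. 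This produces again a polynomial in $u$ of bounded degree with a $d^{-1/2}$ improvement coming from the refined centered quadratic bound.

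The main obstacle is this degree-$5$ or degree-$6$ polynomial-in-$u$ expectation. Moment matching covers only the first two moments, so the higher moments of $f_i$ and $\widetilde f_i$ genuinely differ. We would first try to show that each such expectation is individually $\CO(d^{-5/2})$. To do so, we center every quadratic form $u^\top Au$ at $p^{-1}\Tr A$ and use Lemma~\ref{lem:refined-centered-quad}, whose Frobenius-norm gain is combined with the Hilbert--Schmidt bound \eqref{eq:projected-local-HS-bound}, $\|\FB_{k,t}\|_\HS\le C$. The contraction of $\FB_{k,t}$ against two copies of $u$ (through $\CM u$ and $\CR u$) is then a quadratic form whose Frobenius norm is $\CO(1)$ rather than $\CO(\sqrt p)$, so its centered part is $\CO(d^{-1})$ instead of $\CO(d^{-1/2})$. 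This supplies the extra $d^{-1/2}$. The trace parts produce products of $\beta_{i,0}$-type scalars whose conditional second moments agree by Lemma~\ref{lem:one-column-moments}. Any leftover odd-degree terms are handled by Cauchy--Schwarz together with the $\tom=\CO(d^{-1/2})$ fluctuation, and we sum over the finitely many $k,j,\ell\le T$.
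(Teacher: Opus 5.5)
Your reduction matches the paper's. After splitting $m^k(u)=\bar m^k+s^k+\zeta^k_{\quadt,\lin}+\tilde\zeta^k$, every term except $2\FB_{k,t}(\bar m^k,s^k)$ is $\CO(d^{-5/2})$ in $L^1$ by \eqref{eq:projected-local-B-moment}, and you handle those cheap terms correctly.

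The trouble is your treatment of that last term. Expanding $d^j$ through \eqref{eq:d-recursion}, iterating $z^j$, and then comparing degree-five or degree-six polynomial expectations across ensembles is unnecessary. As written, the endgame is also not sound. After the expansion, the ``trace parts'' are not products of $\beta_{i,0}$'s alone: they carry factors $\tom_{\ell,j}(u)$ and $\eta^j_i(u)$. Their conditional expectations therefore involve moments of $u$ of order three and higher, which do \emph{not} agree between $f_i$ and $\widetilde f_i$, so Lemma~\ref{lem:one-column-moments} cannot match them. You also misplace the scales. With $\norm{M}_\Frob=\CO(1)$, Lemma~\ref{lem:refined-centered-quad} makes the centered part $\CO(d^{-3/2})$, not $\CO(d^{-1})$; the $\CO(d^{-1})$ scale belongs to the trace part.

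The missing observation is that the cross term is individually small before you expand $d^a$ at all. By \eqref{eq:mbar-linear-combo} and \eqref{eq:s-remainder-def},
\[
\FB_{k,t}(\bar m^k,s^k)=\sum_{r,a=0}^k\beta^r_{i,0}(u)\,d^a(u)\,\FQ_{k,t,r,a}(u,u),\qquad \FQ_{k,t,r,a}(u_1,u_2)\bydef\FB_{k,t}(\CR_{k,r}u_1,\CM_{a,k}u_2).
\]
Here $\FQ_{k,t,r,a}$ is $\CG_i$-measurable with $\norm{\FQ_{k,t,r,a}}_{L^m(\HS)}\le C_m$, by \eqref{eq:projected-local-HS-bound}, \eqref{eq:m-response-op-bound} and Lemma~\ref{lem:hs-composition}. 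Let $M$ be its symmetric matrix from Lemma~\ref{lem:bilinear-matrix}. Then the trace part satisfies $\abs{p^{-1}\Tr M}\le p^{-1/2}\norm{M}_\Frob=\CO(d^{-1})$, and the centered part is $\CO_{L^2}(d^{-3/2})$. Hence $\FQ_{k,t,r,a}(u,u)=\CO_{L^2}(d^{-1})$. Since $\beta^r_{i,0}d^a=\CO_{L^2}(d^{-3/2})$ by Corollary~\ref{cor:scalar-bounds} and Proposition~\ref{prop:bar-alpha}, Cauchy--Schwarz gives $\E\abs{\FB_{k,t}(\bar m^k,s^k)}\le Cd^{-5/2}$ for each $u$ separately. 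No ensemble comparison is needed; this is the paper's argument. It uses only the already-proved bound $\norm{d^a}_{L^m}\le C_md^{-1}$.

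The same observation shows your ``not enough'' verdict on the $\eta$-piece came only from the crude norm bound. With the Hilbert--Schmidt structure that piece is of size $d^{-1/2}\cdot d^{-1}\cdot d^{-1}$, so no iteration of $z^j$ is required.
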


The proof is deferred to Section \ref{subsec:linear-comparison-deferred-proofs}.

We compare $\I^{(2,\lin)}_t$ and $\I^{(\fb, \main)}_t$ individually.

\begin{proposition}\label{prop:projected-z-quad-lin-comp}
    For every $t \le T$, we have
    \begin{equation}\label{eq:I2lin-comp}
        \abs{\E[\I^{(2,\lin)}_t(f_i) - \I^{(2,\lin)}_t(\widetilde f_i)]} \le C d^{-5/2}
    \end{equation}
\end{proposition}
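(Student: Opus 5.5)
We will expand $\I^{(2,\lin)}_t(u)$ using the representation \eqref{eq:mbar-linear-combo} of $\bar m^k(u)$ in terms of $\CG_i$-measurable operators:
\begin{equation}
    \I^{(2,\lin)}_t(u)=\sum_{k=0}^{t-1}\sum_{r,\ell\le k}\beta^r_{i,0}(u)\beta^\ell_{i,0}(u)\,\FB_{k,t}(\CR_{k,r}u,\CR_{k,\ell}u).
\end{equation}
Each summand is the product of two scalar affine functions of $u$ (namely $\beta^r_{i,0}$ and $\beta^\ell_{i,0}$) and a $\CG_i$-measurable quadratic form $u\mapsto u^\top A_{k,r,\ell}u$. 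Here $A_{k,r,\ell}$ is the symmetrized matrix of $(g_1,g_2)\mapsto\FB_{k,t}(\CR_{k,r}g_1,\CR_{k,\ell}g_2)$. By \eqref{eq:projected-local-HS-bound} and \eqref{eq:mbar-linear-combo}, we have $\norm{A_{k,r,\ell}}_{L^m(\Frob)}\le C_m$. Also, $\norm{A_{k,r,\ell}}_{L^m(\op)}\le C_md^{-1/2}$ by \eqref{eq:projected-local-B-moment}. As in the proof of Proposition~\ref{prop:remaining-quadratic}, we split
\begin{equation}
    u^\top A u=\frac1p\Tr A+\big(u^\top A u-\tfrac1p\Tr A\big).
\end{equation}

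\textbf{Trace part.} The trace part contributes $\frac1p\Tr(A)\,\E[\beta^r_{i,0}\beta^\ell_{i,0}\mid\CG_i]$. By \eqref{eq:betar-betal-condition-on-Gi}, this conditional expectation is identical for $f_i$ and $\widetilde f_i$, so these contributions cancel exactly in expectation.

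\textbf{Centered part.} It remains to compare
\begin{equation}
    \E\big[\beta^r_{i,0}(u)\beta^\ell_{i,0}(u)\,\tilde A(u)\big],\qquad \tilde A(u)\bydef u^\top Au-\tfrac1p\Tr A.
\end{equation}
We expand $\beta^r_{i,0}=\kappa_r\inner{u}{a^r_i}+\nu_r\epsilon_i$ and take conditional expectations term by term.

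\emph{Terms with $\epsilon_i^2$.} These reduce to $\nu_r\nu_\ell\epsilon_i^2\,\E[\tilde A(u)\mid\CG_i]=0$ for both ensembles.

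\emph{Cross terms with $\epsilon_i$.} These involve $\E[\inner{u}{a}\tilde A(u)\mid\CG_i]$, which is a third moment of $u$. It vanishes for the Gaussian column. For the rank-one column it is a cubic moment of a degree-two chaos, of size roughly $\norm{a}\norm{A}_\Frob/p^{3/2}$ up to constants. With $\norm{a}\lesssim\sqrt d$ and $\norm{A}_\Frob\lesssim1$, this gives about $d^{-5/2}$. Since $\nu_r\epsilon_i=\CO(d^{-1/2})$, these terms are even smaller.

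\emph{Pure quartic terms.} The main terms are the fourth-moment expressions $\kappa_r\kappa_\ell\,\E[\inner{u}{a^r}\inner{u}{a^\ell}\tilde A(u)\mid\CG_i]$. For Gaussian $u$, Wick's formula gives $\frac{2}{p^2}\,a^{r\top}Aa^\ell$. For the rank-one column, the fourth cumulant has additional pieces. These are controlled by a refined moment bound in the spirit of Lemma~\ref{lem:refined-centered-quad}. We view $\inner{u}{a}=\frac{x^\top \mmat(a)x-\Tr\mmat(a)}{\sqrt{d(d+1)}}$ and express $u^\top Au$ as a fourth-order polynomial in $x$. Computing the joint cumulants by Gaussian (Isserlis) calculus in $x$ shows that every term is bounded by $p^{-2}\norm{a^r}\norm{a^\ell}\norm{A}_{\op}$ or $p^{-2}d^{-1}\norm{a^r}\norm{a^\ell}\norm{A}_\Frob$. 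Both are $\CO(d^{-4}\cdot d\cdot d^{-1/2})=\CO(d^{-7/2})$, well below the $d^{-5/2}$ target. Even the crude bound $\norm{\tilde A}_{L^2}\lesssim d^{-1}\norm{A}_\Frob$ (Corollary~\ref{cor:one-column-moments}) combined with $\beta\sim d^{-1/2}$ gives only $d^{-2}$, one half power short. This is why the exact Gaussian cancellation and the refined rank-one estimate are needed.

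The main obstacle is this last quartic comparison. The rank-one fourth-moment tensor differs from the Gaussian one by non-negligible pieces, which must be shown to pair with $A$ only through $\norm{A}_\op$ or with an extra $d^{-1}$ factor. I would first try to derive this directly from the explicit moment identities for $x^{\otimes 4}$ and the Hilbert–Schmidt bound \eqref{eq:scalarized-local-B-HS}. Summing the finitely many $(k,r,\ell)$ then yields \eqref{eq:I2lin-comp}.
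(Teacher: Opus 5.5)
Your expansion and the exact cancellation of the trace part via \eqref{eq:betar-betal-condition-on-Gi} are exactly what the paper does. The gap is in the centered part. There you leave only a plan, an exact Wick/cumulant comparison between the two ensembles, and the estimates you announce for that plan are not correct in general. Write $u^\top Au=\frac{1}{d(d+1)}\sum_\alpha\lambda_\alpha q_\alpha^2$ as in the proof of Lemma~\ref{lem:refined-centered-quad}, and set $M_a=\mmat(a^r)$, $M_b=\mmat(a^\ell)$, $\Lambda\bydef\sum_\alpha\lambda_\alpha E_\alpha^2$. The rank-one fourth-cumulant correction then contains $\frac{16}{d^2(d+1)^2}\Tr(M_aM_b\Lambda)$. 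Assumption~\ref{ass:signal-init} permits a spike $S^\star\approx\sqrt d\,e_1e_1^\top$, so $M_a\approx M_b\approx\sqrt d\,e_1e_1^\top$. Let $A$ have eigenvalue $(d-1)^{-1/2}$ on the directions $\svecop\big((e_1e_b^\top+e_be_1^\top)/\sqrt2\big)$, $2\le b\le d$, and vanish elsewhere. Then $\norm{A}_\Frob=1$ and $\norm{A}_\op\asymp d^{-1/2}$, consistent with every bound you use. In this configuration $\Tr(M_aM_b\Lambda)\asymp d^{3/2}$, and the correction is of order $d^{-5/2}$, not $\CO(d^{-7/2})$. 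Likewise the third moment $\E[\inner{u}{a^r}\tilde A(u)\mid\CG_i]=\frac{8}{(d(d+1))^{3/2}}\Tr(M_a\Lambda)$ is of order $d^{-2}$ here, not $d^{-5/2}$. The target still holds only because $d^{-5/2}$ is exactly the required scale. So the margins your sketch relies on do not exist, and the step you call the main obstacle is never carried out.

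The missing idea is that the centered part needs no ensemble comparison at all. Lemma~\ref{lem:refined-centered-quad} applies to both $f_i$ and $\widetilde f_i$ and controls the centered quadratic form through the Frobenius norm. It gives $\norm{\tilde A(u)}_{L^2}\le Cd^{-3/2}\norm{A}_{L^4(\Frob)}\le Cd^{-3/2}$, using the Hilbert--Schmidt bound $\norm{A}_{L^m(\Frob)}\le C_m$ that you already derived from \eqref{eq:projected-local-HS-bound}. Corollary~\ref{cor:scalar-bounds} gives $\norm{\beta^r_{i,0}(u)\beta^\ell_{i,0}(u)}_{L^2}\le Cd^{-1}$. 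Cauchy--Schwarz then yields $\E\abs{\beta^r_{i,0}(u)\beta^\ell_{i,0}(u)\tilde A(u)}\le Cd^{-5/2}$ for each ensemble separately. Together with the trace cancellation, this is the paper's entire proof. Your crude estimate fell half a power short only because it paired the operator-norm bound of Corollary~\ref{cor:one-column-moments} with $\norm{A}_\op\lesssim d^{-1/2}$. The Frobenius version with $\norm{A}_\Frob\lesssim1$ gains the extra $d^{-1/2}$.
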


The proof is deferred to Section \ref{subsec:linear-comparison-deferred-proofs}.

\begin{proposition}\label{prop:projected-z-feedback-main-comp}
    For every $t \le T$, we have
    \begin{equation}\label{eq:Ifbmain-comp}
        \abs{\E\left[\I^{(\fb,\main)}_t(f_i) - \I^{(\fb,\main)}_t(\widetilde f_i) \right]} \le Cd^{-5/2}
    \end{equation}
\end{proposition}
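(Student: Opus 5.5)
We expand $\I^{(\fb,\main)}_t(u)$ into a polynomial in $u$ whose coefficients are all $\CG_i$-measurable, and then compare expectations term by term. By \eqref{eq:mbar-linear-combo}, $\bar m^k(u)=\sum_{r\le k}\CR_{k,r}u\,\beta^r_{i,0}(u)$. By \eqref{eq:zeta-quad-lin-def} and \eqref{eq:epsilon-quad-lin-def},
\begin{equation*}
\zeta^k_{\quadt,\lin}(u)=\sum_{\ell<k}\CU_{\ell,k}\,\bf B^\ell\big(\bar m^\ell(u),\bar m^\ell(u)\big).
\end{equation*}
Substituting both into $\FB_{k,t}(\bar m^k,\zeta^k_{\quadt,\lin})$ gives a finite sum, over $r\le k$, $\ell<k$ and $r_1,r_2\le \ell$, of terms of the form
\begin{equation*}
\beta^r_{i,0}\beta^{r_1}_{i,0}\beta^{r_2}_{i,0}\;\mathfrak Q(u,u,u),
\qquad
\mathfrak Q(g_1,g_2,g_3)\bydef\FB_{k,t}\big(\CR_{k,r}g_1,\;\CU_{\ell,k}\bf B^\ell(\CR_{\ell,r_1}g_2,\CR_{\ell,r_2}g_3)\big).
\end{equation*}
Here $\mathfrak Q$ is a $\CG_i$-measurable trilinear form. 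Each $\beta_{i,0}$ is affine in $u$ with $\CG_i$-measurable coefficients, namely $\kappa\langle u,a\rangle+\nu\epsilon_i$, so every summand is a polynomial in $u$ of degree between $3$ and $6$. By the size estimates \eqref{eq:mbar-size} and \eqref{eq:zeta-quad-lin-size}, together with the $d^{-1/2}$ factor in \eqref{eq:projected-local-B-moment}, the whole term is naively of order $d^{-1/2}\cdot d^{-1/2}\cdot d^{-1}=d^{-2}$. We therefore need to gain one additional factor $d^{-1/2}$ from the difference of the two ensembles.

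The first step is to bound the size of $\mathfrak Q$ in Hilbert--Schmidt norm. From \eqref{eq:scalarized-local-B-HS}, the scalarized form $\langle w,\bf B^\ell(\cdot,\cdot)\rangle$ has HS norm at most $C\sqrt d\|w\|$. In $\mathfrak Q$ the outer form pairs against $\vartheta_{k,t}$, which has size $d^{-1/2}$. The inner $\bf B^\ell$ is contracted against the vector $\CU_{\ell,k}^\top\FB_{k,t}(\CR_{k,r}g_1,\cdot)$. I would bound $\|\mathfrak Q\|_\HS$ by an operator-norm/HS mixed estimate of roughly order $\sqrt d\cdot\|\FB_{k,t}\|_\op$, and separately keep the crude operator bound of order $d^{-1/2}$.

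Next comes the conditional-moment computation. Conditioned on $\CG_i$, the expectation of each summand is a linear combination of third-, fourth-, fifth- and sixth-order moment tensors of $u$, each contracted against $\mathfrak Q$ and the vectors $a^r_i$. For the terms that survive after $\epsilon_i$ factors are separated, I would write $u^{\otimes}$ moments via Wick/Isserlis pairings for $\widetilde f_i$, and for $f_i=\svecop(xx^\top-I)/\sqrt{d(d+1)}$ through the Gaussian chaos of $x$. The parts of the $f_i$ moments that coincide with the Gaussian pairings cancel exactly, which is the same mechanism as in Proposition~\ref{prop:leading-linear-barq}. Only the cumulant corrections (the third cumulant and the connected fourth and higher cumulants of the degree-two chaos) remain.

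These corrections are then bounded by contracting against $\mathfrak Q$. The tensor structure of $f_i$ means $\E[\langle A,u\rangle\langle B,u\rangle\langle C,u\rangle]\propto d^{-3}\Tr(\mmat A\,\mmat B\,\mmat C)$ at the scale $p^{-3/2}$. Such contractions are controlled by Frobenius/HS norms with an extra $d^{-1/2}$ relative to the Gaussian pairings, and this is where Lemma~\ref{lem:refined-centered-quad} and the trilinear decomposition (Lemma~\ref{lem:trilinear-decomp}) enter. I would first treat the odd-degree (degree $3$ and $5$) terms, which vanish for $\widetilde f_i$ and are pure third-cumulant contributions of $f_i$. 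I would then handle the even-degree terms, pairing one $\beta_{i,0}$ with $u$ and reducing to centered quadratic forms, for which the refined bound retains $\|\cdot\|_\Frob$ instead of $\sqrt p\|\cdot\|_\op$.

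I expect the main obstacle to be the degree-three cumulant terms, in which $u$ enters all three slots of $\mathfrak Q$ with no $\beta$ damping beyond $d^{-3/2}$. The bound must then come from $|\E\,\mathfrak Q(f_i,f_i,f_i)|\lesssim p^{-3/2}\cdot d^{1/2}\|\mathfrak Q\|_\HS$ (or similar), and closing it requires that $\|\mathfrak Q\|_\HS$ be $O(d^{1/2})$ rather than $O(d)$. I would try to achieve this by routing one slot of $\bf B^\ell$ through the Frobenius-type bound and the other through the operator bound on $\CR$, $\CU$. Moment bounds for all $\CG_i$-measurable coefficients then follow from H\"older's inequality together with \eqref{eq:m-response-op-bound}, \eqref{eq:U-response-bound} and \eqref{eq:projected-local-HS-bound}.
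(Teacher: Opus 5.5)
You set up the right object (your $\mathfrak Q$ is exactly the paper's $\FC_{t,k,\ell,r,a,b}$) and you name the right tools. But you misdiagnose where the missing $d^{-1/2}$ comes from, and the step meant to supply it is never carried out.

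No cancellation between the ensembles is needed. The paper shows $\E\abs{\I^{(\fb,\main)}_t(u)}\le Cd^{-5/2}$ separately for $u=f_i$ and $u=\widetilde f_i$. Your naive $d^{-2}$ treats $\zeta^k_{\quadt,\lin}(u)$ as a generic vector of norm $d^{-1}$. In fact $\mathfrak Q(u,u,u)=\CO_{L^2}(d^{-1})$, not $\CO(d^{-1/2})$, and the argument runs as follows.
\begin{enumerate}
\item The bound \eqref{eq:scalarized-local-B-HS} is the hypothesis \eqref{eq:hs-compose-hyp} with $K=C\sqrt d$, and \eqref{eq:projected-local-HS-bound} gives $\norm{\FB_{k,t}}_{L^m(\HS)}\le C_m$. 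Hence \eqref{eq:hs-compose-trilinear} yields $\norm{\mathfrak Q}_{L^m(\HS)}\le C_m d^{1/2}$. The factor entering this composition bound is the Hilbert--Schmidt norm of $\FB_{k,t}$, which is of order one, not its operator norm of order $d^{-1/2}$. Your ``$\sqrt d\,\norm{\FB_{k,t}}_\op$'' would assert an $\CO(1)$ bound that the lemma does not give.
\item Lemma~\ref{lem:trilinear-decomp} writes $\mathfrak Q(u,u,u)=\inner{b}{u}+\sum_\nu\inner{e_\nu}{u}\big(u^\top M_\nu u-p^{-1}\Tr M_\nu\big)$, with $\norm{b}\lesssim d^{-1/2}$ and $\sum_\nu\norm{M_\nu}_\Frob^2\lesssim d$.
\item Corollary~\ref{cor:one-column-moments}(i) bounds $\inner{b}{u}$ by $d^{-3/2}$ in $L^2$.
\item Cauchy--Schwarz in $\nu$, conditional H\"older, and Lemma~\ref{lem:refined-centered-quad} bound the $\nu$-sum by $d^{-3/2}\cdot d^{1/2}=d^{-1}$, exactly as in \eqref{eq:Yu-L2-bound}.
\item Cauchy--Schwarz against $\norm{\beta^r_{i,0}\beta^{r_1}_{i,0}\beta^{r_2}_{i,0}}_{L^2}\lesssim d^{-3/2}$ (Corollary~\ref{cor:scalar-bounds}) gives $d^{-5/2}$ for every summand, in each ensemble separately.
\end{enumerate}

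Your moment-tensor program is also not just a longer alternative as stated. It rests on the heuristic that every non-Gaussian cumulant correction is smaller than the matching Gaussian pairing by $d^{-1/2}$. This fails under Assumption~\ref{ass:signal-init}, which allows spikes. For $S^\star=\sqrt d\,e_1e_1^\top$, the variable $\inner{f_i}{{\bf s}^*}=(x_1^2-1)/\sqrt{d+1}$ has third cumulant $8(d+1)^{-3/2}$, the same order as its third absolute moment. So cumulant blocks built from the $a^r_i$ gain nothing.

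To control the non-pair partitions of the degree-six terms, you would have to extract $d^{-1}$ from the $\mathfrak Q$-side factor anyway. That is precisely the direct bound above, and once you have it the Gaussian-pairing cancellation is superfluous.

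The same applies to your stated ``main obstacle''. The correct estimate is $\abs{\E\,\mathfrak Q(f_i,f_i,f_i)}\le\norm{\E f_i^{\otimes3}}_\HS\norm{\mathfrak Q}_\HS$ with $\norm{\E f_i^{\otimes3}}_\HS\asymp d^{-3/2}$, not $p^{-3/2}d^{1/2}$ as you wrote. This gives $d^{-1}$, and it again reduces everything to $\norm{\mathfrak Q}_\HS\lesssim d^{1/2}$.
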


The proof is deferred to Section \ref{subsec:linear-comparison-deferred-proofs}. Now we can bound the last term
\begin{proposition}\label{prop:projected-z-complete}
    For any $t \le T$ and any $\lambda_i \in \R^p$ that is $\CG_i$-measurable and satisfies $\norm{\lambda_i} \le C d^{-1/2}$ a.s., we have
    \begin{equation}\label{eq:projected-z-proved}
        \abs{\E \inner{ \lambda_i}{z^t_i(\widetilde f_i)} - \E\inner{\lambda_i}{z^t_i(f_i)}} \le C d^{-5/2}
    \end{equation}
    That is, \eqref{eq:projected-z-needed} of Proposition \ref{prop:remaining-linear} holds
\end{proposition}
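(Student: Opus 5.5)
This is an assembly step. I will start from the exact decomposition \eqref{eq:projected-z-I234}, which writes the projected remainder as
\[
\inner{\lambda_i}{z^t_i(u)}=\I^{(2)}_t(u)+\I^{(3)}_t(u)+\I^{(4)}_t(u)
\]
for every $u\in\R^p$, and in particular for $u\in\{f_i,\widetilde f_i\}$. Taking expectations and subtracting the Gaussian case from the rank-one case gives
\[
\E\inner{\lambda_i}{z^t_i(f_i)}-\E\inner{\lambda_i}{z^t_i(\widetilde f_i)}
=\sum_{j=2}^4\E\big[\I^{(j)}_t(f_i)-\I^{(j)}_t(\widetilde f_i)\big].
\]
Integrability of each term comes from the moment bounds \eqref{eq:projected-local-B-moment}, \eqref{eq:projected-local-T-moment}, \eqref{eq:vector-local-bound}, \eqref{eq:vartheta-bound} and Proposition~\ref{prop:delta-moments}. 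These give $\I^{(2)}_t=\CO_{L^1}(d^{-3/2})$, with the cubic and quartic terms smaller still, so all the expectations are finite and splitting them is legitimate.

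The cubic and quartic differences are handled directly by \eqref{eq:I34-comparison} of Proposition~\ref{prop:projected-z-cubic-quartic}, which bounds them by $Cd^{-5/2}$. For the quadratic difference, Proposition~\ref{prop:projected-z-quad-reduction} lets me replace $\E[\I^{(2)}_t(f_i)-\I^{(2)}_t(\widetilde f_i)]$ by the sum of the differences of $\I^{(2,\lin)}_t$ and $\I^{(\fb,\main)}_t$, at a cost of $Cd^{-5/2}$. Proposition~\ref{prop:projected-z-quad-lin-comp} then bounds the $\I^{(2,\lin)}_t$ difference by $Cd^{-5/2}$, and Proposition~\ref{prop:projected-z-feedback-main-comp} does the same for the $\I^{(\fb,\main)}_t$ difference. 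Combining these with the triangle inequality bounds the left side of \eqref{eq:projected-z-proved} by a finite sum of terms of order $d^{-5/2}$.

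The constants must not depend on $d$ or $i$. I will check this by noting that each cited proposition is stated for a fixed $t\le T$ and for every $\CG_i$-measurable $\lambda_i$ with $\norm{\lambda_i}\le Cd^{-1/2}$, with constants uniform in $i$. Since $T$ is fixed, the sums over $k<t$ contribute only a factor depending on $T$. Finally, \eqref{eq:projected-z-proved} is exactly hypothesis \eqref{eq:projected-z-needed}, so Proposition~\ref{prop:remaining-linear} applies.

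There is no real obstacle at this stage: all the work sits in Propositions~\ref{prop:projected-z-quad-reduction}--\ref{prop:projected-z-feedback-main-comp}. The only point needing care is to state clearly that the expectations may be split term by term, which the $L^m$ bounds already justify.
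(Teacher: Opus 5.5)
Your proposal is correct and follows the paper's own proof: it uses the decomposition \eqref{eq:projected-z-I234} into quadratic, cubic and quartic parts. It then disposes of $\I^{(3)}_t$ and $\I^{(4)}_t$ via Proposition~\ref{prop:projected-z-cubic-quartic}, and reduces $\I^{(2)}_t$ to $\I^{(2,\lin)}_t$ and $\I^{(\fb,\main)}_t$ via Proposition~\ref{prop:projected-z-quad-reduction}, which are compared by Propositions~\ref{prop:projected-z-quad-lin-comp} and \ref{prop:projected-z-feedback-main-comp}. Your added remarks on integrability, which justify splitting the expectations, and on uniformity of constants in $d$ and $i$ are sound and only make explicit what the paper leaves implicit.
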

\begin{proof}[Proof of Proposition \ref{prop:projected-z-complete}]
    By \eqref{eq:projected-z-I234}, with $\I^{(2)}_t$, $\I^{(3)}_t$, and $\I^{(4)}_t$ defined in \eqref{eq:I234-def}, we have
    \begin{equation}
        \inner{\lambda_i}{z^t_i(u)} = \I^{(2)}_t(u) + \I^{(3)}_t(u) + \I^{(4)}_t(u)
    \end{equation}
    Proposition \ref{prop:projected-z-cubic-quartic} shows that the cubic and quartic pieces contribute $\CO(d^{-5/2})$. Proposition \ref{prop:projected-z-quad-reduction} reduces the quadratic term to $\I^{(2,\lin)}_t$ and $\I^{(\fb,\main)}_t$, and Proposition \ref{prop:projected-z-quad-lin-comp} and Proposition \ref{prop:projected-z-feedback-main-comp} bound the differences of these between GOE and rank-one models in $\CO(d^{-5/2})$. Therefore, we get \eqref{eq:projected-z-proved}
\end{proof}
\begin{proof}[Proof of Proposition \ref{prop:linear-comparison-complete}]
    This follows from Propositions~\ref{prop:remaining-linear} and \ref{prop:projected-z-complete}.
\end{proof}

\subsection{Deferred Proofs}\label{subsec:linear-comparison-deferred-proofs}

We now prove the statements whose proofs were deferred above, in the order in which they were stated. Here again, $u\in\{f_i,\widetilde f_i\}$ and $\CG_i$ are as in \eqref{eq:global-sensing-matrices} and \eqref{eq:loo-sigma-algebra-def}.

\begin{proof}[Proof of Proposition \ref{prop:leading-linear-barq}]
    Using \eqref{eq:bar-q-linear-combo}, with $\CB_{t,r}$ defined in \eqref{eq:Btr-def}, we have
    \begin{equation}
        \bar q^t(u) = \sum_{r=0}^{t-1} \CB_{t,r}u \ \beta^r_{i,0}(u)
    \end{equation}
    where each $\CB_{t,r}$ is $\CG_i$-measurable. We thus have
    \begin{equation}
        \inner{\lambda_i}{\bar q^t(u)} = \sum_{r=0}^{t-1} \inner{\CB_{t,r}^\top \lambda_i}{u} \beta^r_{i,0}(u)
    \end{equation}
    With $\beta^r_{i,0}$ as defined in \eqref{eq:beta0-def}, write
    \begin{equation}
        a^r_i \,\bydef\, {\bf s}^* - \widehat{\bf s}^r_\smi \,\in\, \R^p, \quad \beta^{r}_{i,0} \,=\, \kappa_r \inner{u}{a^r_i} +\nu_r\epsilon_i
    \end{equation}
    Conditioning on $\CG_i$, using Lemma \ref{lem:one-column-moments}, we have
    \begin{equation}
        \E\left[\inner{\CB_{t,r}^\top\lambda_i}{u}\beta^r_{i,0}(u) \ \big| \ \CG_i \right] = \frac{\kappa_r}{p} \inner{\CB_{t,r}^\top\lambda_i}{a^r_i}
    \end{equation}
    because the covariance matches as $I_p / p$ by \eqref{eq:column-covariance} and the term involving $\nu_r \epsilon_i$ vanishes as $u$ is centered. Notice that the RHS does not depend on $u$, so the conditional expectation is the same for $f_i, \widetilde f_i$. Summing over $r$ proves the claim.
\end{proof}

\begin{proof}[Proof of Proposition \ref{prop:remaining-linear}]
    By \eqref{eq:q-minus-barq-decomp}, with $\CP_{k,t}$, $d^k(u)$, and $z_i^t(u)$ as in \eqref{eq:response-operator-def}, \eqref{eq:d-def} of Proposition \ref{prop:bar-alpha}, and \eqref{eq:z-unrolled}, we have
    \begin{equation}
        q^t(u) - \bar q^t(u) = \sum_{k=0}^{t-1} \CP_{k,t}u \ d^k(u) + z^t_i(u)
    \end{equation}
    \begin{equation}
        \implies\, \inner{ \lambda_i}{q^t(u) - \bar q^t(u)} \,=\, \sum_{k\,=\,0}^{t - 1} \inner{\CP_{k,t}^\top \lambda_i}{u} d^k(u) + \inner{\lambda_i}{z^t_i(u)}
    \end{equation}
    Since we have $\norm{\lambda_i} \le C d^{-1/2}$, use Lemma \ref{lem:response-op-bounds} and Cauchy--Schwarz, we have
    \begin{equation}
        \norm{\CP_{k,t}^\top \lambda_i}_{L^{2m}} \le C_m d^{-1/2}
    \end{equation}
    From Corollary \ref{cor:one-column-moments}, we have
    \begin{equation}
       \norm{\inner{\CP^\top_{k,t}\lambda_i}{u}}_{L^{2m}} \le C_m d^{-3/2}
    \end{equation}
    So together with Proposition \ref{prop:bar-alpha}, we have
    \begin{equation}
        \E\abs{\inner{\CP_{k,t}^\top \lambda_i}{u}d^k(u)} \,\le\, \norm{\inner{\CP^\top_{k,t} \lambda_i}{u}}_{L^2} \norm{d^k(u)}_{L^2} \,\le\, C d^{-5/2}
    \end{equation}
    Since $t \le T$, summing over $k$ and using Proposition \ref{prop:leading-linear-barq}, we obtain \eqref{eq:linear-remainder-reduction}. If \eqref{eq:projected-z-needed} holds, then \eqref{eq:remaining-linear-conclusion} follows immediately from \eqref{eq:linear-remainder-reduction}.
\end{proof}

\begin{proof}[Proof of Lemma \ref{lem:m-representation}]
    Iterating the block recursion \eqref{eq:block-recursion}, with $\beta_i^k(u)$ and $e_i^k(u)$ defined in \eqref{eq:beta-exact} of Proposition \ref{prop:exact-loo} and \eqref{eq:block-error-def}, gives
    \begin{equation}
        \delta^t(u) \,=\, \sum_{k\,=\,0}^{t-1} M_i^{t-1} \cdots M_i^{k + 1} B^k_i u \beta^k_i(u) + \sum_{\ell\,=\,0}^{t-1} M_i^{t-1} \cdots M_i^{\ell + 1}e^\ell_i(u).
    \end{equation}
    Applying $\Gamma_t$ and adding $a_tu\beta_i^t(u)$ gives \eqref{eq:m-exact-unrolled}. The moment bounds for $K_i$ and $M_i^r$ in Lemmas~\ref{lem:sensing-envelope} and \ref{lem:response-op-bounds} give \eqref{eq:m-response-op-bound}.

    Since $e^\ell_i(u) = \Pi_1^\top \varepsilon^\ell_i(u)$ and each $\varepsilon^\ell_i(u) = \CO_{L^m}(d^{-1})$ by Proposition \ref{prop:delta-moments}, the same response-operator bound gives \eqref{eq:zeta-exact-bound}. Now we plug the expansion of $\bar \alpha^t(u)$ from \eqref{eq:bar-alpha-linear-combo} of Proposition \ref{prop:bar-alpha} into \eqref{eq:mbar-def}, with $\CM_{k,t}$ defined in \eqref{eq:m-response-def}, to get
    \begin{equation}
        \begin{aligned}
            \bar m^t(u) \,=\, \sum_{k\,=\,0}^t \sum_{r\,=\,0}^k \CM_{k,t} u c_{k,r} \beta^r_{i,0}(u) \,=\, \sum_{r \,=\, 0}^t \left( \sum_{k\,=\,r}^t c_{k,r} \CM_{k,t} \right) u \beta^r_{i,0}(u).
        \end{aligned}
    \end{equation}
    Therefore, \eqref{eq:mbar-linear-combo} holds with
    \begin{equation}\label{eq:Rtr-def}
        \CR_{t,r} \bydef \sum_{k=r}^t c_{k,r} \CM_{k,t},
    \end{equation}
    which is $\CG_i$-measurable. By subtracting from \eqref{eq:m-exact-unrolled}, we have
    \begin{equation}
        m^t(u) - \bar m^t(u) \,=\, \sum_{k\,=\,0}^t \CM_{k,t} u \left(\beta^k_i(u) - \bar \alpha^k(u) \right) + \zeta^t(u) \,=\, \zeta^t(u) + s^t(u)
    \end{equation}
    which gives \eqref{eq:m-minus-mbar-decomp}. From \eqref{eq:mbar-linear-combo}, Corollary \ref{cor:scalar-bounds}, Lemma \ref{lem:sensing-envelope}, and \eqref{eq:m-response-op-bound}, we get \eqref{eq:mbar-size}. Lastly, \eqref{eq:s-remainder-bound} follows from the definition of $s^t$ in \eqref{eq:s-remainder-def}, Proposition \ref{prop:bar-alpha}, Lemma \ref{lem:sensing-envelope}, and \eqref{eq:m-response-op-bound}.
\end{proof}

\begin{proof}[Proof of Lemma \ref{lem:epsilon-quad-lin}]
    From \eqref{eq:vector-local-bound} and \eqref{eq:mbar-size} of Lemma \ref{lem:m-representation}, we immediately have \eqref{eq:epsilon-quad-lin-size}. For \eqref{eq:epsilon-quad-lin-error}, we write
    \begin{equation}
        m^k(u) = \bar m^k(u) + r^k(u), \quad r^k(u) \bydef s^k(u) + \zeta^k(u)
    \end{equation}
    Then \eqref{eq:m-minus-mbar-decomp} of Lemma \ref{lem:m-representation}, \eqref{eq:s-remainder-bound} of Lemma \ref{lem:m-representation}, and \eqref{eq:zeta-exact-bound} of Lemma \ref{lem:m-representation} give
    \begin{equation}\label{eq:r-m-size}
        \norm{r^k(u)}_{L^m} \le C_m d^{-1}
    \end{equation}
    By \eqref{eq:epsilon-third-order-vector}, with $\bf B^k,\bf T^k$ and $\rho_{i,k}^{(4)}$ defined in \eqref{eq:vector-local-def} and \eqref{eq:vector-local-remainder-def}, we have
    \begin{equation}
        \begin{aligned}
            \varepsilon^k_i(u) - \varepsilon^k_{i,\quadt,\lin}(u) &= 2 \bf B^k(\bar m^k(u), r^k(u)) + \bf B^k(r^k(u), r^k(u)) \\
            \ &+ \bf T^k(m^k(u), m^k(u), m^k(u)) + \rho^{(4)}_{i,k}(m^k(u))
        \end{aligned}
    \end{equation}
    Now apply \eqref{eq:vector-local-bound}, \eqref{eq:mbar-size} of Lemma \ref{lem:m-representation}, \eqref{eq:r-m-size}, and Proposition \ref{prop:delta-moments}, we have
    \begin{equation}
        2\norm{\bf B^k(\bar m^k(u), r^k(u))}_{L^m} \,\le\, C_m d^{-3/2}, \quad \norm{\bf B^k(r^k(u), r^k(u))}_{L^m} \,\le\, C_m d^{-2}
    \end{equation}
    \begin{equation}
        \norm{\bf T^k(m^k(u), m^k(u), m^k(u))}_{L^m} \,\le\, C_m d^{-3/2},\quad \norm{\rho^{(4)}_{i,k}(m^k)}_{L^m} \,\le\, C_m d^{-2}
    \end{equation}
    So the conclusion follows.
\end{proof}

\begin{proof}[Proof of Proposition \ref{prop:projected-z-cubic-quartic}]
    With $\vartheta_{k,t}$ defined in \eqref{eq:vartheta-def}, for the quartic term we have from Proposition \ref{prop:delta-moments}, \eqref{eq:vector-local-bound}, and \eqref{eq:vartheta-bound} of Lemma \ref{lem:projected-z-representation} that
    \begin{equation}
        \E\abs{\I^{(4)}_t(u)} \,\le\, \sum_{k\,=\,0}^{t-1} \norm{\vartheta_{k,t}}_{L^2} \norm{\rho^{(4)}_{i,k}(m^k(u))}_{L^{2}} \,\le\, C d^{-1/2}d^{-2} \,\le\, Cd^{-5/2}
    \end{equation}
    uniformly in $u \in \{f_i, \widetilde f_i\}$. For the cubic term, we write
    \begin{equation}
        m^k(u) = \bar m^k(u) + r^k(u), \quad r^k(u) = s^k(u) + \zeta^k(u)
    \end{equation}
    so by \eqref{eq:zeta-exact-bound} of Lemma \ref{lem:m-representation}, \eqref{eq:mbar-size} of Lemma \ref{lem:m-representation}, and \eqref{eq:s-remainder-bound} of Lemma \ref{lem:m-representation}, we have
    \begin{equation}
        \norm{\bar m^k(u)}_{L^m} \le C_m d^{-1/2}, \quad \norm{r^k(u)}_{L^m} \le C_m d^{-1}
    \end{equation}
    Now, with $\FT_{k,t}$ defined in \eqref{eq:projected-BT-def}, we expand
    \begin{equation}
        \begin{aligned}
            &\FT_{k,t}(m^k(u), m^k(u), m^k(u)) - \FT_{k,t}(\bar m^k(u), \bar m^k(u), \bar m^k(u)) \\
            =& \FT_{k,t}(r^k, \bar m^k, \bar m^k) + \FT_{k,t}(\bar m^k, r^k, \bar m^k) + \FT_{k,t}(\bar m^k, \bar m^k, r^k) + \FT_{k,t}(r^k,r^k,\bar m^k) \\
            +& \FT_{k,t}(r^k, \bar m^k, r^k) + \FT_{k,t}(\bar m^k, r^k, r^k) + \FT_{k,t}(r^k,r^k, r^k)
        \end{aligned}
    \end{equation}
    where we have suppressed the argument $u$. Each term contains at least one factor of $r^k(u)$. Applying \eqref{eq:projected-local-T-moment} and \eqref{eq:vartheta-bound} gives
    \begin{equation}
        \begin{aligned}
            &\norm{\FT_{k,t}(m^k(u), m^k(u), m^k(u)) - \FT_{k,t}(\bar m^k(u), \bar m^k(u), \bar m^k(u))}_{L^1} \\
            &\le\, C d^{-1/2}\left( \norm{\bar m^k(u)}_{L^4}^2 \norm{r^k(u)}_{L^4} + \norm{\bar m^k(u)}_{L^4} \norm{r^k(u)}_{L^4}^2 + \norm{r^k(u)}_{L^4}^3 \right)  \,\le\, Cd^{-5/2}
        \end{aligned}
    \end{equation}
    Therefore, it remains to bound $\FT_{k,t}(\bar m^k, \bar m^k, \bar m^k)$. By \eqref{eq:mbar-linear-combo} of Lemma \ref{lem:m-representation}, with $\CR_{k,r}$ defined in \eqref{eq:Rtr-def}, we have $\bar m^k(u) = \sum_{r=0}^k \CR_{k,r} u \beta^r_{i,0}(u)$. So we can write
    \begin{equation}\label{eq:cubic-main-expand}
        \FT_{k,t} (\bar m^k, \bar m^k, \bar m^k) \,=\, \sum_{r,s,q \,=\, 0}^k \beta^r_{i,0}(u) \beta^s_{i,0}(u) \beta^q_{i,0}(u) \FC_{k,t,r,s,q}(u,u,u)
    \end{equation}
    where the trilinear form $\FC$ is defined by
    \begin{equation}
        \FC_{k,t,r,s,q}(u_1, u_2, u_3) \bydef \FT_{k,t}(\CR_{k,r} u_1, \CR_{k,s} u_2, \CR_{k,q} u_3)
    \end{equation}
    Since $k\le T$ is fixed, the sum contains only a fixed number of terms, so it suffices to bound each summand separately. We will decompose $\FC_{k,t,r,s,q}(u,u,u)$ into a linear term and a centered quadratic-form term, and then combine their $L^2$ bounds with $\beta^r_{i,0}(u)\beta^s_{i,0}(u)\beta^q_{i,0}(u)=\CO_{L^2}(d^{-3/2})$ using Cauchy--Schwarz. Using \eqref{eq:mbar-linear-combo} of Lemma \ref{lem:m-representation}, \eqref{eq:projected-local-HS-bound}, and Lemma \ref{lem:hs-composition}, we have
    \begin{equation}\label{eq:cubic-main-HS}
        \norm{\FC_{k,t;r,s,q}}_{L^m(\mathrm{HS})}\le C_m d^{1/2}.
    \end{equation}
    We apply Lemma \ref{lem:trilinear-decomp} to get
    \begin{equation}
        \FC_{k,t,r,s,q}(u,u,u) \,=\, \inner{b}{u} + \sum_{\nu\,=\,1}^p \inner{e_\nu}{u} \left( u^\top M_\nu u - \frac 1 p \Tr(M_\nu) \right)
    \end{equation}
    From \eqref{eq:cubic-main-HS}, Lemma \ref{lem:trilinear-decomp} also gets
    \begin{equation}\label{eq:b-M-sum-bound}
        \norm{b}_{L^m} \,\le\, C_m d^{-1/2}, \quad \norm{\sum_{\nu\,=\,1}^p \norm{M_\nu}_\Frob^2}_{L^m} \,\le\, C_m d
    \end{equation}
    For the $\inner{b}{u}$ term, we have from Corollary \ref{cor:one-column-moments}(i),
    \begin{equation}
        \norm{\inner{b}{u}}_{L^2} \le Cd^{-3/2}
    \end{equation}
    while Corollary \ref{cor:scalar-bounds} implies
    \begin{equation}
        \norm{\beta^r_{i,0}(u) \beta^s_{i,0}(u) \beta^q_{i,0}(u)}_{L^2} \le C d^{-3/2}
    \end{equation}
    Therefore, Cauchy--Schwarz implies
    \begin{equation}
        \E \abs{\beta^r_{i,0}(u)\beta^s_{i,0}(u) \beta^q_{i,0}(u) \inner{b}{u} } \le Cd^{-3}
    \end{equation}
    which is $o(d^{-5/2})$ and therefore $\CO(d^{-5/2})$. For the $\nu$-sum, define
    \begin{equation}
        Y(u) \bydef \sum_{\nu=1}^p \inner{e_\nu}{u} \left(u^\top M_\nu u - \frac 1 p \Tr(M_\nu) \right)
    \end{equation}
    By Cauchy--Schwarz, %
    \begin{equation}
        \begin{aligned}
            \abs{Y(u)} &= \abs{\sum_{\nu = 1}^p \inner{e_\nu}{u} \left( u^\top M_\nu u - \frac 1 p \Tr(M_\nu) \right)} \\
            &\le\, \left( \sum_{\nu \,=\, 1}^p \inner{e_\nu}{u}^2 \right)^{1/2} \left( \sum_{\nu \,=\, 1}^p \left| u^\top M_\nu u - \frac 1 p \Tr(M_\nu) \right|^2 \right)^{1/2} \\
            &= \norm{u} \left( \sum_{\nu = 1}^p \abs{u^\top M_\nu u - \frac 1 p \Tr(M_\nu)}^2 \right)^{1/2}
        \end{aligned}
    \end{equation}
    Write the conditional moment norm as
    \begin{equation}
        \norm{X}_{L^r(u | \CG_i)} \bydef \left( \E[|X|^r \ | \ \CG_i] \right)^{1/r}
    \end{equation}
    By conditional H\"older, we have
    \begin{equation}
        \norm{Y(u)}_{L^2(u | \CG_i)} \,\le\, \norm{u}_{L^4(u | \CG_i)} \norm{\left(  \sum_{\nu \,=\, 1}^p \abs{u^\top M_\nu u - \frac 1 p \Tr(M_\nu}^2 \right)^{1/2}}_{L^4(u | \CG_i)}
    \end{equation}
    Since $u$ is independent from $\CG_i$, we have from \ref{lem:sensing-envelope} a.s., $\norm{u}_{L^4(u | \CG_i)} = \norm{u}_{L^4} \le C$. For the second factor, we have by Minkowski,
    \begin{equation}
        \begin{aligned}
            &\norm{\left( \sum_{\nu \,=\, 1}^p \abs{u^\top M_\nu u - \frac 1 p \Tr(M_\nu)}^2 \right)^{1/2}}_{L^4(u | \CG_i)}^2 \,=\, \norm{\sum_{\nu \,=\, 1}^p \abs{u^\top M_\nu u - \frac 1 p \Tr(M_\nu)}^2}_{L^2(u | \CG_i)} \\
            \le\, &\sum_{\nu \,=\, 1}^p \norm{\left(u^\top M_\nu u - \frac 1 p \Tr(M_\nu)\right)^2}_{L^2(u | \CG_i)} \,=\, \sum_{i \,=\, 1}^p \norm{u^\top M_\nu u - \frac 1 p \Tr(M_\nu)}_{L^4(u | \CG_i)}^2
        \end{aligned}
    \end{equation}
    Lemma \ref{lem:refined-centered-quad} implies
    \begin{equation}
        \norm{u^\top M_\nu u - \frac 1 p \Tr(M_\nu)}_{L^4(u | \CG_i)} \le C d^{-3/2} \norm{M_\nu}_\Frob
    \end{equation}
    Therefore, almost surely, we have
    \begin{equation}
        \begin{aligned}
            \norm{Y(u)}_{L^2(u | \CG_i)} &\le\, C\left( \sum_{\nu \,=\, 1}^p \norm{u^\top M_\nu u - \frac 1 p \Tr(M_\nu)}_{L^4(u | \CG_i)}^2 \right)^{1/2} \,\le\, C d^{-3/2} \left( \sum_{\nu \,=\, 1}^p \norm{M_\nu}_\Frob^2 \right)^{1/2}
        \end{aligned}
    \end{equation}
    Taking expectations over $\CG_i$ gives
    \begin{equation}
        \begin{aligned}
            \norm{Y(u)}_{L^2} &=\, \norm{\norm{Y(u)}_{L^2(u | \CG_i)}}_{L^2} \,\le\, C d^{-3/2} \norm{\left( \sum_{\nu \,=\, 1}^p \norm{M_\nu}_\Frob^2 \right)^{1/2}}_{L^2} \,=\, Cd^{-3/2} \norm{\sum_{\nu \,=\, 1}^p \norm{M_\nu}_\Frob^2}_{L^1}^{1/2}
        \end{aligned}
    \end{equation}
    Using \eqref{eq:b-M-sum-bound} with $m = 1$ gives
    \begin{equation}\label{eq:Yu-L2-bound}
        \norm{Y(u)}_{L^2} \le C \cdot d^{-3/2} \cdot d^{1/2} \le Cd^{-1}
    \end{equation}
    Now since $\beta^r_{i,0}\beta^s_{i,0}\beta^q_{i,0} = \CO_{L^2}(d^{-3/2})$, we get
    \begin{equation}
        \E\abs{\beta^r_{i,0}(u) \beta^s_{i,0}(u) \beta^q_{i,0}(u) Y(u)} \le C d^{-5/2}
    \end{equation}
    Since the sum in \eqref{eq:cubic-main-expand} is over a fixed number of terms, we have $\CO(d^{-5/2})$ contribution, which proves \eqref{eq:I34-comparison}.
\end{proof}

\begin{proof}[Proof of Proposition \ref{prop:projected-z-quad-reduction}]
    The sums in \eqref{eq:I234-def} and \eqref{eq:quadratic-main-terms} contain a fixed number of terms, so we bound each term separately. Fix $k<t$ and abbreviate
    $$\bar m \bydef \bar m^k(u),\quad s = s^k(u),\quad z = \zeta^k_{\quadt,\lin}(u), \quad \tilde z = \tilde \zeta^k(u)
    $$
    Then we have $m^k(u) = \bar m + s + z + \tilde z$ by \eqref{eq:m-minus-mbar-decomp} of Lemma \ref{lem:m-representation} and \eqref{eq:zeta-quad-lin-def}. Expanding the bilinear form gives us
    \begin{equation}\label{eq:quad-reduction-expand}
            \begin{aligned}
            \FB_{k,t}(m^k(u),m^k(u))
            &=\FB_{k,t}(\bar m,\bar m)
            +2\FB_{k,t}(\bar m,z)
            +2\FB_{k,t}(\bar m,s)
            +2\FB_{k,t}(\bar m,\tilde z)\\
            &+\FB_{k,t}(s,s)
            +2\FB_{k,t}(s,z)
            +2\FB_{k,t}(s,\tilde z)\\
            &+\FB_{k,t}(z,z)
            +2\FB_{k,t}(z,\tilde z)
            +\FB_{k,t}(\tilde z,\tilde z).
        \end{aligned}
    \end{equation}
    We will show that only the first two terms on the RHS matter. By \eqref{eq:mbar-size} of Lemma \ref{lem:m-representation}, \eqref{eq:s-remainder-bound} of Lemma \ref{lem:m-representation}, \eqref{eq:zeta-quad-lin-size}, and \eqref{eq:zeta-quad-lin-error}, we have
    \begin{equation}
        \norm{\bar m}_{L^m} \,\le\, C_m d^{-1/2}, \quad \norm{s}_{L^m} + \norm{z}_{L^m} \,\le\, C_m d^{-1}, \quad \norm{\tilde z}_{L^m} \,\le\, C_m d^{-3/2}
    \end{equation}
    Observe that every term in \eqref{eq:quad-reduction-expand} that contains any two factors among $\{s, z, \tilde z\}$, possibly repeated, has $L^1$-norm bounded by $\CO(d^{-5/2})$ by \eqref{eq:projected-local-B-moment}. Terms with $\bar m$ mixed with $\tilde z$ also satisfy $\CO(d^{-5/2})$ bound. In particular these terms are bounded by H\"older
    \begin{equation}
        C d^{-1/2}\times
        \begin{cases}
            \norm{\bar m}_{L^3}\norm{\tilde z}_{L^3},\\
            \norm{s}_{L^3}\norm{s}_{L^3},\\
            \norm{s}_{L^3}\norm{z}_{L^3},\\
            \norm{z}_{L^3}\norm{z}_{L^3},
        \end{cases} = \CO(d^{-5/2})
    \end{equation}
    Therefore, it remains only to control $\FB_{k,t}(\bar m,s)$. Recall the formula for $\bar m^t, s^t$ from \eqref{eq:mbar-linear-combo} of Lemma \ref{lem:m-representation} and \eqref{eq:s-remainder-def} of Lemma \ref{lem:m-representation}, with $\beta^r_{i,0}(u)$, $d^a(u)$, and $\CM_{a,k}$ defined in \eqref{eq:beta0-def}, \eqref{eq:d-def} of Proposition \ref{prop:bar-alpha}, and \eqref{eq:m-response-def}; we have
    \begin{equation}
        \FB_{k,t}(\bar m, s) = \sum_{r,a=0}^k \beta^r_{i,0}(u) d^a(u) \FQ_{k,t,r,a}(u,u)
    \end{equation}
    where
    \begin{equation}
        \FQ_{k,t,r,a}(u_1, u_2) \bydef \FB_{k,t}(\CR_{k,r}u_1, \CM_{a,k}u_2)
    \end{equation}
    By \eqref{eq:projected-local-HS-bound}, \eqref{eq:mbar-linear-combo} of Lemma \ref{lem:m-representation}, \eqref{eq:m-response-op-bound} of Lemma \ref{lem:m-representation}, and Lemma \ref{lem:hs-composition}, we get
    \begin{equation}
        \norm{\FQ_{k,t,r,a}}_{L^m(\HS)} \le C_m
    \end{equation}
    Let $M_{k,t,r,a}$ be the matrix representation of $\FQ$ as in Lemma \ref{lem:bilinear-matrix}, so that $\norm{M_{k,t,r,a}}_{L^m(\Frob)} \le C_m$. Now we fix the $(k,t,r,a)$ subscript and suppress it. Write
    \begin{equation}
        \FQ(u,u) = \frac 1 p \Tr(M) + \left(u^\top M u - \frac 1 p \Tr(M) \right)
    \end{equation}
    We have from Corollary \ref{cor:scalar-bounds} and Proposition \ref{prop:bar-alpha}(ii), and Cauchy--Schwarz
    \begin{equation}
        \norm{\beta^r_{i,0}(u) d^a(u)}_{L^2} \,\le\, Cd^{-3/2}, \quad \norm{\frac 1 p \Tr(M)}_{L^2} \,\le\, p^{-1/2}\norm{M}_{L^2(\Frob)} \,\le\, C d^{-1}
    \end{equation}
    We see that the trace part contributes $\CO(d^{-5/2})$ to $\FB_{k,t}(\bar m, s)$ and thus is negligible. But by Lemma \ref{lem:refined-centered-quad}, we have
    \begin{equation}
        \norm{u^\top M u - \frac 1 p \Tr(M)}_{L^2} \le Cd^{-3/2}
    \end{equation}
    So the centered quadratic form contributes $\CO(d^{-3})$. Therefore, we have
    \begin{equation}
        \E \abs{\FB_{k,t}(\bar m, s)} \le C d^{-5/2}
    \end{equation}
\end{proof}

\begin{proof}[Proof of Proposition \ref{prop:projected-z-quad-lin-comp}]
    By \eqref{eq:mbar-linear-combo} of Lemma \ref{lem:m-representation}, with $\CR_{k,r}$ defined in \eqref{eq:Rtr-def}, we have
    \begin{equation}
        \I^{(2,\lin)}_t(u) \,=\, \sum_{k\,=\,0}^{t-1}\sum_{r,s\,=\,0}^k \beta^r_{i,0}(u)\beta^s_{i,0}(u) \FQ_{k,t,r,s}(u,u)
    \end{equation}
    with bilinear form $\FQ_{k,t,r,s}(u_1, u_2) \bydef \FB_{k,t}(\CR_{k,r} u_1, \CR_{k,s} u_2) $. By \eqref{eq:projected-local-HS-bound}, \eqref{eq:mbar-linear-combo} of Lemma \ref{lem:m-representation}, and Lemma \ref{lem:hs-composition}, we have that $\norm{\FQ_{k,t,r,s}}_{L^m}\le C_m$. Now let $M_{k,t,r,s}$ be the symmetric matrix representation of $\FQ_{k,t,r,s}$ from Lemma \ref{lem:bilinear-matrix} Then we have
    \begin{equation}
        \norm{M_{k,t,r,s}}_{L^m(\Frob)} \le \norm{\FQ_{k,t,r,s}}_{L^m(\HS)} \le C_m
    \end{equation}
    Write
    \begin{equation}
        \FQ_{k,t,r,s}(u,u) \,=\, \frac 1 p \Tr(M_{k,t,r,s}) + \tilde m_{k,t,r,s}(u), \quad \tilde m_{k,t,r,s}(u) \,\bydef\, \left(u^\top M_{k,t,r,s} u - \frac 1 p \Tr(M_{k,t,r,s}) \right)
    \end{equation}
    By Lemma \ref{lem:refined-centered-quad}, we have $\norm{\tilde m_{k,t,r,s}(u)}_{L^2} \le C d^{-3/2}$. We thus write
    \begin{equation}
        \I^{(2,\lin)}_t(u) = \sf T^\lin(u) + \sf R^\lin(u)
    \end{equation}
    where
    \begin{equation}
        \begin{aligned}
            \sf T^\lin(u) &\bydef\, \sum_{k\,=\,0}^{t-1} \sum_{r,s\,=\,0}^k \frac 1p \Tr(M_{k,t,r,s}) \beta^r_{i,0}\beta^s_{i,0}(u), \quad\sf R^\lin(u) \,\bydef\, \sum_{k\,=\,0}^{t-1} \sum_{r,s\,=\,0}^k \beta^r_{i,0}(u) \beta^s_{i,0}(u) \tilde m_{k,t,r,s}(u)
        \end{aligned}
    \end{equation}
    Recall from \eqref{eq:betar-betal-condition-on-Gi} that the value of $\E[\beta^r_{i,0}(u) \beta^s_{i,0}(u) \ | \ \CG_i]$ does not depend on whether $u$ is $f_i$ or $\widetilde f_i$. Therefore, we have
    \begin{equation}
        \E[\sf T^\lin(f_i) - \sf T^\lin(\widetilde f_i) \ | \ \CG_i] = 0
    \end{equation}
    For $\sf R^\lin$, we have from Corollary \ref{cor:scalar-bounds}
    \begin{equation}
        \norm{\beta^r_{i,0}(u) \beta^s_{i,0}(u)}_{L^2} \le C d^{-1}
    \end{equation}
    so we have
    \begin{equation}
        \E \abs{\sf R^\lin(u)} \le C d^{-1} d^{-3/2} \le C d^{-5/2}
    \end{equation}
    uniformly in $u \in \{f_i, \widetilde f_i\}$. This proves \eqref{eq:I2lin-comp}.
\end{proof}

\begin{proof}[Proof of Proposition \ref{prop:projected-z-feedback-main-comp}]
    Fix $k < t$. By the definitions \eqref{eq:zeta-quad-lin-def}, \eqref{eq:epsilon-quad-lin-def} of Lemma \ref{lem:epsilon-quad-lin}, \eqref{eq:U-propagator-def}, and \eqref{eq:mbar-linear-combo} of Lemma \ref{lem:m-representation}, with $\bf B^\ell$ defined in \eqref{eq:vector-local-def}, we have
    \begin{equation}
        \zeta^k_{\quadt,\lin}(u) \,=\, \sum_{\ell\,=\,0}^{k-1} \CU_{\ell, k} \varepsilon^\ell_{i, \quadt, \lin}(u) \,=\, \sum_{\ell \,=\, 0}^{k-1} \sum_{a,b\,=\,0}^\ell \beta^a_{i,0}(u)\beta^b_{i,0}(u) \CU_{\ell, k}\bf B^\ell(\CR_{\ell, a} u, \CR_{\ell, b} u)
    \end{equation}
    where $\varepsilon^\ell_{i,\quadt, \lin}$ are defined in Lemma \ref{lem:epsilon-quad-lin}. Therefore,
    \begin{equation}\label{eq:Ifbmain-expand}
        \I^{(\fb,\main)}_t(u) \,=\, 2 \sum_{k\,=\,0}^{t-1} \sum_{\ell \,=\, 0}^{k - 1} \sum_{r \,=\, 0}^k \sum_{a,b\,=\,0}^\ell \beta^r_{i,0}(u) \beta^a_{i,0}(u) \beta^b_{i,0}(u) \FC_{t,k,\ell,r,a,b}(u,u,u)
    \end{equation}
    where
    \begin{equation}
        \FC_{t,k,\ell,r,a,b}(u_1,u_2,u_3) \bydef \FB_{k,t}(\CR_{k,r}u_1, \CU_{\ell, k}\bf B^\ell(\CR_{\ell, a}u_2, \CR_{\ell, b}u_3))
    \end{equation}
    where $\FB_{k,t}$ is the bilinear form in \eqref{eq:projected-BT-def}. We claim that $\norm{\FC_{t,k,\ell,r,a,b}}_{L^m(\HS)} \le C_m d^{1/2}$. To see this, define bilinear form on $\R^p$
    \begin{equation}
        \FA_{t,k,\ell,r}(x,y) \bydef \FB_{k,t}(\CR_{k,r} x, \CU_{\ell, k}y)
    \end{equation}
    By \eqref{eq:scalarized-local-B-HS}, the bilinear map $\bf B^\ell$ satisfies \eqref{eq:hs-compose-hyp} with $K=C\sqrt d$. The composition bound \eqref{eq:hs-compose-trilinear} therefore gives
    \begin{equation}
        \norm{\FC_{t,k,\ell,r,a,b}}_\HS \le C \sqrt{d} \norm{\FB_{k,t}}_\HS \norm{\CR_{k,r}}_\op \norm{\CU_{\ell,k}}_\op \norm{\CR_{\ell,a}}_\op \norm{\CR_{\ell,b}}_\op
    \end{equation}
    So by H\"older, \eqref{eq:projected-local-HS-bound}, \eqref{eq:mbar-linear-combo} of Lemma \ref{lem:m-representation}, and \eqref{eq:U-response-bound}, we have
    \begin{equation}
        \norm{\FC_{t,k,\ell,r,a,b}}_{L^m(\HS)} \le C_m d^{1/2}
    \end{equation}
    So by H\"older and \eqref{eq:mbar-linear-combo} of Lemma \ref{lem:m-representation}, we have
    \begin{equation}
        \norm{\FC}_{L^m(\HS)} \,\le\, C \sqrt{d} \norm{\FA}_{L^{3m}(\HS)} \norm{\CR_{\ell, a}}_{L^{3m}(\op)} \norm{\CR_{\ell, b}}_{L^{3m}(\op)} \,\le\, C d^{1/2}
    \end{equation}
    Applying Lemma~\ref{lem:trilinear-decomp} to each $\FC$ gives
    \begin{equation}
        \FC(u,u,u) = \inner{b}{u} + \sum_{\nu = 1}^p \inner{e_\nu}{u} \left(u^\top M_\nu u - \frac 1p \Tr M_\nu \right)
    \end{equation}
    Using the fact we just proved that $\norm{\FC_{t,k,\ell,r,a,b}}_{L^m(\HS)} \le C_m d^{1/2}$, we have from the same Lemma \ref{lem:trilinear-decomp}
    \begin{equation}
        \norm{b}_{L^m} \,\le\, C_m d^{-1/2}, \quad \norm{\sum_{\nu \,=\, 1}^p \norm{M_\nu}_\Frob^2}_{L^m} \,\le\, C_m d
    \end{equation}
    Therefore, we apply Corollary \ref{cor:one-column-moments} and get $\norm{\inner{b}{u}}_{L^2} \le C d^{-3/2}$ and also similar to the proof of Proposition \ref{prop:projected-z-cubic-quartic}, in line \eqref{eq:Yu-L2-bound},
    \begin{equation}
        \norm{\sum_{\nu = 1}^p \inner{e_\nu}{u}\left(u^\top M_\nu u - \frac 1 p \Tr(M_\nu) \right)}_{L^2} \le C d^{-1}
    \end{equation}
    Since the product of the scalar $\beta$'s satisfies
    \begin{equation}
        \norm{\beta^r_{i,0}(u) \beta^a_{i,0}(u) \beta^b_{i,0}(u)}_{L^2} \le C d^{-3/2}
    \end{equation}
    we see that every term in \eqref{eq:Ifbmain-expand} has expectation of order $\CO(d^{-5/2})$, we sum over finitely many of them and get \eqref{eq:Ifbmain-comp}.
\end{proof}

\section{Concentration of AMP Observables}\label{sec:concentration}
We bound the fluctuations of each observable around its conditional expectation given $\CB_0 = \sigma(S^\star, \hS^0)$. Consider a sensing matrix
\begin{equation}
    F^\sfe = [u_1, \ldots, u_n] \in \R^{p \times n}
\end{equation}
and run the vectorized recursion in Section~\ref{sec:proof-main}. Denote the iterates by $\hbs_\sfe^t$ and $\hS^t_\sfe = \mmat(\hbs^t_\sfe)$. Here $\hbs^t_\smi$ denotes the leave-one-out iterate within this ensemble, whereas in the Lindeberg comparison it denoted the leave-one-out iterate of the interpolating matrix.

\begin{assumption}\label{ass:concentration-columns}
    The vectors $u_1, \ldots, u_n \in \R^p$ are independent and independent of $(S^\star, \hS^0, \epsilon)$, and satisfy
    \begin{equation}\label{eq:concentration-isotropy}
        \E u_i \,=\, 0, \qquad \E u_i u_i^\top \,=\, \frac{I_p}{p}, \qquad i \,\in\, [n].
    \end{equation}
    Further, there exists $C < \infty$ such that
    \begin{equation}\label{eq:concentration-subexp}
        \sup_{i \,\le\, n} \sup_{a \,\in\, \BS^{p - 1}} \norm{\sqrt{p}\, \inner{u_i}{a}}_{\psi_1} \,\le\, C.
    \end{equation}
    Let
    \begin{equation}\label{eq:concentration-envelope-def}
        K_i^\sfe \bydef \sum_{j \ne i} u_j u_j^\top.
    \end{equation}
    Assume also that, for every integer $m \ge 1$, there exists $C_m < \infty$ independent of $d$ and $i$ such that
    \begin{equation}\label{eq:concentration-envelope-moment}
        \norm{K_i^\sfe}_{L^m(\op)} \le C_m,
    \end{equation}
    and, for every deterministic $A \in \R^{p \times p}$,
    \begin{equation}\label{eq:concentration-rotational}
        \norm{u_i^\top A u_i - \frac{1}{p} \Tr A}_{L^m} \,\le\, C_m d^{-1/2} \norm{A_s}_\op, \qquad A_s \,\bydef\, \frac{A + A^\top}{2}.
    \end{equation}
\end{assumption}

\begin{lemma}\label{lem:moment-efron-stein}
    Let $Y \in \R^r$ and $X_i \in \R^{k_i}$, $1 \le i \le N$, be mutually independent. Let $X_1', \ldots, X_N'$ be independent copies, independent of $(Y, X_1, \ldots, X_N)$. Let
    \begin{equation}
        Z \,=\, \Phi(Y, X_1, \ldots, X_N) \,\in\, L^q, \qquad q \,\ge\, 2,
    \end{equation}
    and set
    \begin{equation}
        Z_i' = \Phi(Y, X_1, \ldots, X_{i - 1}, X_i', X_{i + 1}, \ldots, X_N).
    \end{equation}
    Define
    \begin{equation}\label{eq:conditional-replacement-variance}
        V^\pm \bydef \E\left[\left.\sum_{i = 1}^N (Z - Z_i')_\pm^2\,\right|\,Y, X_1, \ldots, X_N\right].
    \end{equation}
    For some fixed $\kappa < \infty$, we have
    \begin{equation}\label{eq:moment-efron-stein-one-sided}
        \norm{(Z - \E[Z \mid Y])_\pm}_{L^q} \le \sqrt{2\kappa q}\, \norm{V^\pm}_{L^{q/2}}^{1/2}.
    \end{equation}
    Consequently,
    \begin{equation}\label{eq:moment-efron-stein-increments}
        \norm{Z - \E[Z \mid Y]}_{L^q} \,\le\, 2^{1/q}\sqrt{2\kappa q}\left(\sum_{i \,=\, 1}^N \norm{Z - Z_i'}_{L^q}^2\right)^{1/2}.
    \end{equation}
\end{lemma}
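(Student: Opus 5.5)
This is a conditional form of the Boucheron--Bousquet--Lugosi--Massart exponential/moment Efron--Stein inequality. The plan is to freeze $Y$ and apply the unconditional moment inequality from \cite{BoucheronStephane2005MIfF} to the function $x\mapsto\Phi(Y,x)$ of the independent blocks $X_1,\ldots,X_N$.

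\textbf{Reduction to fixed $Y$.} Because $Y,X_1,\ldots,X_N,X_1',\ldots,X_N'$ are mutually independent, a regular conditional distribution given $Y=y$ leaves the law of $(X,X')$ unchanged. For $\mathcal L(Y)$-a.e.\ $y$, set $Z_y\bydef\Phi(y,X_1,\ldots,X_N)$. Then $\E[Z\mid Y=y]=\E Z_y$, and $V^\pm$ evaluated at $Y=y$ is exactly the Efron--Stein variance proxy of $Z_y$ with respect to independent resampling of the blocks $X_i$. Fubini shows that $Z_y\in L^q$ for a.e.\ $y$.

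\textbf{Unconditional step.} We invoke \cite[Theorem~2]{BoucheronStephane2005MIfF}: for $q\ge2$,
\[
\norm{(Z_y-\E Z_y)_\pm}_{L^q}\le\sqrt{2\kappa q}\,\norm{V^\pm_y}_{L^{q/2}}^{1/2}
\]
with a universal $\kappa$. That theorem is stated for functions of independent variables in general measurable spaces, so the vector blocks $X_i\in\R^{k_i}$ are admissible. We raise this to the $q$th power, obtaining $\E[(Z-\E[Z\mid Y])_\pm^q\mid Y=y]\le(2\kappa q)^{q/2}\,\E[(V^\pm)^{q/2}\mid Y=y]$. Integrating over $y$ and taking $q$th roots gives \eqref{eq:moment-efron-stein-one-sided}. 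The only measurability issue is that $y\mapsto\E[(V^\pm)^{q/2}\mid Y=y]$ must be measurable, and this follows from Fubini.

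\textbf{Increment form.} Write $W\bydef Z-\E[Z\mid Y]$ and note $|W|^q=(W_+)^q+(W_-)^q$. Then
\[
\norm{W}_{L^q}^q\le 2\max_\pm\norm{W_\pm}_{L^q}^q\le 2(2\kappa q)^{q/2}\max_\pm\norm{V^\pm}_{L^{q/2}}^{q/2}.
\]
To bound the proxy, use $V^\pm\le\E[\sum_i(Z-Z_i')^2\mid Y,X]$. Conditional Jensen (since $q/2\ge1$) and Minkowski in $L^{q/2}$ give
\[
\norm{V^\pm}_{L^{q/2}}\le\sum_i\norm{(Z-Z_i')^2}_{L^{q/2}}=\sum_i\norm{Z-Z_i'}_{L^q}^2.
\]
Taking $q$th roots yields \eqref{eq:moment-efron-stein-increments} with the factor $2^{1/q}$.

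The only step needing care is the reduction to fixed $Y$, where we must check that the cited theorem's constant is uniform in $y$. It is, because $\kappa$ is universal. The rest is routine.
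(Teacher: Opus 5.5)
Your proposal is correct and follows the paper's proof. You freeze $Y$ via Fubini and the product structure, apply the one-sided bounds of \cite[Theorem~2]{BoucheronStephane2005MIfF} to $Z_y=\Phi(y,X_1,\ldots,X_N)$, integrate in $y$, and then combine the two signs through $|W|^q=W_+^q+W_-^q$, controlling $\norm{V^\pm}_{L^{q/2}}$ by conditional Jensen and Minkowski. The only cosmetic difference is that you drop the $(\cdot)_\pm$ via $V^\pm\le\E[\sum_i(Z-Z_i')^2\mid Y,X]$ before applying Minkowski, whereas the paper applies Minkowski to the one-sided terms directly; the resulting bound is identical.
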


\begin{proof}[Proof of Lemma \ref{lem:moment-efron-stein}]
    Because the joint law is a product law, Fubini's theorem implies that, for almost every $y$,
    \begin{equation}
        Z_y \bydef \Phi(y, X_1, \ldots, X_N)
    \end{equation}
    belongs to $L^q$ under the law of $(X_1, \ldots, X_N)$. For such $y$, the second inequality in each one-sided bound of \cite[Theorem 2]{BoucheronStephane2005MIfF} gives
    \begin{equation}
        \norm{(Z_y - \E_X Z_y)_\pm}_{L^q(X)} \le \sqrt{2\kappa q}\, \norm{V_y^\pm}_{L^{q/2}(X)}^{1/2},
    \end{equation}
    where $V_y^\pm$ is the section of \eqref{eq:conditional-replacement-variance} at $Y = y$ and $L^q(X)$ denotes the space where we take expectation over the randomness of $X = (X_1, \ldots, X_N)$. Raising this inequality to the $q$th power, integrating with respect to $y$, and using Fubini's theorem gives \eqref{eq:moment-efron-stein-one-sided}.

    Since $q/2 \ge 1$, conditional Jensen's inequality and Minkowski's inequality give
    \begin{equation}
        \begin{aligned}
            \norm{V^\pm}_{L^{q/2}} &\le \sum_{i = 1}^N \norm{\E[(Z - Z_i')_\pm^2 \mid Y, X_1, \ldots, X_N]}_{L^{q/2}} \\
            &\le \sum_{i = 1}^N \norm{Z - Z_i'}_{L^q}^2.
        \end{aligned}
    \end{equation}
    Finally,
    \begin{equation}
        \norm{Z - \E[Z \mid Y]}_{L^q}^q = \norm{(Z - \E[Z \mid Y])_+}_{L^q}^q + \norm{(Z - \E[Z \mid Y])_-}_{L^q}^q.
    \end{equation}
    Applying \eqref{eq:moment-efron-stein-one-sided} to both signs and using the preceding bound on $V^\pm$ proves \eqref{eq:moment-efron-stein-increments}.
\end{proof}

Fix $i \in [n]$. Let $F^\sfe_\smi$ be obtained from $F^\sfe$ by setting its $i$th column to $0$. For $u \in \R^p$ and $\eta \in \R$, define
\begin{equation}
    F_i^\sfe(u) \,\bydef\, F^\sfe_\smi + u e_i^\top, \qquad \epsilon^{(i,\eta)} \,\bydef\, \sum_{j \ne i} \epsilon_j e_j + \eta e_i.
\end{equation}
Let $(\hbs_i^t(u,\eta), h_i^t(u,\eta))$ be the iterates generated by $(F_i^\sfe(u), \epsilon^{(i,\eta)})$, and let $(\hbs_\smi^t, h_\smi^t)$ be the iterates generated by $(F_\smi^\sfe, \epsilon)$. These leave-one-out iterates are unchanged if $\epsilon$ is replaced by $\epsilon^{(i,0)}$, since $F_\smi^\sfe\epsilon = F_\smi^\sfe\epsilon^{(i,0)}$. Set
\begin{equation}
    q_i^t(u,\eta) \,\bydef\, \hbs_i^t(u,\eta) - \hbs_\smi^t, \qquad \CH_i^\sfe \,\bydef\, \sigma(\CB_0, (u_j, \epsilon_j)_{j \ne i}).
\end{equation}
The leave-one-out iterates and $K_i^\sfe$ are $\CH_i^\sfe$-measurable.

\begin{lemma}\label{lem:one-vector-sensitivity}
    Under Assumptions \ref{ass:basic}, \ref{ass:signal-init}, and \ref{ass:concentration-columns}, let $u \stackrel{d}{=} u_i$ and $\eta\sim\CN(0,\Delta)$ be independent of each other and jointly independent of $\CH_i^\sfe$. For every integer $m\geq1$, uniformly in $i$ and $0\leq s,t\leq T$, we have
    \begin{equation}\label{eq:concentration-q-moment}
        \norm{q_i^t(u,\eta)}_{L^m} \le C_m d^{-1/2}.
    \end{equation}
    If $\lambda_i \in \R^p$ is $\CH_i^\sfe$-measurable and $\norm{\lambda_i} \le L d^{-1/2}$ a.s., then
    \begin{equation}\label{eq:concentration-projected-linear}
        \norm{\inner{\lambda_i}{q_i^t(u,\eta)}}_{L^m} \le C_{m,L} d^{-3/2}.
    \end{equation}
    If $H_i \in \R^{p \times p}$ is $\CH_i^\sfe$-measurable and $\norm{H_i}_\op \le L d^{-1}$ a.s., then
    \begin{equation}\label{eq:concentration-projected-quadratic}
        \norm{\inner{q_i^t(u,\eta)}{H_i q_i^s(u,\eta)}}_{L^m} \le C_{m,L} d^{-2}.
    \end{equation}
\end{lemma}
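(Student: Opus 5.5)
We will rerun, in the single-ensemble setting of Assumption~\ref{ass:concentration-columns}, the leave-one-out analysis of Sections~\ref{sec:aux}--\ref{sec:linearized}. The noise coordinate $\eta$ is now also resampled. The structure is unchanged: $\CH_i^\sfe$ plays the role of $\CG_i$, $K_i^\sfe$ plays the role of $K_i$, and $\epsilon_i$ is replaced by $\eta$.

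\textbf{Step 1: the moment bound \eqref{eq:concentration-q-moment}.}
Subtracting the recursions as in Proposition~\ref{prop:exact-loo} gives
\[
w^t=-\kappa_tK_i^\sfe q^t+\rho_tw^{t-1}+u\beta^t,\qquad \beta^t=\kappa_t\langle u,{\bf s}^*-\hbs_i^t\rangle+\nu_t\eta,
\]
together with $q^{t+1}=G_t(y_i^t+m^t)-G_t(y_i^t)$. Since $G_t$ is Lipschitz, $\|q^{t+1}\|\le C\|m^t\|$.

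The bound on $\beta^t$ has two pieces:
\begin{itemize}
\item $\langle u,{\bf s}^*-\hbs^t_\smi\rangle=\CO_{L^m}(d^{-1/2})$. This follows from the sub-exponential bound \eqref{eq:concentration-subexp}, because $\|{\bf s}^*-\hbs_\smi^t\|\le C\sqrt d$ by Lemma~\ref{lem:base-det-bounds}.
\item $|\langle u,q^t\rangle|\le\|u\|\|q^t\|$, and $\nu_t\eta=\CO(d^{-1/2})$.
\end{itemize}
The norm $\|u\|$ has all moments bounded, by isotropy \eqref{eq:concentration-isotropy} and the $\psi_1$ bound \eqref{eq:concentration-subexp}.

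An induction over the fixed horizon $T$ then closes. It uses H\"older's inequality, the envelope bound \eqref{eq:concentration-envelope-moment} on $\|K_i^\sfe\|_\op$, and the coefficient bounds \eqref{eq:scaled-coeffs-bound}. This gives $\|q^t\|,\|w^t\|,\|m^t\|,|\beta^t|=\CO_{L^m}(d^{-1/2})$, exactly as in Proposition~\ref{prop:delta-moments}.

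\textbf{Step 2: the linear bound \eqref{eq:concentration-projected-linear}.}
We linearize as in Section~\ref{sec:linearized}. The unrolled form \eqref{eq:q-unrolled} reads
\[
q^t=\sum_{k<t}\CP_{k,t}u\beta^k+z^t,
\]
where $\CP_{k,t}$ is $\CH_i^\sfe$-measurable with operator norm in every $L^m$, and $\|z^t\|=\CO_{L^m}(d^{-1})$ since $\|\varepsilon^k\|\le C\|m^k\|^2$. Then
\[
\langle\lambda_i,q^t\rangle=\sum_{k}\langle\CP_{k,t}^\top\lambda_i,u\rangle\,\beta^k+\langle\lambda_i,z^t\rangle.
\]
We bound the two terms separately:
\begin{itemize}
\item Conditionally on $\CH_i^\sfe$, the vector $b=\CP_{k,t}^\top\lambda_i$ is fixed with $\|b\|\lesssim d^{-1/2}$. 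By \eqref{eq:concentration-subexp}, $\langle b,u\rangle=\CO_{L^m}(p^{-1/2}d^{-1/2})=\CO(d^{-3/2})$. Multiplying by $\beta^k=\CO(d^{-1/2})$ gives $d^{-2}$.
\item The crude bound $\|\lambda_i\|\|z^t\|$ only gives $d^{-3/2}$, which is already sufficient.
\end{itemize}
So no moment matching is needed here, only sizes.

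\textbf{Step 3: the quadratic bound \eqref{eq:concentration-projected-quadratic}.}
Write $q^t=\sum_k\CP_{k,t}u\beta^k+z^t$ and $q^s$ similarly. The leading term is
\[
\sum_{k,\ell}\beta^k\beta^\ell\, u^\top A_{k\ell}u,\qquad A_{k\ell}=\CP_{k,t}^\top H_i\CP_{\ell,s},\qquad \|A_{k\ell}\|_\op\lesssim d^{-1}.
\]
By \eqref{eq:concentration-rotational}, $u^\top Au=p^{-1}\Tr A+\CO(d^{-3/2})$, and $|p^{-1}\Tr A|\le\|A\|_\op\lesssim d^{-1}$. The quadratic form is therefore $\CO_{L^m}(d^{-1})$. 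Multiplying by $\beta^k\beta^\ell=\CO(d^{-1})$ gives $d^{-2}$.

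The cross terms are bounded crudely. For $\langle \CP u\beta,H_iz\rangle$ we use $\|H_i\|_\op\|u\||\beta|\|z\|\lesssim d^{-1}\cdot d^{-1/2}\cdot d^{-1}$. The $z$--$z$ term is even smaller.

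\textbf{Main obstacle.}
The delicate point is step 1, specifically the self-interaction $\kappa_t\langle u,q^t\rangle$ inside $\beta^t$. The quantity $q^t$ depends on $u$, so $\langle u,q^t\rangle$ cannot be treated as a centered projection. We bound it by $\|u\|\|q^t\|$, which is adequate because $\|u\|=\CO(1)$, and close the induction using products of moments of all orders. These moments are available from \eqref{eq:concentration-subexp} and \eqref{eq:concentration-envelope-moment}.

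A secondary check concerns the conditioning in steps 2 and 3. One must verify that applying \eqref{eq:concentration-rotational} with the random, $\CH_i^\sfe$-measurable matrix $A$ is legitimate. It is, because $u$ is independent of $\CH_i^\sfe$: apply the bound conditionally and then take outer expectations via H\"older.
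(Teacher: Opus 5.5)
Your proof is correct and follows the paper's skeleton. It uses the same exact leave-one-out recursion and the same induction over the fixed horizon for the $d^{-1/2}$ moments. The paper uses the Taylor form $q^{t+1}=L_i^tm^t+\veps_i^t$ there, but your global Lipschitz bound suffices. It also relies on the same unrolled representation $q^t=\sum_{k<t}\CP_{k,t}u\beta^k+z^t$ with $\CH_i^\sfe$-measurable response operators.

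For \eqref{eq:concentration-projected-linear}, the paper goes one step further. It replaces $\beta^k$ by linearized amplitudes built from $\overline\Omega_{k,t}$ and writes $q^t=\sum_r B^{(i)}_{t,r}u\,\beta^r_{i,0}+e_i^t$. Its final estimate is only H\"older's inequality, so that step is not needed. Your direct pairing of $\langle\CP_{k,t}^\top\lambda_i,u\rangle=\CO_{L^{2m}}(d^{-3/2})$ with $\beta^k=\CO_{L^{2m}}(d^{-1/2})$ is the more economical route.

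Your Step 3, by contrast, is more work than necessary. The paper simply uses $\abs{\langle q^t,H_iq^s\rangle}\le\norm{H_i}_\op\norm{q^t}\norm{q^s}$ together with your first bound and H\"older, giving $Ld^{-1}\cdot d^{-1/2}\cdot d^{-1/2}$. This needs neither the trace approximation nor \eqref{eq:concentration-rotational}.
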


\begin{remark}
    The proof is essentially the same as Proposition \ref{prop:exact-loo} and Section \ref{sec:linearized}, and is deferred to later in this section.
\end{remark}

\begin{theorem}\label{thm:concentration-general-ensemble}
    Under Assumptions \ref{ass:basic}, \ref{ass:signal-init}, and \ref{ass:concentration-columns}, fix $0\leq s,t\leq T$ and $\mu\in\SC^3(\R)$. Define
    \begin{equation}\label{eq:concentration-observables}
        \Phi_t^\sig \,\bydef\, \frac{1}{d} \Tr(\hS_\sfe^t S^\star), \qquad \Phi_{t,\mu}^\spec \,\bydef\, \frac{1}{d} \Tr(\mu(\hS_\sfe^t)), \qquad \Phi_{s,t}^\cross \,\bydef\, \frac{1}{d} \Tr(\hS_\sfe^s \hS_\sfe^t).
    \end{equation}
    For every integer $m \ge 2$,
    \begin{equation}\label{eq:concentration-observables-moment}
        \max_{\Phi \,\in\, \{\Phi_t^\sig,\Phi_{t,\mu}^\spec,\Phi_{s,t}^\cross\}} \norm{\Phi - \E[\Phi \mid \CB_0]}_{L^m} \,\le\, C_{m,\mu} d^{-1/2}.
    \end{equation}
    Consequently,
    \begin{equation}\label{eq:concentration-observables-stochastic}
        \Phi - \E[\Phi \mid \CB_0] = \CO_\prec(d^{-1/2})
    \end{equation}
    for every observable in \eqref{eq:concentration-observables}.
\end{theorem}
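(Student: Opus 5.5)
We apply the high-moment Efron--Stein bound of Lemma~\ref{lem:moment-efron-stein}, conditionally on $\CB_0$. Take $Y=(S^\star,\hS^0)$ and the independent blocks $X_i=(u_i,\epsilon_i)$, $i\in[n]$. By Assumption~\ref{ass:concentration-columns} these blocks are mutually independent and independent of $Y$. For any of the three observables, write $Z=\Phi$ and let $Z_i'$ be the value obtained when $(u_i,\epsilon_i)$ is replaced by an independent copy $(u_i',\epsilon_i')$. Then \eqref{eq:moment-efron-stein-increments} gives
\begin{equation*}
    \norm{\Phi-\E[\Phi\mid\CB_0]}_{L^m}\leq C_m\Big(\sum_{i=1}^n\norm{Z-Z_i'}_{L^m}^2\Big)^{1/2}.
\end{equation*}
Since $n\leq C_0p=\CO(d^2)$, it suffices to show that $\norm{Z-Z_i'}_{L^m}\leq C_{m,\mu}d^{-3/2}$ uniformly in $i$. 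This then yields the bound $C_{m,\mu}\sqrt n\,d^{-3/2}\lesssim d^{-1/2}$ in \eqref{eq:concentration-observables-moment}, and Markov's inequality at high moment order gives \eqref{eq:concentration-observables-stochastic}.

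To bound the increment, insert the leave-one-out trajectory of this ensemble. By the triangle inequality, $Z-Z_i'=(Z-\Phi_\smi)-(Z_i'-\Phi_\smi)$, where $\Phi_\smi$ is the observable evaluated on $\hbs^t_\smi$. The pairs $(u_i,\epsilon_i)$ and $(u_i',\epsilon_i')$ both have the law required in Lemma~\ref{lem:one-vector-sensitivity} and are independent of $\CH_i^\sfe$, so it is enough to bound $\Phi(u,\eta)-\Phi_\smi$ for a generic such pair. We expand exactly as in the proof of Theorem~\ref{thm:reduction}.

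\emph{Signal overlap.} The increment is $d^{-1}\inner{{\bf s}^*}{q_i^t(u,\eta)}$. Here $\lambda_i=d^{-1}{\bf s}^*$ is $\CH_i^\sfe$-measurable with norm at most $C_1d^{-1/2}$ by Assumption~\ref{ass:signal-init}, so \eqref{eq:concentration-projected-linear} gives $\CO_{L^m}(d^{-3/2})$.

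\emph{Cross-time overlap.} The identity \eqref{eq:cross-exact-expansion} produces two linear terms with $\lambda_i=d^{-1}\hbs^s_\smi$ and $\lambda_i=d^{-1}\hbs^t_\smi$, each of norm $\lesssim d^{-1/2}$ by Lemma~\ref{lem:base-det-bounds}. It also produces a quadratic term with $H_i=d^{-1}I_p$. These terms are $\CO_{L^m}(d^{-3/2})$ and $\CO_{L^m}(d^{-2})$ by \eqref{eq:concentration-projected-linear} and \eqref{eq:concentration-projected-quadratic}.

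\emph{Spectral statistic.} Lemma~\ref{lem:smooth-spectral-expansion}, applied on $I_\ast$, gives a linear term with $\lambda_i=d^{-1}\svecop(\mu'(\hS^t_\smi))$, of norm $\lesssim d^{-1/2}$. It also gives a quadratic term with an $\CH_i^\sfe$-measurable kernel of operator norm $\lesssim d^{-1}$. The remainder is bounded by $C_\mu d^{-1}\norm{q_i^t}^3=\CO_{L^m}(d^{-5/2})$ by \eqref{eq:concentration-q-moment}.

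Every piece is therefore $\CO_{L^m}(d^{-3/2})$.

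The substantive step is Lemma~\ref{lem:one-vector-sensitivity}, and its linear bound \eqref{eq:concentration-projected-linear} is the main obstacle. The other two estimates follow as before: \eqref{eq:concentration-q-moment} comes from the argument of Proposition~\ref{prop:delta-moments}, and \eqref{eq:concentration-projected-quadratic} follows from \eqref{eq:concentration-q-moment} and $\norm{H_i}_\op\lesssim d^{-1}$. For \eqref{eq:concentration-projected-linear}, a crude bound gives only $d^{-1/2}\cdot d^{-1/2}=d^{-1}$. The gain of $d^{-1/2}$ comes from the structure of $q_i^t$. Redo Proposition~\ref{prop:exact-loo} and Section~\ref{sec:linearized} with $\epsilon_i$ replaced by $\eta$ and the filtration $\CH_i^\sfe$. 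This gives $q^t=\sum_k\CP_{k,t}u\,\beta^k+z^t$ with $\norm{z^t}_{L^m}\lesssim d^{-1}$, where $\CP_{k,t}$ is $\CH_i^\sfe$-measurable with bounded operator-norm moments, using \eqref{eq:concentration-envelope-moment}. Then $\inner{\lambda_i}{z^t}=\CO_{L^m}(d^{-3/2})$ by Cauchy--Schwarz. For the main part, $\inner{\CP_{k,t}^\top\lambda_i}{u}$ is $\CO_{L^m}(d^{-3/2})$ by the sub-exponential bound \eqref{eq:concentration-subexp}, since $\norm{\CP_{k,t}^\top\lambda_i}\lesssim d^{-1/2}$ and $u$ has $p^{-1/2}$-scale directional fluctuations. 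Multiplying by $\beta^k=\CO_{L^m}(d^{-1/2})$ gives $d^{-2}$, which is better than needed.

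The remaining care is to confirm that the moment bounds for $\beta^k$, $\Omega_{k,t}$, and $\varepsilon^k$ from Section~\ref{sec:linearized} use only Assumption~\ref{ass:concentration-columns}. That is, they should rely on isotropy, \eqref{eq:concentration-subexp}, \eqref{eq:concentration-envelope-moment}, and \eqref{eq:concentration-rotational} in place of Lemma~\ref{lem:sensing-envelope} and Corollary~\ref{cor:one-column-moments}, rather than on the rank-one or Gaussian structure.
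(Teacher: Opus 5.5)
Your proposal is correct and follows the paper's proof. Both arguments apply the high-moment Efron--Stein bound of Lemma~\ref{lem:moment-efron-stein} over the blocks $(u_i,\epsilon_i)$ given $\CB_0$, pass through the leave-one-out observable via the triangle inequality, use the same linear/quadratic/remainder expansions, and invoke Lemma~\ref{lem:one-vector-sensitivity} to get $d^{-3/2}$ per block, hence $\sqrt n\,d^{-3/2}\lesssim d^{-1/2}$. Your route to \eqref{eq:concentration-projected-linear} is slightly leaner: you apply H\"older directly to the exact unrolled form $q_i^t=\sum_k\CP_{k,t}u\,\beta^k+z^t$, whereas the paper first replaces $\beta^k$ by $\CH_i^\sfe$-measurable combinations of $\beta^r_{i,0}$ via $\Omega_{k,t}\approx\overline\Omega_{k,t}$, which uses \eqref{eq:concentration-rotational}; that step is unnecessary for a size bound, so the $\Omega_{k,t}$ estimates you list as remaining care are not actually needed.
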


\begin{proof}[Proof of Theorem \ref{thm:concentration-general-ensemble}]
    Let
    \begin{equation}
        Y \,\bydef\, (\svecop(S^\star), \svecop(\hS^0)) \,\in\, \R^{2p}, \qquad X_i \,\bydef\, (u_i, \epsilon_i).
    \end{equation}
    The random vectors $Y, X_1, \ldots, X_n$ are mutually independent. Let $X_i' = (u_i', \epsilon_i')$ be an independent copy of $X_i$. For any one of the observables in \eqref{eq:concentration-observables}, write $\Phi_i(u,\eta)$ for the observable generated by $(F_i^\sfe(u), \epsilon^{(i,\eta)})$, and write $\Phi_\smi$ for the observable generated by $(F_\smi^\sfe, \epsilon^{(i,0)})$. Then
    \begin{equation}\label{eq:concentration-signal-increment}
        \Phi_{t,i}^\sig(u,\eta) - \Phi_{t,\smi}^\sig = \inner{d^{-1}{\bf s}^*}{q_i^t(u,\eta)}.
    \end{equation}
    The cross-time correlation satisfies
    \begin{equation}\label{eq:concentration-cross-increment}
        \Phi_{s,t,i}^\cross(u,\eta) - \Phi_{s,t,\smi}^\cross = \inner{d^{-1}\hbs_\smi^s}{q_i^t(u,\eta)} + \inner{d^{-1}\hbs_\smi^t}{q_i^s(u,\eta)} + \inner{q_i^s(u,\eta)}{d^{-1}q_i^t(u,\eta)}.
    \end{equation}
    To expand the spectral statistic, use Lemma \ref{lem:base-det-bounds} to localize Lemma \ref{lem:smooth-spectral-expansion} to the deterministic spectral interval and set
    \begin{equation}
        A_i \bydef \hS_\smi^t, \qquad E_i(u,\eta) \bydef \mmat(q_i^t(u,\eta)).
    \end{equation}
    Then
    \begin{equation}\label{eq:concentration-spectral-increment}
        \Phi_{t,\mu,i}^\spec(u,\eta) - \Phi_{t,\mu,\smi}^\spec = \inner{d^{-1}\svecop(\mu'(A_i))}{q_i^t(u,\eta)} + \inner{q_i^t(u,\eta)}{K_{\mu,A_i}q_i^t(u,\eta)} + R_{\mu,A_i}(E_i(u,\eta)),
    \end{equation}
    where
    \begin{equation}
        d^{-1}\norm{\mu'(A_i)}_\Frob \,\le\, \norm{\mu'}_{L^\infty(I_\ast)}d^{-1/2}, \qquad \norm{K_{\mu,A_i}}_\op \,\le\, C_\mu d^{-1},
    \end{equation}
    and, by \eqref{eq:smooth-spectral-remainder-bound} of Lemma \ref{lem:smooth-spectral-expansion} and \eqref{eq:concentration-q-moment} of Lemma \ref{lem:one-vector-sensitivity},
    \begin{equation}\label{eq:concentration-spectral-remainder}
        \norm{R_{\mu,A_i}(E_i(u,\eta))}_{L^m} \le C_\mu d^{-1}\norm{q_i^t(u,\eta)}_{L^{3m}}^3 \le C_{m,\mu} d^{-5/2}.
    \end{equation}

    By Assumption \ref{ass:signal-init} and Lemma \ref{lem:base-det-bounds},
    \begin{equation}
        \norm{d^{-1}{\bf s}^*} + \max_{r \le T}\norm{d^{-1}\hbs_\smi^r} \le C d^{-1/2}
    \end{equation}
    almost surely. Therefore, applying Lemma \ref{lem:one-vector-sensitivity} to \eqref{eq:concentration-signal-increment}--\eqref{eq:concentration-spectral-increment}, with $H_i = d^{-1}I_p$ for the last term in \eqref{eq:concentration-cross-increment}, gives
    \begin{equation}\label{eq:concentration-one-block-increment}
        \max_{i \,\le\, n}\max_{\Phi \,\in\, \{\Phi_t^\sig,\Phi_{t,\mu}^\spec,\Phi_{s,t}^\cross\}} \norm{\Phi_i(u,\eta) - \Phi_\smi}_{L^m} \,\le\, C_{m,\mu} d^{-3/2}.
    \end{equation}
    Here the quadratic terms are in fact $\CO_{L^m}(d^{-2})$, and the spectral remainder is $\CO_{L^m}(d^{-5/2})$.

    Write \eqref{eq:concentration-one-block-increment} once with $(u,\eta) = (u_i,\epsilon_i)$ and once with $(u,\eta) = (u_i',\epsilon_i')$. Since both pairs are independent of $\CH_i^\sfe$, the triangle inequality gives
    \begin{equation}\label{eq:concentration-resampling-increment}
        \max_{i \,\le\, n}\max_{\Phi \,\in\, \{\Phi_t^\sig,\Phi_{t,\mu}^\spec,\Phi_{s,t}^\cross\}} \norm{\Phi - \Phi^{(i)}}_{L^m} \,\le\, C_{m,\mu} d^{-3/2},
    \end{equation}
    where $\Phi^{(i)}$ is generated with $X_i'$ in place of $X_i$.

    The observables in \eqref{eq:concentration-observables} are bounded uniformly in $d$: this follows from Assumption \ref{ass:signal-init}, Lemma \ref{lem:base-det-bounds}, and the bound of $\mu$ on the deterministic spectral interval. Hence they belong to $L^m$ for every $m$. Applying \eqref{eq:moment-efron-stein-increments} of Lemma \ref{lem:moment-efron-stein} to $Y, X_1, \ldots, X_n$, and using $n\leq C_0p$ and $p\asymp d^2$, gives
    \begin{equation}
        \begin{aligned}
            \norm{\Phi - \E[\Phi \mid \CB_0]}_{L^m} &\le\, 2^{1/m}\sqrt{2\kappa m}\left(\sum_{i \,=\, 1}^n \norm{\Phi - \Phi^{(i)}}_{L^m}^2\right)^{1/2} \\
            &\le C_{m,\mu}\sqrt{n}\, d^{-3/2} \le C_{m,\mu} d^{-1/2}.
        \end{aligned}
    \end{equation}
    This proves \eqref{eq:concentration-observables-moment}. For $\veps, D > 0$, choose an integer $m \ge 2$ such that $m\veps > D$. Markov's inequality gives
    \begin{equation}
        \P\left(\abs{\Phi - \E[\Phi \mid \CB_0]} \,>\, d^{-1/2 + \veps}\right) \,\le\, C_{m,\mu}^m d^{-m\veps} \,\le\, d^{-D}
    \end{equation}
    for all sufficiently large $d$. This proves \eqref{eq:concentration-observables-stochastic}.
\end{proof}

\begin{corollary}\label{cor:concentration-goe-rank-one}
    Under Assumptions \ref{ass:basic} and \ref{ass:signal-init}, Theorem \ref{thm:concentration-general-ensemble} holds for both $F^\sfe = F$ and $F^\sfe = \widetilde F$.
\end{corollary}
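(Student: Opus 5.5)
The corollary amounts to checking Assumption~\ref{ass:concentration-columns} for the two column ensembles $u_i=f_i$ and $u_i=\widetilde f_i$. Theorem~\ref{thm:concentration-general-ensemble} then applies directly, with the same $\CB_0$ and the same noise.

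Independence holds in both cases. The columns are i.i.d., and by Assumption~\ref{ass:signal-init} and the construction of the model they are independent of $(S^\star,\hS^0,\epsilon)$. Isotropy \eqref{eq:concentration-isotropy} is \eqref{eq:column-covariance}.

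For the sub-exponential bound \eqref{eq:concentration-subexp}, fix a unit $a\in\BS^{p-1}$ and set $A=\mmat(a)$, so $\|A\|_\Frob=1$.
\begin{itemize}
\item In the Gaussian case, $\sqrt p\inner{\widetilde f_i}{a}\sim\CN(0,1)$.
\item In the rank-one case, the lifting identity \eqref{eq:lifting-identity} gives $\sqrt p\inner{f_i}{a}=\sqrt{p/(d(d+1))}\,(x_i^\top A x_i-\Tr A)$. This is a centered Gaussian chaos of order two. By Hanson--Wright, its $\psi_1$ norm is bounded by $C\|A\|_\Frob=C$.
\end{itemize}

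The envelope bound \eqref{eq:concentration-envelope-moment} concerns $K_i^\sfe=\sum_{j\ne i}u_ju_j^\top$, a sample covariance with $n\le C_0p$ isotropic columns. For both ensembles, its operator-norm moments are exactly what Lemma~\ref{lem:sensing-envelope} (cited earlier for $K_i$) provides. Since $K_i^\sfe\preceq\sum_j u_ju_j^\top$, I would invoke that lemma for the full Gram matrix of each ensemble.

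The quadratic-form bound \eqref{eq:concentration-rotational} is Corollary~\ref{cor:one-column-moments}(ii), which was already used in Lemma~\ref{lem:omega-fluctuations} with the rate $d^{-1/2}\|A\|_\op$ for both column laws. It applies to non-symmetric $A$ through $u^\top Au=u^\top A_su$.

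The only genuinely nontrivial inputs are the rank-one envelope and the rank-one centered quadratic form. In both, the coordinates of $f_i$ are dependent. The paper has already established these estimates in the appendix lemmas cited earlier, so the main obstacle is bookkeeping. Specifically, I must check that those lemmas are stated for the pure ensembles $F$ and $\widetilde F$, with constants uniform in $i$ and $d$, and not only for the interpolated matrices $F^{(i)}$. If they are stated only for $F_\smi$ of the interpolation, I would apply them at $i=1$ and $i=n+1$, where the interpolation reduces to a pure ensemble.
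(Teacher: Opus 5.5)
Your proposal is correct and follows the paper's proof. Both verify Assumption~\ref{ass:concentration-columns} item by item: isotropy from \eqref{eq:column-covariance}, a $\psi_1$ bound for the order-two Gaussian chaos (the paper uses an explicit Chernoff bound where you use Hanson--Wright), the envelope bound, and \eqref{eq:quadratic-moment-lm} applied to deterministic $A$.

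The one real difference is the envelope step. The paper re-runs the Adamczak et al.\ argument for $K_i^\sfe$, while you reduce to Lemma~\ref{lem:sensing-envelope}, which works. However, $i=n+1$ is not a valid leave-one-out index. Use $K_1$ for the rank-one ensemble and $K_n$ for the Gaussian one, either by exchangeability of the i.i.d.\ columns or via $\|\sum_j \widetilde f_j\widetilde f_j^\top\|_\op\le\|K_n\|_\op+\|\widetilde f_n\|^2$.
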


\begin{remark}
    The proof is almost identical to that of Lemma \ref{lem:sensing-envelope} and is deferred to later in this section.
\end{remark}

\begin{corollary}\label{cor:concentration-consequences}
    Under the hypotheses of Corollary \ref{cor:concentration-goe-rank-one}, for every fixed integer $k \ge 1$, every $0 \le t \le T$, and $\sf E \in \{\ro, \mathrm{G}\}$,
    \begin{equation}
        \frac{1}{d}\Tr((\hS_{\sf E}^t)^k) - \E\left[\left.\frac{1}{d}\Tr((\hS_{\sf E}^t)^k)\,\right|\,\CB_0\right] = \CO_\prec(d^{-1/2}),
    \end{equation}
    and
    \begin{equation}
        \frac{1}{d}\norm{\hS_{\sf E}^t - S^\star}_\Frob^2 - \E\left[\left.\frac{1}{d}\norm{\hS_{\sf E}^t - S^\star}_\Frob^2\,\right|\,\CB_0\right] = \CO_\prec(d^{-1/2}).
    \end{equation}
\end{corollary}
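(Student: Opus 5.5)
The plan is to read off both claims from Corollary~\ref{cor:concentration-goe-rank-one}, that is, from Theorem~\ref{thm:concentration-general-ensemble} applied to each ensemble $\sfe\in\{\ro,\mathrm G\}$. No new probabilistic input is needed. The two displayed quantities are finite combinations of the observables in \eqref{eq:concentration-observables}, and the stochastic-domination conclusion is stable under finite sums.

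For the spectral moments, fix $k\ge1$ and take $\mu(x)=x^k$. This function is not bounded on $\R$. However, Lemma~\ref{lem:base-det-bounds} shows that every iterate $\hS^t_{\sfe}$, and every leave-one-out and resampled iterate appearing in the proof of Theorem~\ref{thm:concentration-general-ensemble}, has spectrum in $[-C_\ast,C_\ast]\subset I_\ast$. That proof uses $\mu$ only through its derivatives up to order three on $I_\ast$, via the localized Lemma~\ref{lem:smooth-spectral-expansion}. Alternatively, one can replace $x^k$ by a $\SC_b^3$ function $\tilde\mu$ that agrees with $x^k$ on $I_\ast$. Then $d^{-1}\Tr(\tilde\mu(\hS^t_{\sfe}))=d^{-1}\Tr((\hS^t_{\sfe})^k)$ almost surely, so the two observables and their conditional means coincide. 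Applying \eqref{eq:concentration-observables-stochastic} to $\Phi^\spec_{t,\tilde\mu}$ gives the first claim. The constant depends on $k$ through $\sup_{I_\ast}|\tilde\mu^{(r)}|$ for $r\le3$.

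For the mean-squared error, expand
\[
\frac1d\|\hS^t_{\sfe}-S^\star\|_\Frob^2=\Phi^\cross_{t,t}-2\Phi^\sig_t+\frac1d\|S^\star\|_\Frob^2 .
\]
The last term is $\CB_0$-measurable, so it cancels exactly against its conditional expectation. Each of the remaining two centered terms is $\CO_\prec(d^{-1/2})$ by \eqref{eq:concentration-observables-stochastic}. A sum of finitely many $\CO_\prec(d^{-1/2})$ variables is again $\CO_\prec(d^{-1/2})$. To see this, use a union bound in \eqref{eq:stochastic-domination} with $\veps$ replaced by $\veps/2$ and absorb the constant factor; or argue directly with the $L^m$ bounds \eqref{eq:concentration-observables-moment}, Minkowski, and Markov.

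The only point needing care is the unboundedness of $x^k$, and it is handled by the deterministic spectral localization above. I expect no real obstacle.
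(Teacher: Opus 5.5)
Your proposal is correct and follows the paper's proof: the moment claim is the spectral statistic of Theorem~\ref{thm:concentration-general-ensemble} with $\mu(x)=x^k$, localized to $I_\ast$ via Lemma~\ref{lem:base-det-bounds}. The MSE claim comes from the same decomposition $\Phi^\cross_{t,t}-2\Phi^\sig_t+d^{-1}\|S^\star\|_\Frob^2$, in which the $\CB_0$-measurable term cancels. One small point: the cutoff is not strictly necessary, because the theorem already admits any $\mu\in\SC^3(\R)$, with constants depending only on the derivatives of $\mu$ over $I_\ast$.
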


\begin{proof}[Proof of Corollary \ref{cor:concentration-consequences}]
    By Lemma \ref{lem:base-det-bounds}, the cutoff $\mu_{k,\ast}$ from the proof of Corollary \ref{cor:power-from-spectral} satisfies $\mu_{k,\ast}(\hS_{\sf E}^t) = (\hS_{\sf E}^t)^k$. The first claim therefore follows from Theorem \ref{thm:concentration-general-ensemble}. For the MSE,
    \begin{equation}
        \frac{1}{d}\norm{\hS_{\sf E}^t - S^\star}_\Frob^2 = \Phi_{t,t}^\cross - 2\Phi_t^\sig + \frac{1}{d}\norm{S^\star}_\Frob^2.
    \end{equation}
    Subtracting conditional expectations, the last term cancels and gives
    \begin{equation}
        \frac{1}{d}\norm{\hS_{\sf E}^t - S^\star}_\Frob^2 - \E\left[\left.\frac{1}{d}\norm{\hS_{\sf E}^t - S^\star}_\Frob^2\,\right|\,\CB_0\right] = \Phi_{t,t}^\cross - \E[\Phi_{t,t}^\cross \mid \CB_0] - 2(\Phi_t^\sig - \E[\Phi_t^\sig \mid \CB_0]).
    \end{equation}
    The second claim now follows from Theorem \ref{thm:concentration-general-ensemble}.
\end{proof}

\subsection{Deferred Proofs}
\begin{proof}[Proof of Lemma \ref{lem:one-vector-sensitivity}]
    Conditioning on $\CH_i^\sfe$, \eqref{eq:concentration-subexp} gives, for every $\CH_i^\sfe$-measurable $a \in \R^p$,
    \begin{equation}\label{eq:concentration-linear-moment}
        \norm{\inner{u}{a}}_{L^m} \le C_m p^{-1/2}\norm{a}_{L^m}.
    \end{equation}
    Indeed, $\norm{X}_{L^m} \le (2m!)^{1/m}\norm{X}_{\psi_1}$. For $m \ge 2$, Minkowski's inequality and \eqref{eq:concentration-subexp} also give
    \begin{equation}
        \norm{u}_{L^m}^2 \,=\, \norm{\norm{u}^2}_{L^{m/2}} \,\le\, \sum_{j \,=\, 1}^p \norm{\inner{u}{e_j}}_{L^m}^2 \,\le\, C_m,
    \end{equation}
    and the same bound for $1 \le m < 2$ follows by monotonicity of the moment norms. The same conditioning extends \eqref{eq:concentration-rotational} to every $\CH_i^\sfe$-measurable matrix $A$:
    \begin{equation}\label{eq:concentration-random-rotational}
        \norm{u^\top A u - \frac{1}{p}\Tr A}_{L^m} \le C_m d^{-1/2}\norm{A_s}_{L^m(\op)}.
    \end{equation}

    Define, for $0 \le t \le T - 1$,
    \begin{equation}
        w_i^t(u,\eta) \bydef h_i^t(u,\eta) - h_\smi^t, \qquad m_i^t(u,\eta) \bydef q_i^t(u,\eta) + a_t w_i^t(u,\eta).
    \end{equation}
    The same calculation as in Proposition \ref{prop:exact-loo} gives
    \begin{equation}\label{eq:concentration-exact-loo}
        w_i^t(u,\eta) \,=\, -\kappa_t K_i^\sfe q_i^t(u,\eta) + \rho_t w_i^{t - 1}(u,\eta) + u\beta_i^t(u,\eta), \qquad \beta_i^t(u,\eta) \,=\, \kappa_t\inner{u}{{\bf s}^* - \hbs_i^t(u,\eta)} + \nu_t\eta.
    \end{equation}
    Let
    \begin{equation}
        y_i^t \bydef \hbs_\smi^t + a_t h_\smi^t, \qquad L_i^t \bydef DG_t(y_i^t).
    \end{equation}
    Then
    \begin{equation}\label{eq:concentration-q-taylor}
        q_i^{t + 1}(u,\eta) \,=\, L_i^t m_i^t(u,\eta) + \veps_i^t(u,\eta), \qquad \norm{L_i^t}_\op \,\le\, C, \qquad \norm{\veps_i^t(u,\eta)} \,\le\, C\norm{m_i^t(u,\eta)}^2,
    \end{equation}
    where
    \begin{equation}
        \veps_i^t(u,\eta) \bydef G_t(y_i^t + m_i^t(u,\eta)) - G_t(y_i^t) - DG_t(y_i^t)[m_i^t(u,\eta)].
    \end{equation}
    We suppress the dependence on $(u,\eta)$ below. By Assumption \ref{ass:signal-init}, Lemma \ref{lem:base-det-bounds}, \eqref{eq:concentration-linear-moment}, and $p \asymp d^2$,
    \begin{equation}\label{eq:concentration-beta-zero}
        \norm{\kappa_t\inner{u}{{\bf s}^* - \hbs_\smi^t} + \nu_t\eta}_{L^m} \le C_m d^{-1/2}.
    \end{equation}
    Since
    \begin{equation}
        \beta_i^t = \kappa_t\inner{u}{{\bf s}^* - \hbs_\smi^t} - \kappa_t\inner{u}{q_i^t} + \nu_t\eta,
    \end{equation}
    H\"older's inequality gives
    \begin{equation}
        \norm{\beta_i^t}_{L^m} \le C_m d^{-1/2} + C_m\norm{u}_{L^{2m}}\norm{q_i^t}_{L^{2m}},
    \end{equation}
    and
    \begin{equation}
        \norm{w_i^t}_{L^m} \le C_m\norm{K_i^\sfe}_{L^{2m}(\op)}\norm{q_i^t}_{L^{2m}} + C_m\norm{w_i^{t - 1}}_{L^m} + C_m\norm{u}_{L^{2m}}\norm{\beta_i^t}_{L^{2m}}.
    \end{equation}
    Furthermore,
    \begin{equation}
        \norm{q_i^{t + 1}}_{L^m} \,\le\, C\norm{m_i^t}_{L^m} + C\norm{m_i^t}_{L^{2m}}^2, \qquad \norm{m_i^t}_{L^m} \,\le\, \norm{q_i^t}_{L^m} + C\norm{w_i^t}_{L^m}.
    \end{equation}
    Starting from $q_i^0 = 0$ and $w_i^{-1} = 0$, we apply these inequalities simultaneously at every moment order. Induction on $t$, using \eqref{eq:concentration-envelope-moment}, gives
    \begin{equation}\label{eq:concentration-basic-sensitivity}
        \max_{0 \,\le\, t \,\le\, T}\norm{q_i^t}_{L^m} + \max_{0 \,\le\, t \,\le\, T - 1}\left(\norm{w_i^t}_{L^m} + \norm{m_i^t}_{L^m} + \norm{\beta_i^t}_{L^m}\right) \,\le\, C_m d^{-1/2},
    \end{equation}
    and hence
    \begin{equation}\label{eq:concentration-epsilon-size}
        \max_{0 \,\le\, t \,\le\, T - 1}\norm{\veps_i^t}_{L^m} \,\le\, C_m d^{-1}.
    \end{equation}
    This proves \eqref{eq:concentration-q-moment}.

    To prove \eqref{eq:concentration-projected-linear}, define, for $0 \le t \le T - 1$,
    \begin{equation}
        \delta_i^t \,\bydef\, \begin{pmatrix} q_i^t \\ w_i^{t - 1} \end{pmatrix}, \qquad M_i^t \,\bydef\, \begin{pmatrix} L_i^t(I_p - a_t \kappa_t K_i^\sfe) & a_t \rho_t L_i^t \\ -\kappa_t K_i^\sfe & \rho_t I_p \end{pmatrix}, \qquad B_i^t \,\bydef\, \begin{pmatrix} a_t L_i^t \\ I_p \end{pmatrix}.
    \end{equation}
    Equations \eqref{eq:concentration-exact-loo} and \eqref{eq:concentration-q-taylor} are equivalently
    \begin{equation}\label{eq:concentration-block-recursion}
        \delta_i^{t + 1} = M_i^t \delta_i^t + B_i^t u \beta_i^t + \begin{pmatrix} \veps_i^t \\ 0 \end{pmatrix}.
    \end{equation}
    Let $\Pi_1 : \R^{2p} \to \R^p$ be the projection onto the first block. For $0 \le k < t \le T$, define
    \begin{equation}
        P_{k,t}^{(i)} \bydef \Pi_1 M_i^{t - 1}\cdots M_i^{k + 1}B_i^k,
    \end{equation}
    and
    \begin{equation}
        z_i^t(u,\eta) \,\bydef\, \Pi_1\sum_{\ell \,=\, 0}^{t - 1} M_i^{t - 1}\cdots M_i^{\ell + 1}\begin{pmatrix} \veps_i^\ell(u,\eta) \\ 0 \end{pmatrix}, \qquad z_i^0 \,\bydef\, 0.
    \end{equation}
    Empty products are understood as the identity. Expanding \eqref{eq:concentration-exact-loo} and \eqref{eq:concentration-q-taylor} gives
    \begin{equation}\label{eq:concentration-unrolled}
        q_i^t(u,\eta) = \sum_{k = 0}^{t - 1} P_{k,t}^{(i)}u\beta_i^k(u,\eta) + z_i^t(u,\eta).
    \end{equation}
    The operators $M_i^t$, $B_i^t$, and $P_{k,t}^{(i)}$ are $\CH_i^\sfe$-measurable. By \eqref{eq:concentration-envelope-moment} and \eqref{eq:concentration-epsilon-size}, H\"older's inequality gives
    \begin{equation}\label{eq:concentration-response-bounds}
        \max_{0 \,\le\, k \,<\, t \,\le\, T}\norm{P_{k,t}^{(i)}}_{L^m(\op)} \,\le\, C_m, \qquad \max_{0 \,\le\, t \,\le\, T}\norm{z_i^t(u,\eta)}_{L^m} \,\le\, C_m d^{-1}.
    \end{equation}
    Set
    \begin{equation}
        \beta_{i,0}^t(u,\eta) \,\bydef\, \kappa_t\inner{u}{{\bf s}^* - \hbs_\smi^t} + \nu_t\eta, \qquad \Omega_{k,t} \,\bydef\, \kappa_t u^\top P_{k,t}^{(i)}u, \qquad \eta_i^t \,\bydef\, \kappa_t\inner{u}{z_i^t}.
    \end{equation}
    Then
    \begin{equation}\label{eq:concentration-beta-recursion}
        \beta_i^t = \beta_{i,0}^t - \sum_{k < t}\Omega_{k,t}\beta_i^k - \eta_i^t.
    \end{equation}
    Define
    \begin{equation}
        \overline\Omega_{k,t} \bydef \frac{\kappa_t}{p}\Tr P_{k,t}^{(i)}.
    \end{equation}
    By \eqref{eq:concentration-random-rotational}, \eqref{eq:concentration-response-bounds}, and H\"older's inequality,
    \begin{equation}\label{eq:concentration-kernel-bounds}
        \norm{\Omega_{k,t} - \overline\Omega_{k,t}}_{L^m} \,\le\, C_m d^{-1/2}, \qquad \norm{\overline\Omega_{k,t}}_{L^m} \,\le\, C_m, \qquad \norm{\eta_i^t}_{L^m} \,\le\, C_m d^{-1}.
    \end{equation}
    Define
    \begin{equation}
        \overline\alpha_i^t \bydef \beta_{i,0}^t - \sum_{k < t}\overline\Omega_{k,t}\overline\alpha_i^k.
    \end{equation}
    Since $T$ is fixed, there exist $\CH_i^\sfe$-measurable coefficients $c_{t,r}$ such that
    \begin{equation}\label{eq:concentration-alpha-expansion}
        \overline\alpha_i^t \,=\, \sum_{r \,=\, 0}^t c_{t,r}\beta_{i,0}^r, \qquad \max_{r \,\le\, t}\norm{c_{t,r}}_{L^m} \,\le\, C_m.
    \end{equation}
    With $d_i^t \bydef \beta_i^t - \overline\alpha_i^t$, subtracting the last recursion from \eqref{eq:concentration-beta-recursion} gives
    \begin{equation}
        d_i^t \,=\, -\sum_{k \,<\, t}\overline\Omega_{k,t}d_i^k - \sum_{k \,<\, t}(\Omega_{k,t} - \overline\Omega_{k,t})\beta_i^k - \eta_i^t.
    \end{equation}
    Applying this recursion simultaneously at every moment order, \eqref{eq:concentration-basic-sensitivity} and \eqref{eq:concentration-kernel-bounds} give by induction
    \begin{equation}\label{eq:concentration-amplitude-error}
        \max_{0 \,\le\, t \,\le\, T - 1}\norm{d_i^t}_{L^m} \,\le\, C_m d^{-1}.
    \end{equation}
    Define
    \begin{equation}
        B_{t,r}^{(i)} \,\bydef\, \sum_{k \,=\, r}^{t - 1} c_{k,r}P_{k,t}^{(i)}, \qquad e_i^t \,\bydef\, z_i^t + \sum_{k \,=\, 0}^{t - 1} P_{k,t}^{(i)}u d_i^k.
    \end{equation}
    From \eqref{eq:concentration-unrolled} and \eqref{eq:concentration-alpha-expansion},
    \begin{equation}\label{eq:concentration-q-expansion}
        q_i^t(u,\eta) = \sum_{r = 0}^{t - 1} B_{t,r}^{(i)}u\beta_{i,0}^r(u,\eta) + e_i^t(u,\eta),
    \end{equation}
    and \eqref{eq:concentration-response-bounds}, \eqref{eq:concentration-amplitude-error}, and H\"older's inequality give
    \begin{equation}\label{eq:concentration-q-expansion-bounds}
        \max_{0 \,\le\, r \,<\, t \,\le\, T}\norm{B_{t,r}^{(i)}}_{L^m(\op)} \,\le\, C_m, \qquad \max_{0 \,\le\, t \,\le\, T}\norm{e_i^t(u,\eta)}_{L^m} \,\le\, C_m d^{-1}.
    \end{equation}
    For $t = 0$, \eqref{eq:concentration-projected-linear} is immediate. For $1 \le t \le T$, \eqref{eq:concentration-beta-zero}, \eqref{eq:concentration-linear-moment}, H\"older's inequality, and \eqref{eq:concentration-q-expansion} give
    \begin{equation}
        \begin{aligned}
            \norm{\inner{\lambda_i}{q_i^t(u,\eta)}}_{L^m} &\le\, L d^{-1/2}\norm{e_i^t(u,\eta)}_{L^m} + \sum_{r \,<\, t}\norm{\inner{u}{(B_{t,r}^{(i)})^\top\lambda_i}}_{L^{2m}}\norm{\beta_{i,0}^r(u,\eta)}_{L^{2m}} \\
            &\le C_{m,L} d^{-3/2} + C_{m,L} d^{-2} \le C_{m,L} d^{-3/2}.
        \end{aligned}
    \end{equation}
    This proves \eqref{eq:concentration-projected-linear}. Finally, \eqref{eq:concentration-q-moment} and H\"older's inequality give
    \begin{equation}
        \norm{\inner{q_i^t(u,\eta)}{H_i q_i^s(u,\eta)}}_{L^m} \,\le\, L d^{-1}\norm{q_i^t(u,\eta)}_{L^{2m}}\norm{q_i^s(u,\eta)}_{L^{2m}} \,\le\, C_{m,L} d^{-2},
    \end{equation}
    which proves \eqref{eq:concentration-projected-quadratic}.
\end{proof}

\begin{proof}[Proof of Corollary \ref{cor:concentration-goe-rank-one}]
    Independence and isotropy are given by \eqref{eq:column-covariance}. Fix $a \in \BS^{p - 1}$ and let $M = \mmat(a)$, so $\norm{M}_\Frob = 1$. For the GOE column,
    \begin{equation}
        \sqrt{p}\,\inner{\widetilde f_i}{a} \sim \CN(0,1).
    \end{equation}
    For the rank-one column,
    \begin{equation}
        \sqrt{p}\,\inner{f_i}{a} = \frac{x_i^\top Mx_i - \Tr M}{\sqrt{2}}.
    \end{equation}
    If $(\lambda_j)_{j = 1}^d$ are the eigenvalues of $M$, then $\max_j\abs{\lambda_j} \le 1$. For $\abs{t} \le 1/4$,
    \begin{equation}
        \log \E\exp\left(t(x_i^\top Mx_i - \Tr M)\right) \,=\, \sum_{j \,=\, 1}^d\left(-t\lambda_j - \frac{1}{2}\log(1 - 2t\lambda_j)\right) \,\le\, 2t^2\sum_{j \,=\, 1}^d\lambda_j^2 \,=\, 2t^2.
    \end{equation}
    Here we used $-z - \log(1 - z) \le z^2$ for $\abs{z} \le 1/2$. Chernoff's bound, with $t = s/4$ for $0 \le s \le 1$ and $t = 1/4$ for $s \ge 1$, gives
    \begin{equation}\label{eq:rank-one-linear-tail}
        \P\left(\abs{x_i^\top Mx_i - \Tr M} > s\right) \le 2\exp\left(-\frac{1}{8}\min\{s^2,s\}\right).
    \end{equation}
    Integrating \eqref{eq:rank-one-linear-tail} gives $\norm{x_i^\top Mx_i - \Tr M}_{\psi_1} \le C$. Therefore, \eqref{eq:concentration-subexp} holds for both ensembles.

    We verify \eqref{eq:concentration-envelope-moment}. The first calculation in the proof of Lemma \ref{lem:sensing-envelope} gives
    \begin{equation}
        p\norm{\widetilde f_i}^2 \sim \chi_p^2, \qquad \norm{f_i}^2 = \frac{\norm{x_i}^4 - 2\norm{x_i}^2 + d}{d(d + 1)},
    \end{equation}
    and hence $\norm{u_i}_{L^m} \le C_m$ in both cases. Let
    \begin{equation}
        V_i \,\bydef\, (u_j)_{j \ne i} \,\in\, \R^{p \times n-1}, \qquad X_j \,\bydef\, \sqrt{p}\,u_j, \qquad N \,\bydef\, n - 1.
    \end{equation}
    The vectors $X_j$ are centered and isotropic. For $L \ge 2$, Hanson--Wright and the preceding norm formulas give
    \begin{equation}
        \P(\norm{\widetilde f_j} \,>\, L) \,\le\, e^{-c_Lp}, \qquad \P(\norm{f_j} \,>\, L) \,\le\, \P(\chi_d^2 \,>\, Ld/2) \,\le\, e^{-c_Ld},
    \end{equation}
    where $c_L \to \infty$ as $L \to \infty$. Since $N/p$ stays bounded above, a union bound and a sufficiently large fixed $L$ give
    \begin{equation}
        \P\left(\max_{j \ne i}\frac{\norm{X_j}}{\sqrt{p}} \,>\, L\max\left\{1,\left(\frac{N}{p}\right)^{1/4}\right\}\right) \,\le\, e^{-\sqrt{p}}
    \end{equation}
    for all sufficiently large $d$, uniformly in $i$. Thus, when $N \ge p$, Theorem 1 of \cite{1012.0294} and isotropy give
    \begin{equation}
        N^{-1}\norm{(X_j)_{j \ne i}}_\op^2 \le 1 + C\sqrt{p/N},
    \end{equation}
    while, when $N < p$, Remark 2 following that theorem gives $\norm{(X_j)_{j \ne i}}_\op \le C\sqrt{p}$. Since $V_i = p^{-1/2}[X_j]_{j \ne i}$ and $K_i^\sfe = V_iV_i^\top$, both cases give
    \begin{equation}\label{eq:pure-ensemble-operator-tail}
        \P\left(\norm{K_i^\sfe}_\op > C\right) \le 2e^{-c\sqrt{p}}.
    \end{equation}
    Moreover,
    \begin{equation}
        \norm{K_i^\sfe}_\op \,\le\, \Tr K_i^\sfe \,=\, \sum_{j \ne i}\norm{u_j}^2, \qquad \norm{\Tr K_i^\sfe}_{L^{2m}} \,\le\, \sum_{j \ne i}\norm{u_j}_{L^{4m}}^2 \,\le\, C_m n \,\le\, C_m d^2.
    \end{equation}
    If $\CE_i = \{\norm{K_i^\sfe}_\op \le C\}$, then Cauchy--Schwarz and \eqref{eq:pure-ensemble-operator-tail} give
    \begin{equation}
        \E\norm{K_i^\sfe}_\op^m \,\le\, C^m + \E[(\Tr K_i^\sfe)^m\mathbf{1}_{\CE_i^c}] \,\le\, C^m + \norm{\Tr K_i^\sfe}_{L^{2m}}^m\P(\CE_i^c)^{1/2} \,\le\, C_m.
    \end{equation}
    This proves \eqref{eq:concentration-envelope-moment}. Finally, \eqref{eq:quadratic-moment-lm} of Lemma \ref{lem:one-column-moments}, applied to deterministic $A$, gives \eqref{eq:concentration-rotational}. Therefore, Assumption \ref{ass:concentration-columns} holds for both sensing ensembles.
\end{proof}

\section{Discussion and Future Directions}\label{sec:discussion}

For a fixed number of AMP iterations, the signal overlaps, cross-time overlaps, and smooth spectral statistics agree under rank-one and covariance-matched Gaussian sensing up to order $d^{-1/2}$ in every fixed moment. Any Gaussian state-evolution limit for these observables therefore transfers to rank-one sensing. The proof follows the effect of replacing a sensing column through the AMP recursion, using leave-one-out estimates and smooth spectral expansions.

Extending the comparison to nonlinear observations requires care when the signal has a spectral spike. Assumption~\ref{ass:signal-init} permits such a spike: under the normalization \eqref{eq:rank-one-features}, an order-one quadratic single-index component
\begin{equation*}
    a\big((u^\top x_i)^2-1\big),\qquad\|u\|=1,
\end{equation*}
corresponds to adding $a\sqrt d\,uu^\top$ to $S^\star$, which still satisfies the Frobenius bound \eqref{eq:signal-init-bound}. Theorem~\ref{thm:final-universality} therefore covers such spikes in the linear observation model. Along the spike direction, however, rank-one sensing retains the centered $\chi_1^2$ variable $a(G^2-1)$, with $G=u^\top x_i\sim\CN(0,1)$, whereas the covariance-matched Gaussian projection is Gaussian with the same variance. A nonlinear output channel can distinguish these two laws before the AMP iteration begins, as illustrated in Figure~\ref{fig:numerical-nonlinear-trajectory}. This suggests that a comparison beyond the linear model may need to preserve the low-dimensional non-Gaussian component and Gaussianize only its orthogonal complement, as in the conditional Gaussian comparison for random-feature models in \cite{WenHuLuFanMisiakiewicz2025}. Proving such a result would require a replacement argument conditioned on the spike coordinates, together with control of their repeated interaction with both the output function and the spectral denoiser.

Another extension concerns unbounded or data-dependent spectral denoisers. The proof uses bounded fourth-order Fr\'echet derivatives of the spectral map and uniform bounds on the matrix iterates. It therefore does not cover unbounded denoisers such as the identity, spectral maps that depend on the empirical eigenvalue distribution, or Onsager coefficients estimated from the evolving sensing data. In particular, the rotationally invariant estimator and the full Bayes-optimal AMP recursion of \cite{MaillardAntoine2024Bloa} lie outside the theorem, although the experiments in Section~\ref{subsec:numerics-spectra} suggest that the rank-one and Gaussian dynamics remain close for the rotationally invariant estimator. Extending the comparison to this setting would require stability estimates that replace the present global derivative bounds and control the dependence of both the denoiser and the Onsager correction on the evolving empirical spectrum.

Our bounds depend on the fixed iteration horizon $T$. To let $T$ grow with $d$, one would need to control how the one-column errors accumulate over time. Comparing the full trajectories would also require estimates for test functions beyond the three observables studied here. Both questions remain open.

\par\addvspace{3.5ex plus 1ex minus .2ex}
\noindent\begin{minipage}{\textwidth}
\section*{Acknowledgement}
\addcontentsline{toc}{section}{Acknowledgement}

Duy Thuc Nguyen gratefully acknowledges support from the Harvard Generative AI Research Program Grant and the Herchel Smith Undergraduate Science Research Program. Yue M. Lu gratefully acknowledges support from a Harvard College Professorship, the Harvard FAS Dean's Fund for Promising Scholarship, and DARPA grant DIAL-FP-038. Subhabrata Sen gratefully acknowledges support from NSF grant DMS CAREER 2239234, ONR grant N00014-23-1-2489, and AFOSR grant FA9950-23-1-0429.

\medskip
GPT-5.6 and OpenAI Codex were used to assist in polishing parts of the exposition and in implementing the numerical simulations. All mathematical results and proofs were developed and verified by the authors. The authors reviewed all AI-assisted material and take full responsibility for the results and the content of this paper.
\end{minipage}

\addtocontents{toc}{\protect\setcounter{tocdepth}{1}}
\appendix

\section{Analytic Bounds}\label{app:analytic}

This appendix collects the deterministic, analytic bounds used above.

For an $m$-times continuously differentiable function $f$, we write $f^{[m]}$ for the $m$th divided difference. Thus $f^{[0]}=f$, and for $m\ge 1$,
\begin{equation}
    f^{[m]}(x_0,\dots,x_m)
    \bydef
    \begin{cases}
        \dfrac{f^{[m-1]}(x_0,\dots,x_{m-1})-f^{[m-1]}(x_1,\dots,x_m)}{x_0-x_m},
        & x_0\neq x_m,\\
        \text{ continuous extension},
        & x_0=x_m.
    \end{cases}
\end{equation}
If $f\in C^m(\R)$, then $f^{[m]}$ extends to a continuous symmetric function on $\R^{m+1}$, and the Hermite--Genocchi formula yields
\begin{equation}\label{eq:divided-difference-bound}
    \sup_{(x_0,\dots,x_m)\in\R^{m+1}}\abs{f^{[m]}(x_0,\dots,x_m)}
    \le \frac{1}{m!}\sup_{x\in\R}\abs{f^{(m)}(x)}.
\end{equation}

\begin{lemma}\label{lem:dk-bounds}
    Let $m\in\{1,2,3,4\}$ and let $\varphi\in C^m(\R)$. Define
    $\Sigma(M)\bydef \varphi(M)$ on $\Sym$ and
    \begin{equation}
        G(v)\bydef \svecop\big(\Sigma(\mat(v))\big).
    \end{equation}
    Then $\Sigma$ and $G$ are $m$-times Fr\'echet-differentiable. If
    $X=\sum_{a=1}^d \lambda_a P_a$ is an orthogonal spectral decomposition of
    $X\in\Sym$ into rank-one spectral projectors (with repeated eigenvalues
    listed with multiplicity), then
    \begin{equation}\label{eq:dk-formula}
        D^m\Sigma(X)[H_1,\dots,H_m]
        =
        \sum_{\pi\in S_m}
        \sum_{i_0,\dots,i_m=1}^d
        \varphi^{[m]}(\lambda_{i_0},\dots,\lambda_{i_m})
        P_{i_0}H_{\pi(1)}P_{i_1}\cdots H_{\pi(m)}P_{i_m}.
    \end{equation}
    If in addition $\norm{\varphi^{(m)}}_\infty<\infty$, then
    \begin{equation}\label{eq:dk-frobenius-bound}
        \norm{D^m\Sigma(X)[H_1,\dots,H_m]}_{\Frob}
        \le \norm{\varphi^{(m)}}_\infty \prod_{r=1}^m \norm{H_r}_{\Frob},
    \end{equation}
    and therefore
    \begin{equation}\label{eq:dk-vector-bound}
        \norm{D^m G(x)[h_1,\dots,h_m]}
        \le \norm{\varphi^{(m)}}_\infty \prod_{r=1}^m \norm{h_r}.
    \end{equation}
\end{lemma}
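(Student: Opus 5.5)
The plan is to reduce everything to the scalar divided-difference calculus in an eigenbasis of $X$, and to prove the formula \eqref{eq:dk-formula} and the Frobenius bound \eqref{eq:dk-frobenius-bound} separately.

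\textbf{Formula.} We first treat polynomials $\varphi(x)=x^N$. Expanding $(X+\sum_r t_rH_r)^N$ and extracting the coefficient of $t_1\cdots t_m$ gives a sum over placements of $H_{\pi(1)},\dots,H_{\pi(m)}$ between powers of $X$. Inserting $X^k=\sum_a\lambda_a^kP_a$ between the $H$'s and summing the resulting geometric-type sums gives exactly $\sum_\pi\sum_{i_0,\dots,i_m}(x^N)^{[m]}(\lambda_{i_0},\dots,\lambda_{i_m})P_{i_0}H_{\pi(1)}\cdots P_{i_m}$. This uses the identity $\sum_{k_0+\dots+k_m=N-m}\prod_j x_j^{k_j}=(x^N)^{[m]}(x_0,\dots,x_m)$.

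By linearity the formula then holds for all polynomials. For general $\varphi\in C^m$ we use that all matrices in play have spectrum in a compact interval $J$. Approximate $\varphi$ by polynomials $p_n$ with $p_n^{(j)}\to\varphi^{(j)}$ uniformly on $J$ for $j\le m$. The Hermite--Genocchi formula gives uniform convergence of $p_n^{[m]}\to\varphi^{[m]}$, so the candidate multilinear maps converge uniformly for $X$ in compacts. A standard argument (uniform convergence of derivatives implies differentiability of the limit, iterated $m$ times) then yields $m$-fold Fréchet differentiability of $\Sigma$ and the formula.

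Repeated eigenvalues cause no problem, because the formula is written with rank-one projectors listed with multiplicity and $\varphi^{[m]}$ is continuous. The statements for $G$ follow because $\svecop$ and $\mat$ are linear isometries.

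\textbf{Frobenius bound.} This is the step I expect to be the main obstacle. A naive bound on \eqref{eq:dk-formula} loses factors of $d$ from summing over the internal indices, and the symmetrization over $S_m$ loses a factor $m!$. The constant $1/m!$ in \eqref{eq:divided-difference-bound} compensates the $m!$ permutations, so it suffices to show that for each fixed $\pi$ the multiplier operator
\[
(H_1,\dots,H_m)\mapsto\sum_{i}\psi(\lambda_{i_0},\dots,\lambda_{i_m})P_{i_0}H_1P_{i_1}\cdots H_mP_{i_m}
\]
has Frobenius norm at most $\sup|\psi|\prod_r\|H_r\|_\Frob$.

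For $m=1$ this is immediate: in the eigenbasis the map is Schur multiplication by $\psi(\lambda_a,\lambda_b)$, so $\|\cdot\|_\Frob\le\sup|\psi|\,\|H\|_\Frob$.

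For $m\ge2$ we write the $(i_0,i_m)$ entry as $\sum_{i_1,\dots,i_{m-1}}\psi(\cdot)\,h^{(1)}_{i_0i_1}\cdots h^{(m)}_{i_{m-1}i_m}$ and try to use the Hermite--Genocchi representation
\[
\varphi^{[m]}(x_0,\dots,x_m)=\int_{\text{simplex}}\varphi^{(m)}\Big(\sum_j s_jx_j\Big)\,ds .
\]
For each fixed $s$ the integrand $\varphi^{(m)}(\sum_j s_j\lambda_{i_j})$ is not of product form, which is exactly where the difficulty lies. The first attempt is to bound the entry by Cauchy--Schwarz over the internal indices, giving $\sup|\psi|\prod_r\|H_r\|_\Frob$ via $\sum_{i_0,i_m}\big(\sum_{i_1,\dots}|h^{(1)}_{i_0i_1}|\cdots|h^{(m)}_{i_{m-1}i_m}|\big)^2\le\prod_r\|H_r\|_\Frob^2$. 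This holds because products of entrywise absolute values satisfy $\||H_1|\cdots|H_m|\|_\Frob\le\prod\||H_r|\|_\Frob$ by submultiplicativity of the Frobenius norm ($\|AB\|_\Frob\le\|A\|_\op\|B\|_\Frob\le\|A\|_\Frob\|B\|_\Frob$). This absolute-value argument works directly since $|\psi|\le\sup|\psi|$ pointwise, and it requires no structure of $\psi$. Combined with the $1/m!$ factor and summing over $\pi$, this gives \eqref{eq:dk-frobenius-bound}, and \eqref{eq:dk-vector-bound} follows by isometry.
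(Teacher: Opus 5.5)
Your proposal is correct. For the dimension-free estimate \eqref{eq:dk-frobenius-bound} it follows the paper's argument exactly: in the eigenbasis of $X$, each ordered summand is dominated entrywise by $\sup|\varphi^{[m]}|$ times the $(i_0,i_m)$ entry of $|\widetilde H_{\pi(1)}|\cdots|\widetilde H_{\pi(m)}|$. Frobenius submultiplicativity then removes the internal sums with no loss in $d$, and the $1/m!$ from Hermite--Genocchi absorbs the $m!$ orderings.

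The difference is in the first half. The paper does not prove differentiability or \eqref{eq:dk-formula}; it cites Rubensson's Theorems~3.1 and~3.3, applied to the path $X+\sum_r t_rH_r$. You prove both directly:
\begin{enumerate}
\item For monomials, you extract the coefficient of $t_1\cdots t_m$ in $(X+\sum_r t_rH_r)^N$ and use the identity $\sum_{k_0+\cdots+k_m=N-m}\prod_j x_j^{k_j}=(x^N)^{[m]}(x_0,\dots,x_m)$.
\item For general $\varphi\in C^m$, you approximate $\varphi,\dots,\varphi^{(m)}$ simultaneously and uniformly by polynomials on a compact spectral window. You then apply the uniform-limit theorem for derivatives on the convex set $\{\|X\|_\op<R\}$.
\end{enumerate}

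The citation is shorter. Your route is self-contained and shows directly why $C^m$ is exactly the regularity needed. The uniform convergence of the candidate derivatives comes from the same multiplier bound applied to the symbol $p_n^{[j]}-\varphi^{[j]}$; even a crude $d$-dependent bound suffices there, since $d$ is fixed. Your route also gives well-definedness of the formula under repeated eigenvalues for free, because for any fixed choice of rank-one projectors each approximant equals the intrinsic derivative $D^j\Sigma_{p_n}(X)$.
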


\begin{lemma}\label{lem:spectral-taylor}
    Let $\varphi\in C_b^2(\R)$, i.e.
    \begin{equation}
        \norm{\varphi'}_\infty+\norm{\varphi''}_\infty<\infty.
    \end{equation}
    Define $G(v)=\svecop(\varphi(\mat(v)))$. Then $G:\R^p\to\R^p$ is
    Fr\'echet differentiable. Moreover, for every $x,h\in\R^p$,
    \begin{equation}\label{eq:G-lipschitz}
        \norm{DG(x)}_{\op}
        \le \norm{\varphi'}_\infty,
    \end{equation}
    \begin{equation}\label{eq:G-second}
        \norm{G(x+h)-G(x)-DG(x)h}
        \le \frac12 \norm{\varphi''}_\infty\norm{h}^2.
    \end{equation}
\end{lemma}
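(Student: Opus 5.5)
The plan is to prove Lemma~\ref{lem:spectral-taylor} directly from the Fr\'echet calculus of Lemma~\ref{lem:dk-bounds}, using the isometry \eqref{eq:svec-isometry} to pass between matrix and vector coordinates. Since $\varphi\in C^2$, Lemma~\ref{lem:dk-bounds} with $m=1,2$ gives that $\Sigma=\varphi(\cdot)$ is twice Fr\'echet differentiable on $\Sym$. The map $G=\svecop\circ\Sigma\circ\mat$ is the composition of $\Sigma$ with two linear isometries, so $G$ is twice Fr\'echet differentiable with $D^mG(x)[h_1,\dots,h_m]=\svecop\big(D^m\Sigma(\mat x)[\mat h_1,\dots,\mat h_m]\big)$.

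For \eqref{eq:G-lipschitz}, I apply \eqref{eq:dk-vector-bound} with $m=1$, which gives $\norm{DG(x)h}\le\norm{\varphi'}_\infty\norm{h}$ for every $h$, and this is the operator-norm bound. As a sanity check on the constant: in the eigenbasis of $X$, $D\Sigma(X)[H]$ is the Schur product of $H$ with the Loewner matrix $(\varphi^{[1]}(\lambda_a,\lambda_b))_{a,b}$. By \eqref{eq:divided-difference-bound}, every entry of that matrix is bounded by $\norm{\varphi'}_\infty$, so the Frobenius norm contracts by at most this factor, with no dimension dependence.

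For \eqref{eq:G-second}, I use Taylor's formula with integral remainder along the segment $x+\tau h$, $\tau\in[0,1]$. The map $\tau\mapsto G(x+\tau h)$ is $C^2$ as a map into $\R^p$, so
\begin{equation*}
G(x+h)-G(x)-DG(x)h=\int_0^1(1-\tau)\,D^2G(x+\tau h)[h,h]\,d\tau .
\end{equation*}
Taking norms and applying \eqref{eq:dk-vector-bound} with $m=2$ bounds the integrand by $\norm{\varphi''}_\infty\norm{h}^2$. Since $\int_0^1(1-\tau)\,d\tau=\tfrac12$, this gives the factor $\tfrac12$.

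The one point needing care is continuity of $\tau\mapsto D^2G(x+\tau h)$, which the integral form of Taylor's formula uses. This follows from formula \eqref{eq:dk-formula}. Alternatively, I can avoid it altogether by using the mean-value form, which needs only existence of the second derivative: bound $\norm{DG(x+\tau h)h-DG(x)h}\le\tau\norm{\varphi''}_\infty\norm{h}^2$ via \eqref{eq:dk-vector-bound}, then integrate $\tau$ over $[0,1]$. This again gives the constant $\tfrac12$, and it is the route I would take if continuity were in doubt.
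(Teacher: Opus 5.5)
Your proposal is correct and follows essentially the same route as the paper: invoke Lemma~\ref{lem:dk-bounds} with $m=1,2$ (transported to vector coordinates by the isometry) to get $\norm{DG}_{\op}\le\norm{\varphi'}_\infty$ and $\norm{D^2G(z)[h,h]}\le\norm{\varphi''}_\infty\norm{h}^2$, then apply Taylor's formula with integral remainder along the segment to obtain the factor $\tfrac12$. Your remarks on continuity of $D^2G$ and the mean-value alternative address a point the paper handles only by asserting that $G$ is $C^2$; they do not change the argument.
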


\begin{proof}[Proof of Lemma \ref{lem:spectral-taylor}]
    By Lemma \ref{lem:dk-bounds} with $m=1,2$, the map $G$ is $C^2$ and
    satisfies
    \begin{equation}
        \sup_{z\in\R^p}\norm{DG(z)}_{\op}\le \norm{\varphi'}_\infty,
        \qquad
        \sup_{z\in\R^p}\norm{D^2G(z)}_{\op}\le \norm{\varphi''}_\infty.
    \end{equation}
    This proves \eqref{eq:G-lipschitz}. Taylor's formula with integral
    remainder in the finite-dimensional Banach space $\R^p$ gives
    \begin{equation}
        G(x+h)-G(x)-DG(x)h
        =\int_0^1(1-s)D^2G(x+sh)[h,h]ds.
    \end{equation}
    Taking norms gives \eqref{eq:G-second}.
\end{proof}

\begin{lemma}\label{lem:smooth-spectral-expansion}
    Fix $L<\infty$ and $\psi\in\SC^3(\R)$, and put
    \begin{equation}
        \Phi_\psi(A)\bydef \frac1d\Tr\big(\psi(A)\big),
        \qquad A\in\Sym.
    \end{equation}
    There exists $C_{\psi,L}<\infty$, depending only on $L$ and on the suprema of $\psi$ and its first three derivatives over $[-L-1,L+1]$, such that the following holds for all $A,E\in\Sym$ satisfying
    \begin{equation}\label{eq:smooth-spectral-local-domain}
        \max\{\|A\|_\op,\|A+E\|_\op\}\leq L.
    \end{equation}
    We have
    \begin{equation}\label{eq:smooth-spectral-taylor}
        \Phi_\psi(A + E) - \Phi_\psi(A) = \frac{1}{d} \Tr(\psi'(A) E) + Q_{\psi, A}(E) + R_{\psi, A}(E),
    \end{equation}
    where $Q_{\psi, A}$ is a quadratic form represented by a self-adjoint linear
    operator $\CK_{\psi, A}:\Sym \to \Sym$ with
    \begin{equation}
        Q_{\psi, A}(E) = \langle E, \CK_{\psi, A} E \rangle
    \end{equation}
    and
    \begin{equation}
        \|\CK_{\psi, A}\|_{F \to F} \leq C_{\psi,L} d^{-1}.
    \end{equation}
    The remainder satisfies
    \begin{equation}\label{eq:smooth-spectral-remainder-bound}
        |R_{\psi, A}(E)| \leq C_{\psi,L} d^{-1} \|E\|_\Frob^3.
    \end{equation}
    Equivalently, in vector form, there is a symmetric matrix
    $K_{\psi, A} \in \R^{p \times p}$ such that
    \begin{equation}
        Q_{\psi, A}(E) = {\svecop(E)}^\top K_{\psi, A} \svecop(E),
        \qquad
        \|K_{\psi, A}\|_\op \leq C_{\psi,L} d^{-1}.
    \end{equation}
    Moreover,
    \begin{equation}\label{eq:smooth-spectral-linear-coeff-bound}
        d^{-1}\norm{\psi'(A)}_{\Frob}\le d^{-1/2}\norm{\psi'}_\infty.
    \end{equation}
    In particular, if $\|E\|_\Frob = \CO(d^{-1/2})$, then
    \begin{equation}
        |R_{\psi,A}(E)| \leq C_{\psi,L} d^{-5/2}.
    \end{equation}
\end{lemma}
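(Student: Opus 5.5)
The plan is a third-order Taylor expansion of $\Phi_\psi$ along the segment $A_s\bydef A+sE$, $s\in[0,1]$, with the derivatives written through Lemma~\ref{lem:dk-bounds}. First I localize. Every $A_s$ has $\|A_s\|_\op\le L$, by convexity of the operator norm and \eqref{eq:smooth-spectral-local-domain}. So I replace $\psi$ by $\widetilde\psi=\chi\psi$, where $\chi$ is a smooth cutoff equal to $1$ on $[-L,L]$ and supported in $[-L-1,L+1]$. This changes neither $\Phi_\psi$ nor any derivative at the points $A_s$, since spectral maps only see the values of the function on the spectrum. The norms $\|\widetilde\psi^{(r)}\|_\infty$ for $r\le3$ are controlled by the suprema of $\psi^{(r)}$ over $[-L-1,L+1]$, which is the dependence the statement allows. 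From now on I assume $\psi\in\SC_b^3$.

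Next I set $g(s)=\Phi_\psi(A_s)$ and write
\begin{equation*}
g(1)-g(0)=g'(0)+\tfrac12 g''(0)+\int_0^1\tfrac{(1-s)^2}{2}g'''(s)\,ds .
\end{equation*}
Since the trace is linear, $g^{(r)}(s)=d^{-1}\Tr\big(D^r\psi(A_s)[E,\dots,E]\big)$. For $r=1$, formula \eqref{eq:dk-formula} together with cyclicity of the trace collapses the sum to $\sum_a\psi'(\lambda_a)\Tr(P_aE)=\Tr(\psi'(A)E)$. This gives the linear term, and \eqref{eq:smooth-spectral-linear-coeff-bound} follows from $\|\psi'(A)\|_\Frob\le\sqrt d\,\|\psi'\|_\infty$.

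For the quadratic term I set $Q_{\psi,A}(E)\bydef\frac1{2d}\Tr\,D^2\psi(A)[E,E]$. This is a symmetric bilinear form on $\Sym$, so the Riesz representation under the trace inner product gives a self-adjoint operator $\CK_{\psi,A}$. Its norm bound comes from Cauchy--Schwarz and \eqref{eq:dk-frobenius-bound} with $m=2$:
\begin{equation*}
\big|\Tr\,D^2\psi(A)[E,F]\big|\le\sqrt d\,\|D^2\psi(A)[E,F]\|_\Frob\le\sqrt d\,\|\psi''\|_\infty\|E\|_\Frob\|F\|_\Frob .
\end{equation*}
This only gives a norm of order $d^{-1/2}$, which is too weak. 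To get $d^{-1}$ I will instead use cyclicity in \eqref{eq:dk-formula}. The trace forces $i_0=i_2$, which gives
\begin{equation*}
\Tr\,D^2\psi(A)[E,F]=2\sum_{a,b}\psi^{[2]}(\lambda_a,\lambda_a,\lambda_b)\,\Tr(P_aEP_bF).
\end{equation*}
In the eigenbasis this is $2\sum_{a,b}c_{ab}E_{ab}F_{ab}$, with coefficients bounded by $|c_{ab}|\le\frac12\|\psi''\|_\infty$ by \eqref{eq:divided-difference-bound}. The form is then bounded by $\|\psi''\|_\infty\|E\|_\Frob\|F\|_\Frob$, so $\|\CK_{\psi,A}\|_{F\to F}\le C d^{-1}$. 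The vector form with $K_{\psi,A}$ follows from the isometry \eqref{eq:svec-isometry}.

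The remainder needs the same improvement at third order, and this is the step I expect to be hardest. Cyclicity reduces $\Tr\,D^3\psi(A_s)[E,E,E]$ to sums $\sum_{a,b,c}\psi^{[3]}(\lambda_a,\lambda_a,\lambda_b,\lambda_c)E_{ab}E_{bc}E_{ca}$, taken in the eigenbasis of $A_s$. A naive entrywise bound on these coefficients does not give $\|E\|_\Frob^3$. I would handle this with the integral (Hermite--Genocchi) representation of $\psi^{[3]}$, or alternatively with a Fourier representation of $\widetilde\psi$, $\psi^{[3]}$ being a mixture of products $e^{i\xi_0\lambda_a}e^{i\xi_1\lambda_b}e^{i\xi_2\lambda_c}$. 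Each product gives a term of the form $\Tr(U_0EU_1EU_2E)$ with unitary $U_j$, and such a term is at most $\|E\|_\op\|E\|_\Frob^2\le\|E\|_\Frob^3$. The cost is that the Fourier route needs $\int|\xi|^3|\hat{\widetilde\psi}|<\infty$, which is slightly more than $C^3$. If it fails I will instead bound $|\Tr\,D^3\psi(A_s)[E,E,E]|$ directly by $\|E\|_\Frob\,\|D^2(\psi')(A_s)[E,E]\|_\Frob$, rewriting $\Tr\,D^3\psi$ as $\Tr(E\,D^2(\psi')[E,E])$, then apply \eqref{eq:dk-frobenius-bound} to $\psi'$ with $m=2$. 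This requires $\psi'''$ bounded, which $C^3$ on the compact interval supplies. The result is $|R_{\psi,A}(E)|\le C d^{-1}\|E\|_\Frob^3$, and the final claim follows by substituting $\|E\|_\Frob=\CO(d^{-1/2})$.
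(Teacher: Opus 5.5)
Your proposal is correct and is essentially the paper's proof: the cutoff localization and the Taylor expansion with integral remainder are the same, your fallback identity $\Tr D^3\psi(B)[E,E,E]=\Tr\big(E\,D^2(\psi')(B)[E,E]\big)$ combined with \eqref{eq:dk-frobenius-bound} for $\psi'$ at $m=2$ is exactly the paper's remainder bound, and your eigenbasis computation of the Hessian is an equivalent form of the paper's identity $D^2\Phi_\psi(A)[H_1,H_2]=\frac1d\Tr\big(D(\psi')(A)[H_1]H_2\big)$, which the paper bounds via \eqref{eq:dk-frobenius-bound} at $m=1$. The Fourier detour is unnecessary, and the entrywise bound you dismissed in fact works: take $N_{ab}\bydef|E_{ab}|$ in the eigenbasis of $A_s$; then $\sum_{a,b,c}|E_{ab}E_{bc}E_{ca}|=\Tr(N^3)\le\|N\|_\Frob^3=\|E\|_\Frob^3$, which is the same mechanism used in the proof of Lemma~\ref{lem:dk-bounds}.
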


\begin{lemma}\label{lem:spectral-taylor-4}
    Fix $0\le t\le T-1$ and recall from \eqref{eq:vectorized-AMP-recursion} that $G_t(v)\bydef \svecop(\sigma_t(\mat(v)))$.
    For each $y\in\R^p$, there exist a symmetric bilinear map
    \begin{equation}
        \BFB_{t,y}:(\R^p)^2 \to \R^p,
    \end{equation}
    a symmetric trilinear map
    \begin{equation}
        \BFT_{t,y}:(\R^p)^3 \to \R^p,
    \end{equation}
    and a remainder map
    \begin{equation}
        \rho_{t,y}^{(4)}:\R^p \to \R^p
    \end{equation}
    such that for every $g\in\R^p$,
    \begin{equation}\label{eq:local-spectral-third-order}
        G_t(y+g)-G_t(y)-DG_t(y)g
        =
        \BFB_{t,y}(g,g)+\BFT_{t,y}(g,g,g)+\rho_{t,y}^{(4)}(g).
    \end{equation}
    Moreover, with
    \begin{equation}\label{eq:M-sigma-4-def}
        M_{\sigma,4}\bydef
        \max_{0\,\le\, s\,\le\, T-1}\max_{1\,\le\, r\,\le\, 4}\norm{\sigma_s^{(r)}}_\infty,
    \end{equation}
    there is a constant $C<\infty$, depending only on $M_{\sigma,4}$, such that
    for all $y\in\R^p$ and all $g,g_1,g_2,g_3,w\in\R^p$,
    \begin{equation}\label{eq:local-spectral-B-bound}
        \norm{\BFB_{t,y}(g_1,g_2)} \le C \norm{g_1}\norm{g_2},
    \end{equation}
    \begin{equation}\label{eq:local-spectral-T-bound}
        \norm{\BFT_{t,y}(g_1,g_2,g_3)} \le C \norm{g_1}\norm{g_2}\norm{g_3},
    \end{equation}
    \begin{equation}\label{eq:local-spectral-rem-bound}
        \norm{\rho_{t,y}^{(4)}(g)} \le C \norm{g}^4,
    \end{equation}
    \begin{equation}\label{eq:local-spectral-B-HS}
        \big\|\inner{w}{\BFB_{t,y}(\cdot,\cdot)}\big\|_{\mathrm{HS}}
        \le C \sqrt d \norm{w},
    \end{equation}
    and
    \begin{equation}\label{eq:local-spectral-T-HS}
        \big\|\inner{w}{\BFT_{t,y}(\cdot,\cdot,\cdot)}\big\|_{\mathrm{HS}}
        \le C d \norm{w}.
    \end{equation}
\end{lemma}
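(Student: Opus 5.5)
We take the three maps to be the normalized Taylor coefficients of $G_t$ at $y$:
\begin{equation*}
    \BFB_{t,y}(g_1,g_2)\bydef\tfrac12 D^2G_t(y)[g_1,g_2],
    \qquad
    \BFT_{t,y}(g_1,g_2,g_3)\bydef\tfrac16 D^3G_t(y)[g_1,g_2,g_3],
\end{equation*}
and we define the remainder by the integral form of Taylor's theorem,
\begin{equation*}
    \rho^{(4)}_{t,y}(g)\bydef\int_0^1\frac{(1-s)^3}{6}\,D^4G_t(y+sg)[g,g,g,g]\,ds .
\end{equation*}
By Lemma~\ref{lem:dk-bounds} with $m\le4$, $G_t$ is $C^4$. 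Symmetry of $\BFB$ and $\BFT$ follows from the symmetry of higher Fr\'echet derivatives. Taylor's formula in $\R^p$ then gives \eqref{eq:local-spectral-third-order} exactly. The pointwise bounds \eqref{eq:local-spectral-B-bound}--\eqref{eq:local-spectral-rem-bound} follow from \eqref{eq:dk-vector-bound} with $m=2,3,4$. The constants are $\norm{\sigma_t''}_\infty/2$, $\norm{\sigma_t'''}_\infty/6$ and $\norm{\sigma_t^{(4)}}_\infty/24$, and each is bounded by $M_{\sigma,4}$.

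The real content is the pair of Hilbert--Schmidt bounds \eqref{eq:local-spectral-B-HS}--\eqref{eq:local-spectral-T-HS}. We reduce them to matrix language. Since $\svecop$ is an isometry, $\inner{w}{D^mG_t(y)[h_1,\dots,h_m]}=\Tr\big(W\,D^m\Sigma(X)[H_1,\dots,H_m]\big)$, where $W=\mmat(w)$, $X=\mmat(y)$ and $H_r=\mmat(h_r)$. The HS norm of a multilinear form on $\R^p$ therefore equals that of the corresponding form on $\Sym$. This is at most the HS norm of the natural extension of \eqref{eq:dk-formula} to all of $\R^{d\times d}$, because restricting a multilinear form to a subspace, with an orthonormal basis of the subspace completed to one of the whole space, can only decrease the sum of squared coefficients. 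Next we conjugate by the eigenvector matrix $U$ of $X$. This is an orthogonal change of basis on $\R^{d\times d}$ and preserves HS norms. In the eigenbasis, write $\widetilde W=U^\top WU$ and $\widetilde H_r=U^\top H_rU$; then $P_a=e_ae_a^\top$. The sum over $\pi\in S_m$ in \eqref{eq:dk-formula} has $m!$ terms, and we handle it by the triangle inequality. A single permutation term contributes
\begin{equation*}
    \sum_{i_0,\dots,i_m}\varphi^{[m]}(\lambda_{i_0},\dots,\lambda_{i_m})\,
    \widetilde W_{i_m i_0}\,(\widetilde H_{\pi(1)})_{i_0i_1}\cdots(\widetilde H_{\pi(m)})_{i_{m-1}i_m}.
\end{equation*}

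This expression is a multilinear form in the matrix units $E_{i_0i_1},\dots,E_{i_{m-1}i_m}$. Its coefficient on the tuple $\big((i_0,i_1),\dots,(i_{m-1},i_m)\big)$ is $\varphi^{[m]}(\lambda_{i_0},\dots,\lambda_{i_m})\widetilde W_{i_mi_0}$ when consecutive indices chain, and zero otherwise. Squaring and summing, and bounding $|\varphi^{[m]}|\le\norm{\varphi^{(m)}}_\infty/m!$ by \eqref{eq:divided-difference-bound}, gives
\begin{equation*}
    \big\|\text{term}\big\|_\HS^2\le C\sum_{i_0,\dots,i_m}|\widetilde W_{i_mi_0}|^2=C\,d^{\,m-1}\norm{W}_\Frob^2 .
\end{equation*}
The factor $d^{m-1}$ comes from the $m-1$ interior indices, which appear in no factor of $W$. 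For $m=2$ this yields $C\sqrt d\,\norm{w}$, and for $m=3$ it yields $Cd\,\norm{w}$, as claimed.

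The main obstacle we anticipate is bookkeeping rather than analysis. First, we must check that the extension/restriction argument between $\Sym$ with the $\svecop$ basis and $\R^{d\times d}$ with matrix units really only decreases the HS norm. Second, the case of repeated eigenvalues must be covered. For the latter we will use the continuous extension of the divided differences together with the rank-one projectors listed with multiplicity in Lemma~\ref{lem:dk-bounds}, so that the same coefficient formula holds verbatim.
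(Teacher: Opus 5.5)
Your proposal is correct and follows essentially the same route as the paper. It uses the same normalized Taylor coefficients with an integral remainder and gets the pointwise bounds from \eqref{eq:dk-vector-bound}. For the Hilbert--Schmidt bounds it extends from $\Sym$ to $\R^{d\times d}$, evaluates on the eigenvector matrix units $u_au_b^\top$, and counts the $m-1$ free interior indices to obtain $d^{m-1}\norm{W}_\Frob^2$. The only cosmetic difference is how you handle the $m!$ permuted terms: you use the triangle inequality. The paper instead notes that $\inner{w}{\BFB_{t,y}(\cdot,\cdot)}$ and $\inner{w}{\BFT_{t,y}(\cdot,\cdot,\cdot)}$ are exactly averages of argument permutations of a single ordered form, whose HS norm is permutation-invariant.
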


\begin{lemma}\label{lem:bilinear-matrix}
    Let $\mathfrak B:(\R^p)^2\to\R$ be bilinear. There exists a unique matrix $A\in\R^{p\times p}$ such that
    \begin{equation}
        \FB(x,y)=\inner{x}{Ay},
        \qquad x,y\in\R^p.
    \end{equation}
    If $M=(A+A^\top)/2$, then
    \begin{equation}
        \FB(u,u)=\inner{u}{Mu},
    \end{equation}
    and
    \begin{equation}
        \norm{M}_{\Frob}\le \norm{A}_{\Frob}=\norm{\FB}_{\mathrm{HS}}.
    \end{equation}
\end{lemma}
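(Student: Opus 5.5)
Since $\FB$ is bilinear on $\R^p\times\R^p$, it is determined by its values on pairs of standard basis vectors. I will set $A_{jk}\bydef\FB(e_j,e_k)$. Bilinearity then gives
\[
\FB(x,y)=\sum_{j,k}x_jy_k\FB(e_j,e_k)=\inner{x}{Ay}.
\]
For uniqueness, if two matrices $A,A'$ both represent $\FB$, then testing with $x=e_j$ and $y=e_k$ gives $A_{jk}=A'_{jk}$ for every $j,k$.

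For the symmetrization, I will use $\inner{u}{A^\top u}=\inner{Au}{u}=\inner{u}{Au}$. Averaging this with $\inner{u}{Au}$ gives $\FB(u,u)=\inner{u}{Mu}$ with $M=(A+A^\top)/2$.

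For the norm bound, the triangle inequality together with $\norm{A^\top}_\Frob=\norm{A}_\Frob$ gives $\norm{M}_\Frob\le\norm{A}_\Frob$. It remains to identify $\norm{A}_\Frob$ with $\norm{\FB}_\HS$. I take the Hilbert--Schmidt norm of a multilinear form to mean the square root of the sum of its squared values over tuples of orthonormal basis vectors. With that convention,
\[
\norm{\FB}_\HS^2=\sum_{j,k}\FB(e_j,e_k)^2=\sum_{j,k}A_{jk}^2=\norm{A}_\Frob^2.
\]
The only point needing care is this normalization convention for $\norm{\cdot}_\HS$. The sum is basis-independent for orthonormal bases, so the identity holds whichever orthonormal basis is used. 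There is no real obstacle in this argument.
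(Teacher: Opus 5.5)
Your proof is correct and follows essentially the same route as the paper: define $A_{ab}=\FB(e_a,e_b)$, note that $u^\top A u = u^\top M u$, and bound $\norm{M}_\Frob$ by $\norm{A}_\Frob$. The only difference is that you get the Frobenius bound from the triangle inequality and $\norm{A^\top}_\Frob=\norm{A}_\Frob$, while the paper views $A\mapsto (A+A^\top)/2$ as the orthogonal projection onto symmetric matrices; you also spell out uniqueness and the identity $\norm{A}_\Frob=\norm{\FB}_\HS$, which the paper leaves implicit.
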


\begin{proof}[Proof of Lemma \ref{lem:bilinear-matrix}]
    The matrix $A$ is defined by $A_{ab}=\FB(e_a,e_b)$ in the standard basis. Since $u^\top Au=u^\top Mu$, the first identity is immediate. The Frobenius bound follows from orthogonal projection onto the symmetric subspace.
\end{proof}

\begin{lemma}\label{lem:trilinear-decomp}
    Let $\FT:(\R^p)^3 \to \R$ be trilinear and fix an orthonormal basis $(e_\nu)_{\nu = 1}^p$ of $\R^p$. Then there exist symmetric matrices $M_1, \ldots, M_p \in \R^{p \times p}$ and a vector $b \in \R^p$ such that for every $u \in \R^p$,
    \begin{equation}\label{eq:trilinear-decomp-main}
        \FT(u,u,u) = \inner{b}{u} + \sum_{\nu = 1}^p \inner{e_\nu}{u} \left(u^\top M_\nu u - \frac 1 p \Tr M_\nu \right)
    \end{equation}
    Further, we have bounds
    \begin{equation}
        \sum_{\nu\,=\,1}^p \norm{M_\nu}_\Frob^2 \,\le\, \norm{\FT}_\HS^2, \quad \norm{b} \,\le\, p^{-1/2} \norm{\FT}_\HS
    \end{equation}
\end{lemma}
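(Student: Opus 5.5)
We prove Lemma~\ref{lem:trilinear-decomp} by an explicit construction in the fixed basis. Write $u=\sum_\nu u_\nu e_\nu$ with $u_\nu=\inner{e_\nu}{u}$, and define the coefficient array $T_{\nu ab}\bydef\FT(e_\nu,e_a,e_b)$. Then
\[
\FT(u,u,u)=\sum_{\nu=1}^p u_\nu\sum_{a,b=1}^p T_{\nu ab}u_au_b=\sum_{\nu=1}^p u_\nu\, u^\top A_\nu u,
\]
where $(A_\nu)_{ab}\bydef T_{\nu ab}$. We then symmetrize, setting $M_\nu\bydef (A_\nu+A_\nu^\top)/2$. This gives $u^\top A_\nu u=u^\top M_\nu u$, and by Lemma~\ref{lem:bilinear-matrix} applied to each slice, $\norm{M_\nu}_\Frob\le\norm{A_\nu}_\Frob$.

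Next we center each quadratic form:
\[
\FT(u,u,u)=\sum_\nu u_\nu\Big(u^\top M_\nu u-\tfrac1p\Tr M_\nu\Big)+\sum_\nu u_\nu\,\tfrac1p\Tr M_\nu .
\]
The last sum is linear in $u$, so it equals $\inner{b}{u}$ with $b\bydef p^{-1}\sum_\nu (\Tr M_\nu)e_\nu$. This yields \eqref{eq:trilinear-decomp-main}.

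For the bounds, summing $\norm{M_\nu}_\Frob^2\le\norm{A_\nu}_\Frob^2$ over $\nu$ gives
\[
\sum_\nu\norm{M_\nu}_\Frob^2\le\sum_{\nu,a,b}T_{\nu ab}^2=\norm{\FT}_\HS^2,
\]
using that the HS norm of a multilinear form is the Frobenius norm of its coefficient array in any orthonormal basis. For $b$, apply Cauchy--Schwarz on the trace, $\abs{\Tr M_\nu}\le\sqrt p\,\norm{M_\nu}_\Frob$. Hence
\[
\norm{b}^2=p^{-2}\sum_\nu(\Tr M_\nu)^2\le p^{-1}\sum_\nu\norm{M_\nu}_\Frob^2\le p^{-1}\norm{\FT}_\HS^2 .
\]

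There is no real obstacle. The only point needing care is that the symmetrization is done in the last two slots only, so the HS bound is inherited slice by slice without any loss of constants. The choice of which slot plays the role of $\nu$ is arbitrary, since $\FT$ need not be symmetric.
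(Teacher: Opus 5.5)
Your proposal is correct and follows the paper's proof: both take the first-slot slices $\FT(e_\nu,\cdot,\cdot)$, symmetrize each slice with Lemma~\ref{lem:bilinear-matrix}, and center each quadratic form by $\frac1p\Tr M_\nu$, collecting the trace terms into $b$. The bounds then come from summing the slice Frobenius norms and applying Cauchy--Schwarz to the trace. One cosmetic point: if $(e_\nu)$ is not the standard basis, read $A_\nu$ as the matrix $\sum_{a,b}T_{\nu ab}\,e_ae_b^\top$. This changes neither Frobenius norms nor traces, so your argument is unaffected.
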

\begin{proof}[Proof of Lemma \ref{lem:trilinear-decomp}]
    For each $\nu \in [p]$, define bilinear form $\FB_\nu(x, y) \bydef \FT(e_\nu, x, y)$. Let $A_\nu$ be the matrix representation of this bilinear form. From Lemma \ref{lem:bilinear-matrix} and let $M_\nu$ be its symmetric version. Then
    \begin{equation}
        \FT(u,u,u) \,=\, \sum_{\nu\,=\,1}^p \inner{e_\nu}{u} \FB_\nu(u,u) \,=\, \sum_{\nu \,=\, 1}^p \inner{e_\nu}{u} \inner{u}{M_\nu u}
    \end{equation}
    Set $m_\nu \bydef \frac 1 p \Tr M_\nu$ and $b \bydef \sum_{\nu = 1}^p m_\nu e_\nu$. This gives \eqref{eq:trilinear-decomp-main}. Furthermore, we have
    \begin{equation}
        \sum_{\nu\,=\,1}^p \norm{M_\nu}_\Frob^2 \,\le\, \sum_{\nu\,=\,1}^p \norm{A_\nu}_\Frob^2 \,=\, \sum_{\nu,a,b}^p \abs{\FT(e_\nu,e_a,e_b)}^2 \,=\, \norm{\FT}_\HS^2
    \end{equation}
    Similarly,
    \begin{equation}
        \norm{b}^2 \,=\, \sum_{\nu\,=\,1}^p m_\nu^2 \,\le\, \frac 1 p \sum_{\nu\,=\,1}^p \norm{M_\nu}_\Frob^2 \,\le\, \frac 1 p \norm{\FT}_\HS^2
    \end{equation}
\end{proof}

\begin{lemma}\label{lem:hs-composition}
Let $\FB:(\R^p)^2\to\R$ be bilinear and let $A_1,A_2\in \R^{p \times p}$. Define
\begin{equation}
    \FB_{A_1,A_2}(u_1,u_2) \bydef \FB(A_1u_1,A_2u_2).
\end{equation}
Then
\begin{equation}\label{eq:hs-compose-bilinear}
    \norm{\FB_{A_1,A_2}}_\HS
    \le \norm{\FB}_\HS \norm{A_1}_\op \norm{A_2}_\op.
\end{equation}
If $\FT:(\R^p)^3\to\R$ is trilinear and $A_1,A_2,A_3:\R^p\to\R^p$ are linear, then
\begin{equation}\label{eq:hs-compose-trilinear-direct}
    \norm{\FT(A_1\cdot,A_2\cdot,A_3\cdot)}_\HS
    \le \norm{\FT}_\HS \norm{A_1}_\op \norm{A_2}_\op \norm{A_3}_\op.
\end{equation}
Now let $\BFB:(\R^p)^2\to\R^p$ be bilinear and suppose that for some $K<\infty$,
\begin{equation}\label{eq:hs-compose-hyp}
    \norm{\inner{w}{\BFB(\cdot,\cdot)}}_\HS \le K \norm{w}
    \qquad \text{for every } w\in\R^p.
\end{equation}
Let $A,U,V,W:\R^p\to\R^p$ be linear and define
\begin{equation}
    \FC(u_1,u_2,u_3)
    \bydef
    \FB\big(Au_1, U\BFB(Vu_2,Wu_3)\big).
\end{equation}
Then
\begin{equation}\label{eq:hs-compose-trilinear}
    \norm{\FC}_\HS
    \le K \norm{\FB}_\HS \norm{A}_\op \norm{U}_\op \norm{V}_\op \norm{W}_\op.
\end{equation}
\end{lemma}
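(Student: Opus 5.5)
The three bounds in Lemma~\ref{lem:hs-composition} are deterministic multilinear-algebra estimates. The plan is to write every form in coordinates with respect to the standard basis and to use the fact that the Hilbert--Schmidt norm of a multilinear form is the Frobenius norm of its coefficient array. A consequence we will use repeatedly: precomposing one slot of a form with a linear map $A$ multiplies the norm by at most $\|A\|_\op$.

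\textbf{Bilinear and trilinear cases.} For \eqref{eq:hs-compose-bilinear}, let $M$ be the matrix of $\FB$ from Lemma~\ref{lem:bilinear-matrix}, so that
\[
\FB_{A_1,A_2}(u_1,u_2)=\inner{u_1}{A_1^\top M A_2 u_2}.
\]
Hence
\[
\|\FB_{A_1,A_2}\|_\HS=\|A_1^\top M A_2\|_\Frob\le \|A_1\|_\op\|M\|_\Frob\|A_2\|_\op,
\]
by the standard inequality $\|XY\|_\Frob\le\|X\|_\op\|Y\|_\Frob$.

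For \eqref{eq:hs-compose-trilinear-direct}, we compose one slot at a time. Replacing the first argument by $A_1u_1$ leaves the other two slots untouched. Flatten the coefficient tensor of $\FT$ into a $p\times p^2$ matrix $\mathcal T$, with rows indexed by the first slot. The new tensor is then $A_1^\top\mathcal T$, whose Frobenius norm is at most $\|A_1\|_\op\|\mathcal T\|_\Frob$. Repeating this for slots two and three, each time using the flattening along that slot, gives the product bound.

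\textbf{Composite case.} For \eqref{eq:hs-compose-trilinear}, first drop $A,V,W$: by \eqref{eq:hs-compose-trilinear-direct} (with identity in the other slots) it suffices to bound
\[
\mathfrak C_0(x,y,z)\bydef\FB\big(x,U\BFB(y,z)\big)
\]
by $K\|\FB\|_\HS\|U\|_\op$.

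Write $\FB(x,v)=\inner{M^\top x}{v}$. Then
\[
\mathfrak C_0(e_a,\cdot,\cdot)=\inner{w_a}{\BFB(\cdot,\cdot)},\qquad w_a\bydef U^\top M^\top e_a .
\]
By hypothesis \eqref{eq:hs-compose-hyp}, each of these has HS norm at most $K\|w_a\|$. Summing over $a$,
\[
\|\mathfrak C_0\|_\HS^2=\sum_a\big\|\inner{w_a}{\BFB}\big\|_\HS^2\le K^2\sum_a\|w_a\|^2=K^2\|U^\top M^\top\|_\Frob^2\le K^2\|U\|_\op^2\|\FB\|_\HS^2 .
\]

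\textbf{Main difficulty.} The only delicate point is this last step: the hypothesis must be applied row by row and then summed. Applying it once to a single $w$ would not suffice, since $\FB$ couples $x$ and $v$. Everything else is routine slotwise bookkeeping.
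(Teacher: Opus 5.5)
Your proposal is correct and follows essentially the paper's argument. The bilinear case is the same $\|A_1^\top M A_2\|_\Frob$ bound, and your slot-by-slot flattening is the paper's Kronecker-product bound in another form. The composite case uses the same key step: apply \eqref{eq:hs-compose-hyp} to each row vector $w_a$ and sum the squares to get a Frobenius norm.

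The only difference is organizational. You first remove $A,V,W$ via \eqref{eq:hs-compose-trilinear-direct}, whereas the paper folds $A$ into $w_a=U^\top L^\top Ae_a$ and handles $V,W$ by applying the bilinear bound to each slice $\FC(e_a,\cdot,\cdot)$. One small notational point: your $M$ must be the full representing matrix of $\FB$ (called $A$ in Lemma~\ref{lem:bilinear-matrix}), not its symmetrization.
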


\subsection{Proof of Lemma \ref{lem:dk-bounds}}

\begin{proof}[Proof of Lemma \ref{lem:dk-bounds}]
    This is obtained from
    \cite{RubenssonEmanuelH2023Hodo}, Theorems 3.1 and 3.3 and the Hermitian
    formula in Section 4, applied to the path
    $X+\sum_{r=1}^m t_rH_r$. Since $\svecop$ and $\mat$ are linear isometries,
    the same differentiability holds for $G$.

    It remains only to prove the dimension-free norm bound. Fix
    $\pi\in S_m$ and let $Z_\pi$ be the corresponding ordered summand in
    \eqref{eq:dk-formula}. Diagonalize $X=U\Lambda U^\top$ and set
    $\widetilde H_r\bydef U^\top H_rU$. Then, for fixed endpoints $i_0,i_m$,
    \begin{equation}
        (U^\top Z_\pi U)_{i_0i_m}
        =
        \sum_{i_1,\dots,i_{m-1}=1}^d
        \varphi^{[m]}(\lambda_{i_0},\dots,\lambda_{i_m})
        (\widetilde H_{\pi(1)})_{i_0i_1}\cdots
        (\widetilde H_{\pi(m)})_{i_{m-1}i_m}.
    \end{equation}
    Therefore, with entrywise absolute values,
    \begin{equation}
        \abs{U^\top Z_\pi U}_{i_0i_m}
        \le
        \left(\norm{\varphi^{[m]}}_\infty
        \abs{\widetilde H_{\pi(1)}}\cdots \abs{\widetilde H_{\pi(m)}}\right)_{i_0, i_m}
    \end{equation}
    Frobenius submultiplicativity gives
    \begin{equation}
        \norm{Z_\pi}_{\Frob} \,\le\, \norm{\varphi^{[m]}}_\infty \prod_{r\,=\,1}^m \norm{\widetilde H_r}_{\Frob} \,=\, \norm{\varphi^{[m]}}_\infty \prod_{r\,=\,1}^m \norm{H_r}_{\Frob}.
    \end{equation}
    Summing the $m!$ ordered terms and using the Hermite--Genocchi bound
    \eqref{eq:divided-difference-bound}, namely
    $\norm{\varphi^{[m]}}_\infty\le \norm{\varphi^{(m)}}_\infty/m!$, yields
    \eqref{eq:dk-frobenius-bound}. The vector bound
    \eqref{eq:dk-vector-bound} follows by applying the isometries $\svecop$ and
    $\mat$.
\end{proof}

\subsection{Proof of Lemma \ref{lem:smooth-spectral-expansion}}

\begin{proof}[Proof of Lemma \ref{lem:smooth-spectral-expansion}]
    Choose a smooth cutoff $\chi_L$ that equals one on $[-L,L]$ and is supported on $[-L-1,L+1]$, and set $\bar\psi=\chi_L\psi$. Its first three derivatives are bounded by a constant depending only on $L$ and on the corresponding local bounds for $\psi$. Moreover,
    \begin{equation*}
        A+sE=(1-s)A+s(A+E),
        \qquad 0\leq s\leq1,
    \end{equation*}
    so \eqref{eq:smooth-spectral-local-domain} and convexity of the operator norm imply $\|A+sE\|_\op\leq L$. Hence $\psi$ and $\bar\psi$ induce the same spectral functional along the entire segment from $A$ to $A+E$. It is therefore enough to prove the claim for the bounded $C^3$ function $\bar\psi$; to simplify notation, we again denote it by $\psi$.

    Lemma \ref{lem:dk-bounds} implies that the matrix map $A\mapsto \psi(A)$ is $C^3$; hence the normalized trace functional $\Phi_\psi$ is $C^3$ on $\Sym$.

    We first identify the linear term. If
    $A=\sum_{a=1}^d \lambda_a P_a$ is a spectral decomposition of
    $A$, then the first-order Daleckii--Krein formula gives
    \begin{equation}
        D(\psi(\cdot))(A)[E]
        =
        \sum_{a,b=1}^d \psi^{[1]}(\lambda_a,\lambda_b)P_a E P_b.
    \end{equation}
    Taking traces and using cyclic invariance leaves only the terms $a=b$, and
    $\psi^{[1]}(\lambda_a,\lambda_a)=\psi'(\lambda_a)$. Therefore
    \begin{equation}\label{eq:Phi-linear-term}
        D\Phi_\psi(A)[E]
        =\frac1d\Tr\big(\psi'(A)E\big).
    \end{equation}

    Define the symmetric bilinear form
    \begin{equation}
        B_{\psi,A}(H_1,H_2)\bydef \frac12 D^2\Phi_\psi(A)[H_1,H_2].
    \end{equation}
    Differentiating \eqref{eq:Phi-linear-term} in the direction $H_1$ and
    testing against $H_2$ gives
    \begin{equation}\label{eq:Phi-Hessian-trace}
        D^2\Phi_\psi(A)[H_1,H_2]
        =\frac1d\Tr\big(D(\psi') (A)[H_1]H_2\big).
    \end{equation}
    Hence, by Cauchy--Schwarz and Lemma \ref{lem:dk-bounds} applied to
    $\psi'$ with $m=1$,
    \begin{equation}
        \abs{D^2\Phi_\psi(A)[H_1,H_2]}
        \le C_\psi d^{-1}\norm{H_1}_{\Frob}\norm{H_2}_{\Frob}.
    \end{equation}
    The Riesz representation theorem on the inner product space $(\Sym,\langle\cdot,
    \cdot\rangle_\Frob)$ therefore gives a unique self-adjoint linear operator
    $\CK_{\psi,A}:\Sym\to\Sym$ such that
    \begin{equation}
        B_{\psi,A}(H_1,H_2)=\inner{H_1}{\CK_{\psi,A}(H_2)},
        \qquad H_1,H_2\in\Sym,
    \end{equation}
    and
    \begin{equation}
        \norm{\CK_{\psi,A}}_{\Frob\to\Frob}\le C_\psi d^{-1}.
    \end{equation}
    Thus
    \begin{equation}
        Q_{\psi,A}(E)\bydef B_{\psi,A}(E,E)
        =\inner{E}{\CK_{\psi,A}(E)}
        =\frac12D^2\Phi_\psi(A)[E,E].
    \end{equation}

    Taylor's theorem with integral remainder gives
    \begin{equation}
        \Phi_\psi(A+E)-\Phi_\psi(A)-D\Phi_\psi(A)[E]-\frac12D^2\Phi_\psi(A)[E,E]
        =
        \int_0^1 \frac{(1-s)^2}{2}D^3\Phi_\psi(A+sE)[E,E,E]ds.
    \end{equation}
    We define the right-hand side to be $R_{\psi,A}(E)$. Differentiating
    \eqref{eq:Phi-Hessian-trace} once more, and using symmetry of higher
    derivatives, yields
    \begin{equation}
        D^3\Phi_\psi(B)[H_1,H_2,H_3]
        =
        \frac1d\Tr\big(D^2(\psi')(B)[H_1,H_2]H_3\big).
    \end{equation}
    Applying Lemma \ref{lem:dk-bounds} to $\psi'$ with $m=2$ and then
    Cauchy--Schwarz gives
    \begin{equation}
        \abs{D^3\Phi_\psi(B)[H_1,H_2,H_3]}
        \le C_\psi d^{-1}
        \norm{H_1}_{\Frob}\norm{H_2}_{\Frob}\norm{H_3}_{\Frob}.
    \end{equation}
    Consequently
    \begin{equation}
        \abs{R_{\psi,A}(E)}\le C_\psi d^{-1}\norm{E}_{\Frob}^3,
    \end{equation}
    which is \eqref{eq:smooth-spectral-remainder-bound}. This proves
    \eqref{eq:smooth-spectral-taylor}.

    For the vector form, let
    \begin{equation}
        K_{\psi,A}\bydef \svecop\circ \CK_{\psi,A}\circ \mat,
    \end{equation}
    viewed as its matrix in the standard Euclidean coordinates of $\R^p$.
    Since $\svecop$ is an isometry, $K_{\psi,A}$ is symmetric and has the same
    operator norm as $\CK_{\psi,A}$. Finally,
    \begin{equation}
        d^{-1}\norm{\psi'(A)}_{\Frob}
        \le d^{-1}\sqrt d\,\norm{\psi'}_\infty
        =d^{-1/2}\norm{\psi'}_\infty,
    \end{equation}
    and the final claim follows immediately from
    \eqref{eq:smooth-spectral-remainder-bound}.
\end{proof}

\subsection{Proof of Lemma \ref{lem:spectral-taylor-4}}

\begin{proof}[Proof of Lemma \ref{lem:spectral-taylor-4}]
    Fix $t$, write $\varphi=\sigma_t$ and $G=G_t$. Since
    $\varphi\in C_b^4(\R)$ and $\norm{\varphi^{(r)}}_\infty\le M_{\sigma,4}$ for
    $1\le r\le4$, Lemma \ref{lem:dk-bounds} implies that $G$ is $C^4$ and
    \begin{equation}\label{eq:G-derivative-bounds-up-to-4}
        \sup_{z\in\R^p}\norm{D^rG(z)}_{\op}\le M_{\sigma,4},
        \qquad 1\le r\le 4.
    \end{equation}

    Define
    \begin{equation}
        \BFB_{t,y}(g_1,g_2) \bydef \frac12 D^2G(y)[g_1,g_2],
    \end{equation}
    \begin{equation}
        \BFT_{t,y}(g_1,g_2,g_3) \bydef \frac16 D^3G(y)[g_1,g_2,g_3],
    \end{equation}
    and
    \begin{equation}
        \rho_{t,y}^{(4)}(g)
        \bydef
        \int_0^1 \frac{(1-s)^3}{6} D^4G(y+sg)[g,g,g,g] ds.
    \end{equation}
    Taylor's theorem with integral remainder gives
    \eqref{eq:local-spectral-third-order}.  The pointwise bounds \eqref{eq:local-spectral-B-bound}--\eqref{eq:local-spectral-rem-bound} follow immediately from \eqref{eq:G-derivative-bounds-up-to-4}.

    It remains to prove the Hilbert--Schmidt bounds. Fix $w\in\R^p$ and write $W=\mat(w)$. Let
    \begin{equation}
        Y \,\bydef\, \mat(y) \,=\, \sum_{a\,=\,1}^d \lambda_a P_a,
        \qquad P_a = u_a u_a^\top,
    \end{equation}
    with eigenvalues repeated according to multiplicity. For matrices
    $H_j=\mat(g_j)$, the second-order Daleckii--Krein formula in Lemma \ref{lem:dk-bounds} gives
    \begin{equation}
        \inner{w}{D^2G(y)[g_1,g_2]}
        =
        \sum_{\pi\in S_2}
        \sum_{a,b,c=1}^d
        \varphi^{[2]}(\lambda_a,\lambda_b,\lambda_c)
        \Tr(WP_a H_{\pi(1)} P_b H_{\pi(2)} P_c).
    \end{equation}
    Define the ordered bilinear form on the full matrix space $\R^{d\times d}$ by
    \begin{equation}
        \FB_{y,w}(H_1,H_2)
        \bydef
        \sum_{a,b,c=1}^d
        \varphi^{[2]}(\lambda_a,\lambda_b,\lambda_c)
        \Tr(WP_a H_1 P_b H_2 P_c).
    \end{equation}
    Since $\BFB_{t,y}=\frac12D^2G(y)$,
    \begin{equation}
        \inner{w}{\BFB_{t,y}(g_1,g_2)}
        =\frac12\Big(\FB_{y,w}(H_1,H_2)+\FB_{y,w}(H_2,H_1)\Big).
    \end{equation}
    Permuting arguments preserves Hilbert--Schmidt norm, and restricting a
    multilinear form from $\R^{d\times d}$ to the subspace $\Sym$ cannot
    increase its Hilbert--Schmidt norm. Thus
    \begin{equation}\label{eq:B-scalar-HS-reduce}
        \big\|\inner{w}{\BFB_{t,y}(\cdot,\cdot)}\big\|_{\mathrm{HS}}
        \le \norm{\FB_{y,w}}_{\mathrm{HS}(\R^{d\times d})}.
    \end{equation}

    Let
    \begin{equation}
        \sf{E}_{ab}\bydef u_a u_b^\top,
        \qquad 1\le a,b\le d.
    \end{equation}
    Then $(\sf{E}_{ab})_{a,b}$ is a Frobenius-orthonormal basis of
    $\R^{d\times d}$. A direct computation shows that
    \begin{equation}
        \FB_{y,w}(\sf{E}_{ab},\sf{E}_{bc})
        =
        \varphi^{[2]}(\lambda_a,\lambda_b,\lambda_c)(u_c^\top W u_a),
    \end{equation}
    while $\FB_{y,w}(\sf{E}_{rs},\sf{E}_{tu})=0$ unless $s=t$. Hence
    \begin{equation}
        \norm{\FB_{y,w}}_{\mathrm{HS}}^2
        =
        \sum_{a,b,c=1}^d
        \abs{\varphi^{[2]}(\lambda_a,\lambda_b,\lambda_c)}^2
        \abs{u_c^\top W u_a}^2
        \le C d\norm{W}_{\Frob}^2
        = C d\norm{w}^2.
    \end{equation}
    Combining this with \eqref{eq:B-scalar-HS-reduce} proves
    \eqref{eq:local-spectral-B-HS}.

    The trilinear estimate is analogous. Define the trilinear form on
    $\R^{d\times d}$ by
    \begin{equation}
        \FT_{y,w}(H_1,H_2,H_3)
        \bydef
        \sum_{a,b,c,e=1}^d
        \varphi^{[3]}(\lambda_a,\lambda_b,\lambda_c,\lambda_e)
        \Tr(WP_a H_1 P_b H_2 P_c H_3 P_e).
    \end{equation}
    Since $\BFT_{t,y}=\frac16D^3G(y)$,
    $\inner{w}{\BFT_{t,y}(\cdot,\cdot,\cdot)}$ is the average over the six
    permutations of the arguments of $\FT_{y,w}$, restricted to $\Sym$. Hence
    \begin{equation}
        \big\|\inner{w}{\BFT_{t,y}(\cdot,\cdot,\cdot)}\big\|_{\mathrm{HS}}
        \le \norm{\FT_{y,w}}_{\mathrm{HS}(\R^{d\times d})}.
    \end{equation}
    Moreover,
    \begin{equation}
        \FT_{y,w}(\sf{E}_{ab},\sf{E}_{bc},\sf{E}_{ce})
        =
        \varphi^{[3]}(\lambda_a,\lambda_b,\lambda_c,\lambda_e)(u_e^\top W u_a),
    \end{equation}
    and the form vanishes on all other triples of matrix units. Therefore
    \begin{equation}
        \norm{\FT_{y,w}}_{\mathrm{HS}}^2
        \le C d^2\norm{W}_{\Frob}^2
        = C d^2\norm{w}^2.
    \end{equation}
    This proves \eqref{eq:local-spectral-T-HS}.
\end{proof}

\subsection{Proof of Lemma \ref{lem:hs-composition}}

\begin{proof}[Proof of Lemma \ref{lem:hs-composition}]
    Let $L$ be the matrix representation of the scalar bilinear form $\FB$, so
    $\FB(x,y)=\inner{x}{Ly}$ and $\norm{\FB}_\HS=\norm{L}_\Frob$. Then
    \begin{equation}
        \FB_{A_1,A_2}(u_1,u_2)=\inner{u_1}{A_1^\top L A_2 u_2},
    \end{equation}
    and therefore
    \begin{equation}
        \norm{\FB_{A_1,A_2}}_\HS
        =\norm{A_1^\top L A_2}_\Frob
        \le \norm{L}_\Frob\norm{A_1}_\op\norm{A_2}_\op,
    \end{equation}
    proving \eqref{eq:hs-compose-bilinear}.

    For \eqref{eq:hs-compose-trilinear-direct}, identify $\FT$ with its
    coefficient tensor $T=(T_{abc})$ in the standard orthonormal basis. The
    coefficient tensor of $\FT(A_1\cdot,A_2\cdot,A_3\cdot)$ is
    $(A_1^\top\otimes A_2^\top\otimes A_3^\top)T$ after vectorization. Hence
    \begin{equation}
        \norm{\FT(A_1\cdot,A_2\cdot,A_3\cdot)}_\HS
        \le
        \norm{A_1^\top\otimes A_2^\top\otimes A_3^\top}_\op\norm{\FT}_\HS
        =
        \norm{A_1}_\op\norm{A_2}_\op\norm{A_3}_\op\norm{\FT}_\HS.
    \end{equation}

    We now prove \eqref{eq:hs-compose-trilinear}. Let $L$ again represent
    $\FB$. For each standard basis vector $e_a$, define the scalar bilinear form
    \begin{equation}
        G_a(u_2,u_3)\bydef \inner{U^\top L^\top A e_a}{\BFB(u_2,u_3)}.
    \end{equation}
    Then
    \begin{equation}
        \FC(e_a,u_2,u_3)=G_a(Vu_2,Wu_3).
    \end{equation}
    By \eqref{eq:hs-compose-hyp},
    \begin{equation}
        \norm{G_a}_\HS
        \le K\norm{U^\top L^\top A e_a}
        \le K\norm{U}_\op\norm{L^\top A e_a}.
    \end{equation}
    Applying the already proved bilinear composition bound to $G_a$ gives
    \begin{equation}
        \norm{\FC(e_a,\cdot,\cdot)}_\HS
        \le
        K\norm{U}_\op\norm{V}_\op\norm{W}_\op\norm{L^\top A e_a}.
    \end{equation}
    Summing over $a$ yields
    \begin{equation}
        \begin{aligned}
        \norm{\FC}_\HS^2
        &=\sum_{a=1}^p\norm{\FC(e_a,\cdot,\cdot)}_\HS^2 \\
        &\le K^2\norm{U}_\op^2\norm{V}_\op^2\norm{W}_\op^2
        \sum_{a=1}^p\norm{L^\top A e_a}^2 \\
        &=K^2\norm{U}_\op^2\norm{V}_\op^2\norm{W}_\op^2\norm{L^\top A}_\Frob^2 \\
        &\le K^2\norm{\FB}_\HS^2\norm{A}_\op^2\norm{U}_\op^2\norm{V}_\op^2\norm{W}_\op^2.
        \end{aligned}
    \end{equation}
    Taking square roots proves \eqref{eq:hs-compose-trilinear}.
\end{proof}

\section{Probabilistic Bounds}\label{app:prob}

We collect bounds for the sensing operators, single-column moments, Gaussian cumulants, and leave-one-out differences. The longer proofs follow the statements. Throughout, $\CG_i$ is the leave-one-out sigma-algebra in \eqref{eq:loo-sigma-algebra-def}.

We now prove bounds on the operator norms of the sensing matrices. For a random variable $X$, we write
\begin{equation}
    \norm{X}_{\psi_1} \,\bydef\, \inf \{C \,>\, 0 : \E \exp(|X|/C) \,\le\, 2\}
\end{equation}
for the sub-exponential norm. We use the following result from \cite{1012.0294}.

\begin{lemma}\label{lem:adamczak-opnorm}
    Let $X_1, \ldots, X_N \in \R^p$ be independent centered isotropic random vectors. Assume that for some constants $\psi, K, c > 0$, we have
    \begin{equation}
        \sup_{j \,\le\, N} \sup_{y \,\in\, S^{p-1}} \norm{\inner{X_j}{y}}_{\psi_1} \,\le\, \psi
    \end{equation}
    and
    \begin{equation}
        \P\left(\max_{j\,\le\, N}\frac{\|X_j\|}{\sqrt p}\,>\, K\max\left\{1,\left(\frac{N}{p}\right)^{1/4}\right\}\right)\,\le\, e^{-c\sqrt p}.
    \end{equation}
    Let $A = [X_1, \ldots, X_N] \in \R^{p \times N}$, then
    \begin{equation}
        \P(\norm{A}_\op > C \sqrt{p}) \le 2 e^{-c'\sqrt{p}}
    \end{equation}
    for some constants $C,c'>0$ depending only on $\psi,K$ and an upper bound for $N/p$.
\end{lemma}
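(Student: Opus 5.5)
The plan is to deduce the lemma from the main theorem of Adamczak--Litvak--Pajor--Tomczak-Jaegermann \cite{1012.0294}. First I reduce to the regime $N\ge p$, where that theorem applies directly. Then I convert its uniform concentration of the empirical covariance into an operator-norm bound, using $\norm{A}_\op^2=\sup_{y\in S^{p-1}}\sum_{j\le N}\inner{X_j}{y}^2$.

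\emph{Reduction to $N\ge p$.} If $N<p$, I would pad $A$ with $p-N$ additional independent columns $X_{N+1},\ldots,X_p\iids\CN(0,I_p)$, independent of $X_1,\ldots,X_N$. Call the padded matrix $A'\in\R^{p\times p}$. Since $A$ is a column submatrix of $A'$, we have $\norm{A}_\op\le\norm{A'}_\op$. The new columns are centered and isotropic, and $\norm{\inner{X_j}{y}}_{\psi_1}$ is bounded by an absolute constant. A chi-square (or Gaussian concentration) bound gives $\P(\norm{X_j}>K\sqrt p)\le e^{-cp}$ for $K\ge2$. A union bound over at most $p$ columns then yields the maximal-norm hypothesis for $A'$, with $N'=p$ and $\max\{1,(N'/p)^{1/4}\}=1$, at the cost $pe^{-cp}\le e^{-c\sqrt p}$ for large $p$. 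Enlarging $\psi$ and $K$ to absolute-constant maxima, it therefore suffices to treat $N\ge p$ with $N/p\le\alpha_0$, where $\alpha_0$ is the assumed upper bound.

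\emph{Case $N\ge p$.} Here $K\max\{1,(N/p)^{1/4}\}\sqrt p=K(Np)^{1/4}$, which is exactly the normalization of the maximal-norm condition in Theorem~1 of \cite{1012.0294}. Together with the uniform $\psi_1$ bound on one-dimensional marginals, that theorem gives, with probability at least $1-e^{-c\sqrt p}$,
\[
\sup_{y\in S^{p-1}}\Big|\frac1N\sum_{j=1}^N\inner{X_j}{y}^2-1\Big|\le C(\psi,K)\sqrt{p/N}.
\]
I must intersect this event with the maximal-norm event of probability $\ge1-e^{-c\sqrt p}$; the union bound is where the factor $2$ in $2e^{-c'\sqrt p}$ comes from. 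On the intersection,
\[
\norm{A}_\op^2\le N+C\sqrt{pN}\le p(\alpha_0+C\sqrt{\alpha_0}),
\]
which is the claim with constants depending only on $\psi$, $K$ and $\alpha_0$.

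\emph{Main obstacle.} The only delicate point is matching the hypotheses precisely to the cited theorem. I need to check three things:
\begin{enumerate}
\item whether \cite{1012.0294} states its norm condition with $(Np)^{1/4}$ or with $\max\{\sqrt p,(Np)^{1/4}\}$, and whether it asks for probability $e^{-c\sqrt p}$;
\item that its constants depend only on $\psi$ and $K$;
\item that the failure probability it produces is of order $e^{-c\sqrt p}$ uniformly for $N/p$ in a bounded range.
\end{enumerate}
If the cited version only covers $N\ge p$ with a slightly different normalization, the padding trick above absorbs the discrepancy. If their deviation bound is stated as $C\sqrt{p/N}+$ lower-order terms, the bounded ratio $N/p$ again makes every term $\CO(p)$ after multiplying by $N$.
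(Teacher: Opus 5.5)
Your argument is correct and, like the paper's proof (which is just a citation of Theorem~2 of \cite{1012.0294}), it rests entirely on the Adamczak--Litvak--Pajor--Tomczak-Jaegermann estimate. The route differs slightly: you go through their covariance-concentration statement (Theorem~1) using $\norm{A}_\op^2=\sup_{y\in S^{p-1}}\sum_j\inner{X_j}{y}^2$, and you handle the regime $N<p$ by padding with independent $\CN(0,I_p)$ columns. The paper instead quotes the operator-norm form directly, and in the proof of Corollary~\ref{cor:concentration-goe-rank-one} it relies on the remark following Theorem~1 for $N<p$. Your padding makes the $N<p$ case self-contained, and the caveat you flag about the exact maximal-norm hypothesis ($e^{-\sqrt p}$ versus $e^{-c\sqrt p}$) applies equally to the paper's citation.
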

\begin{proof}[Proof of Lemma \ref{lem:adamczak-opnorm}]
    This is Theorem 2 of \cite{1012.0294}.
\end{proof}

\begin{lemma}\label{lem:sensing-envelope}
    For any fixed $m \ge 1$, there exists $C_m < \infty$ such that uniformly in $i$,
    \begin{equation}
        \norm{K_i}_{L^m(\op)} \leq C_m, \quad \norm{F_\smi \epsilon}_{L^m} \leq C_m d
    \end{equation}
    Moreover, for $u \in \{f_i, \widetilde f_i\}$,
    \begin{equation}
        \|u\|_{L^m} \leq C_m.
    \end{equation}
\end{lemma}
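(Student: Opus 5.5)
The statement to prove is Lemma~\ref{lem:sensing-envelope}. It has three parts: the column norm bound $\|u\|_{L^m}\le C_m$ for $u\in\{f_i,\widetilde f_i\}$, the envelope bound $\norm{K_i}_{L^m(\op)}\le C_m$, and the noise bound $\norm{F_\smi\epsilon}_{L^m}\le C_m d$. I would prove them in that order, since each uses the previous one.

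\emph{Column norms.} For the Gaussian column, $p\norm{\widetilde f_i}^2\sim\chi^2_p$, so every moment of $\norm{\widetilde f_i}^2$ is bounded. For the rank-one column, the isometry \eqref{eq:svec-isometry} gives
\[
\norm{f_i}^2=\frac{\norm{Z_i}_\Frob^2}{d+1}=\frac{\norm{x_i}^4-2\norm{x_i}^2+d}{d(d+1)}.
\]
Since $\norm{x_i}^2\sim\chi^2_d$ has $L^m$ norm of order $d$, the right-hand side is $\CO_{L^m}(1)$. Both cases therefore give $\norm{u}_{L^m}\le C_m$.

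\emph{Envelope.} The matrix $F_\smi$ has independent columns, each either a Gaussian or a rank-one column (the $i$th is zero). Set $X_j=\sqrt p\,u_j$ for $j\ne i$; these are centered and isotropic. I would check the two hypotheses of Lemma~\ref{lem:adamczak-opnorm}.
\begin{enumerate}
\item The $\psi_1$ condition is immediate for Gaussian columns. For rank-one columns, write $\sqrt p\inner{f_j}{a}=(x_j^\top Mx_j-\Tr M)/\sqrt2$ with $\norm{M}_\Frob=1$. The log-MGF bound $2t^2\sum_j\lambda_j^2$ for $|t|\le1/4$ then gives a Bernstein-type tail and hence bounded $\psi_1$ norm.
\item The max-norm condition follows from the norm formulas above together with $\chi^2$ tail bounds. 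A fixed large $L$ yields failure probability $e^{-c_Ld}$ per column, and a union bound over $n\lesssim d^2$ columns gives $e^{-\sqrt p}$, because $\sqrt p\asymp d$.
\end{enumerate}
Lemma~\ref{lem:adamczak-opnorm} then gives $\P(\norm{K_i}_\op>C)\le 2e^{-c\sqrt p}$, since $K_i=p^{-1}[X_j][X_j]^\top$. To pass to moments, on the complementary event I would use the crude bound $\norm{K_i}_\op\le\Tr K_i=\sum_{j\ne i}\norm{u_j}^2$, whose $L^{2m}$ norm is at most $C_m n\lesssim d^2$ by the column bound. Cauchy--Schwarz then gives
\[
\E\norm{K_i}_\op^m\le C^m+C_m d^{2m}e^{-c\sqrt p}\le C_m .
\]

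\emph{Noise term.} Because $\epsilon$ is independent of $F_\smi$, conditioning gives $F_\smi\epsilon\mid F_\smi\sim\CN(0,\Delta K_i)$. Hence
\[
\E[\norm{F_\smi\epsilon}^{2m}\mid F_\smi]\le C_m(\Delta\Tr K_i)^m\le C_m(\Delta p\norm{K_i}_\op)^m .
\]
Taking expectations and using the envelope bound gives $\norm{F_\smi\epsilon}_{L^m}\le C_m\sqrt p\asymp C_m d$.

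The main obstacle is verifying the hypotheses of the imported operator-norm theorem for the heavy-tailed (sub-exponential rather than sub-Gaussian) rank-one columns, in particular the max-norm tail at the scale $e^{-c\sqrt p}$, and handling interpolated matrices whose columns mix the two ensembles. Since the theorem only requires independence and per-column conditions, this mixing is harmless.
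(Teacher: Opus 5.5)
Your proposal is correct and follows essentially the same route as the paper. The steps coincide: the $\chi^2$ formulas for the column norms; verification of the $\psi_1$ and max-norm hypotheses of Lemma~\ref{lem:adamczak-opnorm}, where mixing Gaussian and rank-one columns is harmless; the truncation via $\norm{K_i}_\op\le\Tr K_i$ and Cauchy--Schwarz to turn the tail bound into moments; and conditional Gaussianity of $F_\smi\epsilon$ given $F_\smi$. The only cosmetic difference is in the noise step, where you bound $\Tr K_i\le p\norm{K_i}_\op$ while the paper sums the column norms directly; both give the $\CO(d)$ scale.
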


\begin{lemma}\label{lem:one-column-moments}
    Let $u\in\{f_i,\widetilde f_i\}$. Let $a,b\in\R^p$ and $A\in\R^{p\times p}$ be $\mathcal G_i$-measurable, and let $A_{\mathrm s}=(A+A^\top)/2$.
    Then almost surely,
    \begin{equation}\label{eq:linear-moment-1}
        \E[\inner{u}{a}\mid \mathcal G_i]=0
    \end{equation}
    \begin{equation}\label{eq:linear-moment-2}
        \E[\inner{u}{a}\inner{u}{b}\mid \mathcal G_i]=\frac1p\inner{a}{b}
    \end{equation}
    Moreover, for every integer $m\ge 1$ there exists $C_m<\infty$ such that almost surely,
    \begin{equation}\label{eq:linear-moment-lm}
        \E\left[\abs{\inner{u}{a}}^m\mid \mathcal G_i\right]\le C_m p^{-m/2}\norm{a}^m
    \end{equation}
    \begin{equation}\label{eq:quadratic-moment-lm}
        \E\left[\abs{\inner{u}{Au}-\frac1p\Tr A}^m\mid \mathcal G_i\right]
        \le C_m d^{-m/2}\norm{A_{\mathrm s}}_{\op}^m
    \end{equation}
    Consequently, after averaging over $\mathcal G_i$,
    \begin{equation}\label{eq:linear-moment-uncond}
        \norm{\inner{u}{a}}_{L^m}\le C_m d^{-1}\norm{a}_{L^m}
    \end{equation}
    \begin{equation}\label{eq:quadratic-moment-uncond}
        \left\|\inner{u}{Au}-\frac1p\Tr A\right\|_{L^m}
        \le C_m d^{-1/2}\norm{A_{\mathrm s}}_{L^{2m}(\op)}
    \end{equation}
\end{lemma}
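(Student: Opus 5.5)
The plan is to condition on $\CG_i$ and reduce everything to deterministic coefficients. Since $u\in\{f_i,\widetilde f_i\}$ is independent of $\CG_i$ and $a,b,A$ are $\CG_i$-measurable, each conditional expectation is the corresponding unconditional expectation over $u$ with $a,b,A$ frozen. It therefore suffices to prove every bound for deterministic $a,b,A$ with constants independent of them.

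\textbf{First two moments and linear moments.} The identities \eqref{eq:linear-moment-1}--\eqref{eq:linear-moment-2} follow directly from \eqref{eq:column-covariance}: $\E u=0$ and $\E uu^\top=I_p/p$.

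For \eqref{eq:linear-moment-lm}, normalize $\norm{a}=1$.
\begin{itemize}
\item In the Gaussian case, $\sqrt p\inner{u}{a}\sim\CN(0,1)$, so the bound is immediate.
\item In the rank-one case, set $M=\mmat(a)$. Then $\sqrt p\inner{f_i}{a}=(x_i^\top Mx_i-\Tr M)/\sqrt2$ with $\norm{M}_\Frob=1$. The Chernoff computation used in the proof of Corollary~\ref{cor:concentration-goe-rank-one} gives a uniformly bounded $\psi_1$ norm, hence $\E|\sqrt p\inner{u}{a}|^m\le C_m$.
\end{itemize}

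\textbf{Quadratic moments.} Replace $A$ by $A_{\mathrm s}$, which leaves $u^\top Au$ and $\Tr A$ unchanged. Isotropy gives $\E u^\top A_{\mathrm s}u=\Tr(A_{\mathrm s})/p$, so we must bound the centered fluctuation $f(x)\bydef u^\top A_{\mathrm s}u-\E u^\top A_{\mathrm s}u$.

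\emph{Gaussian case.} Hanson--Wright for $u\sim\CN(0,I_p/p)$ gives $\norm{f}_{L^m}\le C_m p^{-1}\norm{A_{\mathrm s}}_\Frob\le C_m p^{-1/2}\norm{A_{\mathrm s}}_\op$. Since $p^{-1/2}\lesssim d^{-1}$, this is stronger than required.

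\emph{Rank-one case.} Here $f$ is a polynomial of degree at most four in the standard Gaussian vector $x=x_i$. Gaussian hypercontractivity gives $\norm{f}_{L^m}\le (m-1)^{2}\norm{f}_{L^2}$, so it suffices to show
\[
\Var(f)\le C d^{-1}\norm{A_{\mathrm s}}_\op^2 .
\]
I would prove this with the Gaussian Poincaré inequality, $\Var f\le\E\norm{\nabla_x f}^2$. The directional derivative of $u=\svecop(xx^\top-I)/\sqrt{d(d+1)}$ in the direction $v$ is $\svecop(xv^\top+vx^\top)/\sqrt{d(d+1)}$. By the isometry \eqref{eq:svec-isometry}, this gives
\[
\nabla_x f=\frac{4}{\sqrt{d(d+1)}}\,\mmat(A_{\mathrm s}u)\,x .
\]
Hence
\[
\norm{\nabla_x f}\le C d^{-1}\norm{\mmat(A_{\mathrm s}u)}_\op\norm{x}\le Cd^{-1}\norm{A_{\mathrm s}}_\op\norm{u}\norm{x},
\]
using $\norm{\cdot}_\op\le\norm{\cdot}_\Frob$. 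By Cauchy--Schwarz together with $\E\norm{u}^4\le C$ (from the explicit formula for $\norm{f_i}^2$, as in Lemma~\ref{lem:sensing-envelope}) and $\E\norm{x}^4\le Cd^2$, we get $\E\norm{\nabla_x f}^2\le Cd^{-2}\cdot d\,\norm{A_{\mathrm s}}_\op^2$. This is the claimed variance bound and proves \eqref{eq:quadratic-moment-lm}. This variance step is the only nontrivial point: it is where the dependence within a rank-one column is handled, and it yields $d^{-1/2}$ rather than the Gaussian $d^{-1}$.

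\textbf{Unconditional versions.} Take expectations of the conditional bounds over $\CG_i$ using the tower property. Since $p\asymp d^2$, \eqref{eq:linear-moment-lm} gives \eqref{eq:linear-moment-uncond}. Similarly \eqref{eq:quadratic-moment-lm} gives the bound with $\norm{A_{\mathrm s}}_{L^m(\op)}$, which is at most $\norm{A_{\mathrm s}}_{L^{2m}(\op)}$ by monotonicity of $L^m$ norms; this yields \eqref{eq:quadratic-moment-uncond}.
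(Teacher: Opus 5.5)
Your proposal is correct and follows essentially the same route as the paper. You freeze the $\CG_i$-measurable coefficients by independence, and for the rank-one quadratic form you apply the Gaussian Poincar\'e inequality with gradient $\frac{4}{\sqrt{d(d+1)}}\mmat(A_{\mathrm s}u)\,x$, which is exactly the paper's $\frac{4}{d(d+1)}\SA(Y_0)x$. You bound it by $Cd^{-1}\norm{A_{\mathrm s}}_\op\norm{u}\norm{x}$ and finish with hypercontractivity for a degree-four polynomial, as the paper does.

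The remaining differences are cosmetic:
\begin{itemize}
\item you use a $\psi_1$/Chernoff bound instead of hypercontractivity for the rank-one linear moments;
\item you use Hanson--Wright for the Gaussian quadratic form;
\item you use Cauchy--Schwarz in place of the paper's exact evaluation $\E[\norm{Y_0}_\Frob^2\norm{x}^2]=d^3+5d^2+4d$.
\end{itemize}
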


We will use the preceding lemma in the following form.
\begin{corollary}\label{cor:one-column-moments}
For every integer $m\ge 1$ there exists $C_m<\infty$ such that the following hold uniformly in $i$ and in $u\in\{f_i,\widetilde f_i\}$.
\begin{enumerate}[label=(\roman*)]
    \item If $a$ is $\mathcal G_i$-measurable, then
    \begin{equation}
        \norm{\inner{u}{a}}_{L^m}\le C_m d^{-1}\norm{a}_{L^m}.
    \end{equation}
    In particular, if $\norm{a}_{L^m}\le C d^{-1/2}$, then
    \begin{equation}
        \norm{\inner{u}{a}}_{L^m}\le C_m d^{-3/2}.
    \end{equation}
    \item If $A$ is $\mathcal G_i$-measurable and $A_{\mathrm s}=(A+A^\top)/2$, then
    \begin{equation}
        \left\|\inner{u}{Au}-\frac1p\Tr A\right\|_{L^m}\le C_m d^{-1/2}\norm{A_{\mathrm s}}_{L^{2m}(\op)}.
    \end{equation}
    In particular, if $\norm{A_{\mathrm s}}_{L^{2m}(\op)}\le C$, then
    \begin{equation}
        \left\|\inner{u}{Au}-\frac1p\Tr A\right\|_{L^m}\le C_m d^{-1/2},
    \end{equation}
    while if $\norm{A_{\mathrm s}}_{L^{2m}(\op)}\le C d^{-1}$, then
    \begin{equation}
        \left\|\inner{u}{Au}-\frac1p\Tr A\right\|_{L^m}\le C_m d^{-3/2}.
    \end{equation}
\end{enumerate}
\end{corollary}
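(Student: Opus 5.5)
Both parts follow from Lemma~\ref{lem:one-column-moments} by averaging its almost-sure conditional moment bounds over $\CG_i$, together with the elementary comparison $p=d(d+1)/2\geq d^2/2$. No new probabilistic input is needed; the work is in keeping track of conditioning and moment orders.

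\emph{Part (i).} Since $a$ is $\CG_i$-measurable and $u\in\{f_i,\widetilde f_i\}$ is independent of $\CG_i$, the bound \eqref{eq:linear-moment-lm} holds almost surely: $\E[|\inner{u}{a}|^m\mid\CG_i]\leq C_m p^{-m/2}\norm{a}^m$. Taking expectations, using the tower property, and then $m$-th roots gives
\[
\norm{\inner{u}{a}}_{L^m}\leq C_m^{1/m}p^{-1/2}\norm{a}_{L^m}.
\]
Inserting $p^{-1/2}\leq\sqrt2\,d^{-1}$ and absorbing constants yields the first claim; this is exactly \eqref{eq:linear-moment-uncond}. The particular case follows by substituting $\norm{a}_{L^m}\leq Cd^{-1/2}$.

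\emph{Part (ii).} We argue in the same way from \eqref{eq:quadratic-moment-lm}, applied conditionally with the $\CG_i$-measurable matrix $A$. Averaging gives
\[
\E\Big|\inner{u}{Au}-\tfrac1p\Tr A\Big|^m\leq C_m d^{-m/2}\,\E\norm{A_{\mathrm s}}_\op^m,
\]
so the left side has $L^m$ norm at most $C_m d^{-1/2}\norm{A_{\mathrm s}}_{L^m(\op)}$. Monotonicity of $L^r$ norms gives $\norm{A_{\mathrm s}}_{L^m(\op)}\leq\norm{A_{\mathrm s}}_{L^{2m}(\op)}$, which recovers \eqref{eq:quadratic-moment-uncond}. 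The two special cases follow by substituting $\norm{A_{\mathrm s}}_{L^{2m}(\op)}\leq C$, respectively $\leq Cd^{-1}$.

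\emph{Main obstacle.} The only point requiring care is that the conditional bounds in Lemma~\ref{lem:one-column-moments} hold almost surely for random $\CG_i$-measurable coefficients. This in turn rests on the independence of $u$ from $\CG_i$, so that one may freeze $a$ or $A$ and apply the deterministic-coefficient bounds. Since the lemma is stated in exactly this conditional form, the corollary follows, with constants uniform in $i$ and in the choice $u\in\{f_i,\widetilde f_i\}$ because those of the lemma are.
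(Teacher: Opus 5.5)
Your proposal is correct and takes the same route as the paper, whose proof of this corollary is the single line that it follows directly from Lemma~\ref{lem:one-column-moments}. Averaging the almost-sure conditional bounds \eqref{eq:linear-moment-lm} and \eqref{eq:quadratic-moment-lm} over $\CG_i$, using $p\geq d^2/2$ and monotonicity of $L^r$ norms, gives exactly \eqref{eq:linear-moment-uncond} and \eqref{eq:quadratic-moment-uncond}; along the way you even obtain the slightly sharper factor $\norm{A_{\mathrm s}}_{L^m(\op)}$ in (ii) before weakening it to $L^{2m}$.
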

\begin{proof}[Proof of Corollary \ref{cor:one-column-moments}]
    This follows directly from Lemma~\ref{lem:one-column-moments}.
\end{proof}
Lemma~\ref{lem:one-column-moments} bounds quadratic-form fluctuations in terms of the operator norm. For the linear comparison, we need a bound in terms of the Frobenius norm, which is controlled by the spectral derivative estimates.

\begin{lemma}\label{lem:refined-centered-quad}
    For every integer $m \ge 2$ there exists $C_m < \infty$ such that the following holds uniformly in $i$ and in $u \in \{f_i, \widetilde f_i\}$. If $A \in \R^{p \times p}$ is $\CG_i$-measurable and $A_s = \frac{A + A^\top}{2}$, then we have
    \begin{equation}\label{eq:refined-centered-quad-bound}
        \norm{u^\top A u - \frac 1 p \Tr A}_{L^m} \le C_m d^{-3/2} \norm{A_s}_{L^{2m}(\Frob)}
    \end{equation}
    In particular, if $\norm{A}_{L^{2m}(\Frob)} \le C d^{-1/2}$, then
    \begin{equation}
        \norm{ u^\top A u - \frac 1 p \Tr A}_{L^m} \le C_m d^{-2}
    \end{equation}
\end{lemma}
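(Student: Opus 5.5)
We condition on $\CG_i$, which makes $A$ deterministic and independent of $u$. By conditional Jensen it then suffices to show
\[
\E\big[\abs{u^\top A_s u-\tfrac1p\Tr A_s}^m\mid\CG_i\big]^{1/m}\le C_m d^{-3/2}\norm{A_s}_\Frob
\]
for deterministic symmetric $A_s$. Here we use $u^\top Au=u^\top A_su$ and $\Tr A=\Tr A_s$. Taking $L^m$ norms over $\CG_i$ and using $\norm{\cdot}_{L^m}\le\norm{\cdot}_{L^{2m}}$ then gives \eqref{eq:refined-centered-quad-bound}. So we fix a symmetric $A\in\R^{p\times p}$ and treat the two ensembles separately.

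\emph{Gaussian case.} Here $u=\widetilde f_i=g/\sqrt p$ with $g\sim\CN(0,I_p)$, so
\[
u^\top Au-\tfrac1p\Tr A=p^{-1}\big(g^\top Ag-\Tr A\big).
\]
Hanson--Wright, or hypercontractivity in the second Gaussian chaos, bounds the $L^m$ norm of $g^\top Ag-\Tr A$ by $C_m\norm{A}_\Frob$. Since $p^{-1}\asymp d^{-2}\le d^{-3/2}$, this case is immediate.

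\emph{Rank-one case.} This is the main obstacle, because $u=f_i$ is a degree-two polynomial in $x=x_i\sim\CN(0,I_d)$. Consequently $Q(x)\bydef f_i^\top Af_i-\tfrac1p\Tr A$ is a centered polynomial of degree four in $x$. By Gaussian hypercontractivity (the chaos decomposition up to order four), $\norm{Q}_{L^m}\le (m-1)^2\norm{Q}_{L^2}$, so it suffices to bound the variance. Write
\[
f_i=\frac{\svecop(xx^\top-I)}{\sqrt{d(d+1)}}
\]
and identify $A$ with a symmetric linear operator $\CA$ on $\Sym$, with $\norm{\CA}_{\HS}=\norm{A}_\Frob$. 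Then
\[
Q=\frac{1}{d(d+1)}\Big(\langle xx^\top-I,\CA(xx^\top-I)\rangle-\E\langle xx^\top-I,\CA(xx^\top-I)\rangle\Big).
\]
Expanding in coordinates, $\langle xx^\top-I,\CA(xx^\top-I)\rangle=\sum_{jklm}\CA_{jk,lm}(x_jx_k-\delta_{jk})(x_lx_m-\delta_{lm})$. We project onto the Wiener chaoses of orders $4$, $2$, $0$. The order-four component is $\sum\CA_{jk,lm}{:}x_jx_kx_lx_m{:}$, whose variance is at most $24\norm{\CA}_{\HS}^2$. The order-two component arises from a single contraction, and has the form $\sum_{j,m}(\sum_k \CA_{jk,km}){:}x_jx_m{:}$ up to symmetrization. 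Its variance is at most $C\sum_{j,m}(\sum_k\CA_{jk,km})^2$, which by Cauchy--Schwarz is at most $Cd\norm{\CA}_{\HS}^2$. Hence
\[
\Var\langle xx^\top-I,\CA(xx^\top-I)\rangle\le Cd\,\norm{A}_\Frob^2,
\]
so $\norm{Q}_{L^2}\le C d^{-2}\sqrt d\,\norm{A}_\Frob=Cd^{-3/2}\norm{A}_\Frob$. The delicate points are handling the $\sqrt2$ factors of $\svecop$ and the repeated-index (diagonal) terms in the contraction. We would organize these by working with $\CA$ extended to all of $\R^{d\times d}$ through symmetrization, which does not increase the HS norm; a Wick-formula count then confirms that only the single-trace contraction produces the factor $d$.

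Finally, the \emph{in particular} statement follows by substituting $\norm{A}_{L^{2m}(\Frob)}\le Cd^{-1/2}$ and using $\norm{A_s}_\Frob\le\norm{A}_\Frob$.
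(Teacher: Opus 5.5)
Your proof is correct. It reaches the key variance bound $\Var\langle Y_0,\CA Y_0\rangle\le Cd\norm{A_s}_\Frob^2$, with $Y_0=xx^\top-I_d$, by a different route from the paper.

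\textbf{The paper's route.} It stays on $\Sym$ and diagonalizes $\SA$ in a Frobenius-orthonormal eigenbasis $(E_\alpha)$. This writes the centered form as $\frac{1}{d(d+1)}\sum_\alpha\lambda_\alpha(q_\alpha^2-2)$, where $q_\alpha=x^\top E_\alpha x-\Tr E_\alpha$. It then expands $\Cov(q_\alpha^2,q_\beta^2)=2\kappa_2^2+\kappa_4$ using the cumulant formula of Lemma~\ref{lem:gaussian-quad-cumulants}, and bounds the fourth cumulant by $C\Tr(E_\alpha^2E_\beta^2)$. It closes with a Schur-test Cauchy--Schwarz and the basis-free identity $\sum_\beta E_\beta^2=\frac{d+1}{2}I_d$.

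\textbf{Your route.} You work in matrix-unit coordinates with the extension $\iota\circ\SA\circ P_{\mathrm{sym}}$ to $\R^{d\times d}$. Its HS norm equals $\norm{A_s}_\Frob$, since it kills the antisymmetric complement, so the $\sqrt2$ factors of $\svecop$ never need separate handling. You then apply the Wick product formula. The fourth-chaos part costs only $24\norm{\CA}_{\HS}^2$, and the whole factor $d$ comes from Cauchy--Schwarz on the partial trace $\sum_k\CA_{jk,km}$ in the second-chaos part.

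\textbf{Comparison.} Your argument is more self-contained: it needs neither the cumulant lemma nor the identity for $\sum_\beta E_\beta^2$. It also shows exactly which contraction causes the loss. The paper's argument avoids index-level Wick calculus and the extension off $\Sym$, trading them for a spectral decomposition and a Schur test. Both proofs locate the extra $\sqrt d$ in a single index contraction. In the paper it enters through $\sum_\beta E_\beta^2=\frac{d+1}{2}I_d$; in yours it enters through the partial trace.
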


To work with the rank-one case, we write down the cumulants of a Gaussian quadratic form.
\begin{lemma}\label{lem:gaussian-quad-cumulants}
    Let $x \sim \CN(0, I_d)$. Let $M \in \Sym$ and define
    \begin{equation}
        q_M(x) = x^\top M x - \Tr(M)
    \end{equation}
    For random variables $X_1, \ldots, X_r$ define their joint cumulant by
    \begin{equation}
        \kappa(Z_1, \ldots, Z_r) \,\bydef\, \partial_{t_1} \cdots \partial_{t_r} \log \E \exp\left( \sum_{j \,=\, 1}^t t_j X_j  \right)\Bigg|_{t \,=\, 0}
    \end{equation}
    Then for every integer $ r \ge 2$ and every $M_1, \cdots, M_r \in \Sym$, we have
    \begin{equation}
        \kappa\left(q_{M_1}(x), \ldots, q_{M_r}(x) \right) \,=\, 2^{r-1} \sum_{\sigma \,\in\, \FC_r} \Tr\left( M_{\sigma(1)} \cdots M_{\sigma(r)} \right)
    \end{equation}
    where $\FC_r$ is $S_r$ quotient by the equivalence of differing by a cycle. In particular, we have
    \begin{equation}
        \kappa(q_M(x), q_N(x)) = 2 \Tr(MN)
    \end{equation}
    \begin{equation}
        \kappa\left( q_{M_1}(x), q_{M_2}(x), q_{M_3}(x), q_{M_4}(x) \right) \,=\, 8 \sum_{\sigma \,\in\, \FC_4} \Tr(M_{\sigma(1)}M_{\sigma(2)} M_{\sigma(3)} M_{\sigma(4)})
    \end{equation}
\end{lemma}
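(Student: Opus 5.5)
The plan is to prove the formula from the explicit cumulant generating function of a Gaussian quadratic form and then read off the coefficients. For $t\in\R^r$ near $0$, set $M(t)\bydef\sum_{j=1}^r t_jM_j\in\Sym$. Then $\sum_j t_j q_{M_j}(x)=x^\top M(t)x-\Tr M(t)$. Diagonalizing $M(t)$ and using the scalar identity $\E e^{s(g^2-1)}=e^{-s}(1-2s)^{-1/2}$ for $g\sim\CN(0,1)$, valid when $s<1/2$, gives, once $\|M(t)\|_\op<1/2$,
\begin{equation*}
    \log\E\exp\Big(\sum_j t_jq_{M_j}(x)\Big)=-\Tr M(t)-\tfrac12\log\det\big(I-2M(t)\big).
\end{equation*}
Next, expand $-\tfrac12\log\det(I-2M)=\tfrac12\sum_{k\ge1}\tfrac{2^k}{k}\Tr(M^k)$, which converges absolutely for $\|M\|_\op<1/2$. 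The $k=1$ term equals $\Tr M(t)$ and cancels the centering term, so
\begin{equation*}
    \log\E\exp\Big(\sum_j t_jq_{M_j}(x)\Big)=\sum_{k\ge2}\frac{2^{k-1}}{k}\Tr\big(M(t)^k\big).
\end{equation*}
The series is real-analytic in $t$ near $0$, so the mixed partial $\partial_{t_1}\cdots\partial_{t_r}$ at $t=0$ can be taken term by term. Only the homogeneous degree-$r$ term $k=r$ contributes. Expanding $\Tr(M(t)^r)$ multilinearly, the coefficient of $t_1\cdots t_r$ is $\sum_{\sigma\in S_r}\Tr(M_{\sigma(1)}\cdots M_{\sigma(r)})$. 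The mixed derivative therefore equals
\begin{equation*}
    \frac{2^{r-1}}{r}\sum_{\sigma\in S_r}\Tr\big(M_{\sigma(1)}\cdots M_{\sigma(r)}\big).
\end{equation*}

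The remaining step is the combinatorial reduction to $\FC_r$. The trace is invariant under cyclic rotation of the word $M_{\sigma(1)}\cdots M_{\sigma(r)}$. The cyclic group $\mathbb Z_r$ acts freely on $S_r$ by rotation $\sigma\mapsto\sigma\circ c$, where $c$ is the $r$-cycle. Each orbit therefore has exactly $r$ elements, all with the same trace, and this cancels the factor $1/r$. The result is $2^{r-1}\sum_{\sigma\in\FC_r}\Tr(M_{\sigma(1)}\cdots M_{\sigma(r)})$, where $\FC_r=S_r/\mathbb Z_r$ is the set of cyclic orderings. Note that reflections are not quotiented out. This is consistent with the stated special cases: for $r=2$ we get $\FC_2$ with one element and the value $2\Tr(MN)$, and for $r=4$ we get $|\FC_4|=6$ terms with prefactor $8$. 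The definition in the statement contains the typos $\sum_{j=1}^t$ and $Z_j$ versus $X_j$; I will read them as $\sum_{j=1}^r t_jq_{M_j}$.

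I expect the main obstacle to be bookkeeping rather than analysis. One point is to state precisely that "$\FC_r$ is $S_r$ modulo cyclic rotation" and to justify that the rotation action is free, so every orbit has size $r$. The other is to justify termwise differentiation of the series, which follows from real-analyticity of $\log\det$ on a neighborhood where $\|M(t)\|_\op<1/2$. Both steps are routine.
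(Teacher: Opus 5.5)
Your proposal is correct and follows the paper's proof essentially step for step. Both arguments compute the explicit cumulant generating function $-\Tr M(t)-\tfrac12\log\det(I-2M(t))=\sum_{k\ge2}\frac{2^{k-1}}{k}\Tr(M(t)^k)$, extract the coefficient of $t_1\cdots t_r$ from the $k=r$ term, and then group $S_r$ into cyclic classes. Your count that each class has exactly $r$ elements, because the rotation action is free, is the right one and is what cancels the $1/r$; the paper's text says $r-1$, which is a slip that does not affect the stated formula.
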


\subsection{Proof of Lemma \ref{lem:sensing-envelope}}

\begin{proof}[Proof of Lemma \ref{lem:sensing-envelope}]
    We first prove the $m$th-moment bound for the sensing vector norm. If $u = \widetilde f_i \sim \CN(0, I_p/p)$, then we have $p\|u\|^2 \sim \chi_p^2$. Therefore, $\|u\|_{L^m} \leq C_m$ for every $m \geq 1$. If $u = f_i$, then
    \begin{equation}
        \|u\|^2 \,=\, \frac{1}{d(d + 1)} \norm{x_ix_i^\top - I_d}_\Frob^2 \,=\, \frac{\norm{x_i}^4 - 2 \norm{x_i}^2 + d}{d(d + 1)}.
    \end{equation}
    Observe that $\norm{x_i}^2 \sim \chi_d^2$. Therefore, the RHS has every moment bounded uniformly in $d$. So $\|u\|_{L^m} \leq C_m$. Next, denote the non-leave-one-out columns by
    \begin{equation}
        v_j^{(i)} \bydef \begin{cases}
            \widetilde f_j, & 1 \leq j < i \\
            f_j, & i < j \leq n
        \end{cases}
    \end{equation}
    and let $V_i \bydef [v_j^{(i)}]_{j \neq i} \in \R^{p \times (n - 1)}$. By the definition of $K_i$ in \eqref{eq:Ki-def}, we write $K_i = V_iV_i^\top$. Set $X_j^{(i)} = \sqrt{p}\, v_j^{(i)}$. For the Gaussian columns, we have
    \begin{equation}
        \sup_{y \,\in\, S^{p-1}} \norm{\inner{ X_j^{(i)}}{y}}_{\psi_1} \,\leq\, \|\CN(0,1)\|_{\psi_1} \,\leq\, C
    \end{equation}
    where $\norm{\cdot}_{\psi_1}$ denotes the sub-exponential norm. In particular, $\frac{\norm{X_j^{(i)}}}{\sqrt{p}}$ has sub-exponential tails with constants independent of $d$. For the RO columns, if $y = \svecop(M)$ with $\norm{M}_\Frob = 1$, then
    \begin{equation}
        \inner{X_j^{(i)}}{y} = \frac{x_j^\top M x_j - \Tr M}{\sqrt{2}}.
    \end{equation}
    Writing $M = U \Lambda U^\top$, this is a linear combination of the centered variables $(\xi_r^2 - 1)$ with coefficients the eigenvalues of $M$. Since $\sum_r \lambda_r^2 = \norm{M}_\Frob^2 = 1$, standard bounds for Gaussian quadratic forms (Hanson--Wright) give
    \begin{equation}
        \sup_{y \in S^{p-1}} \norm{\inner{X_j^{(i)}}{y}}_{\psi_1} \leq C.
    \end{equation}
    Moreover,
    \begin{equation}
        \frac{\norm{X_j^{(i)}}^2}{p} = \norm{v_j^{(i)}}^2,
    \end{equation}
    and for the rank-one columns this equals
    \begin{equation}
        \frac{\norm{x_j}^4 - 2 \norm{x_j}^2 + d}{d(d + 1)}.
    \end{equation}
    So for the Gaussian columns, we have from sub-gaussianity
    \begin{equation}
        \P\left(\norm{X^{(i)}_j} \,>\, 2\sqrt{p} \right) \,\leq\, 2e^{-cp}, \quad j \,<\, i,
    \end{equation}
    and for the rank-one columns, we have from the expression above and a $\chi_d^2$ tail bound (which is sub-exponential)
    \begin{equation}
        \P\left(\norm{X^{(i)}_j} \,>\, 2\sqrt{p} \right) \,\leq\, 2 e^{-c\sqrt{p}}, \quad j \,>\, i.
    \end{equation}
    Since $n\leq C_0p$, a union bound over the $n$ columns shows that the boundedness condition from \cite[Eq. 3]{1012.0294} holds uniformly in $i$ for $V_i$ with a deterministic constant $K$,
    \begin{equation}\label{eq:boundedness-event}
        \P \left(
        \max_{j\ne i}\frac{\norm{X_j^{(i)}}}{\sqrt p} > K\max\left\{1,\left(\frac{n-1}{p}\right)^{1/4}\right\}
    \right) \leq e^{-c\sqrt{p}}.
    \end{equation}
    Therefore, the independent columns of $V_i$ satisfy the conditions of Lemma \ref{lem:adamczak-opnorm} \cite[Thm. 2]{1012.0294} with constants independent of $i$.
    With $A_i \bydef \sqrt{p} V_i$, we have
    \begin{equation}\label{eq:operator-bound-prob}
        \P\left(\|A_i\|_\op > C \sqrt{\max\{p, n\}} + 6 \max_{j\neq i}\norm{X_j^{(i)}} \right) \leq e^{-c\sqrt{p}}.
    \end{equation}
    From \eqref{eq:boundedness-event} and \eqref{eq:operator-bound-prob}, we have
    \begin{equation}
        \P\left(\norm{A_i}_\op > C \sqrt{p} \right) \leq 2 e^{-c\sqrt{p}}.
    \end{equation}
    This is equivalent to $\P(\norm{K_i}_\op > C) \leq 2 e^{-c\sqrt{p}}$.
    To obtain moment bounds from this probability estimate, we also control $\|K_i\|_\op$ on the exceptional event. Since $K_i$ is positive semidefinite,
    \begin{equation}
        \norm{K_i}_{\op}\le \Tr(K_i)
        =\,\sum_{j\,<\,i}\norm{\widetilde f_j}^2+\sum_{j\,>\,i}\norm{f_j}^2.
    \end{equation}
    By the one-column bounds proved above and Minkowski,
    \begin{equation}
        \big\|\Tr(K_i)\big\|_{L^{2m}}
        \le \sum_{j<i}\norm{\widetilde f_j}^2_{L^{2m}}
        +\sum_{j>i}\norm{f_j}^2_{L^{2m}}
        \le C_m n
        \le C_m d^2.
    \end{equation}
    Let $E_i=\{\norm{K_i}_{\op}\le C\}$. Then by Cauchy--Schwarz
    \begin{equation}
        \E\norm{K_i}_{\op}^m
        \le C^m+\E\big[\Tr(K_i)^m\bone_{E_i^c}\big]
        \le C^m+\big\|\Tr(K_i)\big\|_{L^{2m}}^m\P(E_i^c)^{1/2}
        \le C_m,
    \end{equation}
    for all sufficiently large $p$. So
    \begin{equation}
    \norm{K_i}_{L^m(\op)}\le C_m.
    \end{equation}

    Conditioned on $F_\smi$, the vector $F_\smi\epsilon$ is Gaussian with covariance $K_i$. Let $(\lambda_1, \ldots, \lambda_r$ be the eigenvalues of $K_i$. Hence
    \begin{equation}
        \begin{aligned}
        \E\left[\norm{F_\smi\epsilon}^{2m}\mid F_\smi\right]
        &= \E\left[ \left( \sum_{j = 1}^r \lambda_j g_j^2\right)^m\right], \qquad g_j \iids \CN(0, 1) \\
        &=\, \sum_{\alpha_1 + \cdots + \alpha_r \,=\, m} \binom{m}{\alpha_1, \ldots, \alpha_r} \prod_{j \,=\, 1}^t \lambda_j^{\alpha_j} \E[g_j^{2\alpha_j}]
        \end{aligned}
    \end{equation}
    But $\prod_{j = 1}^r \E[g_j^{2\alpha_j}] \le C_m$. Therefore,
    \begin{equation}
         \E\left[\norm{F_\smi\epsilon}^{2m}\mid F_\smi\right] \,\le\, C_m \sum_{\alpha_1+ \ldots+ \alpha_r \,=\, m} \binom{m}{\alpha_1, \ldots, \alpha_r} \prod_{j \,=\, 1}^r \lambda_j^{\alpha_j} \,=\, C_m \left( \sum_{j \,=\, 1}^r \lambda_j \right)^m \,=\, C_m \Tr(K_i)^m
    \end{equation}
    Using again
    \begin{equation}
        \Tr(K_i)\,=\,\sum_{j\,<\,i}\norm{\widetilde f_j}^2+\sum_{j\,>\,i}\norm{f_j}^2
    \end{equation}
    and the bound $n\leq C_0p=\CO(d^2)$, we obtain $\|\Tr(K_i)\|_{L^m}\le C_m d^2$. Taking expectations gives
    \begin{equation}
        \norm{F_\smi\epsilon}_{L^m}\le C_m d.
    \end{equation}
    This completes the proof.
\end{proof}

\subsection{Proof of Lemma \ref{lem:one-column-moments}}

\begin{proof}[Proof of Lemma \ref{lem:one-column-moments}]
    Since $u$ is independent of $\mathcal G_i$, it is enough to prove the deterministic version.
    \noindent \emph{Gaussian column.}
    If $u=\widetilde f_i$, then $u\sim N(0,I_p/p)$. The identities \eqref{eq:linear-moment-1}--\eqref{eq:linear-moment-2} are immediate. The bound \eqref{eq:linear-moment-lm} follows from the fact that $\inner{u}{a}$ is centered Gaussian with variance $p^{-1}\norm a^2$. For \eqref{eq:quadratic-moment-lm}, only the symmetric part of $A$ matters:
    \begin{equation}
        \inner{u}{Au}=\inner{u}{A_{\mathrm s}u}
    \end{equation}
    where $A_s = \frac{A + A^\top}{2}$.

    The centered quadratic form
    \begin{equation}
        \inner{u}{A_{\mathrm s}u}-\frac1p\Tr A_{\mathrm s}
    \end{equation}
    has variance
    \begin{equation}
        \Var\left(\inner{u}{A_{\mathrm s}u}\right)
        =\frac{2}{p^2}\Tr(A_{\mathrm s}^2)
        \le \frac{2}{p}\norm{A_{\mathrm s}}_{\op}^2
        \le C d^{-2}\norm{A_{\mathrm s}}_{\op}^2
    \end{equation}
    Gaussian hypercontractivity gives
    \begin{equation}
        \left\|\inner{u}{Au}-\frac1p\Tr A\right\|_{L^m}\le C_m d^{-1}\norm{A_{\mathrm s}}_{\op},
    \end{equation}
    which is even better than \eqref{eq:quadratic-moment-lm}.
    \noindent \emph{Rank-one column.}
    Now let $u=f_i$. Write
    \begin{equation}
        Y\bydef \frac{x_ix_i^\top-I_d}{\sqrt{d(d+1)}}\in\Sym,
        \qquad
        u=\svecop(Y)
        \qquad
        x_i\sim N(0,I_d)
    \end{equation}
    Let $M=\mat(a)\in\Sym$. Then
    \begin{equation}\label{eq:linear-form-r1}
        \inner{u}{a}
        =\Tr(YM)
        =\frac{x_i^\top Mx_i-\Tr M}{\sqrt{d(d+1)}}.
    \end{equation}
    This is a centered Gaussian quadratic form, so \eqref{eq:linear-moment-1} holds. Also,
    \begin{equation}
        \E\left[\inner{u}{a}\inner{u}{b}\right]
        =\frac{2}{d(d+1)}\Tr(\mat(a)\mat(b))
        =\frac1p\inner{a}{b},
    \end{equation}
    which is \eqref{eq:linear-moment-2}. Since
    \begin{equation}
        \Var(x_i^\top Mx_i-\Tr M)=2\norm{M}_{\Frob}^2=2\norm{a}^2,
    \end{equation}
    hypercontractivity for Gaussian quadratic forms gives \eqref{eq:linear-moment-lm}.

    It remains to prove \eqref{eq:quadratic-moment-lm}. Let $\SA:\Sym\to\Sym$ be the self-adjoint operator
    \begin{equation}
        \SA(H)\bydef \mat(A_{\mathrm s}\svecop(H)).
    \end{equation}
    Because $\svecop$ is an isometry,
    \begin{equation}
        \norm{\SA}_{\Frob\to\Frob}=\norm{A_{\mathrm s}}_{\op}.
    \end{equation}
    Using $u=\svecop(Y)$,
    \begin{equation}
        \inner{u}{Au}=\inner{Y}{\SA Y}.
    \end{equation}
    By the covariance identity \eqref{eq:column-covariance},
    \begin{equation}
        \E\inner{Y}{\SA Y}
        =\frac1p\Tr(A_{\mathrm s})
        =\frac1p\Tr A.
    \end{equation}

    Write $Y_0=x_ix_i^\top-I_d$, so that $Y=Y_0/\sqrt{d(d+1)}$, and define
    \begin{equation}
        g(x_i)\bydef \frac{1}{d(d+1)}\inner{Y_0}{\SA Y_0}.
    \end{equation}
    Then
    \begin{equation}
        g(x_i)=\inner{u}{Au}.
    \end{equation}
    A direct differentiation gives, for every $h\in\R^d$,
    \begin{equation}
        Dg(x_i)[h]
        =\frac{4}{d(d+1)}\inner{h}{\SA(Y_0)x_i},
    \end{equation}
    hence
    \begin{equation}
        \norm{\nabla g(x_i)}
        \le \frac{4}{d(d+1)}\norm{\SA}_{F\to F}\norm{Y_0}_{\Frob}\norm{x_i}
        \le \frac{4}{d(d+1)}\norm{A_{\mathrm s}}_{\op}\norm{Y_0}_{\Frob}\norm{x_i}.
    \end{equation}
    Let $S=\norm{x_i}^2\sim \chi_d^2$. Since
    \begin{equation}
        \norm{Y_0}_{\Frob}^2=S^2-2S+d,
    \end{equation}
    we obtain
    \begin{equation}
        \E\big[\norm{Y_0}_{\Frob}^2\norm{x_i}^2\big]
        =\E[S^3-2S^2+dS]
        =d^3+5d^2+4d
        \le C d^3.
    \end{equation}
    The Gaussian Poincar\'e inequality therefore gives
    \begin{equation}
        \Var(g)\le \E\norm{\nabla g}^2\le C d^{-1}\norm{A_{\mathrm s}}_{\op}^2.
    \end{equation}
    Since $g-\E g$ is a polynomial of degree at most four in $x_i$, Gaussian hypercontractivity yields
    \begin{equation}
        \left\|\inner{u}{Au}-\frac1p\Tr A\right\|_{L^m}
        \le C_m \|g-\E g\|_{L^2}
        \le C_m d^{-1/2}\norm{A_{\mathrm s}}_{\op}.
    \end{equation}
    This proves \eqref{eq:quadratic-moment-lm}. The unconditional estimates \eqref{eq:linear-moment-uncond}--\eqref{eq:quadratic-moment-uncond} follow by taking expectations and using $p\asymp d^2$.
\end{proof}

\subsection{Proof of Lemma \ref{lem:gaussian-quad-cumulants}}

\begin{proof}[Proof of Lemma \ref{lem:gaussian-quad-cumulants}]
    Fix $r \ge 2$, we write
    \begin{equation}
        A(t) \,\bydef\, \sum_{j \,=\, 1}^r t_j M_j, \quad t \,=\, (t_1, \ldots, t_r) \,\in\, \R^r
    \end{equation}
    Choose $t$ such that $\norm{A(t)}_\op < 1/2$. We thus have
    \begin{equation}
        \E \exp\left( \sum_{j=1}^r t_j q_{M_j}(x) \right) = \exp\left(-\Tr(A(t))\right) \det(I_d - 2A(t))^{-1/2}
    \end{equation}
    The cumulant-generating function has the Taylor expansion
    \begin{equation}
        \begin{aligned}
            K(t) &\bydef\, \log \E\exp\left( \sum_{j\,=\,1}^r t_j q_{M_j}(x) \right) \,=\, -\Tr A(t) - \frac 1 2 \Tr \log(I_d - 2A(t)) \\
            &=\, -\Tr A(t) + \frac 1 2 \sum_{m\,=\,1}^\infty \frac{2^m}{m}\Tr(A(t)^m) \,=\, \sum_{m \,=\, 2}^\infty \frac{2^{m-1}}{m} \Tr(A(t)^m)
        \end{aligned}
    \end{equation}
    But the joint cumulants are given by the mixed derivative
    \begin{equation}
        \kappa\left(q_{M_1}(x), \ldots, q_{M_r}(x) \right) = \partial_{t_1} \cdots \partial_{t_r} K(t) \ \Big|_{t=0}
    \end{equation}
    But observe that $\frac{2^{m-1}}{m} \Tr(A(t)^m)$ is a polynomial of degree $m$ in $t$. Therefore, only the term with $m = r$ contributes to the $r$th joint cumulant. We thus have
    \begin{equation}
        \kappa\left(q_{M_1}(x), \ldots, q_{M_r}(x) \right) \,=\, \frac{2^{r-1}}{r} \sum_{\sigma \,\in\, S_r} \Tr\left( M_{\sigma(1)} \cdots M_{\sigma(r)} \right)
    \end{equation}
    Since trace is invariant under cyclic permutations, every cyclic class in $S_r$ contributes the same value to the sum and contains $r - 1$ permutations. Grouping the sum over $S_r$ into cyclic classes therefore gives the stated formula.
\end{proof}

\subsection{Proof of Lemma \ref{lem:refined-centered-quad}}

\begin{proof}[Proof of Lemma \ref{lem:refined-centered-quad}]
    Since $u$ is independent of $\CG_i$, it is enough to prove the deterministic version.

    \noindent \emph{Gaussian column.}
    If $u = \widetilde f_i$, then $u \sim \CN(0, I_p/p)$. Note that
    \begin{equation}
        \inner{u}{Au} - \frac 1 p \Tr A = \inner{u}{A_{\mathrm s}u} - \frac 1 p \Tr A_{\mathrm s}.
    \end{equation}
    This is a centered Gaussian quadratic form with variance
    \begin{equation}
        \Var(\inner{u}{A_{\mathrm s}u}) \,=\, \frac{2}{p^2} \Tr(A_{\mathrm s}^2) \,=\, \frac{2}{p^2} \norm{A_{\mathrm s}}_{\Frob}^2.
    \end{equation}
    Gaussian hypercontractivity therefore gives
    \begin{equation}
        \norm{\inner{u}{Au} - \frac 1 p \Tr A}_{L^m} \,\le\, C_m p^{-1} \norm{A_{\mathrm s}}_{\Frob} \,\le\, C_m d^{-2} \norm{A_{\mathrm s}}_{\Frob},
    \end{equation}
    which is stronger than the claimed bound.

    \noindent \emph{Rank-one column.}
    Now let $u = f_i$. Write
    \begin{equation}
        Y \bydef \frac{xx^\top - I_d}{\sqrt{d(d+1)}} \in \Sym,
        \qquad
        u = \svecop(Y),
        \qquad
        x \sim \CN(0, I_d),
    \end{equation}
    and let $Y_0 = xx^\top - I_d$, so that $Y = Y_0 / \sqrt{d(d+1)}$. Recall the self-adjoint operator $\SA : \Sym \to \Sym$ defined by
    \begin{equation}
        \SA(H) \bydef \mat(A_{\mathrm s} \svecop(H)).
    \end{equation}
    Choose a Frobenius-orthonormal eigenbasis $(E_\alpha)_{\alpha = 1}^p$ of $\Sym$ such that
    \begin{equation}
        \SA(E_\alpha) = \lambda_\alpha E_\alpha,
        \qquad \alpha = 1, \ldots, p.
    \end{equation}
    Since $\svecop \circ \SA \circ \mat = A_{\mathrm s}$, the eigenvalues of $\SA$ are the eigenvalues of $A_{\mathrm s}$. Therefore,
    \begin{equation}
        \sum_{\alpha = 1}^p \lambda_\alpha^2 = \norm{A_{\mathrm s}}_{\Frob}^2,
        \qquad
        \sum_{\alpha \,=\, 1}^p \lambda_\alpha \,=\, \Tr A_{\mathrm s} \,=\, \Tr A.
    \end{equation}

    For each $\alpha$, define
    \begin{equation}
        q_\alpha \bydef \inner{Y_0}{E_\alpha} = x^\top E_\alpha x - \Tr(E_\alpha).
    \end{equation}
    By Lemma \ref{lem:gaussian-quad-cumulants} with $r = 2$,
    \begin{equation}
        \E[q_\alpha q_\beta] = 2 \Tr(E_\alpha E_\beta) = 2 \delta_{\alpha\beta}.
    \end{equation}
    In particular, $\E[q_\alpha^2] = 2$ for every $\alpha$. Since $Y_0 = \sum_{\alpha = 1}^p q_\alpha E_\alpha$, we have
    \begin{equation}
        \inner{u}{Au} \,=\, \inner{Y}{\SA Y} \,=\, \frac{1}{d(d+1)} \sum_{\alpha \,=\, 1}^p \lambda_\alpha q_\alpha^2.
    \end{equation}
    Using $p = d(d+1)/2$, we also get
    \begin{equation}
        \frac 1 p \Tr A \,=\, \frac{2}{d(d+1)} \sum_{\alpha \,=\, 1}^p \lambda_\alpha \,=\, \frac{1}{d(d+1)} \sum_{\alpha \,=\, 1}^p \lambda_\alpha \E[q_\alpha^2].
    \end{equation}
    Hence
    \begin{equation}
        \inner{u}{Au} - \frac 1 p \Tr A = \frac{1}{d(d+1)} g(x),
        \qquad
        g(x) \bydef \sum_{\alpha = 1}^p \lambda_\alpha (q_\alpha^2 - 2).
    \end{equation}
    The random variable $g(x)$ is a polynomial of degree four in the Gaussian vector $x$, so Gaussian hypercontractivity gives
    \begin{equation}
        \norm{g(x)}_{L^m} \le C_m \norm{g(x)}_{L^2}.
    \end{equation}

    It remains to bound $\Var(g)$. Since each $q_\alpha$ is centered, the moment-cumulant formula gives
    \begin{equation}
        \Cov(q_\alpha^2, q_\beta^2) = 2\kappa_2(q_\alpha, q_\beta)^2 + \kappa_4(q_\alpha, q_\alpha, q_\beta, q_\beta).
    \end{equation}
    By the previous display,
    \begin{equation}
        2\kappa_2(q_\alpha, q_\beta)^2 = 8\delta_{\alpha\beta}.
    \end{equation}
    By Lemma \ref{lem:gaussian-quad-cumulants} with $r = 4$, the fourth cumulant is a finite sum of traces of cyclic words of length four in $E_\alpha$ and $E_\beta$. Every such word is either of the form $E_\alpha^2 E_\beta^2$ or $E_\alpha E_\beta E_\alpha E_\beta$. Therefore,
    \begin{equation}
        \left|\kappa_4(q_\alpha, q_\alpha, q_\beta, q_\beta)\right|
        \le C \left(\Tr(E_\alpha^2 E_\beta^2) + \left|\Tr(E_\alpha E_\beta E_\alpha E_\beta)\right|\right).
    \end{equation}
    Since $E_\alpha$ and $E_\beta$ are symmetric, Cauchy--Schwarz gives
    \begin{equation}
        \left|\Tr(E_\alpha E_\beta E_\alpha E_\beta)\right|
        = \left|\Tr((E_\beta E_\alpha)(E_\alpha E_\beta))\right|
        \le \norm{E_\beta E_\alpha}_{\Frob} \norm{E_\alpha E_\beta}_{\Frob}
        = \Tr(E_\alpha^2 E_\beta^2).
    \end{equation}
    Hence
    \begin{equation}
        \left|\Cov(q_\alpha^2, q_\beta^2)\right| \le C \left(\delta_{\alpha\beta} + \Tr(E_\alpha^2 E_\beta^2)\right).
    \end{equation}
    Expanding the variance of $g$ gives
    \begin{equation}
        \Var(g) \,\le\, C \sum_{\alpha \,=\, 1}^p \lambda_\alpha^2 + C \sum_{\alpha, \beta \,=\, 1}^p |\lambda_\alpha \lambda_\beta| \Tr(E_\alpha^2 E_\beta^2).
    \end{equation}

    To control the double sum, note first that $\sum_{\beta = 1}^p E_\beta^2$ does not depend on the choice of Frobenius-orthonormal basis of $\Sym$. To see this, if $(\widetilde E_\beta)_{\beta = 1}^p$ is another such basis and $\widetilde E_\beta = \sum_{\gamma = 1}^p O_{\beta\gamma} E_\gamma$ for some $O \in O(p)$, then
    \begin{equation}
        \sum_{\beta \,=\, 1}^p \widetilde E_\beta^2 \,=\, \sum_{\beta, \gamma, \eta \,=\, 1}^p O_{\beta\gamma} O_{\beta\eta} E_\gamma E_\eta \,=\, \sum_{\gamma \,=\, 1}^p E_\gamma^2.
    \end{equation}
    Therefore we may compute this sum using the standard symmetric basis
    \begin{equation}
        F_{aa} = e_a e_a^\top,
        \qquad
        F_{ab} = \frac{e_a e_b^\top + e_b e_a^\top}{\sqrt 2},
        \qquad 1 \,\le\, a \,<\, b \,\le\, d.
    \end{equation}
    A direct calculation gives
    \begin{equation}
        \sum_{\beta = 1}^p E_\beta^2 = \frac{d+1}{2} I_d.
    \end{equation}

    Now set $T_{\alpha\beta} \bydef \Tr(E_\alpha^2 E_\beta^2) \ge 0$. By Cauchy--Schwarz,
    \begin{equation}
        \begin{aligned}
            \sum_{\alpha, \beta = 1}^p |\lambda_\alpha \lambda_\beta| T_{\alpha\beta}
            &\le \left(\sum_{\alpha, \beta = 1}^p \lambda_\alpha^2 T_{\alpha\beta}\right)^{1/2}
            \left(\sum_{\alpha, \beta \,=\, 1}^p \lambda_\beta^2 T_{\alpha\beta}\right)^{1/2}\,=\, \sum_{\alpha, \beta \,=\, 1}^p \lambda_\alpha^2 T_{\alpha\beta} \\
            &=\, \sum_{\alpha \,=\, 1}^p \lambda_\alpha^2 \Tr\left(E_\alpha^2 \sum_{\beta \,=\, 1}^p E_\beta^2\right) \,=\, \frac{d+1}{2} \sum_{\alpha \,=\, 1}^p \lambda_\alpha^2.
        \end{aligned}
    \end{equation}
    We obtain
    \begin{equation}
        \Var(g) \,\le\, C d \sum_{\alpha \,=\, 1}^p \lambda_\alpha^2 \,=\, C d \norm{A_{\mathrm s}}_{\Frob}^2.
    \end{equation}
    Therefore,
    \begin{equation}
        \norm{\inner{u}{Au} - \frac 1 p \Tr A}_{L^m}
        \le \frac{C_m}{d(d+1)} d^{1/2} \norm{A_{\mathrm s}}_{\Frob}
        \le C_m d^{-3/2} \norm{A_{\mathrm s}}_{\Frob}.
    \end{equation}

    This proves the deterministic estimate. Averaging over $\CG_i$, we get
    \begin{equation}
        \norm{\inner{u}{Au} - \frac 1 p \Tr A}_{L^m}
        \le C_m d^{-3/2} \norm{A_{\mathrm s}}_{L^m(\Frob)}
        \le C_m d^{-3/2} \norm{A}_{L^{2m}(\Frob)},
    \end{equation}
    which proves the first claim. The second follows from $\norm{A_{\mathrm s}}_{L^{2m}(\Frob)} \le \norm{A}_{L^{2m}(\Frob)}$.
\end{proof}

\subsection{Proof of Proposition \ref{prop:delta-moments}}

\begin{proof}[Proof of Proposition \ref{prop:delta-moments}]
    For $t = 0$, we clearly have $q^0(u) = 0$ because the iterates start at the same place. By \eqref{eq:w-exact} of Proposition \ref{prop:exact-loo}, we have
    \begin{equation}
        w^0(u) = u \left(\kappa_0 \inner{u}{{\bf s}^* - \widehat{\bf s}^0} + \nu_0 \epsilon_i \right)
    \end{equation}
    But by Assumption \ref{ass:signal-init}, by triangle inequality,
    \begin{equation}
        \norm{{\bf s}^* - \widehat{\bf s}^0} \leq C \sqrt{d}
    \end{equation}
    From \eqref{eq:linear-moment-uncond} of Lemma \ref{lem:one-column-moments} with $a = {\bf s}^* - \widehat{\bf s}^0$, we have
    \begin{equation}
        \norm{\inner{u}{{\bf s}^* - \widehat{\bf s}^0}}_{L^{2m}} \,\leq\, C_m d^{-1} \norm{{\bf s}^* - \widehat{\bf s}^0}_{L^{2m}} \,\leq\, C_m d^{-1/2}
    \end{equation}
    Since we also have $\E \norm{u}^m \leq C_m$ and $\nu_0 \epsilon_i = \CO_{L^m}(d^{-1/2})$, we have
    \begin{equation}
        \norm{w^0(u)}_{L^m} \leq C_m d^{-1/2}
    \end{equation}
    Therefore, $\norm{m^0(u)}_{L^m} \leq C_m d^{-1/2}$. Then by \eqref{eq:q-linearized} and Proposition \ref{prop:exact-loo}, we have
    \begin{equation}
        \norm{q^1(u)}_{L^m} \,\leq\, \norm{L^0_i}_\op \norm{m^0(u)}_{L^m} + \norm{\varepsilon^0_i(u)}_{L^m} \,\leq\, C_m d^{-1/2} + C_m \norm{m^0(u)}_{L^{2m}}^2 \,\leq\, C_m d^{-1/2}
    \end{equation}
    where we used Lemma \ref{lem:dk-bounds} for $\norm{L^0_i}_\op$.

    The same estimate gives $\norm{\varepsilon_i^0(u)}_{L^m}\le C_m d^{-1}$. We proceed by induction. Assume that for some $t$ and every $m$,
    \begin{equation}
        \max_{s \,\le\, t} \norm{q^s(u)}_{L^m} + \max_{s \,\le\, t - 1} \norm{w^s(u)}_{L^m} \,\leq\, C_m d^{-1/2}
    \end{equation}
    Then by \eqref{eq:beta-exact} of Proposition \ref{prop:exact-loo}, we have
    \begin{equation}
        \beta^t_i(u) = \kappa_t \inner{u}{{\bf s}^* - \widehat{\bf s}^t_\smi} - \kappa_t \inner{u}{q^t(u)} + \nu_t \epsilon_i
    \end{equation}
    Applying \eqref{eq:linear-moment-uncond} from Lemma \ref{lem:one-column-moments}, now with $a = {\bf s}^* - \widehat{\bf s}^t_\smi$, using Lemma \ref{lem:base-det-bounds}, we have
    \begin{equation}
        \norm{\inner{u}{{\bf s}^* - \widehat{\bf s}^t_\smi}}_{L^{2m}} \,\le\, C_m d^{-1} \norm{{\bf s}^* - \widehat{\bf s}^t_\smi}_{L^{2m}} \,\le\, C_m d^{-1/2}
    \end{equation}
    Now $ \kappa_t \inner{u}{q^t(u)}$ is bounded by Cauchy--Schwarz
    \begin{equation}
        \norm{u}_{L^{2m}} \norm{q^t(u)}_{L^{2m}} \le C_m d^{-1/2}
    \end{equation}
    So $\norm{\beta^t_i(u)}_{L^m} \le C_m d^{-1/2}$. Therefore, by \eqref{eq:w-exact} of Proposition \ref{prop:exact-loo}, Lemma \ref{lem:sensing-envelope}, and the induction hypothesis,
    \begin{equation}
        \begin{aligned}
            \norm{w^t(u)}_{L^m} &\le\, |\kappa_t| \norm{K_i}_{L^{2m}(\op)} \norm{q^t(u)}_{L^{2m}} + |\rho_t| \norm{w^{t - 1}(u)}_{L^m} + \norm{u}_{L^{2m}} \norm{\beta^t_i(u)}_{L^{2m}} \,\leq\, C_m d^{-1/2}
        \end{aligned}
    \end{equation}
    Therefore, we have $\norm{ m^t(u)}_{L^m} \le C_m d^{-1/2}$. By \eqref{eq:q-linearized} of Proposition \ref{prop:exact-loo}, we have
    \begin{equation}
        \norm{q^{t + 1}(u)}_{L^m} \le C_m d^{-1/2}+ C_m\norm{m^t(u)}_{L^{2m}}^2 \le C_m d^{-1/2}
    \end{equation}
    Lastly,
    \begin{equation}
        \norm{\varepsilon^t_i(u)}_{L^m} \le C \norm{m^t(u)}_{L^{2m}}^2 \le C_m d^{-1}
    \end{equation}
    So by induction, the claim follows.
\end{proof}

\subsection{Verification of Gaussian State Evolution}\label{app:gaussian-se-verification}
We derive Proposition~\ref{prop:gaussian-se} from the nonseparable Gaussian AMP theorem of Berthier, Montanari, and Nguyen \cite[Theorem~1]{BerthierRaphael2020Sefa}. We check the coordinate scaling, covariance recursion, and observable test functions for the matrix formulation in Section~\ref{subsec:se-transfer}.

\begin{proof}[Proof of Proposition \ref{prop:gaussian-se}]
    Let $\widetilde F=[\widetilde f_1,\ldots,\widetilde f_n]\in\R^{p\times n}$ be the Gaussian feature matrix of \eqref{eq:gaussian-surrogate-features}, and set
    \begin{equation}\label{eq:se-proof-scaled-variables}
        A_d\bydef\sqrt{\frac pn}\,\widetilde F^\top,
        \qquad
        \lambda_d\bydef\sqrt{(d+1)\alpha_d},
        \qquad
        \theta^\star\bydef\lambda_d\svecop(S^\star),
        \qquad
        \widehat\theta^t\bydef\lambda_d\svecop(\widehat S_{\mathrm G}^t).
    \end{equation}
    The entries of $A_d$ are independent $\CN(0,1/n)$ variables, $A_d\theta^\star=\CA_{\mathrm G}(S^\star)$, and $y_{\mathrm G}=A_d\theta^\star+\epsilon$. Define
    \begin{equation*}
        \eta_{t,d}(x)\bydef
        \lambda_d\svecop\left(\sigma_t\left(\mmat(x/\lambda_d)\right)\right),
        \qquad
        \mathfrak r^t\bydef\gamma_t r^t_{\mathrm G},
        \qquad
        \overline b_t\bydef b_t/\gamma_{t-1}\quad(t\geq1).
    \end{equation*}
    Using $c_t/\gamma_t=p/(n(d+1))$, the Gaussian recursion \eqref{eq:AMP-recursion} becomes
    \begin{equation}\label{eq:se-proof-standard-amp}
        \mathfrak r^t
        =y_{\mathrm G}-A_d\widehat\theta^t+\overline b_t\mathfrak r^{t-1},
        \qquad
        \widehat\theta^{t+1}
        =\eta_{t,d}\big(\widehat\theta^t+A_d^\top\mathfrak r^t\big),
    \end{equation}
    with $\mathfrak r^{-1}=0$ and $\overline b_0=0$.

    We first match \eqref{eq:se-proof-standard-amp}, including its sign convention, to the asymmetric AMP recursion of \cite[Equations~(27)--(29)]{BerthierRaphael2020Sefa}. Let $\eta_{-1,d}(x)\bydef\widehat\theta^0$ and define
    \begin{equation}\label{eq:se-proof-change-variables}
        e_t(u)\bydef\eta_{t-1,d}(\theta^\star-u)-\theta^\star,
        \qquad
        g_t(v)\bydef v-\epsilon,
    \end{equation}
    together with $u^0=0$ and
    \begin{equation}\label{eq:se-proof-bmn-variables}
        u^{t+1}\bydef\theta^\star-\big(\widehat\theta^t+A_d^\top\mathfrak r^t\big),
        \qquad
        v^t\bydef\epsilon-\mathfrak r^t.
    \end{equation}
    Then $e_t(u^t)=\widehat\theta^t-\theta^\star$ and $g_t(v^t)=-\mathfrak r^t$, and \eqref{eq:se-proof-standard-amp} is equivalent to
    \begin{equation}\label{eq:se-proof-bmn-recursion}
        u^{t+1}=A_d^\top g_t(v^t)-e_t(u^t),
        \qquad
        v^t=A_de_t(u^t)+\overline b_tg_{t-1}(v^{t-1}),
    \end{equation}
    with $g_{-1}=0$. Thus the output-side Onsager coefficient in \cite{BerthierRaphael2020Sefa} is $d_t=1$, while its input-side coefficient is $b_t^{\mathrm{BMN}}=-\overline b_t$.

    The spectral map induced by $\sigma_t$ is Lipschitz in Frobenius norm with constant $\norm{\sigma_t'}_\infty$. Since $\svecop$ is an isometry, the maps $\eta_{t,d}$, and hence $e_t$, are uniformly Lipschitz in Euclidean norm; the maps $g_t$ are $1$-Lipschitz. Moreover,
    \begin{equation}\label{eq:se-proof-divergence-identification}
        \div\eta_{t,d}(x)
        =\CD_t\big(\mmat(x/\lambda_d)\big).
    \end{equation}
    The map $e_0$ is constant, and for $t\geq1$,
    \begin{equation*}
        \frac1n\E\div e_t(Z)
        =-\frac1n\E\CD_{t-1}(\overline R^{t-1})
        =-\frac{b_t}{\gamma_{t-1}}
        =-\overline b_t,
    \end{equation*}
    where the centered Gaussian input may be replaced by its negative without changing its covariance. This is precisely the Onsager coefficient in \eqref{eq:se-proof-bmn-recursion}; the sign is therefore accounted for explicitly.

    It remains to identify the covariance recursion. Let $(U^0,\ldots,U^t)$ be centered jointly Gaussian vectors in $\R^p$ with
    \begin{equation*}
        \E[U^s(U^r)^\top]=Q_{s,r}I_p,
        \qquad
        W^r\bydef\mmat(U^r)/\lambda_d.
    \end{equation*}
    Because off-diagonal coordinates of $\mmat(U^r)$ are divided by $\sqrt2$, the matrices $(W^0,\ldots,W^t)$ have exactly the entrywise covariance \eqref{eq:se-noise-covariance}. Furthermore,
    \begin{equation}\label{eq:se-proof-normalization-identity}
        \frac{\lambda_d^2}{n}=\frac{d+1}{p}=\frac2d.
    \end{equation}
    We condition on $S^\star$, $\widehat S^0$, and the probability-one event $n^{-1}\norm{\epsilon}^2\to\Delta$. If $(Z_1,Z_2)$ is a centered Gaussian pair in $\R^n$ with cross-covariance $K_{12}I_n$, then
    \begin{equation}\label{eq:se-proof-output-overlap}
        \frac1n\E\langle g_s(Z_1),g_t(Z_2)\rangle
        =K_{12}+\frac1n\norm{\epsilon}^2
        \longrightarrow K_{12}+\Delta.
    \end{equation}
    Write $\Sigma$ and $\mathsf T$ for the two covariance arrays in \cite[Equations~(25)--(26)]{BerthierRaphael2020Sefa}. We claim, inductively, that
    \begin{equation}\label{eq:se-proof-covariance-identification}
        \Sigma_{s,t}=C_{s,t},
        \qquad
        \mathsf T_{s+1,t+1}=Q_{s,t}.
    \end{equation}
    Indeed, \eqref{eq:se-proof-normalization-identity} gives
    \begin{equation*}
        \Sigma_{0,0}
        =\lim_{d\to\infty}\frac1n\norm{\widehat\theta^0-\theta^\star}^2
        =C_{0,0},
    \end{equation*}
    and \eqref{eq:se-proof-output-overlap} then gives $\mathsf T_{1,1}=\Delta+C_{0,0}=Q_{0,0}$. If \eqref{eq:se-proof-covariance-identification} holds through time $t$, the Gaussian input to $e_{t+1}$ has covariance $Q$ and
    \begin{equation*}
        \frac1n\E\langle e_s(Z_s),e_{t+1}(Z_{t+1})\rangle
        =\frac2d\E\Tr\big((\overline S^s-S^\star)(\overline S^{t+1}-S^\star)\big)
        \longrightarrow C_{s,t+1},
    \end{equation*}
    where \eqref{eq:se-proof-normalization-identity} was used. Assumption~\ref{ass:se-regularity} supplies these limits for every covariance pair. Applying \eqref{eq:se-proof-output-overlap} once more gives $\mathsf T_{s+1,t+2}=C_{s,t+1}+\Delta=Q_{s,t+1}$, completing the induction. The remaining norm and overlap hypotheses of \cite[Theorem~1]{BerthierRaphael2020Sefa} follow from Assumptions~\ref{ass:signal-init} and \ref{ass:se-regularity} and the Lipschitz bounds above. That theorem therefore yields joint state evolution for the scaled residuals and pre-denoising matrices in \eqref{eq:se-proof-standard-amp}.

    We finally verify that the three observables are admissible test functions. For $\theta,\theta'\in\R^p$, set
    \begin{equation*}
        A\bydef\sigma_t\big(\mmat(\theta/\lambda_d)\big),
        \qquad
        A'\bydef\sigma_t\big(\mmat(\theta'/\lambda_d)\big).
    \end{equation*}
    Then
    \begin{equation}\label{eq:se-proof-spectral-lipschitz}
        \norm{A-A'}_\Frob
        \leq\frac{\norm{\sigma_t'}_\infty}{\lambda_d}\norm{\theta-\theta'}.
    \end{equation}
    By Assumption~\ref{ass:signal-init}, \eqref{eq:se-proof-spectral-lipschitz}, and Cauchy--Schwarz,
    \begin{equation*}
        \frac1d\abs{\Tr((A-A')S^\star)}
        \leq\frac{\norm{S^\star}_\Frob\norm{\sigma_t'}_\infty}{d\lambda_d}\norm{\theta-\theta'}
        \leq\frac{C}{\sqrt n}\norm{\theta-\theta'},
    \end{equation*}
    where the last inequality uses \eqref{eq:se-proof-normalization-identity}. If $B$ is either $\widehat S^0$ or the output of a bounded spectral denoiser, then $\norm{B}_\Frob\leq C\sqrt d$, and the same calculation with $S^\star$ replaced by $B$ applies to each argument of a cross-time overlap. Thus the signal and cross-time overlaps are uniformly Lipschitz functions of the scaled AMP coordinates under the normalized Euclidean metric of \cite{BerthierRaphael2020Sefa}. For the spectral statistic, order the eigenvalues of $A$ and $A'$ consistently. Lidskii's inequality \cite[Section~III.4]{Bhatia1997} and the trace-norm bound give
    \begin{equation*}
        \sum_{j=1}^d\abs{\lambda_j(A)-\lambda_j(A')}
        \leq\norm{A-A'}_{S_1}
        \leq\sqrt d\norm{A-A'}_\Frob.
    \end{equation*}
    Since $\mu$ is Lipschitz on $I_\ast$,
    \begin{equation*}
        \frac1d\abs{\Tr\mu(A)-\Tr\mu(A')}
        \leq\frac{\operatorname{Lip}_{I_\ast}(\mu)}{\sqrt d}\norm{A-A'}_\Frob
        \leq\frac{C}{\sqrt n}\norm{\theta-\theta'}.
    \end{equation*}
    Hence all three observables are uniformly pseudo-Lipschitz test functions. Their state-evolution expectations are exactly the predictions in \eqref{eq:se-predictions}; at $t=0$ the observables are deterministic and already equal to their predictions. This proves \eqref{eq:gaussian-se-observables}.
\end{proof}

\section{Effect of a Spectral Spike}\label{app:numerical-spike}

A rank-one measurement of a spike has nonzero skewness. We vary the spike strength through the family \eqref{eq:numerical-signal-interpolation}: $a=0$ gives a bounded spectrum, while $a=1$ gives a pure spike. Along the spike direction, a single rank-one response contains $a(G^2-1)$. Its third cumulant is $8a^3$, whereas the covariance-matched Gaussian response is symmetric and has zero third cumulant. The two projections have the same mean and variance, but different skewness.

We examine this effect in the pre-denoising matrix $R^0_{\sfe}$ from \eqref{eq:AMP-recursion}. Write its eigenvalues in increasing order as $\lambda_1(R^0_{\sfe})\leq\cdots\leq\lambda_d(R^0_{\sfe})$, and define
\begin{equation*}
    m_{3,\sfe}^{\mathrm{bulk}}
    \bydef \frac1d\sum_{j=1}^{d-1}\lambda_j(R^0_{\sfe})^3.
\end{equation*}
We remove the largest eigenvalue for every $a$, including $a=0$, so that the same statistic is compared throughout. This separates the movement of the bulk spectrum from the outlier created by the spike. A direct expansion at the first AMP step gives
\begin{equation}\label{eq:numerical-mechanism-prediction}
    \sqrt d\,
    \big(\E m_{3,\ro}^{\mathrm{bulk}}-
    \E m_{3,\mathrm G}^{\mathrm{bulk}}\big)
    =\frac{4a^3}{\alpha^2}+o(1).
\end{equation}

\begin{figure}
    \centering
    \includegraphics[width=\textwidth]{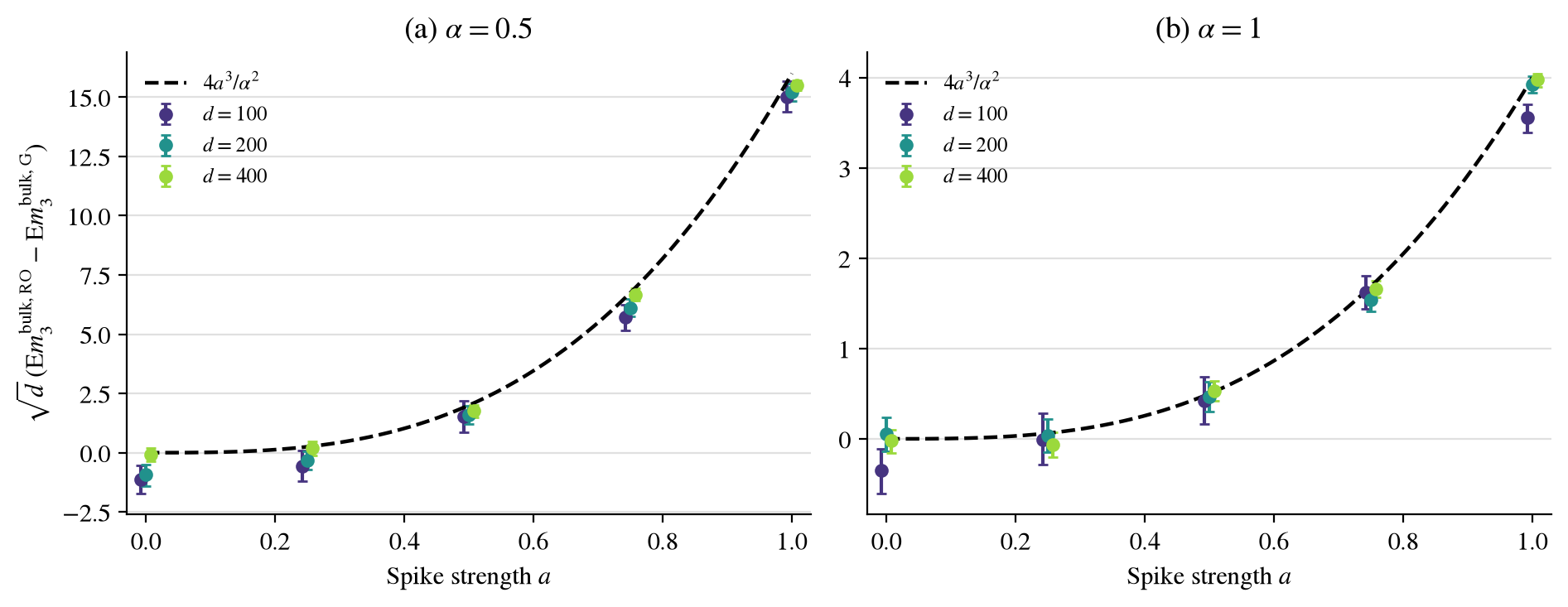}
    \caption{Effect of a spike on the first-step bulk spectrum. For the signals \eqref{eq:numerical-signal-interpolation}, the points show the scaled difference of the bulk third moments of $R^0$ over 200 paired trials, with $\Delta=0.5$ and $d\in\{100,200,400\}$. The dashed curves are the prediction $4a^3/\alpha^2$ in \eqref{eq:numerical-mechanism-prediction}. The agreement improves with dimension and is clearest at $a\in\{0.75,1\}$.}
    \label{fig:numerical-spike-mechanism}
\end{figure}

Figure~\ref{fig:numerical-spike-mechanism} shows an order-$d^{-1/2}$ difference in the bulk third moment before denoising. The skewness of the rank-one response survives covariance matching and affects the bulk spectrum even after the outlier is removed. This diagnostic lies outside Theorem~\ref{thm:final-universality}, which concerns observables of the bounded AMP iterates.

\FloatBarrier
\bibliography{references}
\bibliographystyle{plain}

\end{document}